\documentclass[11pt]{article}

\usepackage[colorlinks=true,linkcolor=blue,citecolor=blue]{hyperref}
\usepackage{amsmath, amssymb, amsthm}
\usepackage{mathtools}
\usepackage{array}
\usepackage{float}
\usepackage[capitalize]{cleveref}
\crefname{equation}{}{}
\usepackage{fullpage}
\usepackage{graphics}
\usepackage{pifont}
\usepackage{tikz}
\usepackage[T1]{fontenc}

\DeclareMathOperator*{\argmin}{arg\,min}
\usetikzlibrary{arrows.meta}
\usepackage{environ}
\usepackage{framed}
\usepackage{url}
\usepackage[linesnumbered,ruled,vlined]{algorithm2e}
\usepackage[noend]{algpseudocode}
\usepackage[labelfont=bf]{caption}
\usepackage{cite}
\usepackage{framed}
\usepackage[framemethod=tikz]{mdframed}
\usepackage{appendix}
\usepackage{graphicx}
\usepackage{tcolorbox}
\usepackage{thm-restate}
\allowdisplaybreaks[1]

\newcommand\remove[1]{}

\newtheorem{theorem}{Theorem}
\newtheorem{lemma}{Lemma}[section]
\newtheorem{fact}[lemma]{Fact}
\newtheorem*{lemma*}{Lemma}

\newtheorem*{corollary*}{Corollary}
\newtheorem{claim}[lemma]{Claim}

\theoremstyle{definition}

\newtheorem*{theorem*}{Theorem}
\newtheorem{definition}[lemma]{Definition}

\newtheorem*{rem*}{Remark}

\newtheorem{obs}[lemma]{Observation}

\newcommand\R{\mathbb{R}}

\newcommand\E{\mathbb{E}}

\renewcommand{\bf}{\textbf}
\newcommand{\eps}{\epsilon}
\renewcommand{\O}{\widetilde{O}}
\newcommand{\pe}{\preceq}

\renewcommand{\l}{\langle}
\renewcommand{\r}{\rangle}
\newcommand{\bs}{\backslash}
\newcommand{\assign}{\leftarrow}

\renewcommand{\forall}{\mathrm{\text{ for all }}}

\newcommand{\diag}{\mathrm{diag}}

\newcommand{\new}{\mathrm{new}}

\newcommand{\mSC}{\mathbf{SC}}

\newcommand{\step}{\mathrm{step}}
\newcommand{\solve}{\mathrm{solve}}
\newcommand{\loc}{{\mathrm{(loc)}}}
\newcommand{\chk}{{\mathrm{(chk)}}}
\newcommand{\Recenter}{\textsc{Recenter}}
\newcommand{\rw}{\mathsf{RW}}
\newcommand\mL{\boldsymbol{L}}
\newcommand\mR{\boldsymbol{R}}

\renewcommand\deg{\boldsymbol{deg}}
\newcommand\zzero{\boldsymbol{0}}
\newcommand\oone{\boldsymbol{1}}

\newcommand{\pr}[2]{\mathsf{Pr}_{#1}\left[#2\right]}
\newcommand{\expec}[2]{\mathbb{E}_{#1}\left[#2\right]}

\newcommand{\err}{\mathrm{err}}
\newcommand{\cerr}{\mathrm{cerr}}
\newcommand{\expand}{\mathrm{(exp)}}

\newcommand\that{\widehat{t}}
\newcommand\muhat{\widehat{\mu}}

\newcommand\cchi{\boldsymbol{\chi}}

\newcommand\pphi{\boldsymbol{\phi}}

\newcommand\ppi{\boldsymbol{\pi}}
\newcommand\ppitil{\widetilde{\boldsymbol{\pi}}}

\newcommand\rrhotil{\widetilde{\boldsymbol{\rho}}}

\newcommand\dd{\boldsymbol{d}}
\newcommand\ddhat{\widehat{\boldsymbol{d}}}
\newcommand{\ddi}{\dd[i]}
\newcommand\ff{\boldsymbol{f}}
\newcommand\ffhat{\widehat{\boldsymbol{f}}}
\newcommand\fftil{\widetilde{\boldsymbol{f}}}
\newcommand\ffbar{\overline{\boldsymbol{f}}}
\renewcommand\gg{\boldsymbol{g}}
\newcommand\gghat{\widehat{\boldsymbol{g}}}
\newcommand\pp{\boldsymbol{p}}
\newcommand\qq{\boldsymbol{q}}
\newcommand\rr{\boldsymbol{r}}
\newcommand\rrhat{\widehat{\boldsymbol{r}}}
\newcommand{\Shat}{\widehat{S}}

\newcommand\uu{\boldsymbol{u}}
\newcommand\xxtil{\widetilde{\boldsymbol{x}}}

\newcommand\vv{\boldsymbol{v}}
\newcommand\xx{\boldsymbol{x}}
\newcommand\yy{\boldsymbol{y}}
\newcommand{\Ghat}{\widehat{G}}
\newcommand{\mRhat}{\widehat{\mR}}

\renewcommand{\sc}{\mathrm{(sc)}}

\newcommand{\abs}[1]{\left| #1 \right|}
\newcommand{\norm}[1]{\left\| #1 \right\|}
\newcommand{\bg}{\gg}
\newcommand{\cE}{\mathcal{E}}

\newcommand\Chat{\widehat{C}}
\newcommand\Ehat{\widehat{E}}
\newcommand{\init}{\mathrm{(init)}}
\newcommand{\fin}{\mathrm{(fin)}}
\newcommand{\DynamicSC}{\textsc{DynamicSC}}
\newcommand{\safe}{\mathrm{(safe)}}
\newcommand\prhit[3]{p_{#2}^{#3}\left(#1\right)}
\newcommand{\ndd}{N}
\newcommand{\change}{\mathrm{change}}
\newcommand{\betachk}{\beta_{\Checker}}
\newcommand{\betaloc}{\beta_{\Locator}}

\foreach \x in {A,...,Z}{
	\expandafter\xdef\csname m\x\endcsname{\noexpand\mathbf{\x}}
}

\DeclareMathOperator\polylog{polylog}

\newif\ifrandom
\randomtrue

\newcommand{\defeq}{\stackrel{\mathrm{\scriptscriptstyle def}}{=}}

\newcommand{\poly}{{\mathrm{poly}}}
\newcommand{\Checker}{\textsc{Checker}}
\newcommand{\Locator}{\textsc{Locator}}
\newcommand{\TrivialLocator}{\textsc{TrivialLocator}}

\newcommand{\vol}{{\hbox{\bf vol}}}

\crefname{algocf}{Algorithm}{Algorithms}

\newcommand{\tomato}{\top}

\begin{document}

\title{
Fully Dynamic Electrical Flows:\\
Sparse Maxflow Faster Than Goldberg-Rao
}

\author{
Y\v{u} Gao
\thanks{
Georgia Institute of Technology,
\texttt{ygao380@gatech.edu}}
\and
Yang P. Liu
\thanks{Stanford University,
\texttt{yangpliu@stanford.edu}}
\and
Richard Peng
\thanks{Georgia Institute of Technology \&  University of Waterloo,
\texttt{rpeng@cc.gatech.edu}}
}

\hypersetup{pageanchor=false}

\pagenumbering{gobble}
\begin{titlepage}
\clearpage\maketitle
\thispagestyle{empty}

\begin{abstract}
We give an algorithm for computing exact maximum flows on graphs with $m$ edges and
integer capacities in the range $[1, U]$
in $\widetilde{O}(m^{\frac{3}{2} - \frac{1}{328}} \log U)$ time.\footnote{We use $\widetilde{O}(\cdot)$ to suppress logarithmic factors in $m$.}
For sparse graphs with polynomially bounded integer capacities,
this is the first improvement over the
$\widetilde{O}(m^{1.5} \log U)$ time bound from [Goldberg-Rao JACM `98].

Our algorithm revolves around dynamically maintaining the augmenting
electrical flows at the core of the interior point method based
algorithm from [M\k{a}dry JACM `16].
This entails designing data structures that, in limited settings, return edges with large electric energy in a graph undergoing resistance updates.
\end{abstract}
\end{titlepage}

\hypersetup{pageanchor=true}

\newpage
\addtocontents{toc}{\protect\setcounter{tocdepth}{2}}
\tableofcontents

\newpage
\pagenumbering{arabic}
\section{Introduction}
\label{sec:intro}

The maxflow problem asks to route the maximum amount of flow between two vertices in a graph such that the flow on any edge is at most its capacity. The efficiency of this problem is well-studied and has numerous applications
in scheduling, image processing, and network science \cite{CLRS,GT14}. The main result of this paper is a faster exact maxflow algorithm on sparse directed graphs in the weakly polynomial setting, where the runtime depends logarithmically on the capacities.
\begin{theorem}
\label{thm:main}
There is an algorithm that on a graph $G$ with $m$ edges
and integer capacities in $[1, U]$
computes a maximum flow between vertices $s, t$ in time
$\O(m^{\frac32 - \frac{1}{328}} \log U).$
\end{theorem}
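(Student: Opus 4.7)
The plan is to implement M\k{a}dry's interior point method (IPM) for maxflow, in which each iteration solves a Laplacian system to obtain an augmenting electrical flow along the weighted central path. Naively this gives $\widetilde{O}(\sqrt{m})$ iterations of $\widetilde{O}(m)$ work each, for the $\widetilde{O}(m^{1.5}\log U)$ bound from Goldberg--Rao. The central idea is to avoid recomputing the electrical flow from scratch at each step: since edge resistances only change as a function of the flow on the edge, and most edges' resistances change slowly across IPM iterations, one should maintain the flow dynamically. The target is to pay sublinear amortized time per IPM step on the edges that do \emph{not} have large energy, and concentrate updates on the small set that does.

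First, I would instantiate the IPM carefully so that only an a priori bounded amount of flow is moved per step, and establish robustness: an approximate electrical flow $\widetilde{\ff}$ with appropriately small $\ell_2$-style error suffices to drive the central path forward. Second, I would split the per-step task into two data-structural primitives: a \Checker that, given the current resistances $\rr$ and demand, certifies that no edge carries electrical energy $r_e f_e^2$ above a threshold (via randomized sketches of the electrical flow vector, using that energy is a quadratic form controlled by effective resistances); and a \Locator that, when the \Checker fails, produces such a high-energy edge (exploiting connections between $r_e f_e^2$ and hitting probabilities in the associated random walk $\rw$). These are exactly the edges whose resistance update cannot be ignored, so identifying them lets the IPM progress while only touching a small set per iteration.

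Third, I would amortize. Changes to $\rr$ propagate through the data structures, and I would track a potential function measuring cumulative resistance movement; the IPM step-size analysis bounds how much this potential can grow in total across all $\widetilde{O}(\sqrt{m})\cdot \log U$ iterations. Balancing the two primitives' per-call cost $\beta_{\Checker}$ and $\beta_{\Locator}$ against the number of \Locator invocations yields the exponent. Choosing the data-structure parameters (sketch dimensions, walk lengths, and thresholds) to optimize this trade-off is what produces the specific saving of $1/328$ in the exponent.

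The hard part will be making the \Checker/\Locator pair work robustly under \emph{adaptive} resistance updates: because later queries depend on the outputs of earlier ones, naive sketching breaks, and one must either re-randomize carefully or use oblivious spectral sparsifier-style constructions that tolerate adaptive adversaries. A secondary obstacle is reconciling the approximation errors introduced by maintained (rather than exact) electrical flows with M\k{a}dry's centrality invariants; this will likely require a \Recenter routine that occasionally pays $\widetilde{O}(m)$ to restore the central path to high precision, and a proof that the total number of such recenterings remains within the desired budget.
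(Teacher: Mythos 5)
Your proposal matches the paper's approach in all essential respects: an IPM driven by approximate $s$-$t$ electrical flows, a \Locator/\Checker decomposition for finding and verifying high-energy edges, amortization via bounds on cumulative resistance change along the central path, a periodic $\O(m)$-cost \Recenter to restore exact centrality, and a careful treatment of adaptivity (which the paper resolves by making \Locator simulate an oblivious "check every edge" algorithm and by using a fresh \Checker per small step with rollbacks). The only cosmetic difference is that you invert the control flow between \Checker and \Locator, whereas the paper calls \Locator first to produce a candidate superset and then has \Checker verify each candidate independently.
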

In sparse graphs with polynomially large capacities, this is the first improvement over the classical $O(m^{3/2}\log m \log U)$ time algorithm of Goldberg-Rao \cite{GR98}, which represented the culmination of a long line of work starting from the work of Hopcroft-Karp \cite{HK73} for bipartite matchings and Karzanov and Even-Tarjan for unit capacity maxflow \cite{K73,ET75}. Improving over this exponent of 3/2 for graph optimization problems has been intensively studied over the past decade via combinations of continuous optimization and discrete tools.

This line of work was initiated by Christiano-Kelner-M\k{a}dry-Spielman-Teng \cite{CKMST11} who gave a $\O(m^{4/3}\eps^{-O(1)})$ time algorithm for $(1+\eps)$-approximate maxflow. This has since been improved to $m^{1+o(1)}\eps^{-O(1)}$ \cite{S13,KLOS14} and the focus shifted to achieving improved $\eps$ dependencies \cite{S18,ST18} and exact solutions. Towards this, two breakthrough results were the $\O(m^{10/7}U^{1/7})$ time algorithm of M\k{a}dry \cite{M13,M16} which broke the $3/2$ exponent barrier on unweighted graphs, and the $\O(m\sqrt{n} \log U)$ time algorithm of Lee-Sidford \cite{LS19}, which was an improvement for any dense graph. Since then, these results respectively have been improved to yield algorithms that run in time $m^{4/3+o(1)}U^{1/3}$ \cite{LS20_STOC,KLS20} and $\O((m+n^{3/2})\log U)$ \cite{BLNPSSSW20,BLLSSSW20:arxiv}. However, the $3/2$ exponent of Goldberg-Rao \cite{GR98} remained the state-of-the-art on sparse capacitated graphs.

Classical approaches to solving maxflow use augmenting paths to construct the final flow.
Our algorithm, as well as the recent improvements above, instead computes the maxflow using a sequence of \emph{electric flows}. For resistances $\rr \in \R^E_{\ge0}$, the $s$-$t$ electric flow is the one that routes one unit from $s$ to $t$ while minimizing the quadratic energy:
\[
\min_{\ff \text{ routes one } s\text{-}t \text{ unit}} \sum_{e \in E} \rr_e\ff_e^2.
\]
Electric flows are induced by vertex potentials, and correspond to solving a linear system in the graph Laplacian. Motivated by this connection with scientific computing, two decades of work on combinatorial preconditioners led to the breakthrough result by Spielman-Teng~\cite{ST11} that Laplacian systems and electrical flows can be computed to high accuracy in $\O(m)$ time.

Our algorithm, as well as the recent faster runtimes for dense
graphs~\cite{BLNPSSSW20,BLLSSSW20:arxiv}, are built upon the
dynamic processes view of flow augmentations~\cite{GN80,ST83} that provided much
impetus for the study of dynamic graph data structures.
In this view, the final flow is obtained via a sequence of flow
modifications, and dynamic tree data structures such as link-cut trees~\cite{ST83, GN80} are designed to allow for sublinear time
identification and modification of edges that limit flow progress. Concretely, the maxflow is built using a sequence of $\O(\sqrt{m})$ electric flows on graphs with slowly changing resistances. This corresponds to the celebrated interior point method (IPM henceforth) which shows that linear programs can be solved using $\O(\sqrt{m})$ slowly changing linear system solves \cite{K84,Vaidya89}. To implement this framework we design data structures that on a graph with dynamic changing resistances:
\begin{itemize}
\item Identify all edges $e$ with at least an $\eps^2$ fraction of the total electric energy in the electric flow $\ff$ on a graph with resistances $\rr$: \begin{align*} \rr_e\ff_e^2 \ge \eps^2 \sum_{e \in E} \rr_e\ff_e^2. \end{align*}
\item Estimate the square root of energy or flow value of an edge up to an additive error of $\pm \eps/10 \cdot \sqrt{\sum_{e \in E} \rr_e\ff_e^2},$ i.e. a $\eps/10$ fraction of the square root of the total electric energy.
\end{itemize}

Finally, we leverage this data structure along with several modifications to the outer loop to achieve our main result \cref{thm:main}.

\subsection{Key Algorithmic Pieces}
\label{subsec:keyalgo}

At a high level, our algorithm implements an IPM which augments $\O(\sqrt{m})$ electric flows by building a data structure that detects large energy edges in an $s$-$t$ electric flow on a dynamic graph.
In addition to this, our algorithm requires several modifications to the IPM. First, our data structure requires properties specific to $s$-$t$ electric flows to achieve its guarantees. Consequently, we are forced to design an IPM that only augments via $s$-$t$ electric flows. On the other hand, a standard IPM alternates between routing electric flows and routing additional electric circulations every step. Second, our data structures are randomized and thus their outputs may affect future inputs when applied within the IPM. This requires delicately modifying our algorithm to bypass this issue. We now give more detailed descriptions of each piece.

\paragraph{Locating high energy edges in $s$-$t$ electric flows.} Our data structures for dynamic electric flows are based on the interpretation of electrical flow as random walks on the graph \cite{doyleS84}, which has been used previously for dynamic effective resistances \cite{DGGP19,CGHPS20}. In our setting we wish to detect edges with at least $\eps^2$ fraction of the $\ell_2$ electric energy. To achieve this, we use a spectral vertex sparsifier, which approximates the electric flow and potentials on this smaller set of \emph{terminal} vertices. We use this sparsifier as well as additional random walks to maintain the result of an $\ell_2$ heavy hitter sketch on the electric flow vector. This allows us to approximately maintain a short sketch vector and thus recover the large entries of the electric flow vector.

Our data structure has several subtleties which affect its interaction with the outer loop.
First, it is essential that the electrical flows maintained are $s$-$t$ to ensure additional stability in our algorithms.
$s$-$t$ electrical flows have additional, sharper, upper bounds on vertex potentials and flow values on edges, which do not hold for electrical flows with more general demands.
Secondly, we only maintain an approximate $\ell_2$ heavy hitter sketch but argue that this suffices for detection of large energy edges (\cref{lem:heavyhitter}).

\paragraph{IPM with $s$-$t$ flows.} We must modify the IPM outer loop to interact with our dynamic electric flow data structure described above which fundamentally uses properties specific to $s$-$t$ flows. The standard IPM \cite{Ren88} which uses electric flows to solve maxflow \cite{DS08,M16,LS20_STOC} has both a \emph{progress} phase where an $s$-$t$ electric flow is augmented, and a \emph{centering} phase where electric circulations are added to slightly fix the flow.

We modify the IPM to only use $s$-$t$ electric flows to make more than $1/\sqrt{m}$ progress before we pay $\O(m)$ time to center using electric circulations. We leverage two key properties of the method to achieve this. First, we argue that damping the step size of the IPM causes errors to accumulate more slowly. This allows us to use several $s$-$t$ electric flow steps (maintained in sublinear time by data structures) as opposed to flows with general demands before a centering step. Also, to argue this formally we use the fact that the resistances are multiplicatively stable to within a polynomial factor of the number of steps of standard size $1/\sqrt{m}$.

\paragraph{Randomness in data structures and adaptivity.} Because we are applying randomized data structures inside an outer loop, their outputs may affect future inputs. In the literature, this is referred to as an \emph{adaptive adversary}. On the other hand, our data structures na\"{i}vely only work against \emph{oblivious adversaries}, where the inputs are independent of the outputs and randomness of the data structure.

We handle these issues by carefully designing our data structures and outer loop to not leak randomness between components, instead of making our data structures deterministic or work against adaptive adversaries in general. We start by breaking the data structure into a \Locator~and a \Checker, based on ideas from \cite{FMPSWX18}. The \Locator~returns a superset that contains all edges with large energy with high probability, and the \Checker~independently estimates the energies of those edges to decide whether to update them. This way, the randomness of \Locator~does not affect its inputs. However, the outputs of \Checker~may affect its inputs. Now, we leverage that the sequence of flows encountered during the IPM outer loop are almost deterministic, and there are only a few iterations between deterministic instances. This way, we can use a separate \Checker~for each of these iterations before resetting every \Checker~to the deterministic instance.

\subsection{Heuristic Runtime Calculation}
\label{subsec:heuristic}

The following key properties of the IPM outer loop are necessary to understand why a sublinear time data structure suffices to achieve a $m^{3/2-\Omega(1)}$ time algorithm for capacitated maxflow.
\begin{enumerate}
    \item Computing electric flows on graphs whose resistances are within $1\pm\gamma$ of the true resistances suffices to make $1/\sqrt{m}$ progress (for some parameter $\gamma = \widetilde{\Omega}(1)$).
    \item The resistances change slowly multiplicatively throughout the course of the algorithm. In fact, at most $\O(T^2\gamma^{-2})$ edges have their resistances change by at least $1\pm\gamma$ multiplicatively over $T$ steps of the method for any $\gamma$ (\cref{lemma:l2change}). In particular, over all $\O(\sqrt{m})$ iterations of electric flow computation, each edge's resistance changes $\O(1)$ times on average.
    \item The resistance of an edge is approximately the inverse of its residual capacity squared. This way, an edge's resistance changes significantly if the electric energy of the edge is large in the computed electric flows.
\end{enumerate}

If we have a data structure which detects edges with large energies in $m^{1-\eta}$ amortized time per edge for some constant $\eta > 0$, then we can leverage it along with the above facts to design the following algorithm.
We take steps in batches of size $k$, after which we pay $\O(m)$ time to fix and recenter our flow to find the true underlying resistances. During each batch, we use the data structure to detect all edges whose resistance changed by more than $1+\widetilde\Omega(1)$ multiplicatively, and return their resistances.

Now we estimate the runtime of this algorithm. The cost of recentering is $O(m) \cdot \O(m^{1/2}/k) = \O(m^{3/2}/k),$ as there are $\sqrt{m}$ total steps and we recenter every $k$ iterations. Also, by the second item above that at most $\O(k^2)$ edges have their resistances change significantly during a batch, so the data structure takes $\O(m^{1-\eta}k^2)$ time per batch. The total time used by the data structure is therefore $\O(m^{1-\eta}k^2) \cdot \O(\sqrt{m}/k) = \O(m^{3/2-\eta}k).$ Taking $k = m^{\eta/2}$ gives a final runtime of $\O(m^{3/2-\eta/2}),$ which is less than $m^{3/2}$ as desired.
The tradeoffs in our algorithms are significantly
higher and more complicated in reality: we have higher exponents on the batch size $k$ due to compounding errors in the method,
and we have additional layers of intermediate rebuilds.
Nonetheless, the final tradeoffs by which we obtain Theorem~\ref{thm:main} are still similar in spirit.

\subsection{Related Work and Discussion}
\label{subsec:related}

There is a long history of work on the maximum flow problem, as well as work related to each of our key algorithmic pieces in \cref{subsec:keyalgo}: dynamic graph data structures, IPMs in the context of data structures, and random and adaptivity in data structures.

Our discussion below focuses on algorithms 
whose capacity dependence is logarithmic (weakly polynomial). The weakly polynomial setting also is equivalent to the setting where the edge capacities are positive real numbers, and we wish to compute an $\eps$-approximate solution in runtime depending on $\log(1/\eps)$.
In the strongly polynomial setting, where the algorithm runtime has no capacity dependence,
following early work of \cite{Kar74,GN80,ST83},
the best known maxflow runtime is $O(mn)$ and $O(n^2/\log n)$ when $m = O(n)$ \cite{Orlin13, KRT94}.


\paragraph{Maxflow Algorithms}

Network flow problems are widely
studied in operations research,
theoretical computer science,
and optimization~\cite{GT14}.
Among the many variants, the capacitated maxflow problem
captures key features of both combinatorial graph algorithms and numerical optimization routines.
As a result, it has an extensive history starting from the work
of Dinic and Edmonds-Karp \cite{Dinitz70, EK73}.
The seminal work by Edmonds-Karp~\cite{EK73} presented two
algorithms: an $O(n^2m)$ strongly polynomial time algorithm
by finding shortest augmenting paths, and an $O(m^2 \log{U})$
weakly polynomial time algorithm based on finding bottleneck
shortest paths.
Improving these algorithms provided motivation for dynamic tree 
data structures~\cite{GN80}, dual algorithms~\cite{GT88},
and numerical primitives such as scaling~\cite{GR98}.
These progress culminated in a runtime of
$\O(\min(m^{3/2}, mn^{2/3})\log U)$: for more details,
we refer the reader to the review by Goldberg and
Tarjan~\cite{GT14}.

In the two decades since Goldberg-Rao~\cite{GR98}, all improvements on the exact maximum flow problem rely on
continuous optimization techniques.
These include the $\O(m\sqrt{n}\log U)$ runtime of Lee-Sidford \cite{LS19}, and several results culminating in a $\O(m/\eps)$ runtime for $\eps$-approximate maxflow on undirected graphs \cite{CKMST11,S13,KLOS14,P16,S17,ST18}.
Additionally, a line of work \cite{M13,M16,LS20_STOC} achieving a $m^{4/3+o(1)}$ runtime in uncapacitated graphs \cite{KLS20} by using weight changes and $\ell_p$-norm flows \cite{KPSW19} to eliminate high energy edges, as opposed to our approach of using data structures to detect them. Recently, approaches that combine interior point methods (IPMs) with graphical data structures achieved a $\O((m + n^{3/2}) \log U)$ runtime for maxflow \cite{BLNPSSSW20,BLLSSSW20:arxiv}.
In this way, the bound of Golberg-Rao \cite{GR98} has been improved in higher error approximate settings (on undirected graphs), for uncapacitated graphs, and for dense capacitated graphs.
However, our result is the first to show an improvement for exact maxflow in the weakly polynomial parameter regime central to the line of work spanning from Edmonds-Karp~\cite{EK73} to
Goldberg-Rao~\cite{GR98}:
sparse directed graphs with polynomially bounded capacities.

\paragraph{Data Structures for IPMs.}
Starting from early work of Karmarkar \cite{K84} and Vaidya \cite{Vaidya89},
several results leverage the fact that the linear systems resulting from IPMs are slowly changing, and that only approximate solutions are needed to implement the method. In this way, data structures for efficiently maintaining the inverse of dynamically changing linear systems have been used to speed up IPMs for linear programming \cite{LS15,CLS19,B20a,B20b,BLSS20,BLNPSSSW20} and recently semidefinite programming \cite{JKLPS20}. Additionally, our algorithm uses the fact the multiplicative change in resistances is at most polynomial in the number of steps taken. While this type of result was previously known\footnote{Personal communication with Yin Tat Lee and Aaron Sidford \cite{LSpersonal}, also similar in spirit to \cite[Lemma 67]{LS19}.}, we are not aware of other IPM analyses that use this fact.

In the graphical setting of maxflow, this corresponds to dynamically maintaining electric flows in a graph with changing resistances.
Our result is heavily motivated by the recent~\cite{BLNPSSSW20}
and its follow-up \cite{BLLSSSW20:arxiv}
which obtained $\O(m + n^{1.5})$ type running times
for flow problems. The flow-based version of these results use dynamic sparsification algorithms to maintain approximate electric flows in $\O(n)$ instead of $\O(m)$ time per iteration.
Additionally these works required several other techniques to achieve their runtimes, including robust central paths/different measures of centrality, and weighted barriers. While we do not use these pieces in our algorithm, we are optimistic that understanding how to apply these techniques could improve the runtime of our method.

Also, $\ell_2$ heavy hitters are used in \cite{BLNPSSSW20,BLLSSSW20:arxiv} and our algorithms; however, we open up the standard statement of $\ell_2$ heavy hitter \cite{KNPW11} to prove that an approximate matrix-vector product suffices to implement the heavy hitter data structure (\cref{lem:heavyhitter}). Critically, we treat the heavy hitter sketch matrix as demands on which we compute electric flows which allows for interaction with random walks and spectral vertex sparsification.

\paragraph{Dynamic graph data structures.} The data structures we use to make sublinear time steps in interior
point methods broadly belong to data structures maintaining approximations to optimization problems in dynamically changing
graphs~\cite{OR10,GP13,BS15,BHN16,ADKKP16,HKN18,FG19,CGHPS20}.
Our maintenance of electrical flows is most directly related to 
dynamic effective resistance data structures~\cite{GHP17,GoranciHP18,DGGP19,CGHPS20}.
In particular, they heavily rely on dynamic vertex sparsifiers,
which by itself has also received significant attention in
data structures~\cite{PSS19,Goranci19:thesis,JS20:arxiv}.
In particular, our sublinear runtime comes in part from maintaining a spectral vertex sparsifier onto a smaller vertex subset.

\paragraph{Adaptivity and randomness.} Our data structures are randomized, and are accessed in an adaptive manner:
queries to it may depend on its own output.
While there has been much recent work on making randomized sparsification
based data structures more resilient against such adaptive
inputs~\cite{NS17,W17,NSW17,SW19,CGLNPS19,BBG20},
our approach at a high level bypasses most of these issues because the
(non-robust, unweighted) central path of IPMs is a fixed object.
In this way, our randomized data structures are essentially
pseudo-deterministic \cite{GG11,GGR13}: while the algorithm is randomized, the output is the same with high probability.
Additionally, the top-level interactions of our randomized components involve calling one data structure inside another to hide randomness. This has much in common with the randomized approximate min-degree algorithm from~\cite{FMPSWX18}.

\subsection{General Notation and Conventions}
\label{subsec:notation}

We use plaintext to denote scalars,
bold lower case for vectors, and bold upper case for matrices.
A glossary of variables and parameters is given in Appendix~\ref{appendix:VAR}.
We will use the $\widehat{\cdot}$ notation to denote
a later, modified, copy of a variable.
As our update steps are approximate, we will also use the 
$\widetilde{\cdot}$ notation to denote approximate/error carrying
versions of true variables.

We use $\O(\cdot)$ to suppress logarithmic factors in $m$ and $\widetilde{\Omega}(\cdot)$ to suppress the inverse logarithmic factors in $m$.
We let $\zzero, \oone \in \R^n$ denote the all zeroes/ones vectors respectively. For vectors $\xx, \yy \in \R^n$ we let $(\xx \circ \yy)_i \defeq \xx_i\yy_i.$
When context is clear, we also use $\frac{\xx}{\yy}$ to  denote 
the entry-wise division of two vectors, that is
$\left(\frac{\xx}{\yy}\right)_i
\defeq
\frac{\xx_i}{\yy_i}.$
We use $\abs{\xx}$ and $\abs{\mY}$ to denote the entry-wise absolute values of vector $\xx$ and matrix $\mY$.

Instead of tracking explicit constants in our parameters, we sometimes use $c$ and $C$ to denote sufficiently small (respectively large) absolute constants. E.g., for a parameter $k > 1$, we write $\eps = ck^{-6}$ to denote that there is a constant $c$ where $\eps = ck^{-6},$ and we will set $c$ later to be sufficiently small. $c$ and $C$ may denote different constants in different places. We use ``with high probability'' or ``w.h.p.'' to mean with probability at least $1-n^{-10}$.

We say that a symmetric matrix $\mM \in \R^{n \times n}$ is positive semidefinite (psd) if $\xx^\top\mA\xx \ge 0$ for all $\xx \in \R^n.$ For psd matrices $\mA, \mB$ we write $\mA \pe \mB$ if $\mB - \mA$ is psd. For positive real numbers $a, b$ we write $a \approx_{\gamma} b$ to denote $\exp(-\gamma)b \le a \le \exp(\gamma)b.$
For psd matrices $\mA, \mB$ we write $\mA \approx_{\gamma} \mB$ if $\exp(-\gamma)\mB \pe \mA \pe \exp(\gamma)\mB$.

\subsection{Organization of Paper}
\label{subsec:organization}

The remainder of the paper is organized as follows. In \cref{sec:overview2} we elaborate on each major piece of our algorithm introduced in \cref{subsec:keyalgo}: dynamic electric flow data structures, our modified IPM outer loop, and handling of randomness and adaptive adversaries. Then in \cref{sec:Flows} we give the linear algebraic formulation of the maximum flow problem. We then introduce the key notion of electric flows and its relationship with linear systems and random walks.

The remainder of the paper is organized as follows. In \cref{sec:checker} we first build a dynamic spectral vertex sparsifier and apply it to build a \Checker~data structure for estimating flows on edges. In \cref{sec:Locator} we extend this to build a \Locator~data structure for heavy hitters of electric flows, i.e. detecting edges in the electric flow with large energies. In \cref{sec:pathcorrect} we formally give the interior point method setup and argue that we can use the above \Checker~and \Locator~data structures to give a \textsc{RecenteringBatch}~procedure that makes more than $m^{-1/2}$ progress in amortized $\O(m)$ time. We additionally show several stability bounds that are essential for analyzing the runtime. In \cref{sec:proofmain} we explain how to trade off all parameters to formally argue our main result (\cref{thm:main}).

Finally, the appendix contains several omitted proofs in \cref{sec:proofs}, and a table of variables, notations, and parameters are given in \cref{appendix:VAR}.
\section{Overview of Approach}
\label{sec:overview2}

In this section we elaborate on the key pieces of our approach described in \cref{subsec:keyalgo}: dynamic electrical flow data structures (\cref{sec:overviewlocator}), an interior point method for maxflow using this data structure (\cref{sec:overviewipm}), and how to handle issues with randomness and adaptive adversaries (\cref{sec:overviewadaptive}).

\subsection{Overview of \Locator~for Dynamic Electric Flows}
\label{sec:overviewlocator}

Recall the dynamic electric flow problem we solve. For a graph $G = (V, E)$ with changing resistances $\rr \in \R^E_{\ge0}$ such that the energy of the electric flow $\ff$ is at most $1$ always, i.e. $\sum_{e \in E} \rr_e\ff_e^2 \le 1$, return a set of at most $\O(\eps^{-2})$ edges $S \subseteq E$ that contains all edges with energy at least $\eps^2$, i.e. $\rr_e\ff_e^2 \ge \eps^2$ for $e \in S$. We wish to solve this in amortized sublinear time per resistance update.

At a high level, our approach is based on the vertex sparsification view towards data structures. In this view, we achieve sublinear runtimes by maintaining an object onto a smaller subset of terminal vertices $C \subseteq V$ that approximately preserve the desired property in our data structure. For example, in our setting we will leverage \emph{spectral vertex sparsifiers} that maintain the electrical properties of the graph onto the set of terminals, such as pairwise effective resistances. Alternatively, this can be viewed as maintaining the spectral properties of the inverse of the graph Laplacian (and is known as the \emph{Schur complement}).
In our algorithms, the set $C$ will increases in size throughout our data structure to ensure that edge changes happen within $C$. Hence the focus is on maintaining properties onto $C$ while new vertices are added to it throughout the algorithm.

We detect edges with large electric energies by first setting up a linear $\ell_2$ heavy hitter sketch \cite{KNPW11} against the energy vector $\mR^{1/2}\ff$, where $\mR$ is the diagonal matrix of resistances and $\ff$ is the electric flow. We then approximately maintain the sketch using random walks and spectral vertex sparsifiers. At a high level, an $\ell_2$ heavy hitter sketch works by estimating the total $\ell_2$-norm / energy of various edge subsets using Johnson-Lindenstrauss sketches up to accuracy $\eps$. In this way, for $\O(\eps^{-2})$ sketch vectors $\qq \in \{-1, 0, 1\}^m$, we must maintain the quantity $\l \qq, \mR^{1/2}\ff \r$. Now we relate the electric flow to the electric potentials $\pphi$ using Ohm's law: for any edge $e = (u, v)$ we have $\ff_e = (\pphi_u - \pphi_v)/\rr_e.$ Written algebraically, this is $\ff = \mR^{-1}\mB\pphi$ where $\mB$ is the (unweighted) edge-vertex incidence matrix of the graph $G$. Plugging this into our previous formula gives us
\[ \l \qq, \mR^{1/2}\ff \r = \l \qq, \mR^{-1/2}\mB\pphi \r = \l \mB^\top\mR^{-1/2}\qq, \pphi \r. \]
For simplicity we now let $\dd \defeq \mB^\top\mR^{-1/2}\qq.$ Intuitively, the sketch vector $\qq$ is inducing a demand $\dd$ on the vertices which we now want to dot against the vertex potentials $\pphi$.

Now our goal is to use a smaller set of terminal vertices $C$ to estimate the quantity $\l \dd, \pphi \r.$ We achieve this by leveraging the fact that we can recover potentials outside $C$ by \emph{harmonically extending} the potentials restricted to $C$: $\pphi_C$. Precisely, the potential $\pphi_v$ at vertex $v \neq s,t$ is the average of its neighbors, weighted proportional to inverse resistances. Equivalently, starting a random walk at a vertex $v \notin C$ and taking exit edges proportional to inverse of resistances is a martingale (preserves mean) on the potentials. In this way we can write $\pphi = \mathcal{H}\pphi_C$ where $\mathcal{H} \in \R^{V(G) \times C}$ is this extension operator. Hence \[ \l \dd, \pphi \r = \l \dd, \mathcal{H}\pphi_C \r = \l \mathcal{H}^\top\dd, \pphi_C \r. \]
To compute this final quantity we must maintain $\mathcal{H}^\top\dd$ and $\pphi_C$ efficiently in sublinear time. For the former, given our random walk interpretation of $\mathcal{H}$, we may interpret $\mathcal{H}^\top\dd$ as the vector given by ``projecting'' $\dd$ onto the terminal set $C$ via random walks, and we write $\ppi^C(\dd) \defeq \mathcal{H}^\top\dd$ (\cref{def:proj}). In other words, the demand vector $\dd$ is distributed onto $C$ based on the probabilities that random walks from vertices $v$ hit $C$ for the first time. This interpretation of $\mathcal{H}^\top\dd$ allows us to build random walks to simulate the changes to this vector under the terminal set $C$ growing in size. For the latter, we maintain $\pphi_C$ by using the approximate spectral vertex sparsifier of \cite{DGGP19} which approximately maintains the Laplacian inverse on $C$ and hence the potentials. This construction is also based on running random walks from edges outside $C$ until they hit $C$.

We briefly elaborate on how resistance updates affect the terminal set $C$ and the random walks we maintain. We start by initializing $C$ to be a random set of size $\beta m$. (The reader can imagine $\beta = m^{-0.01}$ so that $|C|$ is sublinear.) We run random walks from each edge or vertex until it hits $C$. These walks are short, specifically visiting $\O(\beta^{-1})$ distinct vertices with high probability, because $C$ was chosen to be $\beta m$ random vertices.
Now, in general when the resistance of an edge $e = (u, v)$ is changed we add both endpoints $u, v$ of $e$ to $C$. Now the edge $e$ will be contained fully inside $C$ so we can directly perform the resistance change. However we must update our random walks due to $C$ changing. To do this we shortcut each random walk we computed to when it hit the larger set $C$ and update the necessary properties. A depiction of this process is given in \cref{fig:SC}.
\begin{figure}[!ht]
    \centering
    \includegraphics[width=\textwidth]{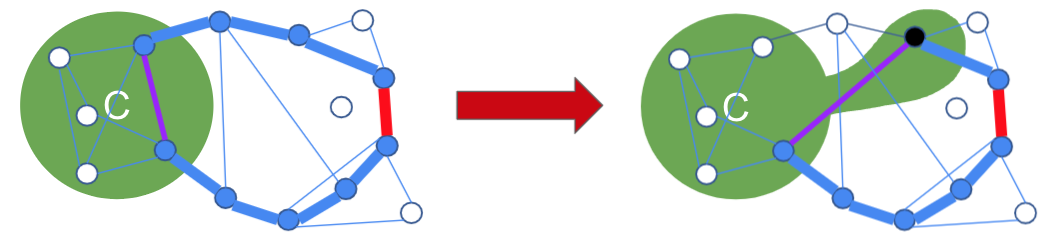}
    \caption{Shortcutting random walks from the red edge to a terminal set $C$ when under the insertion of the black vertex to $C$. \label{fig:SC}}
\end{figure}

To conclude, we describe some difficulties with the approach described above, specifically pertaining to maintaining the projected demand $\mathcal{H}^\top\dd = \ppi^C(\dd)$. The first concern is that entries of $\dd = \mB^\top\mR^{-1/2}\qq$ are too large if some edge $e$ has resistance $\rr_e$ close to $0$ (as then $\rr_e^{-1/2}$ is large). We handle this with the observation that edges with small resistances cannot have large energies in an $s$-$t$ electric flow (\cref{lemma:badres}), so we can restrict our heavy hitter sketch to edges with sufficiently large resistances. Also, na\"{i}vely estimating the projection $\ppi^C(\dd)$ with random walks from each vertex accumulates too much variance because the demand vector $\dd$ is dense. Instead we exactly compute $\ppi^C(\dd)$ by solving a linear system to start, and we estimate the \emph{change} in this vector under insertions to $C$ by locally sampling random walks from the inserted vertex (\cref{fact:projchange}). Finally, we periodically recalculate this vector to ensure that error does not accumulate.

\subsection{Overview of Interior Point Method}
\label{sec:overviewipm}

In this section we formalize the outer loop that our algorithm uses to argue that $\O(\sqrt{m})$ approximate electric flow computations suffice to compute a maxflow. We assume that the graph $G$ is undirected \cite{Lin09,M13} and that we know the optimal maxflow value $F^*$ by a standard binary search reduction. Given this, the \emph{central path} is a sequence of flows $\ff(\mu)$ for $\mu \in (0, F^*]$ defined by the minimizers of a logarithmic barrier potential:
\begin{align}
\ff\left(\mu \right)
\defeq
\argmin_{\mB^\top \ff = \left(F^{*} - \mu \right) \cchi_{st}}V\left(\ff\right) \enspace \text{ for } \enspace V(\ff) \defeq \sum_{e \in E} -\log(\uu_e - \ff_e)-\log(\uu_e+\ff_e). \label{eq:logbarrier}
\end{align}
Note that for $\mu = F^*$ that $\ff(F^*) = \textbf{0}$, the zero flow. Starting there, the goal of our algorithm is to follow this central path by slowly decreasing $\mu$ towards $0$ while computing the flows $\ff(\mu)$ along the way. While $\mu$ never equals $0$ exactly, the flows $\ff(\mu)$ approach the maximum flow as $\mu$ approaches $0$. We want to emphasize that the sequence of flows encountered by the algorithm along the way is \emph{deterministic} in this sense, as the minimizer of the convex problem \eqref{eq:logbarrier} is unique.

We remark that this central path (which is adapted from \cite{M16}) differs from the more standard central path used to solve mincost flow with cost $\mathbf{c}^\top \ff$. While this version can also work by setting $\mathbf{c}$ as a large negative cost on an $s$-$t$ edge, we choose to work with our formulation because the intuition that we are augmenting by $s$-$t$ electric flows is useful for our data structure based approach.

Now consider trying to decrease the path parameter $\mu$ to $\muhat < \mu$ starting from the current central path flow $\ff(\mu)$. Then we wish compute a flow $\Delta\ff$ which routes $\mu-\muhat$ units from $s$ to $t$ such that adding $\Delta\ff$ to our current flow $\ff(\mu)$ gets to the minimizer of \eqref{eq:logbarrier} for $\muhat$, i.e. $\ff(\muhat) = \ff(\mu) + \Delta\ff.$ While directly computing $\Delta\ff$ exactly is more difficult, one can show that up to a first order approximation, $\Delta\ff$ is given by the electric flow that routes $\mu-\muhat$ units from $s$ to $t$, with resistances given by $\rr_e = (\uu_e-\ff(\mu)_e)^{-2} + (\uu_e+\ff(\mu)_e)^{-2}.$

To handle the approximations induced by using $s$-$t$ electric flows, we require another fact: if we are able to calculate a flow $\fftil$ that is ``close'' to $\ff(\muhat)$ on all edges, then we can compute $\ff(\muhat)$ exactly using $\O(m)$ additional time (by computing additional electric circulations). Here, $\fftil$ is close to $\ff(\muhat)$ if all residual capacities differ by at most a multiplicative $1.1$ factor (\cref{lemma:recenter}). Now, one can show that if $\muhat = (1-c/\sqrt{m})\mu$ for a small constant $c$, $\Delta\fftil$ is the electric $s$-$t$ flow routing $\mu-\muhat$ units (even with approximate resistances), and $\fftil = \ff(\mu) + \Delta\fftil$, then $\fftil$ is close to $\ff(\muhat)$. Thus, this gives a method that terminates in $\O(\sqrt{m})$ iterations and $\O(m^{3/2})$ time.

\begin{figure}[!ht]
    \centering
    \begin{tikzpicture}[scale=0.85]
    \node[anchor=west] at(0, 9.6){$\ff(\mu)$};
    \draw (0,9.6) to[out = 270, in = 180] (12,0);
    \draw (0,9.6) [dashed,->]to (2, 7.2);
    \draw (2,7.2) [double, ->] to (0.5, 6.8);
    \node[anchor=east] at(0.5, 6.8){$\ff((1 - 1/\sqrt{m})\mu)$};
    \draw (0.5,6.8) [dashed,->]to (0.7, 4);
    \draw (0.7,4) [double, ->] to (1.8, 4.4);
    \node[anchor=west] at(1.8, 4.4){$\ff((1 - 2/\sqrt{m})\mu)$};
    \draw (1.8,4.4) [dashed,->]to (5, 3.2);
    \draw (5,3.2) [double, ->] to (4.3, 2.2);
    \node[anchor=east] at(4.3, 2.2){$\ff((1 - 3/\sqrt{m})\mu)$};
    \draw (4.3,2.2) [dashed,->]to (8, 1.7);
    \draw (8,1.7) [double, ->] to (7.7, 0.6);
    \node[anchor=east] at(7.7, 0.6){$\ff((1 - 4/\sqrt{m})\mu)$};
    \draw (7.7,0.6) [dashed,->]to (12, 0.8);
    \draw (12,0.8) [double, ->] to (12, 0);
    \node[anchor=west] at(12, 0){$\ff((1 - 5/\sqrt{m})\mu)$};
    \draw (0,9.6) [dotted, ->] to (0.1, 8.9);
    \draw (0.1, 8.9) [dotted, ->] to (0.3, 8.2);
    \draw (0.3, 8.2) [dotted, ->] to (0.5, 7.5);
    \draw (0.5, 7.5) [dotted, ->] to (0.9, 6.8);
    \draw (0.9, 6.8) [dotted, ->] to (1.2, 6.2);
    \draw (1.2, 6.2) [dotted, ->] to (1.6, 5.6);
    \draw (1.6, 5.6) [dotted, ->] to (2.0, 5.0);
    \draw (2.0, 5.0) [dotted, ->] to (2.5, 4.4);
    \draw (2.5, 4.4) [dotted, ->] to (2.9, 3.9);
    \draw (2.9, 3.9) [dotted, ->] to (3.5, 3.4);
    \draw (3.5, 3.4) [dotted, ->] to (4.1, 3.0);
    \draw (4.1, 3.0) [dotted, ->] to (4.6, 2.6);
    \draw (4.6, 2.6) [dotted, ->] to (5.3, 2.2);
    \draw (5.3, 2.2) [dotted, ->] to (6.2, 1.9);
    \draw (6.2, 1.9) [dotted, ->] to (6.9, 1.6);
    \draw (6.9, 1.6) [dotted, ->] to (7.6, 1.3);
    \draw (7.6, 1.3) [double, ->] to (7.7, 0.6);
    \draw (7.7, 0.6) [dotted, ->] to (8.6, 0.3);
    \draw (8.6, 0.3) [dotted, ->] to (9.6, -0.1);
    \draw (9.6, -0.1) [dotted, ->] to (10.7, -0.3);
    \draw (10.7, -0.3) [dotted, ->] to (11.9, -0.5);
    \end{tikzpicture}
    
    \caption{Our algorithm is split into $\O(\sqrt{m}/k)$ \emph{batches} of steps. Each batch is split into $\O(k^4)$ smaller steps to reduce error so that we can recenter in $\O(m)$ time at the end of each batch (\cref{lemma:recenter}). Each small step implements an approximate electric flow using data structures. \label{fig:ipm}.
    In the diagram, the dashed line is standard path following which recenters (depicted by the double arrows) after each step of size $1/\sqrt{m}$. The dotted line is the finer steps which takes more total steps but only recenters after $4/\sqrt{m}$ progress.}
\end{figure}
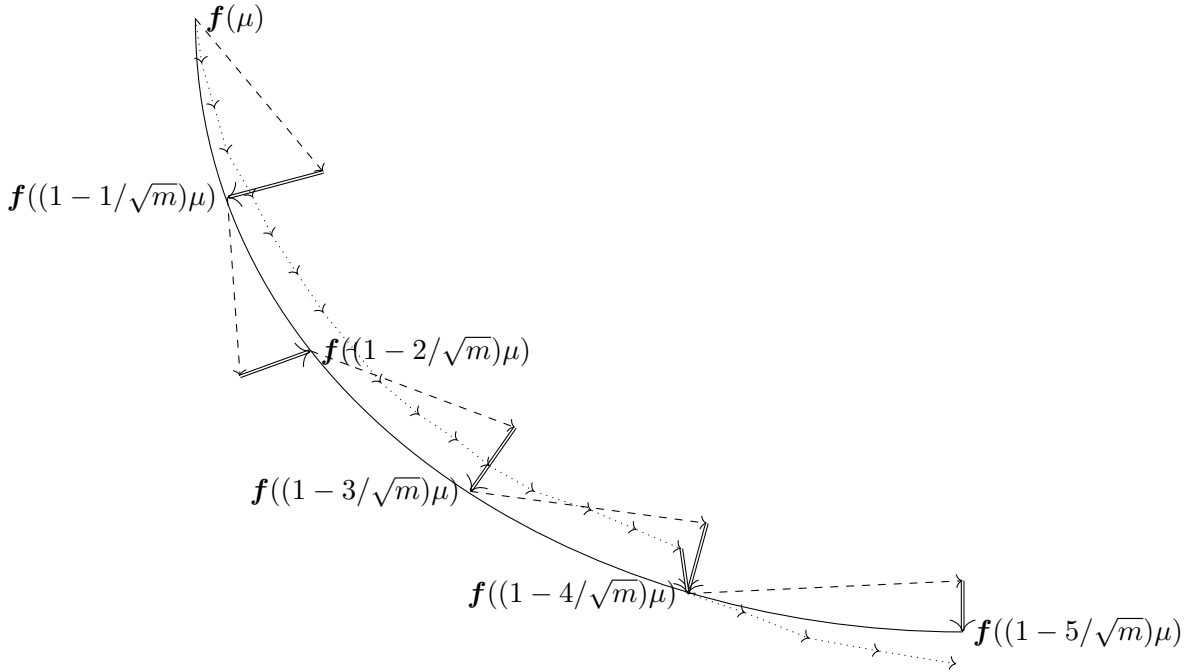

To achieve a $m^{3/2-\Omega(1)}$ time maxflow algorithm using IPMs we must be able to to decrease $\mu$ to $\muhat = (1-k/\sqrt{m})\mu$ for some $k = m^{\Omega(1)}$ in $\O(m)$ amortized time. This would achieve a $\O(m^{3/2}/k)$ time algorithm. Directly adding the electric flow routing $\mu-\muhat = k\mu/\sqrt{m}$ units from $s$ to $t$ accumulates too much error. We instead split this step into a batch of smaller steps, each which is an electric flow routing $(\muhat-\mu)/k^4 = \mu/(k^3\sqrt{m})$ units. We show in \cref{sec:pathcorrect} that because the electric flow is the first order approximation to the change in the central path, and because residual capacities are stable within a $O(k^2)$ factor during the step (\cref{lemma:stable}), that this sufficiently reduces error.

Now our method approximately implements each of the smaller steps in the batch using the data structure described in \cref{sec:overviewlocator}. We would like to emphasize again that even though the flows encountered during the small steps within a batch are randomized, we can pay $\O(m)$ at the end of each batch to move our flow back to the exact minimizer of \eqref{eq:logbarrier} so that it is deterministic. A depiction of the batches, splits into small steps, and recentering is given in \cref{fig:ipm}.

\subsection{Overview of Handling of Randomness}
\label{sec:overviewadaptive}
In this section we explain how to adapt our IPM outer loop and data structures to ensure that randomness in the data structures used to produce outputs does not affect the distribution of future inputs to itself. To this end, let us recall our setup described in the above \cref{sec:overviewlocator,sec:overviewipm}. We have a heavy hitter data structure which returns a set $S$ of edges that contains all edges with an $\eps^2$ fraction of the energy, and estimates their energies up to additive error.

Our first step towards addressing the randomness issue is to decouple the data structure. We split it into two parts: the first part which returns edges with large energies (\Locator), and a separate part which estimates again the energies of returned edges (\Checker). Our reasons for doing this are twofold -- it both helps with reasoning about where randomness arises in the algorithm, and provides mild runtime improvements.

\begin{figure}[!ht]
    \centering
    \includegraphics[width=\textwidth]{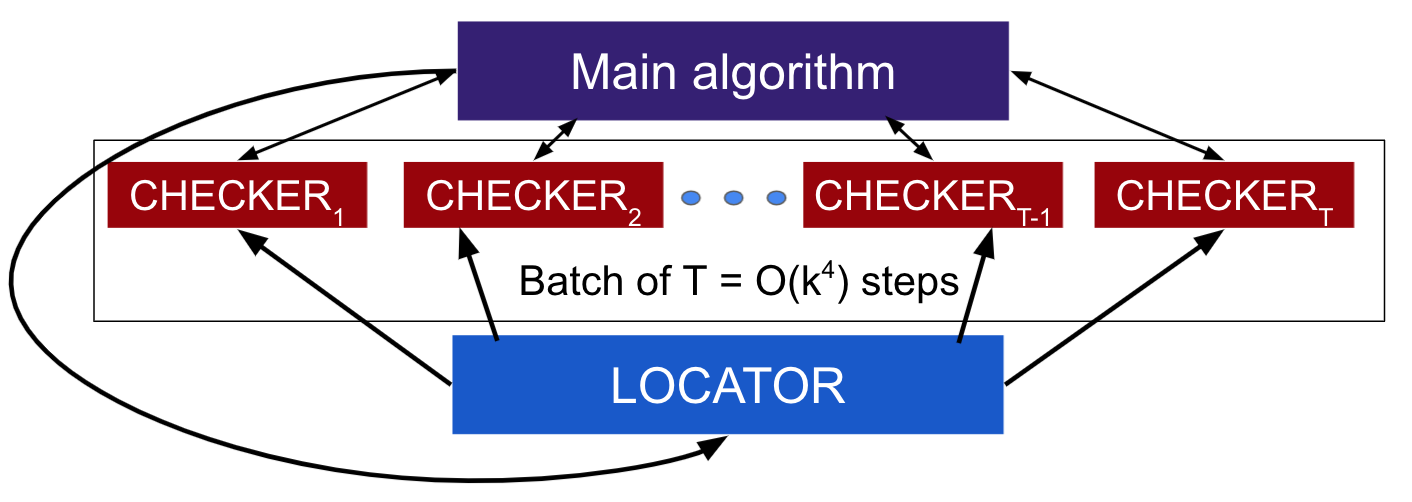}
    \caption{Within a batch of steps, during the $i$-th step the \Locator~passes an edge subset to the $i$-th \Checker. This communicates with the outside loop which passes updates to later \Checker s as well as the \Locator. \label{fig:adaptivity}}
\end{figure}

In this new setup the \Locator~corresponds to the heavy hitter, and returns a set of at most $O(\eps^{-2})$ edges that contains all edges with at least $\eps^2/10$ fraction of the electric energy. This set is fed to the \Checker~which independently estimates the amount of electric flow on that edge for each edge in the set. The \Checker~wishes to accept any edge with at least $\eps^2$ fraction of the electric energy and to estimate its flow value. Due to our IPM setup, these data structures are used within an outer loop consisting of $\O(\sqrt{m}/k)$ batches of steps, each which is split into $k^4$ smaller steps. After each batch, the algorithm perfectly moves back to the minimizer of \eqref{eq:logbarrier} in $\O(m)$ time and updates resistances. For each smaller step within the batch, we call \Locator~and \Checker~together to find edges with large energies / flows, and hence must have their resistances updated.
A depiction of the interactions between the \Locator, \Checker~data structures, and the algorithmic outer loop is given in \cref{fig:adaptivity}.

Note that the flow that we maintain is deterministically equal to the minimizer of \eqref{eq:logbarrier} at the start and end of each batch, so we can essentially update both the data structures deterministically. Hence we focus on ensuring the property that our data structures outputs do not affect futures inputs or states during the $k^4$ steps within a batch. We first describe why \Locator~can be assumed to be against oblivious adversaries, i.e. inputs are independent of the randomness. To understand why this is the case, consider the algorithm that does not use the \Locator~data structure at all, and instead uses the \Checker~to independently estimate the flow on every single edge and decides whether it believes the edge to have high energy. Clearly this algorithm is valid. We argue that using the \Locator~data structure simulates this algorithm that checks every edge. Indeed, we may set the thresholds for \Locator~so that any edge that \Checker~decides to update with non-negligible probability is included in the set of edges \Locator~returns with high probability. In this way, the outputs of \Locator~do not affect its future inputs as long as \Checker~is checking each edge independently.

While this explains why \Locator~may operate against oblivious adversaries, the same is unclear for \Checker. Indeed, different flow value estimates for an edge $e$ affect whether the resistance of $e$ is changed, and this can affect the internal state of \Checker~itself even during the same batch. To handle this, we actually construct $k^4$ independent \Checker~data structures, which we call $D^{\chk}_i$ for $i \in [k^4]$, one for each small step within a batch. We use $D^{\chk}_i$ to handle the set of edges that \Locator~returns at small step $i$ out of $k^4$. After this step, we stop updating $D^{\chk}_i$ until the end of the batch. At that time, we roll back all changes made to $D^{\chk}_i$ during the batch and then deterministically update its state to the new exact minimizer flow we compute. In this way, we can argue that the outputs of $D^{\chk}_i$ can only affect inputs of $D^{\chk}_j$ for $j > i$, so no $D^{\chk}_i$ has inputs affecting itself. In this way, we may assume that each $D^{\chk}_i$ actually operates against oblivious adversaries.

\section{Preliminaries: Maxflow and Electrical Flows}
\label{sec:Flows}
We start by formally defining the maxflow problem,
the electrical flow subroutine, and key objects for
representing both problems.
We will use $G = (V, E)$ to denote graphs,
$\uu$ to denote edge capacities,
and $\ff$ to denote flows.
We will also use
$\deg_v$ to denote the combinatorial/unweighted degree
of vertex $v$ in $G$, that is,
$\deg_v=|\{e\in E\mid e\ni v\}|$.

\subsection{Maxflow}

One can reduce directed maxflow to undirected maxflow with linear time overhead \cite{Lin09,M13}, so we assume our graph $G=(V, E)$ is undirected throughout. $m$ will be the number $|E|$ of edges and $n$ will be the number $|V|$ of vertices. We assume $m\le n^2$.

We also assume that $G$ is connected and has at least two vertices and one edge. Thus, each vertex of $G$ has at least one edge incident to it.
By standard capacity scaling techniques \cite{AO91} we may assume that $U = \poly(m)$ throughout this paper. Also, we assume we know the optimal numbers of units $F^*=\poly(m)$, as our algorithm works for any underestimate.
Furthermore, our algorithm actually works for general demand maxflows, as we can add a super source $s$ and super sink $t$ to accumulate to positive (respectively negative) demands on vertices.

We can then formalize the decision version of maxflow via
linear algebra.
Define $\mB$ to be the edge-vertex incidence matrix of $G$:
\[
\mB \in \R^{E \times V}
\qquad
\mB_{eu} =
\begin{cases}
1 & \text{if $u$ is the head of $e$}\\
-1 & \text{if $u$ is the tail of $e$}\\
0 & \text{otherwise}
\end{cases}
\]
and $\cchi_{st}$ to be the indicator vector with $-1$ at
source $s$, $1$ at sink $t$ and $0$ everywhere else.
Routing $F^{*}$ units of flow from $s$ to $t$
then becomes finding $\ff \in \R^{E}$ such that
\[
\mB^{\tomato} \ff = F^{*} \cchi_{st}
\qquad \text{and} \qquad
-\uu \leq \ff \leq \uu.
\]

\subsection{Electrical Flows}

Electrical flows are $\ell_2$-minimization analogs of maxflow,
and underlie all interior point method oriented approaches to
high-accuracy maxflow~\cite{DS08,M13,LS19,M16,CMSV17}.
 
We use the term \textit{demand vector} for any vector $\dd$ such that $\dd\in \mathbb{R}^n$ and  $\oone^\top \dd=0$. We let $\rr\in\mathbb{R}^E$ be the vector of resistances: $\rr_e$ denotes the resistance of edge $e$. For a demand vector $\dd$, and the vector of resistances $\rr$,
the electrical flow problem is
\[
\min_{\ff: \mB^{\tomato} \ff = \dd}
\sum_{e} \rr_e \ff_e^2.
\]
Here the energy function can be further abbreviated
using the norm notation: by letting $\mR$ denote the
diagonal matrix with $\rr$ on the diagonal, the energy
can be written as $\norm{\ff}_{\mR}^2$.
The quadratic minimization nature of this problem means its
solution, or the optimal electrical flow, has a linear
algebraic closed form, specifically
\[
\ff 
=
\mR^{-1} \mB \left( \mB^{\tomato} \mR^{-1} \mB \right)^{\dag} \dd,
\]
where $\dag$ denotes the Moore-Penrose pseudoinverse.
The matrix $\mB^\top\mR^{-1}\mB$ is important on its own, and is known as the graph Laplacian matrix, $\mL = \mB^{\tomato} \mR^{-1} \mB$.
Laplacian systems can be solved to high accuracy in nearly linear time \cite{ST04, KMP10, KMP11, KOSZ13, CKMPPRX14, KLPSS16, KS16}.
The resulting solution vector on the vertices also have natural
interpretations as voltages that induce the electrical flow~\cite{doyleS84}.
Specifically, for the voltages
\[
\pphi
=
\mL^{\dag} \dd
=
\left( \mB^{\tomato} \mR^{-1} \mB \right)^{\dag} \dd
\]
the flow is given by \emph{Ohm's Law}:
\[
\ff_{e}
= \frac{\pphi_{u} - \pphi_{v}}{\rr_e}
\qquad
\forall e = (uv).
\]
Both this flow, and the voltages, can be computed to
high accuracy in nearly-linear time using
Laplacian solvers~\cite{ST04}.
\begin{theorem} 
\label{thm:lap}
Let $G$ be a graph with $n$ vertices and $m$ edges. Let $\rr \in \R_{>0}^E$ denote edge resistances. For any demand vector $\dd$ and $\eps>0$ there is an algorithm which computes in $\O(m \log \eps^{-1})$ time \emph{potentials} $\pphi$ such that $\|\pphi - \pphi^*\|_{\mL} \le \eps\|\pphi^*\|_\mL$, where $\mL = \mB^\top \mR^{-1}\mB$ is the Laplacian of $G$, and $\pphi^* = \mL^\dagger \dd$ are the true potentials determined by the resistances $\rr$.
\end{theorem}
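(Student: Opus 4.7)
The plan is to invoke the framework of nearly-linear time Laplacian solvers that has been developed in the long line of work beginning with Spielman-Teng \cite{ST04} and refined in the other cited works. The modern approach I would follow combines low-stretch spanning trees with recursive spectral sparsification inside a preconditioned iterative method. The high-level idea is to construct a sequence of progressively sparser graphs whose Laplacians spectrally approximate $\mL$, and use each one as a preconditioner to accelerate a solve at the next level.

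First I would build a low-stretch spanning tree $T$ of $G$ in $\widetilde{O}(m)$ time whose total stretch with respect to the resistances $\rr$ is $\widetilde{O}(m)$. The tree Laplacian $\mL_T$ itself can be solved in $O(n)$ time using the recursive structure of trees. Oversampling off-tree edges with probability proportional to their stretch then yields an ultra-sparse graph $H$ with $n-1 + \widetilde{O}(m/\kappa)$ edges for a parameter $\kappa$, whose Laplacian satisfies $\mL_H \approx_{O(1)} \mL$ with high probability.

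Next I would plug $\mL_H$ into preconditioned Chebyshev iteration (or preconditioned conjugate gradient). Each outer iteration requires one matrix-vector product against $\mL$ (costing $O(m)$ time) and one solve against $\mL_H$. Because $H$ is ultra-sparse, one can iteratively eliminate degree-$1$ and degree-$2$ vertices without densifying the graph, reducing $\mL_H$ to a smaller Laplacian with $\widetilde{O}(m/\kappa)$ vertices and edges, on which the solver is invoked recursively. Choosing $\kappa$ so the recursion depth is $O(\log n)$ while the per-level work remains $\widetilde{O}(m)$ gives a solve against $\mL_H$ in $\widetilde{O}(m)$ time. The convergence of preconditioned Chebyshev iteration with a constant-factor preconditioner contracts $\|\pphi - \pphistar\|_\mL$ by a constant in each outer step, so $O(\log \eps^{-1})$ outer iterations bring the error below the $\eps \|\pphistar\|_\mL$ threshold. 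Multiplying yields the claimed $\widetilde{O}(m \log \eps^{-1})$ total runtime.

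The main technical obstacle, and the reason this theorem is cited rather than reproved, lies in simultaneously controlling the spectral approximation factor, the vertex/edge sparsity, and the total work across the recursive chain. The existence of such a chain is the substantive content of \cite{ST04, KMP10, KMP11, KOSZ13, CKMPPRX14, KLPSS16, KS16}, and for the purposes of this paper we treat any one of these solvers as a black box meeting the stated guarantee.
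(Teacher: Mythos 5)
The paper does not prove this theorem at all: it is stated as a black-box citation of the nearly-linear-time Laplacian solver literature, and your proposal correctly does the same, deferring the substantive content to \cite{ST04, KMP10, KMP11, KOSZ13, CKMPPRX14, KLPSS16, KS16}. Your accompanying sketch of the low-stretch-tree / ultrasparsifier / preconditioned-Chebyshev recursion is an accurate outline of the KMP-style construction, with the one caveat that the ultrasparsifier $H$ is only a $\kappa = \polylog(n)$-factor spectral approximation of $\mL$ (a constant-factor approximation with $n-1+o(m)$ edges is impossible in general), so each level requires $O(\sqrt{\kappa})$ preconditioned iterations rather than $O(1)$ --- a polylogarithmic overhead that is absorbed by the $\widetilde{O}(\cdot)$ in the claimed bound.
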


Critical to our data structures are the intuition of electrical
flows as random walks.
Specifically, that the unit electrical flow from $s$ to $t$ is the
expected trajectory of the random walk from $s$ to $t$, with cancellations,
where from vertex $u$ we go to $v \sim u$ with probability 
\[
\frac{\rr_{uv}^{-1}} {\sum_{w \sim u} \rr_{uw}^{-1}}
\]
where the reciprocal of resistances, conductance,
plays a role analogous to the weight of edges.
Many of our intuitions and notations have overlaps with the electrical
flow based analyses of sandpile processes~\cite{DFGX18}.
For a more systematic exposition, we refer the reader to the excellent monograph by Doyle and Snell~\cite{doyleS84}.
\section{Dynamic Schur Complements and Checking High Energy Edges}
\label{sec:checker}

The main goal of this section is to show the following procedure for supporting electrical flow queries on edges for dynamically changing graphs.
\begin{theorem}
\label{thm:checker}
There is a \textsc{Checker} data structure supporting the following operations with the given runtimes against oblivious adversaries, for constants $0 < \beta, \eps < 1$,
\begin{itemize}
    \item $\textsc{Initialize}(G, \rr^\init \in \R^{E(G)}_{>0}, \eps, \beta).$ Initializes the data structure with a graph $G$ where edge $e$ has resistance $\rr_e$. Runtime: $\O(m\beta^{-4}\eps^{-4}).$
    \item $\textsc{PermanentUpdate}(e, \rr^\new_e \in \R_{>0}).$ Update $\rr_e \assign \rr^\new_e$ . Runtime: amortized $\O(\beta^{-2}\eps^{-2})$.
    \item $\textsc{TemporaryUpdate}(e, \rr^\new_e >0).$ Update $\rr_e \assign \rr^\new_e$. Runtime: Worst case $\O\left((K\beta^{-2}\eps^{-2})^2\right)$ for $K$ \textsc{TemporaryUpdate}s that are not rolled back. All \textsc{TemporaryUpdate}s should be rolled back before the next \textsc{PermanentUpdate}.
    \item $\textsc{Rollback}().$ Rolls back the last \textsc{TemporaryUpdate} if exists. Costs the same time as the original operation.
    \item $\textsc{Check}(e).$ Let $\cE$ be the energy of a unit $s$-$t$ electric flow $\Delta\ff$. If the energy of edge $e$ is at least $\eps^2\cE$, edge $e$ must be accepted. If edge $e$ is accepted, additionally returns a real number $\bg_e$ satisfying
    \begin{align} \left|\rr_e^{1/2}(\bg_e-\Delta\ff_e)\right| \le \eps\sqrt{\cE}/10. \label{eq:chkguarantee} \end{align}
    If the energy of $e$ is at most $\eps^2\cE/2,$ edge $e$ must not be accepted.
    Runtime: worst case $\O\left(\left(\beta m+\left(K\beta^{-2}\eps^{-2}\right)^2\right)\eps^{-2}\right)$ where $K$ is the number of $\textsc{TemporaryUpdate}$ operations that are not rolled back.
    \textbf{Additionally, the output of $\textsc{Check}(e)$ is independent of any previous calls to $\textsc{Check}.$}
\end{itemize}
Finally, the probability that all calls to $\textsc{Check}(e)$ return valid outputs is at least $1-n^{-10}.$ The total number of \textsc{PermanentUpdate} and \textsc{TemporaryUpdate} that are not rolled back should not exceed $\beta m$.
\end{theorem}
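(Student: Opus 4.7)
The plan is to maintain an approximate spectral vertex sparsifier (Schur complement) $\widehat{H}$ onto a terminal set $C$ that is initialized as a random subset of $V \cup \{s,t\}$ of size $\Theta(\beta m)$ and is only ever grown. The construction of $\widehat{H}$ follows the random-walk based approach of \cite{DGGP19}: from every edge we sample $\widetilde{O}(\beta^{-2}\eps^{-2})$ walks stopped upon first hitting $C$, so that the resulting Laplacian $\widehat{\mL}$ spectrally approximates the true Schur complement $\mathbf{SC}(\mL,C)$ to factor $(1\pm\eps)$. Since terminals are placed uniformly at random, each walk has expected length $\widetilde{O}(\beta^{-1})$, giving the claimed $\widetilde{O}(m\beta^{-4}\eps^{-4})$ initialization cost.

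For \textsc{PermanentUpdate}$(e)$ I add both endpoints of $e$ to $C$ and then modify $e$'s resistance directly in $\widehat{H}$. Adding a vertex $v$ to $C$ amounts to shortcutting every stored walk at its first visit to $v$, a cheap local operation; combined with the expected walk length bound this gives an amortized $\widetilde{O}(\beta^{-2}\eps^{-2})$ cost per insertion, amortized against the initialization budget and the cap of $\beta m$ updates. \textsc{TemporaryUpdate} performs the same shortcutting but records each change on an undo log that \textsc{Rollback} replays in reverse; while at most $K$ temporary updates are live, $\widehat{H}$ acquires up to $2K$ additional terminal vertices, and its working size swells to $\widetilde{O}(\beta m + (K\beta^{-2}\eps^{-2})^2)$, where the quadratic term accounts for walks added among the freshly inserted vertices.

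For \textsc{Check}$(e)$ with $e=(u,v)$, both endpoints lie in $C$ by construction. The true flow value is $\Delta\ff_e = (\pphi_u - \pphi_v)/\rr_e$, and since $\widehat{\mL}\approx_{\eps}\mathbf{SC}(\mL,C)$, solving $\widehat{\mL}\widehat{\pphi} = \cchi_{st}$ on $\widehat{H}$ with the solver of \cref{thm:lap} yields restrictions $\widehat{\pphi}_u,\widehat{\pphi}_v$ whose difference is accurate enough that $\bg_e := (\widehat{\pphi}_u - \widehat{\pphi}_v)/\rr_e$ satisfies $|\rr_e^{1/2}(\bg_e-\Delta\ff_e)| \lesssim \eps\sqrt{\cE}$. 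To sharpen this to $\eps/10$ and to reliably distinguish the thresholds $\eps^2\cE$ and $\eps^2\cE/2$, I average $\Theta(\eps^{-2})$ independent estimators drawn from fresh, previously-unused walks sampled at call time, together with an analogous sketch for $\cE$; the outer $\eps^{-2}$ factor in the advertised runtime is precisely this repetition count, multiplied by the per-solve cost $\widetilde{O}(\beta m + (K\beta^{-2}\eps^{-2})^2)$ on the perturbed sparsifier. Because the only randomness used by \textsc{Check}$(e)$ is in these freshly generated samples — the state of $\widehat{H}$ itself is modified only by updates, never by \textsc{Check} — the output of each call is independent of all prior \textsc{Check} outputs, as required.

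The main technical obstacle will be controlling the interaction between \textsc{TemporaryUpdate}s and the amortization of \textsc{PermanentUpdate}s: one must verify that shortcutting walks for a temporary insertion and then replaying the undo log leaves the walk-based potential function intact, so that the quadratic-in-$K$ blow-up only contaminates the worst-case solve cost and not the permanent maintenance budget. Propagating the $(1\pm\eps)$ spectral error through Ohm's law to additive guarantees on $\bg_e$, and union-bounding the high-probability correctness of \cref{thm:lap} and of the sparsifier across the at most $\poly(m)$ \textsc{Check} calls, then yields the $1-n^{-10}$ global success against an oblivious adversary.
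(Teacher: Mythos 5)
Your high-level architecture matches the paper's: maintain a random-walk-based dynamic Schur complement onto a growing terminal set of size $O(\beta m)$, implement updates by adding endpoints as terminals and shortcutting walks, answer $\textsc{Check}(e)$ by solving the sparsified system and applying Ohm's law, and preserve independence of \textsc{Check} outputs by rolling back the temporary terminal insertion. However, two steps of your argument have genuine gaps.

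First, the error amplification. You correctly note that a $(1\pm\eps)$-spectral sparsifier gives $|\rr_e^{1/2}(\bg_e-\Delta\ff_e)|\lesssim \eps\sqrt{\cE}$ via Cauchy--Schwarz (the paper's \cref{lemma:approxenergy}), but your proposed fix --- averaging $\Theta(\eps^{-2})$ independent estimators from fresh walks --- does not achieve the $\eps\sqrt{\cE}/10$ bound. The error of $\widetilde{\mL}^{\dagger}\cchi_{st}$ is not a zero-mean fluctuation of standard deviation $\eps\sqrt{\cE}$: although each sampled Laplacian is an unbiased estimator of $\mSC(G,C)$, its pseudoinverse is a \emph{biased} estimator of $\mSC(G,C)^{\dagger}$, and averaging independent copies does not drive the bias below the $\eps\sqrt{\cE}$ level guaranteed by the spectral approximation. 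Moreover you only need a constant-factor improvement, for which $\Theta(\eps^{-2})$ repetitions would be a bizarre mismatch. The correct (and much simpler) route, which the paper takes, is to run the entire dynamic Schur complement at accuracy $\eps/10$ rather than $\eps$; this costs only a constant factor in the number of walks and immediately yields \eqref{eq:chkguarantee}. The outer $\eps^{-2}$ factor in the \textsc{Check} runtime is \emph{not} a repetition count: it is the size of the sampled spectral sparsifier ($\O(\eps^{-2})$ edges per terminal), as you misattribute.

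Second, the solve cost. The walk-generated Schur complement of \cref{lemma:scapprox} is a multigraph with $\rho m = \O(m\eps^{-2})$ edges supported on $C$, so solving a Laplacian system on it directly costs $\O(m\eps^{-2})$, not the claimed $\O(\beta m\eps^{-2})$. The paper inserts an additional layer --- a dynamic spectral \emph{edge} sparsifier built from dynamic expander decompositions (\cref{lemma:dynamicexpanderwithpruning,lemma:dynamicsparsifier}) --- so that \textsc{SC}() can be sampled down to $\O((\beta m + (K\beta^{-2}\eps^{-2})^2)\eps^{-2})$ edges at query time. This layer is also what makes the \textsc{TemporaryUpdate} bound \emph{worst-case} rather than amortized (via expander pruning, which is where the $(K\beta^{-2}\eps^{-2})^2$ term actually originates), and the worst-case bound on walks through any single inserted vertex requires augmenting the random terminal set with the congestion-reduction subset of \cref{lemma:reducecongestion}. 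None of these appear in your proposal, so the stated runtimes for \textsc{TemporaryUpdate} and \textsc{Check} are unsupported as written.
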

Our approach is based on that of \cite{DGGP19} which builds dynamic spectral vertex sparsifiers or Schur complements, which we introduce below in \cref{sec:dynamicspectralsparsifier}. However we give a self-contained exposition here because we must adapt various guarantees of the algorithm for our setting, in large part to deal with randomness and adaptivity issues. This also explains the occurrence of \textsc{PermanentUpdate} and \textsc{TemporaryUpdate} in our data structure guarantee: calls to \textsc{PermanentUpdate} intuitively result from deterministic data structure changes, while \textsc{TemporaryUpdate} is for changes which we later wish to undo. This allows us to more carefully control the randomness in \Checker.
We formally construct the dynamic Schur complement data structure in \cref{subsec:dynamicsc} and apply it to show the \Checker~(\cref{thm:checker}) in \cref{subsec:checkersub}.

\subsection{Preliminaries for Dynamic Schur Complements}
\label{sec:dynamicspectralsparsifier}

We introduce various properties of spectral vertex sparsifiers, or Schur complements, that we require throughout this section and the next. Additionally, we introduce dynamic expander decompositions and spectral sparsification which we need within our dynamic Schur complement data structure.

\subsubsection{Schur Complements}

Our data structures use approximate spectral vertex sparsifiers for graph Laplacians, also known as Schur complements.
For a non-empty subset of vertices $C \subseteq V$,
with the rest of the vertices denoted as $F = V \setminus C$, let $\mL_{FF}, \mL_{CF}, \mL_{CC}$ denote the respective blocks of the Laplacian $\mL$ induced by the rows/columns corresponding to $F$ and $C$. The \emph{Schur complement} $\mSC(\mL, C)$  of $\mL$ onto $C$
is given by
\[
\mSC\left(\mL, C\right)
\defeq
\mL_{CC} - \mL_{CF}\mL_{FF}^{-1}\mL_{FC}.
\]
Note that $\mL_{FF}$ is full rank by the assumption that the
graph is connected. It is known that $\mSC(\mL, C)$ is also a graph Laplacian.
When context is clear, we use $\mSC(G, C)$ instead of $\mSC(\mL, C)$.

Such smaller graphs are significant because they directly provide
the solution to the Laplacian system on the subset $C$.
For vector $\xx \in \R^n$, we define the vector
$\xx_C \in \R^C$ as shorter vector formed by the entries in $C$.
The following property of Schur complements is critical to
all dynamic data structures for effective resistances to
date~\cite{Goranci19:thesis,GoranciHP18,DGGP19}, and is also
used throughout elimination based linear systems
solvers~\cite{KLPSS16,KS16,CKKPPRS18}.
\begin{fact}
\label{fact:scsubset}
For any $\mL$, strict subset $C$, and vector $\xx \in \R^n$ such
that $\xx_{V \setminus C} = \zzero$ for $i \notin C.$ Then
$(\mL^\dagger\xx)_C = \mSC(\mL, C)^\dagger (\xx_C).$
\end{fact}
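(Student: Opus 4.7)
The plan is a direct block-elimination argument on the Laplacian equation $\mL\pphi = \xx$ under the partition $V = F \sqcup C$ with $F = V \setminus C$. Writing
\[
\mL = \begin{pmatrix} \mL_{FF} & \mL_{FC} \\ \mL_{CF} & \mL_{CC} \end{pmatrix},
\]
the first observation I would record is that $\mL_{FF}$ is invertible: since $G$ is connected and $C$ is a nonempty strict subset, deleting the rows and columns indexed by $C$ yields an invertible ``grounded'' Laplacian on $F$, a standard fact which is already implicitly used in defining $\mSC(\mL, C)$ above the statement.

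Setting $\pphi \defeq \mL^\dagger \xx$, I would use that the hypothesis $\xx_{V \setminus C} = \zzero$ together with $\oone_C^\top \xx_C = 0$ (which is also required on the right-hand side of the identity, since $\mSC(\mL, C)$ is a Laplacian on $C$) places $\xx \in \text{range}(\mL) = \oone^\perp$, so $\mL\pphi = \xx$ in the strong sense. The block form of this equation reads
\begin{align*}
\mL_{FF}\pphi_F + \mL_{FC}\pphi_C &= \zzero,\\
\mL_{CF}\pphi_F + \mL_{CC}\pphi_C &= \xx_C.
\end{align*}
Solving the first line for $\pphi_F = -\mL_{FF}^{-1}\mL_{FC}\pphi_C$ and substituting into the second gives $\left(\mL_{CC} - \mL_{CF}\mL_{FF}^{-1}\mL_{FC}\right)\pphi_C = \xx_C$, i.e.\ $\mSC(\mL, C)\pphi_C = \xx_C$. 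This identifies $\pphi_C$ as a solution to the Schur complement system on $C$.

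The only nontrivial bookkeeping step, and the place I expect any real subtlety, is the pseudoinverse normalization. Since $\mSC(\mL, C)$ is itself a connected graph Laplacian with kernel spanned by $\oone_C$, the equation $\mSC(\mL, C)\pphi_C = \xx_C$ determines $\pphi_C$ only up to an additive multiple of $\oone_C$; the $\mL^\dagger$ convention picks the representative with $\oone^\top\pphi = 0$, whereas the $\mSC(\mL,C)^\dagger$ convention picks the representative with $\oone_C^\top\pphi_C = 0$, so one should interpret the stated equality with the understanding that both sides are normalized in their respective kernel's orthogonal complement. This is harmless for every use of the fact in the sequel, since downstream quantities (potential differences $\pphi_u - \pphi_v$ across edges, inner products $\l\dd,\pphi_C\r$ against zero-sum demands, and effective resistances) are invariant under shifts of $\pphi_C$ by constant multiples of $\oone_C$. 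Beyond the algebraic elimination and this final normalization, no further work is required.
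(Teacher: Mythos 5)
Your proof is correct, and it is worth noting that the paper itself offers no proof of this fact at all -- it is cited as folklore from the elimination-based solver literature, and the closest the paper comes is recording the Cholesky factorization of $\mL^\dagger$ as \cref{lemma:cholesky} (also without proof), from which the fact follows by applying the factored operator to a vector supported on $C$. Your block-elimination argument is essentially a direct derivation of that same factorization specialized to the case $\xx_F = \zzero$, so the two routes are equivalent in substance. The one place where you go beyond the paper is the normalization caveat, and you are right to flag it: the identity as literally stated can fail by an additive multiple of $\oone_C$, since $\mL^\dagger\xx$ is normalized against $\oone_V$ while $\mSC(\mL,C)^\dagger\xx_C$ is normalized against $\oone_C$ (a path on three vertices with $C$ chosen asymmetrically already exhibits the discrepancy). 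Likewise the implicit hypothesis $\oone^\top\xx = 0$ is needed for the elimination to go through, since otherwise the pseudoinverse silently projects $\xx$ onto $\oone^\perp$ and destroys the support condition $\xx_F = \zzero$. Both caveats are harmless for every downstream use in the paper (potential differences across edges, inner products against zero-sum demands, and effective resistances are all invariant under shifting $\pphi_C$ by a constant), so your proof establishes the fact in the sense in which it is actually used.
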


In order to obtain sublinear time, these data structures take a more
local interpretation of electrical flows.
This is done by considering the Schur complement as the result of random
walking the edges of $G$ until both endpoints are in $C$.
In order to formalize this, we first need to define the probability
of a random walk hitting $C$ at some vertex $v \in C$.
Here we follow from the notation from~\cite{S18,SRS18}. A variant of this notation was also central to~\cite{DFGX18}.

\begin{definition}
\label{def:hit}
Let $G=(V, E, \rr)$ be a graph with resistances $\rr$,
and $C \subseteq V$ be a subset of vertices.
For vertices $u \in V$, $v \in C$, define $p^C_v(u)$
to be the probability that a random walk (picked proportional to resistance inverses) starting at $u$
reaches $v$ before any vertex in $C\setminus \{v\}$.
\end{definition}
A folklore result that has become increasingly important
in elimination based algorithms~\cite{KLPSS16,CKKPPRS18,DGGP19} is that the Schur complement is the expectation of such random walks.
\begin{fact}
For a graph $G$ and vertex subset $C$, the weight (inverse resistance)
of an edge $(c_1, c_2)$ in $\mSC(G, C)$ with $c_1, c_2 \in C$ is given by
\[
\sum_{e = \left(u, v\right) \in E\left( G \right)}
p^{C}_{c_1}\left( u \right) \cdot
p^{C}_{c_2}\left( v \right) \cdot
\rr_e^{-1}.
\]
\end{fact}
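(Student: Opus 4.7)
The plan is to identify hitting probabilities with solutions of a Laplacian boundary-value problem and then evaluate a bilinear form in two ways: once via block elimination, which exposes the Schur complement, and once via the edge-wise expansion of $\mL$, which exposes the random-walk sum.

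Step 1: For each $c \in C$, define $\pp^{(c)} \in \R^V$ by $\pp^{(c)}_u \defeq p^C_c(u)$. Conditioning on the first step of the random walk gives
\[
p^C_c(u) \;=\; d_u^{-1} \sum_{v \sim u} \rr_{uv}^{-1} p^C_c(v)
\qquad \text{for } u \in F,
\]
where $d_u \defeq \sum_{v \sim u} \rr_{uv}^{-1}$ is the weighted degree. This is exactly the condition $(\mL \pp^{(c)})_u = 0$ on $F$. Combined with the boundary values $\pp^{(c)}_c = 1$ and $\pp^{(c)}_{c'} = 0$ for $c' \in C \setminus \{c\}$, the block structure of $\mL$ gives $\pp^{(c)}_F = -\mL_{FF}^{-1} \mL_{FC} \pp^{(c)}_C$ and hence $(\mL \pp^{(c)})_C = \mSC(\mL, C) \pp^{(c)}_C$.

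Step 2: For distinct $c_1, c_2 \in C$, evaluate $(\pp^{(c_1)})^\top \mL \pp^{(c_2)}$ in two ways. By Step 1, this equals the $(c_1, c_2)$-entry of $\mSC(\mL, C)$, which, being an off-diagonal entry of a graph Laplacian, is the negative of the conductance of the edge $(c_1, c_2)$ in $\mSC(G, C)$. On the other hand, using $\mL = \mB^\top \mR^{-1} \mB$ to expand edge-by-edge,
\[
(\pp^{(c_1)})^\top \mL \pp^{(c_2)} \;=\; \sum_{e=(u,v) \in E} \rr_e^{-1} \bigl(p^C_{c_1}(u) - p^C_{c_1}(v)\bigr) \bigl(p^C_{c_2}(u) - p^C_{c_2}(v)\bigr).
\]

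Step 3: Simplify. Expanding the product and negating, the conductance equals
\[
\sum_{e=(u,v) \in E} \rr_e^{-1} \bigl[p^C_{c_1}(u) p^C_{c_2}(v) + p^C_{c_1}(v) p^C_{c_2}(u)\bigr] \;-\; \sum_{w \in V} d_w\, p^C_{c_1}(w)\, p^C_{c_2}(w).
\]
Since $c_1 \neq c_2$, the second sum is supported on $F$: for $w = c_i$ one factor is zero, and for $w \in C \setminus \{c_1, c_2\}$ both factors are zero. On $F$, applying the harmonicity identity $d_w p^C_{c_i}(w) = \sum_{x \sim w} \rr_{wx}^{-1} p^C_{c_i}(x)$ rewrites the second sum as a sum over oriented edges with an $F$-endpoint, which can then be absorbed into the first sum to yield the single-orientation expression in the statement.

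The main obstacle is this last bookkeeping step. The natural harmonic-extension calculation produces a symmetric two-orientation sum together with a diagonal correction, and collapsing this to the single-orientation sum written in the Fact requires careful use of the harmonicity of $p^C_{c_i}$ on $F$ together with the boundary values $p^C_{c_i}(c_j) = \delta_{ij}$ on $C$. Everything else is a direct block-matrix computation and the standard random-walk characterization of harmonic functions.
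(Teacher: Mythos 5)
The paper states this Fact without proof (it is presented as folklore and explicitly not used directly), so there is no in-paper argument to compare against; your proposal has to stand on its own. Steps 1 and 2 are correct and standard: the harmonicity of $\pp^{(c)}$ on $F$, the identity $(\pp^{(c_1)})^\top \mL\, \pp^{(c_2)} = \mSC(\mL,C)_{c_1 c_2}$, and the edge-wise expansion are all fine.

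The gap is exactly where you flagged it, and it is not mere bookkeeping: the final ``absorption'' cannot yield the single-orientation expression in the statement, because that expression is not equal to the Schur-complement weight for a generic orientation. Carrying your Step 3 to completion, the first sum is $\sum_{(x,w)} \rr_{xw}^{-1} p^{C}_{c_1}(x)\, p^{C}_{c_2}(w)$ over \emph{both} orientations of every edge, and the harmonicity rewrite of the degree term gives the same sum restricted to oriented edges with head in $F$; subtracting leaves
\[
\sum_{(x,w)\,:\,w \in C} \rr_{xw}^{-1}\, p^{C}_{c_1}(x)\, p^{C}_{c_2}(w) \;=\; \sum_{x \sim c_2} \rr_{x c_2}^{-1}\, p^{C}_{c_1}(x),
\]
a sum over edges incident to $c_2$ only (equivalently, by symmetry, $\sum_{x\sim c_1}\rr_{xc_1}^{-1}p^C_{c_2}(x)$). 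That is the correct identity, but it is not the stated one. To see that the literal statement fails, take the unit-conductance path $a - b - c$ with $C=\{a,c\}$, $c_1=a$, $c_2=c$, and the natural orientation $(a,b),(b,c)$: the formula gives $p^C_a(a)p^C_c(b) + p^C_a(b)p^C_c(c) = \tfrac12+\tfrac12 = 1$, whereas $\mSC(G,C)$ is a single edge of conductance $\tfrac12$ (effective resistance $2$). Other orientations of the same graph give $\tfrac12$ or $0$, so the right-hand side of the Fact is orientation-dependent. In short: your approach is the right one and derives the correct (boundary-edge) form of this folklore identity; the remaining distance to the statement as printed cannot be closed because the statement is imprecise, not because your calculation is missing a trick. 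If you present this, prove the displayed identity above (or its symmetrized form) rather than the literal one.
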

While we do not use this directly, its intuition is critical for our
sublinear time access to portions of electrical flows.
Specifically, operator approximations of Schur complements can be obtained
by sampling slight variants of such walks.
\begin{lemma}[Schur complement approximation, \cite{DGGP19} Theorem 3.1]
\label{lemma:scapprox}
Let $G = (V, E, \rr)$ be an undirected, weighted multigraph with a subset of vertices $T$. Furthermore, let $\eps \in (0, 1)$ and let $\rho = P\eps^{-2}\log n$. Let $H$ be an initially empty graph with vertices $T$, and for each edge $e = (u, v) \in E(G)$ repeat the following procedure $\rho$ times.
\begin{enumerate}
    \item Simulate a random walk from $u$ until it hits $T$ at $t_1$.
    \item Simulate a random walk from $v$ until it hits $T$ at $t_2$.
    \item Combine these random walks (along with edge $e = (u, v)$) to form a walk $W$.
    \item Add edge $(t_1, t_2)$ to $H$ with resistance $\rho\sum_{e \in W} \rr_e.$
\end{enumerate}
The resulting graph $H$ satisfies $\mL(H) \approx_{\eps} \mSC(G, T)$ with probability at least $1 - n^{-\Omega(P)}.$
\end{lemma}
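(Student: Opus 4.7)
The plan is to prove $\mL(H) \approx_\eps \mSC(G, T)$ via matrix concentration: (i) establish the unbiasedness $\E[\mL(H)] = \mSC(G, T)$, then (ii) apply a matrix Chernoff bound for sums of independent PSD matrices. Concretely, I would write $\mL(H) = \sum_{e \in E(G)} \sum_{i=1}^{\rho} X_{e,i}$, where
\[
X_{e,i} \defeq \frac{1}{\rho \cdot R(W_{e,i})} \bb_{t_1^{e,i} t_2^{e,i}} \bb_{t_1^{e,i} t_2^{e,i}}^\top,
\]
with $\bb_{t_1 t_2} \defeq \chi_{t_1} - \chi_{t_2}$ and $R(W) \defeq \sum_{e' \in W} \rr_{e'}$. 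The $\rho|E(G)|$ matrices $X_{e,i}$ are mutually independent and PSD, which is precisely the setup for matrix Chernoff.

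For step (i), the unbiasedness does \emph{not} factor edge-by-edge, because the random denominator $R(W_{e,i})$ depends jointly on the two independent halves of the walk. I would instead aggregate over walks: any walk $W$ with endpoints $t_1, t_2 \in T$ and internal vertices in $F \defeq V \setminus T$ is produced by the sampling procedure exactly once per edge $e \in W$ serving as the designated edge, and
\[
\E[\mL(H)] = \sum_{W} \left(\sum_{e \in W} \Pr_{\mathrm{proc}}(W \mid e)\right) \cdot \frac{\bb_{t_1 t_2} \bb_{t_1 t_2}^\top}{R(W)}.
\]
Expanding $\mSC(G, T) = \mathcal{H}^\top \mL \mathcal{H}$ (with $\mathcal{H}$ the harmonic-extension operator from $T$ to $V$) via the random-walk interpretation of $\mathcal{H}$ and of $\mL_{FF}^{-1}$ --- namely that $(\mL_{FF}^{-1})_{ww'}$ equals the expected number of visits to $w'$ in a walk from $w$ absorbed at $T$ --- rewrites $\mSC(G, T)$ as a structurally identical walk sum; matching the two expressions term-by-term then yields the identity. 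This walk-bookkeeping is the hardest part of the proof: the combinatorial equality between the ``designated-edge'' aggregation coming from the sampling procedure and the harmonic-extension decomposition of $\mSC(G, T)$.

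For step (ii), I will invoke the standard matrix Chernoff bound for PSD sums: if $Y_1, \dots, Y_N$ are independent PSD matrices with $M \defeq \E[\sum_j Y_j]$ and $Y_j \preceq R \cdot M$ for all $j$, then $\sum_j Y_j \approx_\eps M$ with probability at least $1 - 2n \exp(-\Omega(\eps^2 / R))$. The critical Löwner bound I need is $X_{e,i} \preceq (1/\rho) \cdot \mSC(G, T)$, equivalently $\bb_{t_1 t_2} \bb_{t_1 t_2}^\top \preceq R(W_{e,i}) \cdot \mSC(G, T)$. This holds because $W_{e,i}$ exhibits an explicit series path in $G$ from $t_1$ to $t_2$ through $F$-internal vertices only, of total resistance $R(W_{e,i})$; by Rayleigh monotonicity the $t_1$-$t_2$ effective resistance in $\mSC(G, T)$ is at most $R(W_{e,i})$, and through the identity $\bb_{t_1 t_2}^\top \mSC(G,T)^\dagger \bb_{t_1 t_2} = R_{\mathrm{eff}}^{\mSC}(t_1, t_2)$ this is exactly the stated Löwner inequality. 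Plugging $R = 1/\rho$ and $\rho = P \eps^{-2} \log n$ into the Chernoff bound yields failure probability at most $2n \exp(-\Omega(P \log n)) = n^{-\Omega(P)}$, as claimed.
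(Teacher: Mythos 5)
The paper does not prove this lemma; it is imported verbatim from [DGGP19, Theorem 3.1], so there is no in-paper proof to compare against. Your argument is a correct reconstruction of the standard proof from that source (which in turn follows Kyng--Sachdeva): the designated-edge bookkeeping does work out exactly as you describe, since for a walk $W$ with designated edge $e_j$ one has $\Pr(W \mid e_j) = c(W)\,\rr_{e_j}$ with $c(W)$ the product of conductances over degrees, so summing over $j$ gives $c(W)R(W)$ and the $1/R(W)$ cancels to recover the walk decomposition of $\mSC(G,T)$; and the L\"owner bound $\bb_{t_1t_2}\bb_{t_1t_2}^\top \preceq R(W)\cdot \mSC(G,T)$ follows from the triangle inequality for effective resistance along $W$ together with the fact that Schur complementation preserves effective resistances between terminals. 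With those two ingredients the stated matrix Chernoff bound gives exactly the claimed failure probability, so the proposal is sound.
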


Note that if both endpoints of an edge $e$ are in $T$, then the sampling process above from $e$ creates $\rho$ copies of $e$ and each of them has resistance $\rho \rr_e$. The sum of the copies is equivalent to the original edge $e$ itself with its original resistance $\rr_e$. So in this case, we may simply add $e$ to $H$ instead of the $\rho$ sampled copies of it. 

For a walk $W$, we say that its \emph{resistive length} is $\sum_{e \in W} \rr_e$.

Furthermore, if a walk is short, in that it does not visit more than
$\ell$ distinct vertices, it can be sampled, along with its resistance
length, to accuracy $1 \pm \eps$ in $\poly(\ell\eps^{-1})$ time.
Augmenting $C$ with random vertices then in turn gives an effective
way for keeping all the walks short.
Specifically, adding each vertex independently with probability $\beta$
to $C$ ensures that a particular path meets $C$ in $\O(\beta^{-1})$ steps,
and in~\cite{DGGP19} this was shown to also hold when the addition
to $C$ is performed first.
These short works can in turn be sampled locally.

\begin{lemma}[Efficient random walks, \cite{DGGP19} Lemma 4.9]
\label{lemma:effwalks}
There is an algorithm that
given a graph $G$ with all resistances in $[U_1, U_2]$,
a starting vertex $u$, a length $\ell$, and error $\epsilon > 0$,
samples a random walk from $u$ until it reaches $\ell$ distinct vertices.
Additionally, for any vertex $v$ among these $\ell$ distinct vertices, the algorithm returns a multiplicative $\exp(\eps)$ approximation to the resistive length of the walk from $u$ to first time it hits $v$. The total runtime is $\O(\ell^4\eps^{-2}\log^{O(1)}(U_2/U_1))$.
\end{lemma}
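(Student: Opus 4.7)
The plan is to simulate the walk in phases, one per newly visited vertex, so that $\ell$ phases suffice. Maintain $S$ (initialized to $\{u\}$) and a current position $w \in S$ (initialized to $u$). In each phase, sample the next distinct vertex $v^*$ from the correct hitting distribution and output an $\exp(\eps/\ell)$-multiplicative approximation of the resistive length of the walk segment from $w$ to $v^*$; then update $S \gets S \cup \{v^*\}$ and $w \gets v^*$. Summing per-phase lengths and exponentiating yields the required $\exp(\eps)$ approximation of the resistive length from $u$ to the first visit of any distinct vertex.

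The per-phase work reduces to small Laplacian solves. Viewing $\partial S := N(S) \setminus S$ as absorbing, the probability of first exit at a specific $v \in \partial S$ starting from $w$ is the harmonic function that equals $1$ at $v$ and $0$ on the rest of $\partial S$, evaluated at $w$. By linearity, a single solve of the grounded Laplacian $\mA_{SS}$ (of size $|S|\times|S| \le \ell \times \ell$, with edges from $S$ to $\partial S$ contributing to its diagonal) against $e_w$ yields a vector from which every exit probability is obtained as a short inner product with the vector of boundary conductances; sample $v^*$ from this distribution. Conditioned on $v^*$, I would estimate the resistive length either by (i) a second harmonic computation that gives $\E[L \mid v^*]$ in closed form, or (ii) averaging the resistive lengths of $\O(\eps^{-2})$ independent Doob-$h$-transformed walks from $w$ to $v^*$ inside $S \cup \{v^*\}$, where $h$ is the exit-at-$v^*$ probability function already computed.

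For runtime, each phase performs $O(1)$ Laplacian solves on a system of size $O(\ell)$ at cost $\O(\ell^3)$ via dense elimination, plus $\O(\eps^{-2})$ samples of short Doob walks whose expected length is polynomial in $\ell$ and $U_2/U_1$. Over $\ell$ phases this aggregates to $\O(\ell^4 \eps^{-2})$ arithmetic operations; solving each system to relative precision $\eps/\ell$ ensures per-phase errors telescope, and the $\log^{O(1)}(U_2/U_1)$ factor absorbs the extra bit precision needed because harmonic potentials scale with the resistance range $[U_1, U_2]$.

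The hardest step will be controlling the variance of the length of a single Doob-transformed excursion within $S$. Because $|S| \le \ell$ and $v^*$ is absorbing, the excursion's hitting time at $v^*$ has expectation polynomial in $\ell$ and $U_2/U_1$ with a matching variance bound via standard Markov-chain hitting-time inequalities, so $\O(\eps^{-2})$ independent samples suffice for Chernoff-type concentration at multiplicative $\exp(\eps/\ell)$ per phase, which composes to $\exp(\eps)$ overall.
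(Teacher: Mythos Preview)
The paper does not prove this lemma; it is quoted verbatim from \cite{DGGP19}, Lemma~4.9, and used as a black box. So there is no in-paper argument to compare against---you are effectively trying to reproduce the \cite{DGGP19} proof.

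Your phase-by-phase reduction (sample the next new vertex via a small Laplacian solve on $G[S\cup N(S)]$) is exactly right for the \emph{exit vertex}; indeed the paper uses the same trick in \cref{lem:sample_exit_of_rw}. The gap is in the resistive-length part. The lemma asks you to sample a walk $W$ and then output an $\exp(\eps)$-approximation to the resistive length \emph{of that particular $W$} up to each new vertex; the length is a random variable jointly distributed with the exit sequence. Both of your options instead target the \emph{conditional expectation} $\mathbb{E}[L\mid v^*]$: option~(i) computes it exactly, and option~(ii) estimates it by averaging independent Doob excursions. Reporting the conditional mean is not the same as sampling from the conditional law, so the pair (exit sequence, reported lengths) you produce does not have the distribution the lemma promises. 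This matters downstream: in \cref{lemma:scapprox} and the shortcutting in \cref{algo:dynamicsc}, the sampled length is used as the resistance of a sparsifier edge, and the concentration argument is stated for the actual sampled length, not its conditional mean.

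There is also a runtime mismatch. You say each Doob excursion has expected step count ``polynomial in $\ell$ and $U_2/U_1$'' and then fold the $U_2/U_1$ dependence into ``extra bit precision''. But a polynomial number of simulated steps is a polynomial factor in the running time, not a bit-complexity overhead; already on a three-vertex path with resistances $U_1$ and $U_2$ the walk takes $\Theta(U_2/U_1)$ steps in expectation before exiting. The lemma's bound is $\O(\ell^4\eps^{-2}\log^{O(1)}(U_2/U_1))$, so any scheme that explicitly walks edge by edge through an excursion cannot meet it. The whole point of the \cite{DGGP19} construction is to sample the (exit vertex, resistive length) pair \emph{without} enumerating the intermediate steps---e.g., by computing a suitably discretized conditional length distribution from small linear systems and then sampling from it---so that only $\log(U_2/U_1)$ enters.
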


The running time and accuracy of our data structures depend on the expected number of these random walks that pass through each vertex. We refer to this as the \emph{congestion} that the random walks incur on vertices.
In \cref{subsec:proofofreducecongestion} we show that $C$ can be efficiently
augmented with the highly congested vertices to reduce the
congestion on vertices in $V\setminus C$. At a high level, we take the $O(\beta m)$ vertices with largest expected congestion and add them to $C$. By a Chernoff bound we can estimate congestion to a factor of two accuracy by random sampling.

\begin{lemma}
\label{lemma:reducecongestion}
There is a routine $\textsc{CongestionReductionSubset}$ that
given an undirected weighted graph $G$ with degrees $\deg$,
and a parameter $\beta$,
$\Chat = \textsc{CongestionReductionSubset}(G, \beta)$
returns in $\O(m\beta^{-2})$ time $\Chat$ with
$|\Chat| \leq O(\beta m)$ such that for all vertices $v\in V\setminus \Chat$ we have
\[
\sum_{u\in V\setminus \Chat, u\neq v} \deg_u \cdot
p^{\Chat \cup \left\{v \right\}}_{v} \left( u \right)
\leq
\O\left( \beta^{-2} \right).
\]
\end{lemma}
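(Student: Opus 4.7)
The plan is to construct $\Chat$ as the union of a random vertex sample $C_0$ and the set of vertices with unusually high congestion, identified by random-walk sampling, as sketched in the paragraph preceding the lemma. First I would form $C_0$ by including each vertex independently with probability $\beta$, so $|C_0| = O(\beta m)$ with high probability. The central observation is that
\[
\sum_{v \notin C_0} \prhit{u}{v}{C_0 \cup \{v\}}
\;=\;
\E\bigl[\#\{\text{distinct } v \notin C_0 \text{ visited strictly before } C_0 \text{ by a walk from } u\}\bigr],
\]
because the walk from $u$ stopped at $C_0 \cup \{v\}$ hits $v$ first exactly when $v$ is visited strictly before $C_0$. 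Since $C_0$ is an independent $\beta$-sample, a first-visit argument (deferred decisions on whether each newly revealed vertex lies in $C_0$) bounds the right-hand side by $\O(\beta^{-1})$.

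Writing $\mathrm{cong}(v) := \sum_{u \notin C_0,\, u \neq v} \deg_u \cdot \prhit{u}{v}{C_0 \cup \{v\}}$ and swapping the order of summation gives
\[
\sum_{v \notin C_0} \mathrm{cong}(v) \;\le\; \sum_u \deg_u \cdot \O(\beta^{-1}) \;=\; \O(m\beta^{-1}),
\]
so Markov's inequality implies that at most $\O(m\beta)$ vertices $v$ satisfy $\mathrm{cong}(v) \ge T$ for a suitably chosen threshold $T = \O(\beta^{-2})$. These are the ``bad'' vertices that must be absorbed into $C_0$ to form $\Chat$.

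To find them, I would run $K = \O(\beta^{-1})$ truncated random walks from each $u$ (each stopped when it hits $C_0$) via \cref{lemma:effwalks}, and form an estimator $\widehat{\mathrm{cong}}(v)$ as the appropriately $\deg_u$-rescaled count of those walks that visit $v$ before $C_0$. By Chernoff combined with a union bound over $v$, with high probability $\widehat{\mathrm{cong}}(v)$ is a constant-factor approximation of $\mathrm{cong}(v)$ whenever $\mathrm{cong}(v) \ge T/2$. Set $\Chat := C_0 \cup \{v : \widehat{\mathrm{cong}}(v) \ge T/4\}$; the Markov bound above together with the Chernoff accuracy then give $|\Chat| = O(\beta m)$, and every $v \notin \Chat$ has $\mathrm{cong}(v) \le T = \O(\beta^{-2})$ with respect to $C_0$. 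Each truncated walk visits $\O(\beta^{-1})$ distinct vertices before meeting $C_0$, so the total walk-sampling cost is $\O(m\beta^{-2})$. Finally, enlarging the hit set from $C_0$ to $\Chat$ only decreases each $\prhit{u}{v}{\,\cdot\, \cup \{v\}}$ (walks stop earlier) and restricts the outer sum to a subset of $u$, so the congestion bound transfers from $C_0$ to $\Chat$ as the lemma requires.

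The main obstacle is the expected walk-length bound in the first step — that a random walk visits only $\O(\beta^{-1})$ distinct non-$C_0$ vertices before reaching $C_0$. The clean proof reveals the walk's vertices one at a time, and because $C_0$ was formed by independent $\beta$-coin tosses, each newly revealed vertex is independently in $C_0$ with probability $\beta$; thus the count up to and including the first $C_0$-vertex is stochastically dominated by a geometric variable of mean $\beta^{-1}$. This deferred-decisions argument is the same one implicit in prior dynamic Schur complement analyses such as \cite{DGGP19}, so the bound is available essentially off the shelf.
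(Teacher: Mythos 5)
Your overall strategy is the same as the paper's: seed with a random set $C_0$ of size $O(\beta m)$, bound the \emph{total} congestion $\sum_{v\notin C_0}\mathrm{cong}(v)$ by $\O(m\beta^{-1})$ via the double-counting identity with the expected number of distinct vertices a walk visits before hitting $C_0$ (which is $\O(\beta^{-1})$ by the deferred-decisions/geometric argument), apply Markov to conclude only $O(\beta m)$ vertices exceed a threshold $T=\O(\beta^{-2})$, estimate the congestions by sampling truncated walks, absorb the heavy vertices, and finish with monotonicity of hitting probabilities under enlarging the sink set. All of these steps match the paper's proof (the paper seeds $C_0$ with endpoints of $\beta m$ random edges rather than an independent $\beta$-sample, but that is immaterial here).

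There is, however, one step that does not go through as you state it: the concentration of your congestion estimator. You run only $K=\O(\beta^{-1})$ walks from each vertex $u$ and rescale the hit counts by $\deg_u$, so a single walk from $u$ contributes a term of magnitude $\deg_u/K=\Theta(\deg_u\beta)$ to $\widehat{\mathrm{cong}}(v)$. Bernstein/Chernoff at scale $t=\Theta(\beta^{-2})$ with expected total $E=\O(\beta^{-2})$ and maximum term $M=\deg_{\max}\beta$ gives an exponent of order $\beta^{-3}/\deg_{\max}$, which is not $\Omega(\log n)$ once $\deg_{\max}\gg\beta^{-3}$; concretely, if one high-degree vertex $u$ contributes essentially all of $\mathrm{cong}(v)=\deg_u\,p$ with $p$ tiny, you would need $\Omega(1/p)=\Omega(\deg_u\beta^2)$ walks from $u$ to estimate $p$ to constant relative accuracy, and $K=\O(\beta^{-1})$ does not suffice. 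The paper avoids this by sampling $\deg_u\cdot O(\log n)$ \emph{unit-weight} walks from each $u$ (so every term has magnitude $O(1/\log n)$ and Bernstein applies uniformly); alternatively one can first add all vertices of degree $\ge\beta^{-1}$ to $C_0$, as is done in \cref{lem:reducedemand}. A smaller point: for sampling the truncated walks you only need the set of distinct vertices visited, not resistive lengths, so \cref{lem:sample_exit_of_rw} (cost $\O(\beta^{-2})$ per walk) is the right primitive rather than \cref{lemma:effwalks}, whose $\O(\ell^4)$ cost would overshoot the claimed $\O(m\beta^{-2})$ runtime.
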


A similar congestion reduction procedure is also used
the distributed Schur-complement based Laplacian solver
in~\cite{FGLPSY20:arxiv}).

\subsubsection{Dynamic Expander Decompositions and Spectral Sparsification}

Our algorithm for dynamic spectral sparsification which we need within the dynamic Schur complement is based
on expander decomposition based spectral sparsification~\cite{ST11}.
Specifically, independently sampling an edge $e = (uv)$ contained in an expander with probability $p_e = \O(\deg_u^{-1} + \deg_v^{-1})$ produces spectral
approximations with high probability.
The dynamic maintenance of decompositions of graphs into expanders
was introduced by Nanongkai and Saranurak~\cite{NS17}
and Wulff-Nilsen~\cite{W17} for dynamic connectivity and minimum
spanning tree data structures.
We cite a more recent formulation that is tailored more explicitly
towards graph sparsification.

\begin{lemma}[Dynamic expander decomposition, \cite{BBG20}]
\label{lemma:dynamicexpander}
Given a unweighted graph $G$ and conductance parameter $\phi \le \log^{-4}n$, there is an algorithm which uses $\O(m\phi^{-1})$ preprocessing time, and against an adaptive adversary maintains an expander decomposition of $G$ into subgraphs $G_1, G_2, \dots, G_\ell$ that are $\phi$-expanders under insertions / deletions in amortized $\O(\phi^{-2})$ time, and $\sum_{i=1}^\ell |V(G_i)| = O(n \log^2 n).$
\end{lemma}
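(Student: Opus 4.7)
The plan is to build the dynamic expander decomposition by combining a static expander decomposition subroutine with an expander pruning procedure, following the blueprint of Nanongkai-Saranurak and Wulff-Nilsen adapted to the adaptive-adversary setting. At a high level, I maintain a partition of most of the edges of $G$ into pieces $G_1,\ldots,G_\ell$ that are each $\phi$-expanders, together with a small ``dirty'' edge set $F$ holding the rest; the three operations I need to support are initialization, deletion, and insertion, and all three rely on the same two primitives.

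First I would take as a black box the static Saranurak--Wang construction: given an unweighted graph $H$ on $m'$ edges and $\phi \le \log^{-4} n$, it outputs in $\O(m'\phi^{-1})$ time a vertex partition $V(H)=V_1 \sqcup \cdots \sqcup V_k$ such that each induced subgraph $H[V_i]$ is a $\phi$-expander and only $\O(\phi m')$ edges cross the partition. At initialization I run this on $G$, declare each $H[V_i]$ to be a maintained piece, and put every cut edge into $F$; this matches the $\O(m\phi^{-1})$ preprocessing budget. The second primitive is the Saranurak--Wang expander pruning lemma, which under an online sequence of up to $\Omega(\phi m_i)$ deletions inside an initial $\phi$-expander of size $m_i$ deterministically maintains a small pruned vertex set $P_i$ such that the surviving induced subgraph remains a $(\phi/6)$-expander, at amortized cost $\O(\phi^{-1})$ per deletion with $|P_i|$ growing by $\O(\phi^{-1})$ per deleted edge.

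With these primitives in hand, deletions and insertions are handled as follows. On an edge deletion, I route the edge to its owning piece $G_i$ (or simply remove it from $F$), and feed it to the pruning routine for $G_i$. On an edge insertion, I simply append the new edge to $F$. I trigger a \emph{rebuild} of the affected region whenever one of three monotone quantities crosses a threshold: either $|P_i|$ exceeds a small constant fraction of $|V(G_i)|$, or the cumulative number of deletions into $G_i$ crosses the pruning lemma's $\Omega(\phi m_i)$ budget, or $|F|$ crosses $\Theta(\phi n)$. On a rebuild, I collect the vertex sets of the pruned pieces together with the endpoints of $F$, re-run the static decomposition on the subgraph they induce, create new expander pieces, and put the newly cut edges back into $F$. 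To bound $\sum_i |V(G_i)|$, I charge each vertex a geometric potential: a vertex can participate in only $O(\log n)$ nested rebuild events (one per doubling scale of its current piece), and the static Saranurak--Wang routine contributes another $O(\log n)$ factor from its cut-matching recursion, giving the claimed $O(n \log^2 n)$ total.

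The main obstacle will be making the three amortized accountings fit inside the target $\O(\phi^{-2})$ per update while simultaneously preserving correctness against an adaptive adversary. For the runtime, pruning contributes $\O(\phi^{-1})$ per deletion directly, and the rebuild cost $\O(m'\phi^{-1})$ on a region of size $m'$ gets amortized against the $\Omega(\phi m')$ updates that triggered it, yielding an additional $\O(\phi^{-1})$ factor and the stated $\O(\phi^{-2})$ overall. Adaptivity is the subtler point and is where I expect most of the work: the pruning subroutine is itself deterministic given the deletion sequence, so it is immune to adaptive queries, but the static decomposition invoked at rebuild time is randomized. I would argue that the adversary's view between rebuilds consists only of deterministic functions of the update sequence and of the random bits used at the most recent rebuild of each affected region; since those random bits are refreshed on the next rebuild of that region, they never influence the input sequence that they are themselves used to process, and the oblivious-adversary guarantee of Saranurak--Wang suffices within each epoch.
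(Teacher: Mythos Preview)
The paper does not prove this lemma at all: it is stated as a black-box citation from \cite{BBG20} (note the bracketed attribution in the lemma header), and the paper simply invokes it inside the proof of \cref{lemma:dynamicexpanderwithpruning} and \cref{lemma:dynamicsparsifier}. So there is no ``paper's own proof'' to compare your proposal against.

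Your sketch is a reasonable outline of how results in this family are obtained, and it correctly identifies the two primitives (static Saranurak--Wang decomposition plus deterministic pruning) and the rebuild-on-threshold amortization. A few places would need tightening if you actually wanted to turn this into a proof rather than a plan: (i) pruning degrades the expansion from $\phi$ to $\phi/6$, so to maintain $\phi$-expanders throughout you must start the static decomposition at a higher parameter and absorb the constant into the $\O(\cdot)$; (ii) the $O(n\log^2 n)$ vertex-count argument via ``$O(\log n)$ nested rebuild events per vertex'' is the right shape but is asserted rather than shown---you would need to set up the hierarchy of scales and the potential function explicitly; (iii) your adaptivity paragraph is the weakest part: the claim that ``random bits are refreshed on the next rebuild'' does not by itself prevent the adversary from steering the \emph{current} epoch's inputs based on the current epoch's random choices (which are visible through which pieces exist). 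The actual \cite{BBG20} argument handles this more carefully, and since this paper only needs the statement, it does not reproduce that analysis.
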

By \emph{maintaining} an expander decomposition, we mean that the algorithm outputs which edges move in and out of which expanders. Maintaining an expander decomposition allows us to sample a spectral sparsifier in $\O(n)$ as opposed to $\O(m)$ time when queried.

To support rollbacks with worst case runtime, we will need the following theorem of \cite{SW19}.

\begin{lemma}[Expander pruning, \cite{SW19}]
\label{lemma:expanderpruning}
Let $G=(V,E)$ be a $\phi$-expander with $m$ edges. There is a deterministic
algorithm with access to adjacency lists of $G$ such that, given
an online sequence of $K\le\phi m/10$ edge deletions in $G$, can
maintain a \emph{pruned set} $P\subseteq V$ such that the following
property holds. Let $G_i$ and $P_{i}$ be the graph $G$ and the set $P$ after the $i$-th deletion.
We have, for all $i$, 
\begin{enumerate}
\item $P_{0}=\emptyset$ and $P_{i}\subseteq P_{i+1}$,
\item \label{item:cutsize} $\vol(P_{i})\le8i/\phi$ and $|E(P_{i},V-P_{i})|\le4i$, and
\item $G_i\{V-P_{i}\}$ is a $\phi/6$-expander. 
\end{enumerate}
The total time for updating $P_{0},\dots,P_{K}$ is $O(K\log m/\phi^{2})$. 
\end{lemma}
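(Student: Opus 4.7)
The plan is to maintain the pruned set $P_i$ by a local procedure that, after each edge deletion, searches only a small neighborhood of the deletion for a low-conductance cut to peel off. The core idea is that deleting edges only hurts expansion locally, so we can rely on a local flow (or local random walk / nibble) routine instead of recomputing global expansion from scratch.

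Concretely, I would associate to each edge deletion one unit of \emph{source mass} at each endpoint (so after $i$ deletions the total source mass is $2i$) and attempt to balance this mass across the current graph $G_i\{V \setminus P_i\}$ using a local flow, whose capacity on each vertex $v$ is proportional to $\phi \cdot \deg_v$. The steps in order are: (i) on each deletion, add the new source mass and launch the local flow from vertices with outstanding excess; (ii) the local flow either routes everything (certifying that expansion is preserved) or stalls with excess trapped on some set $S$; (iii) any such $S$ is a witness to a cut of conductance below $\phi/6$ in the residual graph, so we append $S$ to $P$ and continue. The bounds $\vol(P_i) \le 8i/\phi$ and $|E(P_i, V \setminus P_i)| \le 4i$ then follow by charging: pruning a chunk of volume $V$ requires at least $\phi V / 8$ units of source mass to get stuck there, and source mass is only created two units per deletion, so $\vol(P_i) \le 8i/\phi$; the boundary bound follows similarly from the conductance threshold at the moment $S$ was pruned. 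That $G_i\{V \setminus P_i\}$ remains a $\phi/6$-expander is immediate from the termination criterion of the local flow: whatever survives pruning has no cut of conductance $< \phi/6$.

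For runtime, I would argue that the local flow work on each invocation is proportional to the volume of the cut eventually pruned (plus $O(1/\phi)$ per routed unit of source mass), using a local push-relabel-style algorithm that only visits a vertex $v$ after its excess accumulates past a threshold. Each unit of source mass then contributes at most $O(\log m / \phi)$ push operations across the whole sequence (since volumes grow geometrically over doubling excess thresholds), so summing over all $2K$ units of source yields the stated $O(K \log m / \phi^2)$ bound.

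The main obstacle will be the locality/time analysis: one must show that the local flow subroutine genuinely works in time proportional to $\vol(\text{pruned set})/\phi$ and never touches the rest of the graph, which requires a careful amortized potential argument tying the number of pushes at a vertex to the amount of excess ever routed through it. A secondary subtlety is maintaining the invariant across many deletions without ``double-pruning'' or spuriously pruning vertices whose excess could still be balanced — this calls for monotonicity ($P_i \subseteq P_{i+1}$) to be baked into the flow subroutine, for instance by restarting flow only from newly deposited excess while treating previously pruned vertices as permanent sinks.
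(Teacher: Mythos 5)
The paper does not prove this lemma at all: it is imported verbatim from \cite{SW19} (Saranurak--Wang) and used as a black box, so there is no in-paper proof to compare against. Your sketch is essentially a faithful reconstruction of the actual Saranurak--Wang argument: an incremental local unit-flow in which each deletion deposits source mass at the endpoints, vertices act as sinks with capacity proportional to their degree scaled by $\phi$, stuck excess certifies a sparse cut that is appended to $P$, feasibility of the flow certifies that the surviving part is still a $\phi/6$-expander, and the volume/boundary bounds follow by charging pruned volume to deposited source mass. Two small cautions if you were to write this out in full: the step ``any surviving graph has no cut of conductance $<\phi/6$'' is not immediate from termination --- it rests on a separate flow-certificate lemma (routing $\Omega(\deg_v/\phi)$ units of source from every vertex into degree-proportional sinks implies expansion), and your normalization of source mass (one unit per endpoint) versus sink capacity ($\phi\deg_v$) gives constants off by small factors from the stated $8i/\phi$ and $4i$, so the exact constants require fixing the scaling as in \cite{SW19}. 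Neither issue is a structural gap; the approach is the right one.
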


In the following lemma, we combine \cref{lemma:dynamicexpander} and \cref{lemma:expanderpruning} to get a dynamic expander decomposition algorithm that supports rollbacks. For each update that will not be rolled back (permanent update), we simply forward the update to the dynamic expander decomposition (\cref{lemma:dynamicexpander}). For each update that will be rolled back (temporary update), we prune the updated edge from the expander it is currently in by \cref{lemma:expanderpruning}. Temporary updates are slower than permanent updates but their runtime is worst case bounded. 

\begin{lemma}
\label{lemma:dynamicexpanderwithpruning}
Given a unweighted graph $G$ and conductance parameter $\phi \le \log^{-4}n$, there is an algorithm which uses $\O(m\phi^{-1})$ preprocessing time and against an adaptive adversary maintains an expander decomposition of $G$ into subgraphs $G_1, G_2, \dots, G_t$ that are $\phi/6$-expanders that supports the following two types of updates:
\begin{enumerate}
    \item Permanent update: Insert / delete an arbitrary edge in amortized $\O(\phi^{-2})$ time.
    \item Temporary update: Insert / delete an arbitrary edge. $K$ consecutive temporary updates cost worst case $\O(K^2+K/\phi)$ time.
    \item Rollback: Rollback the last temporary update.
\end{enumerate} Temporary updates must be rolled back before the next permanent update. The expander decomposition always satisfies $\sum_{i=1}^t |V(G_i)| = \O(n \log^2 n+K^2+K/\phi)$ where $K$ is the number of updates of temporary updates that are not yet rolled back.  
\end{lemma}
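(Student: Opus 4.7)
The plan is to layer a reversible ``temporary'' subsystem on top of the dynamic expander decomposition of \cref{lemma:dynamicexpander}: every permanent update is forwarded to the base data structure, while every temporary update is realised by the expander-pruning primitive of \cref{lemma:expanderpruning} in a way that can be undone. Since the precondition guarantees that all temporary updates are rolled back before the next permanent update, the base data structure never sees temporary changes, so its amortised $\O(\phi^{-2})$ guarantee is preserved; the output decomposition at any moment is the most recent base decomposition modified by the pending temporary updates.

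For a temporary insertion of an edge $e=(u,v)$ I create a new auxiliary two-vertex one-edge component consisting of $e$ alone, which is trivially a $\phi/6$-expander. For a temporary deletion of $e$ I locate the component $G_j$ currently containing $e$: if $G_j$ is one of the auxiliary two-vertex components I delete it outright, and otherwise I feed the deletion into an expander-pruning instance attached to $G_j$, initialised the first time $G_j$ is touched in the current temporary block. By \cref{lemma:expanderpruning} this maintains a pruned set $P_j$ with $G_j\setminus P_j$ still a $\phi/6$-expander and $\vol(P_j)=O(i/\phi)$ after $i$ deletions in $G_j$; I expose the vertices of $P_j$ and the surviving edges among them as singleton or single-edge components, each of which is $\phi/6$-expanding. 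Every operation pushes an entry onto a rollback stack, and \textsc{Rollback} pops the top entry and inverts it -- either deleting the auxiliary component or, for a pruning step, re-attaching the vertices just popped from $P_j$. I store $P_j$ as a stack of incremental prunings so that this reversal is symmetric to the original update.

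For the runtime, permanent updates inherit $\O(\phi^{-2})$ amortised cost from \cref{lemma:dynamicexpander}. For $K$ pending temporary updates, \cref{lemma:expanderpruning} gives cumulative pruning cost $O(K\log m/\phi^2)$ and pruned volume $O(K/\phi)$, and auxiliary insertions cost $O(1)$ each. The remaining $O(K^2)$ worst-case term is bookkeeping: each temporary operation must identify which of the (up to $K$) currently perturbed components contains the queried edge and must maintain the stack and pruning instances across all affected components, which a na\"{i}ve scan charges $O(K)$ per operation. Summing these terms gives the claimed $\O(K^2+K/\phi)$ worst-case bound, and the vertex count of the decomposition grows by $O(K/\phi)$ freshly singleton-ised pruned vertices plus $O(K^2)$ from bookkeeping and auxiliary components, matching the stated $\O(n\log^2 n+K^2+K/\phi)$.

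The main obstacle is reconciling \textsc{Rollback} with the monotone (insertion-only) guarantee of \cref{lemma:expanderpruning}: that lemma only asserts $G_j\setminus P_j$ is a $\phi/6$-expander when $P_j$ is built up by an online sequence of deletions, not when it is shrunk by rollbacks. I handle this by noting that applying \cref{lemma:expanderpruning} to every prefix of the current deletion sequence in $G_j$ gives a valid $\phi/6$-expander at every intermediate state, so popping the last pruning step restores exactly the decomposition from the previous state. A secondary hazard is the budget constraint $K\le \phi m/10$ of \cref{lemma:expanderpruning} on individual base expanders; I handle it by splitting any base expander whose pruning budget would be exhausted into singleton components and reinitialising, a cost that is absorbed into the same $O(K^2)$ bookkeeping term.
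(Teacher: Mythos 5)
Your proposal is correct and follows essentially the same route as the paper: permanent updates are forwarded straight to the base dynamic expander decomposition (safe because all temporaries are rolled back first), temporary deletions are realised via the expander-pruning primitive with pruned-out vertices and their incident edges exposed as trivial single-edge expanders, expanders whose pruning budget $\phi m_i/10$ would be exceeded are dissolved outright, and rollback is a stack of inverse operations exploiting the monotonicity $P_i \subseteq P_{i+1}$ of the pruned sets. The only quibble is an accounting one: dissolving an over-budget expander costs $O(m_i) = O(K_i/\phi)$, which belongs to the $K/\phi$ term rather than being "absorbed into the $O(K^2)$ bookkeeping term" as you claim, but since the stated bound is $\O(K^2 + K/\phi)$ this does not affect correctness.
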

\begin{proof}
For the permanent updates, we may use \cref{lemma:dynamicexpander} directly because the temporary updates are guaranteed to be rolled back before the next permanent update. For a sequence of temporary updates to edge $e_1, \ldots, e_K$, we delete the edges $e_i$ in order from their respective expanders by \cref{lemma:expanderpruning}. Let $P_K$ be the union of the pruned sets of all expanders after deleting $e_1,\ldots, e_K$. By \cref{item:cutsize} of \cref{lemma:expanderpruning} and summing over the expanders, at most $O(K^2)$ edges are incident to the pruned vertices $P_K$. Each of these edges can be viewed as an expander by itself. 

In an expander $G_i$ with $m_i$ edges, \cref{lemma:expanderpruning} can be used to delete at most $\phi m
_i/10$ edges. To delete $K_i>\phi m_i/10$ edges from this expander, we delete all remaining edges in it in $O(m_i)$ time and remove this empty expander. (And view the deleted edges as expanders by themselves.) In this case, deleting $K_i$ edges cost at most $m_i=O(K_i/\phi)$ time. 

Combining the two cases above, $K$ consecutive temporary updates costs $\O(K^2+K/\phi)$ time and $G$ is decomposed into expanders with total size $\O(n \log^2 n+K^2+K/\phi)$.
\end{proof}
The dynamic expander decomposition in \cref{lemma:dynamicexpanderwithpruning} implies the following dynamic spectral edge sparsifier:
\begin{lemma}[Dynamic edge sparsifier]
\label{lemma:dynamicsparsifier}
There is an algorithm which given a graph $G$ with polynomially bounded resistances, preprocesses $G$ in $\O(m)$, and against an adaptive adversaries supports the following operations:
\begin{enumerate}
    \item Permanent update: Insert / delete an arbitrary edge in amortized $\O(1)$ time.
    \item Temporary update: Insert / delete an arbitrary edge. $K$ consecutive temporary updates cost worst case $\O(K^2)$ time in total.
    \item Rollback: Rollback the last temporary update.
    \item Output: Output an $(1+\eps)$-spectral sparsifier of $G$ in time $\O((n+K^2)\eps^{-2})$ at any time where $K$ is the number of updates of temporary updates that are not yet rolled back. 
\end{enumerate}
Temporary updates must be rolled back before the next permanent update.
\end{lemma}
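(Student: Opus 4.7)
The plan is to combine the dynamic expander decomposition with pruning from \cref{lemma:dynamicexpanderwithpruning} with classical spectral sparsification by sampling inside expanders.

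First, I would handle the weighted setting by bucketing. Since resistances are polynomially bounded, partition the edges into $O(\log m)$ weight classes so that within each class all resistances lie in a factor-$2$ range. For each weight class, maintain an independent copy of the dynamic expander decomposition of \cref{lemma:dynamicexpanderwithpruning} on the unweighted subgraph induced by that class, using a conductance parameter $\phi$ that is at most $\log^{-4} n$ but still $\polylog(n)^{-1}$ so that its overhead is hidden by $\O(\cdot)$. Each edge insertion, deletion, permanent update, temporary update, and rollback is forwarded to the one weight class it affects. The preprocessing time $\O(m)$, the amortized permanent update cost $\O(1)$, and the worst-case temporary update cost $\O(K^2)$ over $K$ consecutive temporary operations then follow immediately from \cref{lemma:dynamicexpanderwithpruning}.

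Next, to produce a $(1+\eps)$-spectral sparsifier upon query, I would independently sample edges inside each expander $G_i$ produced by the decomposition. The key fact (Spielman–Teng) is that in an unweighted $\phi$-expander, the Laplacian satisfies $\mL \succeq \tfrac{\phi^2}{2} \mathbf{D}$, so the effective resistance of any edge $e = (u,v)$ is at most $O(\phi^{-2}(\deg_u^{-1} + \deg_v^{-1}))$. Sampling $e$ with probability $p_e = \Theta(\eps^{-2}\phi^{-2}(\deg_u^{-1}+\deg_v^{-1})\log n)$ and scaling its weight by $1/p_e$ yields a $(1+\eps)$-spectral sparsifier of $G_i$ with high probability, contributing $\O(|V(G_i)|\eps^{-2})$ edges in expectation. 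Summing over all expanders and all weight classes, using the bound $\sum_i |V(G_i)| = \O(n + K^2 + K/\phi)$ from \cref{lemma:dynamicexpanderwithpruning}, gives a total sparsifier size of $\O((n+K^2)\eps^{-2})$. Any edges not assigned to an expander (the $O(K^2)$ loose edges from pruning) can simply be included in the output directly; they are already accounted for in the $K^2$ term.

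The step I expect to require the most care is performing the sampling itself within the target time budget, since naively scanning a vertex's adjacency list to flip a low-probability coin per edge is too slow. The standard fix is a geometric-jump procedure: for a vertex $v$ with per-edge sampling probability $p_v = \Theta(\eps^{-2}\phi^{-2}/\deg_v)$, draw a $\mathrm{Geom}(p_v)$ random variable to skip directly to the next selected incident edge. This gives expected work $O(1 + \deg_v \cdot p_v) = \O(\eps^{-2})$ per vertex, so total sampling time matches the claimed $\O((n+K^2)\eps^{-2})$. The adaptive-adversary guarantee is inherited: \cref{lemma:dynamicexpanderwithpruning} is already resilient to adaptive adversaries, and the sampling draws fresh randomness only at the query time so that released sparsifiers do not leak internal randomness used by the decomposition. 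The only bookkeeping left is routing each update to its weight class and combining per-class sparsifiers by union, both of which are immediate.
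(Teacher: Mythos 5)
Your proposal is correct and follows essentially the same route as the paper's (quite terse) proof: one dynamic expander decomposition of \cref{lemma:dynamicexpanderwithpruning} per dyadic weight class, updates forwarded to the relevant class, and a query answered by degree-proportional sampling of $\O(\eps^{-2})$ edges per vertex inside each expander, with the size bound following from $\sum_i |V(G_i)| = \O(n + K^2 + K/\phi)$. The extra details you supply (the $\mL \succeq \tfrac{\phi^2}{2}\mathbf{D}$ justification, the geometric-jump sampling, and including the pruned edges directly) are all consistent with what the paper leaves implicit.
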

\begin{proof}
We maintain $O(\log m)$ instances of the dynamic expander decomposition
given in~\cref{lemma:dynamicexpanderwithpruning} with $\phi = \Theta\left(\log^{-4} n\right)$
to maintain a dynamic expander decomposition for edges
with resistances in $[2^i, 2^{i+1}]$
for $-O(\log m) \le i \le O(\log m).$ Temporary and permanent updates are forwarded respectively.
When a $(1+\eps)$-spectral sparsifier query is received,
do the following:
on each expander and each vertex $v$ in it,
we uniformly sample the neighboring edges of $v$
so that we keep $\O(\eps^{-2})$ edges in expectation.
\end{proof}

\subsection{Dynamic Schur Complement}
\label{subsec:dynamicsc}

In this section we prove the following result about maintaining a dynamic Schur complement under terminal additions, deletions, and resistance changes, with an initial set of \emph{safe terminals} that is our choice.

\begin{theorem}[Dynamic Schur complement]
\label{thm:dynamicsc}
There is a \DynamicSC~data structure that against an oblivious adversary supports the following operations for any parameters $0 < \beta, \eps < 1$,
on a graph $G$ with dynamic terminal set $C$.
\begin{itemize}
    \item $\textsc{Initialize}(G, \rr^\init \in \R^{E(G)}_{>0}, C^\init, \eps, \beta).$
    Initializes a graph $G$ with resistances $\rr^\init$ and a set of safe terminals $C^\safe$. Sets the terminal set $C \assign C^\safe \cup C^\init.$ Runtime: $\O(m\beta^{-4}\eps^{-4})$.
    
    \item $\textsc{PermanentAddTerminals}(\Delta C \subseteq V(G)).$ Adds all vertices in the set $\Delta C$ as terminals. Runtime: $\O(|\Delta C|\beta^{-2}\eps^{-2})$ amortized.
    
    \item $\textsc{TemporaryAddTerminals}(\Delta C \subseteq V(G)).$ Adds all vertices in the set $\Delta C$ as (temporary) terminals. Runtime: Worst case $\O\left(\left(K\beta^{-2}\eps^{-2}\right)^2\right)$ if the \textsc{TemporaryAddTerminals} operations that are not rolled back add $K$ terminals in total. All \textsc{TemporaryAddTerminals} operations should be rolled back before the next \textsc{PermanentAddTerminals}.
    
    \item $\textsc{Update}(Z, \rr^\new \in \R^Z_{>0}).$ Under the guarantee that for all edges $e \in Z$, both endpoints of $e$ are terminals, updates $\rr_e \assign \rr_e^\new$ for all $e \in Z$. Runtime: Worst case $\O(|Z|)$.
    
    \item $\textsc{SC}().$ Returns a $(1+\eps)$-approximation of $\mSC(G, C)$ for the current terminal set $C$ and resistances. Runtime: Worst case $\O\left(\left(\beta m+\left(K\beta^{-2}\eps^{-2}\right)^2\right)\eps^{-2}\right)$ where $K$ is the number of $\textsc{TemporaryAddTerminals}$ operations that are not rolled back.
    
    \item $\textsc{Rollback}().$ Rollback the last \textsc{Update}, \textsc{PermanentAddTerminals} or \textsc{TemporaryAddTerminals} if exists. Costs the same time as the original operation.
\end{itemize}
The data structure succeeds with high probability. 
Furthermore for the same initial graph, and same updated state of terminals, the Schur complement generated has the same distribution.
\end{theorem}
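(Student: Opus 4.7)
The plan is to implement \DynamicSC{} by combining the random-walk based Schur complement approximation (\cref{lemma:scapprox}), the congestion reduction of \cref{lemma:reducecongestion}, the efficient local walk sampler (\cref{lemma:effwalks}), and the dynamic spectral sparsifier with rollback support (\cref{lemma:dynamicsparsifier}). The data structure will store, for every edge $e \in E(G)$, a collection of $\rho = \O(\eps^{-2})$ random walks to the current terminal set $C$; the multigraph formed by collapsing each walk to an edge between its two terminal endpoints (weighted by $\rho$ times its resistive length) is a $(1+\eps)$-approximation of $\mSC(G,C)$ by \cref{lemma:scapprox}, and this multigraph is the one fed into the dynamic sparsifier.

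For \textsc{Initialize}, I would first compute $\Chat = \textsc{CongestionReductionSubset}(G,\beta)$, set $C \assign \Chat \cup C^\init$ (marking $\Chat \cup C^\safe$ as persistent), then for every edge sample $\rho$ random walks to $C$ using \cref{lemma:effwalks}. By \cref{lemma:reducecongestion} each walk visits $\O(\beta^{-1})$ distinct vertices w.h.p., so a single walk costs $\O(\beta^{-4}\eps^{-2})$ and total preprocessing is $\O(m \beta^{-4} \eps^{-4})$. I also build a reverse index that maps each non-terminal $v$ to the set of stored walks that visit $v$, and I initialize the dynamic sparsifier of \cref{lemma:dynamicsparsifier} on the multigraph $H$ of sampled walks.

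The two \textsc{AddTerminals} routines are implemented by \emph{shortcutting}: when $v$ is added to $C$, every walk passing through $v$ is split at its first visit to $v$, replacing one edge of $H$ by two. Only walks indexed by $v$ are touched, and the expected number of such walks equals $\rho$ times the expected congestion on $v$, which is $\O(\beta^{-2}\eps^{-2})$ by \cref{lemma:reducecongestion} (this remains valid as $C$ grows, since enlarging $C$ only shortens walks and weakly decreases congestion). Permanent additions forward the resulting $\O(\beta^{-2}\eps^{-2})$ edge updates to the permanent interface of \cref{lemma:dynamicsparsifier}; temporary additions forward them to the temporary interface, giving the claimed worst-case $\O((K\beta^{-2}\eps^{-2})^2)$ bound. \textsc{Update} is trivial since both endpoints of each $e \in Z$ lie in $C$: the corresponding $\rho$ stored ``walks'' of $e$ are simply the edge $e$ itself, so we only re-weight them and forward to the dynamic sparsifier in $O(|Z|)$ time. \textsc{SC} queries the dynamic sparsifier and returns its $(1+\eps)$-sparsifier, costing $\O((\beta m + (K\beta^{-2}\eps^{-2})^2)\eps^{-2})$ by \cref{lemma:dynamicsparsifier}; composing with the $(1+\eps)$ approximation from \cref{lemma:scapprox} gives a $(1+O(\eps))$-approximation, absorbed by rescaling $\eps$. \textsc{Rollback} is realized by keeping, per operation, a log of the walks it shortcut or the resistance changes it forwarded, and undoing them in reverse order using the rollback primitives of the walk store and of \cref{lemma:dynamicsparsifier}.

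The main obstacle is twofold. First, arguing the amortized cost of \textsc{PermanentAddTerminals} requires bounding $\sum_{v \text{ ever added}} \rho \cdot (\text{walks through } v)$; since $\Chat$ comes from \cref{lemma:reducecongestion} and the added terminals $v \notin \Chat$ have controlled congestion on the \emph{initial} walks, one must argue inductively that shortcutting only decreases congestion on the remaining non-terminals, so the bound persists under arbitrary addition sequences. Second, and more delicate, is the final distributional claim: the Schur complement produced must depend only on the final state of $C$, not on the history of operations. The key observation here is that the stored walks are sampled once at initialization from a distribution determined by the graph alone, and shortcutting at a newly added terminal $v$ is a \emph{deterministic} operation on the existing random walk (take the prefix up to the first visit to $v$); therefore the final multigraph $H$ is a deterministic function of the initial sample and the final set $C$, which yields the same-distribution property for any adversarial order of terminal insertions.
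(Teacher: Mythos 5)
Your proposal matches the paper's construction essentially step for step: sample $\rho=\O(\eps^{-2})$ walks per edge at initialization via \cref{lemma:effwalks}, shortcut them at newly added terminals, feed the resulting resistive-length multigraph into the dynamic sparsifier of \cref{lemma:dynamicsparsifier} through its permanent/temporary/rollback interfaces, bound per-vertex walk counts via \cref{lemma:reducecongestion}, and get correctness from \cref{lemma:scapprox}; your observation that the final multigraph is a deterministic function of the initial walk sample and the final terminal set is also exactly how the paper justifies the same-distribution claim. The one imprecision is that you cite \cref{lemma:reducecongestion} for the claim that every walk hits $C$ within $\O(\beta^{-1})$ distinct vertices — that lemma only bounds congestion, and the paper secures the short-hitting property by additionally seeding $C^\safe$ with a uniformly random subset $C^{\mathrm{(sample)}}$ of $\beta m$ vertices (line \ref{line:csample} of \cref{algo:dynamicsc}), a small addition your construction also needs.
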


Now we give the algorithm and pseudocode for \cref{thm:dynamicsc}
in~\cref{algo:dynamicsc,algo:dynamicsc2}.

\begin{algorithm}[!ht]
\caption{\DynamicSC. \label{algo:dynamicsc}}
\SetKwProg{Globals}{global variables}{}{}
\SetKwProg{Proc}{procedure}{}{}
\SetKwProg{Guarantee}{guarantee:}{}{}
\Globals{}{
    $\eps, \beta$: approximation quality and size of terminals. \\
    $C, C^\init, C^\safe$: current terminal set, initial terminal set, and safe terminals. \\
    $\rr, \rr^\init$: current resistances and original resistances. \\
    $P_e \assign \emptyset$ for $e \in E(G)$: set of paths for $e \in E(G).$ \\
    $D^\expand$ -- instances of dynamic expander decomposition for dynamic spectral sparsification for $\phi = \Theta(\log^{-4} n)$, as in \cref{lemma:dynamicexpanderwithpruning} and \cref{lemma:dynamicsparsifier}.
}
\Proc{$\textsc{Initialize}(G, \rr^\init \in \R^{E(G)}_{>0}, C^\init, \eps, \beta)$}{
    $C^\mathrm{(congest)} \assign \textsc{CongestionReductionSubset}(G, \beta)$. \tcp{\cref{lemma:reducecongestion}}
    Let $C^\mathrm{(sample)}$ be a random subset of $\beta m$ vertices. \label{line:csample} \\
    $C^\safe \assign C^\mathrm{(congest)} \cup C^\mathrm{(sample)}$ \label{line:makecsafe} \\
    For each edge $e \in E(G)$, sample $\rho$ random walks from $e$ and approximate resistive lengths up to $\exp(\eps/10)$ as in \cref{lemma:effwalks} with $\ell=\Omega(\beta^{-1}\log m)$. The algorithm fails if any random walk does not hit $C$. Let these walks be $P_e$.
    \label{line:realwalks} \\
    $C \assign C^\safe \cup C^\init$, and shortcut walks $w \in \cup_e P_e$ to where they hit $C$. \\
    Pass the updated edge and resistance on $C$ to $D^\expand$, using \cref{lemma:scapprox} to decide the resistances based on the total resistive length of the walk. \label{line:makefirstsc} \\
    \Return $C^\safe$. \label{line:returnsafe} \tcp{Returns safe terminals.}
}
\Proc{$\textsc{TemporaryAddTerminals}(\Delta C \subseteq V(G)).$}{
    $C \assign C \cup \Delta C.$ \\
    Shortcut walks $w \in \cup_e P_e$ to where they hit $C$. \label{line:shortcutadd2} \\
    Pass necessary edge and resistance updates from shortcutting walks to $D^\expand$ as \textbf{temporary} updates, using \cref{lemma:scapprox} to decide the resistances. \label{line:addupdatesc2} \tcp{We have $\exp(\eps/10)$ approximate resistive lengths, as we used \cref{lemma:effwalks} in line \ref{line:realwalks}.}
}
\Proc{$\textsc{PermanentAddTerminals}(\Delta C \subseteq V(G)).$}{
    \Guarantee{All \textsc{TemporaryAddTerminals} are already rolled back.}{}
    $C \assign C \cup \Delta C.$ \\
    Shortcut walks $w \in \cup_e P_e$ to where they hit $C$. \label{line:shortcutadd1} \\
    Pass necessary edge and resistance updates from shortcutting walks to $D^\expand$ as \textbf{permanent} updates, using \cref{lemma:scapprox} to decide the resistances. \label{line:addupdatesc1} \tcp{We have $\exp(\eps/10)$ approximate resistive lengths, as we used \cref{lemma:effwalks} in line \ref{line:realwalks}.}
}
\end{algorithm}

\begin{algorithm}[!ht]
\caption{\DynamicSC. \label{algo:dynamicsc2}}
\SetKwProg{Globals}{global variables}{}{}
\SetKwProg{Proc}{procedure}{}{}
\SetKwProg{Guarantee}{guarantee:}{}{}
\Globals{}{
    $\eps, \beta$: approximation quality and size of terminals. \\
    $C, C^\init, C^\safe$: current terminal set, initial terminal set, and safe terminals. \\
    $\rr, \rr^\init$: current resistances and original resistances. \\
    $P_e \assign \emptyset$ for $e \in E(G)$: set of paths for $e \in E(G).$ \\
    $D^\expand$ -- instances of dynamic expander decomposition for dynamic spectral sparsification for $\phi = \Theta(\log^{-4} n)$, as in \cref{lemma:dynamicexpanderwithpruning} and \cref{lemma:dynamicsparsifier}.
}
\Proc{$\textsc{Update}(Z, \rr^\new \in \R^Z_{>0}).$}{
    \Guarantee{All endpoints of edges in $Z$ are terminals.}{}
    \For{$e \in Z$}{$\rr_e \assign \rr^\new_e$, and pass changes to $D^\expand$. \label{line:updateupdatesc}}
}
\Proc{$\textsc{SC}().$}{
    Sample $\widetilde{\mSC}$ using \cref{lemma:dynamicsparsifier} and the expander decomposition maintained by $D^\expand.$ \label{line:samplesc} \\
    \Return $\widetilde{\mSC}.$
}
\Proc{$\textsc{Rollback}().$}{
    Rollback the last operation by undo the changes made by the operation in reverse order.
}
\end{algorithm}

At a high level, our algorithm maintains a set of random walks onto a randomly chosen set of terminals of size $\beta m$. Additionally, we add some other terminals to reduce the congestion of the random walks to ensure that our runtimes for shortcutting walks (not including passing the updates to the dynamic edge sparsifier $D^\expand$) in \textsc{PermanentAddTerminals} and \textsc{TemporaryUpdates} are worst case, instead of amortized. \textsc{PermanentAddTerminals} (resp. \textsc{TemporaryAddTerminals}) passes the updates caused by shortcutting walks to $D^\expand$ (\cref{lemma:dynamicsparsifier}) as permanent (resp. temporary) updates. Because only temporary updates has worst case time bound in $D^\expand$, only the runtime of \textsc{TemporaryAddTerminals} (not \textsc{PermanentAddTerminals}) is worst case.

To maintain an approximate Schur complement, we use \cref{lemma:scapprox} and the random walks we have chosen. Additionally, we obtain $\exp(\eps/10)$ approximate resistive lengths for the walks using \cref{lemma:effwalks}, and this only affects the approximation quality by $\exp(\eps/10)$ as well. As $C^\mathrm{(sample)}$ contains $\beta m$ random vertices and each random walk we chose visits $\Omega(\beta^{-1} \log m)$ distinct vertices, all of the random walks hit $C$ with high probability. Under terminal additions, we can shorten these walks using a binary tree (which should support updates with worst case runtime), and plug the resulting resistive lengths from \cref{lemma:scapprox} into the dynamic edge sparsifier in \cref{lemma:dynamicsparsifier}.

We now analyze Algorithm \DynamicSC~(\cref{algo:dynamicsc,algo:dynamicsc2}) to prove \cref{thm:dynamicsc}.

\begin{proof}[Proof of \cref{thm:dynamicsc}]
We first show correctness, then analyze the runtime.
\paragraph{Correctness.} We first check that $C^\safe$ indeed has $O(\beta m)$ vertices. Indeed, $C^\mathrm{(sample)}$ has at most $\beta m$ vertices and $C^\mathrm{(congest)}$ has $O(\beta m)$ vertices by \cref{lemma:reducecongestion}.
Hence $|C^\safe| \le O(\beta m).$ Additionally, because the walks have at least $\Omega(\beta^{-1} \log m)$ vertices, all walks hit $C^\mathrm{(sample)}$, a random subset of size $\beta m$, with high probability.

Next, we show $\widetilde{\mSC} \approx_\eps \mSC(G, C)$ in line \ref{line:samplesc} of \cref{algo:dynamicsc2} with high probability, so that the operation $\textsc{SC}()$ is correct.
By \cref{lemma:scapprox}, the underlying Schur complement from shortcutting paths (and using the true resistive lengths) is a $(1+\eps/3)$-approximation to $\mSC(G, C)$. Also, because the resistive lengths were sampled using \cref{lemma:effwalks} in line \ref{line:realwalks} of \DynamicSC.\textsc{Initialize} (\cref{algo:dynamicsc}), the resistive lengths are accurate up to $\exp(\eps/10)$. Now, the dynamic spectral sparsifier in \cref{lemma:dynamicsparsifier} indeed would return a $(1+\eps/3)$-approximation of the underlying approximate Schur complement, so in total it is a $(1+\eps)$-approximation to $\mSC(G, C)$.

\paragraph{Runtime.} We first observe that every vertex in $C \bs C^\mathrm{(congest)}$ is involved in at most $\O(\beta^{-2}\eps^{-2})$ walks built in line \ref{line:realwalks} of \cref{algo:dynamicsc}. This is because the actual number of walks through each vertex differs from the expected value by a factor up to $\O(1)$ with high probability due to a Chernoff bound and \cref{lemma:reducecongestion}.

We go item by item.
\begin{itemize}
    \item $\textsc{Initialize}.$ Sampling the paths and feeding the resistances into the dynamic edge sparsifier as permanent updates requires time
    \[ \O(m \cdot \eps^{-2} \cdot \beta^{-4}\eps^{-2}+m\cdot \eps^{-2}) = \O(m\beta^{-4}\eps^{-4}) \]
    by \cref{lemma:effwalks} and \cref{lemma:dynamicsparsifier}.
    \item $\textsc{PermanentAddTerminals}.$ Each vertex in $\Delta C$ is involved in $\O(\beta^{-2}\eps^{-2})$ paths by the above discussion. Each path is processed as a permanent update in \cref{lemma:dynamicsparsifier}. The total time is $\O(|\Delta C|\beta^{-2}\eps^{-2}).$
    \item $\textsc{TemporaryAddTerminals}.$ Each vertex in $\Delta C$ is involved in $\O(\beta^{-2}\eps^{-2})$ paths. Each path is processed as a temporary update in \cref{lemma:dynamicsparsifier}. The total time for adding $K$ temporary terminals is $\O\left(\left(K\beta^{-2}\eps^{-2}\right)^2\right)$ by \cref{lemma:dynamicsparsifier}.
    \item $\textsc{Update}$. Requires time $\O(|Z|)$ by \cref{lemma:dynamicsparsifier}, as we simply are changing $|Z|$ edge resistances.
    \item $\textsc{SC}.$ Requires time $\O\left(\left(\beta m+\left(K\beta^{-2}\eps^{-2}\right)^2\right)\eps^{-2}\right)$ by \cref{lemma:dynamicsparsifier}, as there are $O(\beta m)$ terminals and $\left(K\beta^{-2}\eps^{-2}\right)^2$ temporary updates to the dynamic sparsifier.
\end{itemize}
\end{proof}

\subsection{Algorithm and Proof for \textsc{Checker}}
\label{subsec:checkersub}
In this section we prove \cref{thm:checker} about the \textsc{Checker}~data structure.
Its pseudocode is in Algorithm~\ref{algo:checker}.
\begin{algorithm}[!ht]
\caption{\Checker: algorithm for approximating flows on edges $e$. \label{algo:checker}}
\SetKwProg{Globals}{global variables}{}{}
\SetKwProg{Proc}{procedure}{}{}
\Globals{}{
    $\beta, \eps$: parameters for size of terminal set, and approximate quality. \\
    $\rr^\init, \rr$: initial resistances and current resistances. \\
    $C^\safe, C$: safe terminals and current terminals. \\
    $D^\sc$: instance of \DynamicSC~from \cref{thm:dynamicsc}. \\
}
\Proc{$\textsc{Initialize}(G, \rr^\init \in \R^{E(G)}_{>0}, \eps, \beta).$}{
    $C^\safe \assign D^\sc.\textsc{Initialize}(G, \rr^\init \in \R^{E(G)}_{>0}, \{s, t\}, \eps/10, \beta)$.
}
\Proc{$\textsc{PermanentUpdate}(e=uv, \rr^\new_e \in \R_{>0}).$}{
    $\Delta C \assign \{u, v\}.$ \label{line:updateaddterminals1} \tcp{Vertices adjacent to edge $e$}
    $D^\sc.\textsc{PermanentAddTerminals}(\Delta C), C \assign C \cup \Delta C$. \label{line:updateaddterminalsperma} \\
    $D^\sc.\textsc{Update}(e, \rr^\new).$ \\
}
\Proc{$\textsc{TemporaryUpdate}(e=uv, \rr^\new_e \in \R_{>0}).$}{
    $\Delta C \assign \{u, v\}.$ \label{line:updateaddterminals2} \tcp{Vertices adjacent to edge $e$}
    $D^\sc.\textsc{TemporaryAddTerminals}(\Delta C), C \assign C \cup \Delta C$. \label{line:updateaddterminalstemp} \\
    $D^\sc.\textsc{Update}(e, \rr^\new).$ \\
}
\Proc{$\textsc{Rollback}().$}{
    Rollback the last operation by undo the changes made by the operation in reverse order.
}
\Proc{$\textsc{Check}(e).$}{
    $\Delta C \assign \{u, v\}.$ \tcp{Endpoints of $e$.}
    $D^\sc.\textsc{PermanentAddTerminals}(\Delta C)$. \label{line:checkaddterminals} \\
    $\widetilde{\mSC} \assign D^\sc.\textsc{SC}().$ \label{line:samplescchecker} \\
    $\xx \assign \widetilde{\mSC}^\dagger \cchi_{st}.$ \label{line:solvex} \\
    $\gg_e \assign \rr_e^{-1}(\xx_u - \xx_v).$ \\
    \If{$\rr_e\gg_e^2 \ge 3\eps^2\cE/4$ \label{line:acceptcheck}}{
        Accept $e$ and return $\gg_e$.
    }
    $D^\sc.\textsc{Rollback}().$ \tcp{Rollback the last $\textsc{PermanentAddTerminals}$} \label{line:deletecheck}
}
\end{algorithm}
Our algorithm for \Checker~simply maintains a dynamic Schur complement using \cref{thm:dynamicsc}. This works because an approximate Schur complement is sufficient to provide additive approximations for the flow on edges with at least $\eps^2$ fraction of the electric energy.
\begin{lemma}
\label{lemma:approxenergy}
Consider a graph $G$ with resistances $\rr_e$, and vertices $s, t$ such that the $s$-$t$ electric flow has energy $\cE$. For an operator $\widetilde{\mL} \approx_\eps \mL(G)$, we have for any edge $e$ that
\[ \rr_e^{-1/2}\left|\cchi_e^\top\left(\mL(G)^\dagger - \widetilde{\mL}^\dagger\right)\cchi_{st}\right| \le \eps\sqrt{\cE}. \]
\end{lemma}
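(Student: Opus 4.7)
The plan is to reduce the additive bound on edge flow error to an energy-norm bound on the potential error, and then convert the spectral approximation $\widetilde{\mL} \approx_\eps \mL$ into that energy-norm bound. Write $\pphi^{*} = \mL^{\dagger}\cchi_{st}$ and $\widetilde{\pphi} = \widetilde{\mL}^{\dagger}\cchi_{st}$, so that
\[
\rr_e^{-1/2}\left|\cchi_e^{\top}(\mL^{\dagger}-\widetilde{\mL}^{\dagger})\cchi_{st}\right|
\;=\;
\rr_e^{-1/2}\left|\cchi_e^{\top}(\pphi^{*}-\widetilde{\pphi})\right|.
\]
Here $\cchi_e$ is the column of the edge-vertex incidence matrix $\mB$ corresponding to edge $e$, so this quantity is exactly the square root of the resistive energy that the discrepancy flow $\rr_e^{-1}\cchi_e^{\top}(\pphi^{*}-\widetilde{\pphi})$ carries on $e$.

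First I would observe the ``single-edge energy is dominated by total energy'' fact: for any vector $\yy$ and any edge $e$,
\[
\rr_e^{-1}\left(\cchi_e^{\top}\yy\right)^{2}
\;\le\;
\sum_{f\in E}\rr_f^{-1}\left(\cchi_f^{\top}\yy\right)^{2}
\;=\;
\yy^{\top}\mL\yy
\;=\;
\|\yy\|_{\mL}^{2},
\]
using $\mL=\mB^{\top}\mR^{-1}\mB=\sum_{f}\rr_f^{-1}\cchi_f\cchi_f^{\top}$. Applying this with $\yy=\pphi^{*}-\widetilde{\pphi}$ reduces the lemma to proving
\[
\|\pphi^{*}-\widetilde{\pphi}\|_{\mL} \;\le\; \eps\sqrt{\cE},
\]
i.e.\ to bounding the potential error in the energy norm of the true graph.

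Now I would exploit the spectral approximation. Work on the image of $\mL$ (which contains $\cchi_{st}$) and define $\mM \defeq \mL^{-1/2}\widetilde{\mL}\mL^{-1/2}$, so that $\widetilde{\mL}\approx_{\eps}\mL$ translates to $\mM\approx_{\eps}\mI$ and hence $\mM^{-1}\approx_{\eps}\mI$. On the image we can write $\mL^{\dagger}=\mL^{-1/2}\mI\mL^{-1/2}$ and $\widetilde{\mL}^{\dagger}=\mL^{-1/2}\mM^{-1}\mL^{-1/2}$, so
\[
\pphi^{*}-\widetilde{\pphi}
\;=\;
\mL^{-1/2}\left(\mI-\mM^{-1}\right)\mL^{-1/2}\cchi_{st},
\]
and therefore
\[
\|\pphi^{*}-\widetilde{\pphi}\|_{\mL}^{2}
\;=\;
\cchi_{st}^{\top}\mL^{-1/2}\left(\mI-\mM^{-1}\right)^{2}\mL^{-1/2}\cchi_{st}
\;\le\;
\|\mI-\mM^{-1}\|_{2}^{2}\cdot\cchi_{st}^{\top}\mL^{\dagger}\cchi_{st}
\;=\;
\|\mI-\mM^{-1}\|_{2}^{2}\cdot\cE.
\]
Since eigenvalues of $\mM^{-1}$ lie in $[e^{-\eps},e^{\eps}]$, we have $\|\mI-\mM^{-1}\|_{2}\le e^{\eps}-1$, which is at most $\eps$ for small enough $\eps$ (and in any case at most $2\eps$ for $\eps\le 1$, absorbable into constants). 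Combining with the step above gives the claimed bound.

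The only mildly delicate step is Step 3 — checking that the algebraic manipulation with $\mL^{-1/2}$ and $\mL^{\dagger}$ is legitimate on the image of $\mL$, which requires noting that $\cchi_{st}$ is orthogonal to $\oone$ and so lies in the image. Everything else is a one-line application of the $\mL=\sum_{f}\rr_f^{-1}\cchi_f\cchi_f^{\top}$ identity and a standard conversion of $\widetilde{\mL}\approx_{\eps}\mL$ into an operator-norm bound on $\mI-\mM^{-1}$.
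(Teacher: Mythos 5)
Your proof is correct and is essentially the paper's argument in a slightly different packaging: the paper applies Cauchy--Schwarz directly to the bilinear form using $\left\|\mI - \mL(G)^{1/2}\widetilde{\mL}^\dagger\mL(G)^{1/2}\right\|_2 \le \eps$ together with the effective-resistance bound $\cchi_e^\top\mL(G)^\dagger\cchi_e \le \rr_e$, while you route through the intermediate energy-norm bound $\|\pphi^*-\widetilde{\pphi}\|_{\mL}\le \eps\sqrt{\cE}$ and the equivalent fact $\rr_e^{-1}\cchi_e\cchi_e^\top \pe \mL$. The only blemish, the constant $e^{\eps}-1$ versus $\eps$, is present in the paper's own proof as well and is immaterial.
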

\begin{proof}
It is clear that
\[ \left\|\mL(G)^{1/2}\left(\mL(G)^\dagger - \widetilde{\mL}^\dagger\right)\mL(G)^{1/2}\right\|_2 = \left\|\mI - \mL(G)^{1/2}\widetilde{\mL}^\dagger\mL(G)^{1/2}\right\|_2 \le \eps. \]
Therefore, the Cauchy-Schwarz inequality gives us
\begin{align*}
\rr_e^{-1/2}\left|\cchi_e^\top\left(\mL(G)^\dagger - \widetilde{\mL}^\dagger\right)\cchi_{st}\right| &\le \rr_e^{-1/2}\eps\left(\cchi_e^\top \mL(G)^\dagger \cchi_e\right)^{1/2}\left(\cchi_{st}^\top \mL(G)^\dagger \cchi_{st}\right)^{1/2} \\
&\le \rr_e^{-1/2}\eps \cdot \rr_e^{1/2} \cdot \sqrt{\cE}\\
&\le \eps\sqrt{\cE}.
\end{align*}
Here, we used that the resistance of edge $e$ is $\rr_e$, and $\cchi_{st}^\top \mL(G)^\dagger \cchi_{st} = \cE.$
\end{proof}

More precisely, our algorithm \Checker~works as follows. We start by initializing an instance $D^\sc$ of \DynamicSC~with
$C^\init = \{s, t\}.$
To implement the two types (permanent and temporary) of \textsc{Update} operations, we add the endpoints of the edges as terminals and then update their resistances by shortening random walks. We use $\textsc{PermanentAddTerminals}$ and $\textsc{TemporaryAddTerminals}$ for $\textsc{PermanentUpdate}$ and $\textsc{TemporaryUpdate}$ respectively.
To \textsc{Rollback} an operation, we simply undo the changes made by the operation in reverse order.

Finally, to implement \textsc{Check}, for the edge $e=(u, v)$, we first add its endpoints with the operation $D^\sc.\textsc{TemporaryAddTerminals}(\{u, v\})$ and sample an approximate Schur complement using $D^\sc.\textsc{SC}().$ We then use this approximate solver to estimate the flow across $e$, and return $e$ if the energy estimate is at least $3\eps^2\cE/4.$ After that we eliminate the effects of this \textsc{Check} by rolling back $D^\sc.\textsc{TemporaryAddTerminals}(\{u, v\})$.
We now formally show that~\cref{algo:checker} meets the requirement of the checker data structure as described in~\cref{thm:checker}.

\begin{proof}[Proof of \cref{thm:checker}]
We first check the correctness, then verify the runtime bounds.
\paragraph{Correctness.}
We first verify the correctness of \textsc{Check}, assuming that the calls to $D^\sc$ in \cref{algo:checker} satisfy the guarantees of \cref{thm:dynamicsc}.

By the guarantees of \cref{thm:dynamicsc} we know that for the terminal set $C$, we have $\widetilde{\mSC} \approx_{\eps/10} \mSC(G, C).$ By \cref{lemma:approxenergy} for $\mL(G) = \mSC(G, C)$ and $\widetilde{\mL} = \widetilde{\mSC}$ we have for an edge $e$ that
\begin{align} \rr_e^{1/2}|\Delta\ff_e - \bg_e| = \rr_e^{-1/2}\left|\cchi_e^\top(\mSC(G, C)^\dagger - \widetilde{\mSC}^\dagger)\cchi_{st}\right| \le \eps\sqrt{\cE}/10. \label{eq:errorbound} \end{align}
By this, if $\rr_e\Delta\ff_e^2 \ge \eps^2\cE$, then we know that
\[ \rr_e^{1/2}|\bg_e| \ge \rr_e^{1/2}|\Delta\ff_e| - \eps\sqrt{\cE}/10 \ge 9\eps\sqrt{\cE}/10. \]
Therefore, $\rr_e\bg_e^2 \ge 3\eps^2\cE/4$, so line \ref{line:acceptcheck} of \cref{algo:checker} triggers and edge $e$ is returned with the proper error bound by \eqref{eq:errorbound}.

On the other hand, if $\rr_e\Delta\ff_e^2 \ge \eps^2\cE/2$ we know that
\[ \rr_e^{1/2}|\bg_e| \le \rr_e^{1/2}|\Delta\ff_e| + \eps\sqrt{\cE}/10 \ge \eps\sqrt{\cE}/\sqrt{2} + \eps\sqrt{\cE}/10 \le 5\eps\sqrt{\cE}/6. \]
Therefore, we know that $\rr_e\bg_e^2 < 3\eps^2\cE/4,$ so line \ref{line:acceptcheck} of \cref{algo:checker} is not triggered and $e$ is not returned.

Since the operation $\textsc{TemporaryAddTerminals}$ is rolled back after checking, $\textsc{Check}$ does not modify any data structure or variable we use. Thus, we have the property that the output of $\textsc{Check}(e)$ is independent of any previous calls to $\textsc{Check}$.

Now we verify that the calls to $D^\sc$ in \cref{algo:checker} satisfy the guarantees of \cref{thm:dynamicsc}. We analyze all necessary guarantees in the calls to \cref{thm:dynamicsc}.
\begin{itemize}
    \item $\textsc{Update}:$ we must ensure that both endpoints of $e$ are terminals. This is true because of line \ref{line:updateaddterminals1}, \ref{line:updateaddterminals2} in \cref{algo:checker}.
    \item The size of the terminal set $C$ is $O(\beta m)$ at all times because of the guarantee in \cref{thm:checker} that the total number of permanent and temporary add terminal operations is at most $\beta m$.
\end{itemize}

\paragraph{Runtime.}
We go item by item.
\begin{itemize}
    \item $\textsc{Initialize}.$ Follows from \cref{thm:dynamicsc}, as this procedures makes a single call to $D^\sc.\textsc{Initialize}$.
    \item $\textsc{PermanentUpdate}.$ We know that $|\Delta C| \le 2$ in line \ref{line:updateaddterminals1} of \cref{algo:checker}.  Therefore, the calls to $D^\sc.\textsc{PermanentAddTerminals}$ require time at most $\O(\beta^{-2}\eps^{-2})$ by \cref{thm:dynamicsc}.
    \item $\textsc{TemporaryUpdate}.$ We know that $|\Delta C| \le 2$ in line \ref{line:updateaddterminals1} of \cref{algo:checker}.  Therefore, the calls to $D^\sc.\textsc{PermanentAddTerminals}$ require time at most $\O\left(\left(K\beta^{-2}\eps^{-2}\right)^2\right)$ by \cref{thm:dynamicsc} where $K$ is the number of \textsc{TemporaryUpdate}s that are not rolled back.
    \item $\textsc{Rollback}.$ It costs the same time as the original operation as we undo the changes. 
    \item $\textsc{Check}.$ We know that $|\Delta C| = 2.$ Hence we add at most $2$ new temporary terminals to $D^\expand$. By the guarantee of the \textsc{SC}() operation of \cref{thm:dynamicsc} and a nearly-linear time Laplacian solver (\cref{thm:lap}), sampling the Schur complement and solving in lines \ref{line:samplescchecker} and \ref{line:solvex} requires time $\O\left(\left(\beta m+\left(K\beta^{-2}\eps^{-2}\right)^2\right)\eps^{-2}\right)$. The cost of $\textsc{TemporaryAddTerminals}$ and $\textsc{Rollback}$ are dominated by the cost of sampling the Schur complement by \cref{thm:dynamicsc}.
\end{itemize}
This completes the runtime analysis and the proof.
\end{proof}
\section{Locator for Candidate Edges Against Oblivious Adversaries}
\label{sec:Locator}

In this section we build the \Locator~(\cref{thm:locator} below) which is a heavy hitter data structure for electric flows.
We first reduce the problem of locating high congestion edges to dotting the projection of heavy-hitter vectors onto a set of terminals and the potential vector of the unit electrical flow.
In \cref{subsec:approxproj}, we show how a set of local random walks approximate the change of this projection vector. Then we introduce the \textsc{Projector} data structure maintaining the projection vectors in \cref{subsec:projector}. The \textsc{Projector} is then combined with \textsc{DynamicSC} (\cref{thm:dynamicsc}) for dynamic Schur complements to give the resulting data structure for locating high congestion edges in \cref{subsec:locatorproof}. We note that because \cref{thm:locator} below has no temporary updates, we only require \cref{thm:dynamicsc} for the $K = 0$ situation.

Throughout, we assume that $\cE = 1$ without loss of generality, as we can simply scale resistances to ensure this. Throughout the remainder of this section, we call the dynamic Schur complement data structure $D^{(\mathrm{sc})}$ instead of \textsc{DynamicSC}, and call the projection data structure $D^{(\mathrm{proj})}$ instead of \textsc{Projector}.

\begin{theorem}
\label{thm:locator}
There is a data structure $\textsc{Locator}$ that
for terminal size factor $\beta$,
partial reinitialize threshold $\delta$,
energy threshold $\epsilon$, satisfying $0 < \delta < \beta < \eps < 1$,
and total energy bound $\cE$,
supports the following operations with inputs independent
of the randomness used, i.e. against oblivious adversaries:
\begin{itemize}
\item $\textsc{Initialize}(G, \rr, \eps, \beta, \delta).$ Initializes the data structure given a graph $G$ with resistances $\rr$. Creates an $\ell_2$ heavy hitter sketch, terminal set $C$ with $|C| = O(\beta m)$, and random walks of length $O(\beta^{-1} \log m)$ from each edge $e$ to the terminal set $C$.
Runtime: $\O(m\beta^{-4}\delta^{-2}\eps^{-2})$.
\item $\textsc{Update}(e, \rr^\new)$. Updates $\rr_e \assign \rr^\new$. Runtime: amortized $\O(\delta m\eps^{-3}+\delta^{-2}\beta^{-6}\eps^{-2})$.
\item $\textsc{BatchUpdate}(S, \rr^\new \in \R_{>0}^S)$. For all $e \in S$ updates $\rr_e \assign \rr^\new_e$. Runtime: $\O(m\eps^{-2} + |S|\beta^{-2}\eps^{-2})$.
\item $\textsc{Locate}()$. Returns a set $S \subseteq E(G)$ of size $|S| \le O(\eps^{-2})$ that contains all edges $e$ with energy at least $\eps^2\cE/10$ in an unit $s$-$t$ electric flow. Runtime: $\O(\beta m \eps^{-2})$.
\end{itemize}
Under the guarantees that the total number of edges updated between $\textsc{Update}, \textsc{BatchUpdate}$ is at most $\beta m$, and that the energy of a unit $s$-$t$ flow is at most $\cE$, the algorithm succeeds with high probability and satisfies the given runtimes.
\end{theorem}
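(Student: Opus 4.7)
The plan is to realize \textsc{Locator} as an approximate $\ell_2$ heavy-hitter sketch on the energy vector $\mR^{1/2}\ff$ for the unit $s$-$t$ electrical flow $\ff$, whose images are maintained indirectly through a Schur complement and a collection of random-walk projection data structures rather than through $\ff$ itself. First I would set up an $\ell_2$ heavy-hitter sketch with $N=\O(\eps^{-2})$ rows $\qq_1,\dots,\qq_N\in\{-1,0,1\}^m$ in the style of \cite{KNPW11}, and invoke the approximate variant (\cref{lem:heavyhitter}) which only needs each sketch value to be maintained up to additive error $\O(\eps\sqrt{\cE})$. By Ohm's law $\ff = \mR^{-1}\mB\pphi$ we rewrite
\[
\langle \qq_i,\mR^{1/2}\ff\rangle \;=\; \langle \dd_i,\pphi\rangle,\qquad \dd_i \;\defeq\; \mB^\top\mR^{-1/2}\qq_i,
\]
so the problem reduces to approximately maintaining the $N$ linear functionals $\langle \dd_i,\pphi\rangle$ of the unit $s$-$t$ potential $\pphi$ as the resistances change.

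The central step is vertex sparsification. For a growing terminal set $C\supseteq\{s,t\}$ of size $O(\beta m)$, harmonic extension gives $\pphi = \mathcal{H}\pphi_C$ and hence
\[
\langle \dd_i,\pphi\rangle \;=\; \langle \ppi^C(\dd_i),\pphi_C\rangle,
\]
where $\ppi^C(\dd_i)\defeq\mathcal{H}^\top\dd_i$ is the walk-based projection of $\dd_i$ onto $C$. I maintain two ingredients: (a) an approximate $\pphi_C$ produced on demand by calling $D^{(\mathrm{sc})}.\textsc{SC}()$ from \cref{thm:dynamicsc} and solving that Schur complement against $\cchi_{st}$ with \cref{thm:lap}; and (b) each $\ppi^C(\dd_i)$ via a separate $D^{(\mathrm{proj})}_i$, initialized by one exact Laplacian solve and updated incrementally on each terminal insertion by sampling a short local random walk from the new terminal to estimate the resulting change in $\ppi^C$ (the structural identity to be stated as \cref{fact:projchange}). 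Because $C$ is seeded with $\beta m$ random vertices, every such walk visits only $\O(\beta^{-1})$ distinct vertices with high probability, which is where the sublinear per-update cost comes from via \cref{lemma:effwalks}.

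With these pieces the operations are immediate. \textsc{Initialize} samples the $\qq_i$, builds $D^{(\mathrm{sc})}$ with $C^{\mathrm{init}}=\{s,t\}$, and calls one Laplacian solve per projector; the $\delta^{-2}$ factor in the stated bound absorbs the oversampling needed later for Chernoff-style concentration of the incremental walk estimators. \textsc{Update}$(e,\rr^{\mathrm{new}})$ adds the endpoints of $e$ as new terminals to $D^{(\mathrm{sc})}$, updates $e$'s resistance, and for each $i$ pushes a local-walk update into $D^{(\mathrm{proj})}_i$; to prevent error from compounding I partially reinitialize the projectors (one Laplacian solve each) every $\delta m$ updates, which is what produces the $\delta m\eps^{-3}$ amortized term, while the $\delta^{-2}\beta^{-6}\eps^{-2}$ term captures the per-update walk cost. \textsc{BatchUpdate} simply triggers a full reinitialization up front at cost $\O(m\eps^{-2})$ plus $\O(|S|\beta^{-2}\eps^{-2})$ for the terminal additions. \textsc{Locate} queries $D^{(\mathrm{sc})}.\textsc{SC}()$, solves for $\pphi_C$, forms the $N$ estimates $\widetilde{s}_i=\langle \ppi^C(\dd_i),\pphi_C\rangle$ in total time $\O(\beta m\eps^{-2})$ (since each $\ppi^C(\dd_i)$ is supported on $C$), and feeds them into the heavy-hitter recovery procedure.

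For correctness I would argue by induction along the update sequence that $\widetilde{s}_i = \langle \qq_i,\mR^{1/2}\ff\rangle \pm \O(\eps\sqrt{\cE})$: the Schur-complement error is bounded by \cref{lemma:approxenergy}, the Laplacian-solve error by \cref{thm:lap}, and the accumulated projector error by a Chernoff bound over the $\O(\delta m)$ updates between reinitializations (which is exactly what $\delta$ is tuned to buy). Together with \cref{lem:heavyhitter} this certifies that the returned $S$ contains every edge of energy $\ge \eps^2\cE/10$ and has size $\O(\eps^{-2})$. The main obstacle, flagged in \cref{sec:overviewlocator}, is that the entries of $\dd_i=\mB^\top\mR^{-1/2}\qq_i$ blow up whenever some $\rr_e$ is tiny, which would wreck both the variance of the walk estimators and the solve error. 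I would handle this by invoking the $s$-$t$-specific fact that edges with small resistance cannot carry large energy in a unit $s$-$t$ flow (\cref{lemma:badres}) and restricting every sketch row $\qq_i$ to edges whose resistance exceeds a fixed polynomial threshold. This caps $\|\dd_i\|_\infty$ while still guaranteeing that every edge we care about lies in the sketched support.
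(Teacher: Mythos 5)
Your architecture matches the paper's: an approximate $\ell_2$ heavy hitter against $\mR^{1/2}\ff$, rewritten via Ohm's law as demand vectors $\dd_i=\mB^\top\mR^{-1/2}\qq_i$ dotted against potentials, split by the Cholesky/harmonic-extension identity into $\langle\ppi^C(\dd_i),\pphi_C\rangle$, with $\pphi_C$ from \cref{thm:dynamicsc}, the projections maintained incrementally via \cref{fact:projchange}, periodic exact re-solves, and the small-resistance truncation justified by \cref{lemma:badres}. However, there are three concrete gaps in the execution.

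First, your reinitialization schedule is mis-calibrated. Each terminal insertion contributes additive error $\Theta(\delta)$ to $\langle\ppi^C(\dd_i),\pphi_C\rangle$ (that is what the $\delta^{-2}$ oversampling buys), so the exact re-solve must happen every $\overline{count}=\Theta(\delta^{-1}\eps)$ insertions to keep the accumulated error below $\eps$; this is also what yields the amortized term $\O(\overline{count}^{-1}m\eps^{-2})=\O(\delta m\eps^{-3})$. Re-solving every $\delta m$ updates, as you propose, gives amortized cost $\delta^{-1}\eps^{-2}$ rather than $\delta m\eps^{-3}$, and the accumulated error $\delta\cdot\delta m$ is far too large for the paper's parameter regime, so both the runtime and the correctness claims fail under your schedule.

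Second, seeding $C$ with $\beta m$ random vertices only controls walk \emph{length}; it does not control the number of pre-sampled walks passing through a given non-terminal $v$, nor the magnitude of $[\ppi^{C\cup\{v\}}(\dd_i)]_v$. Both can be $\Omega(m)$ at a high-degree hub, which breaks the $\O(\delta^{-2}\beta^{-6}\eps^{-2})$ update bound and the Bernstein concentration. The paper augments $C$ with \textsc{CongestionReductionSubset} and all vertices of degree $\ge\beta^{-1}$ (\cref{lemma:reducecongestion}, \cref{lem:reducedemand}) precisely so that $|\ppi^{C\cup\{v\}}(\dd_i)_v|\le\O(\beta^{-2}\eps^{-1})$ for every $v\notin C$; this ingredient is missing from your plan. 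Relatedly, the update in \cref{fact:projchange} has two factors requiring two distinct estimators: $\cchi_v-\ppi^C(\cchi_v)$ is estimated by fresh local walks from $v$ (as you describe), but $[\ppi^{C\cup\{v\}}(\dd_i)]_v$ measures demand flowing \emph{into} $v$ from everywhere and must be read off the globally pre-sampled walks shortcut at $v$; walks launched from $v$ cannot estimate it. Finally, you do not address that the demand vectors $\dd_i$ themselves change when $\rr_e$ changes; the paper tracks $\Delta\dd[i]$ explicitly and exploits that it is supported on $C$ (so it projects to itself), without which the maintained sketch drifts from the current energy vector.
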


Locating the edges with large energy values dynamically requires us to maintain large entries of the vector $\xx \defeq \mR^{1/2}\ff$, where $\mR$ is a diagonal matrix on $\mathbb{R}^{m\times m}_{\ge0}$ such that $\mR_{e, e}=\rr_e$ and $\ff$ is the unit $s$-$t$ electric flow. This can be rewritten
\[
\xx = \mR^{-1/2} \mB \pphi = \mR^{-1/2} \mB (\mB^\top \mR^{-1}\mB)^\dagger \cchi_{st},
\] where $\mB$ is the edge-vertex incidence matrix and $\pphi$ is the potential vector corresponding to voltages.
The following lemma allows us to recover the large entries of $\xx$ by a low-dimensional projection of it, and we prove it in \cref{proofs:heavyhitter}.
\begin{lemma}[$\ell_2$ heavy hitter with errors]
\label{lem:heavyhitter}
There is an algorithm $\textsc{Build}$ that for any
error parameter $0 < \eps < 1 / \log{n}$ and integer $m$,
$\textsc{Build}(\eps, m)$
returns in time $\O(m\ndd)$ a random matrix $\mQ \in \{-1, 0, 1\}^{\ndd \times m}$
with $\ndd = O(\eps^{-2}\log^{3}{m})$ such that every column of
$\mQ$ has $O(\log^{3}{m})$ nonzero entries.

Additionally, there is an algorithm $\textsc{Recover}$ such that for any vector
$\xx \in \R^m$ with $\|\xx\|_2 \le 1$
and $\vv \in \R^\ndd$ satisfying
\[
\norm{\vv - \mQ \xx}_\infty
\le
\eps/10,
\]
$\textsc{Recover}(\vv)$ returns in time $O(\eps^{-2}\log^3 {m} )$
a set $S \subseteq [m]$ with size at most $O(\eps^{-2})$
that with high probability contains all indices $i$ with $|x_i| \geq \eps$.
\end{lemma}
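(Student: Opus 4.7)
The plan is to adapt a standard hierarchical CountSketch heavy hitter so that it is robust to additive $\ell_\infty$ noise on the sketch readout. For $L = \lceil \log_2 m \rceil$ levels, I group indices $[m]$ into dyadic blocks of size $2^\ell$ at level $\ell$, and at each level instantiate $R = O(\log^2 m)$ independent CountSketches, each with $k = O(\eps^{-2})$ buckets. A single CountSketch at level $\ell$ uses a random hash $h$ from blocks to $[k]$, a random block-sign $s$, and a random index-sign $\sigma$; its $b$-th bucket stores $\sum_{v : h(v) = b} s(v) \sum_{i \in v} \sigma(i) x_i$. Stacking over all $L$ levels and $R$ repetitions yields $\mQ \in \{-1, 0, 1\}^{\ndd \times m}$ with $\ndd = L R k = O(\eps^{-2} \log^3 m)$ rows and $L R = O(\log^3 m)$ nonzeros per column, constructible in $\O(m\ndd)$ time as required.

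For recovery, $\textsc{Recover}(\vv)$ runs top-down, maintaining candidate sets $S_\ell$ of ``heavy'' blocks at level $\ell$. Starting with $S_L = \{[m]\}$, for each child $v$ of a block in $S_\ell$ it reads the $R$ noisy estimates $\widehat{y}_v^{(r)} \defeq s^{(r)}(v) \cdot \vv_{b^{(r)}(v)}$, which are meant to approximate $y_v^{(r)} \defeq \sum_{i \in v} \sigma^{(r)}(i) x_i$, and places $v$ into $S_{\ell - 1}$ iff $\max_r |\widehat{y}_v^{(r)}|$ exceeds a threshold $\tau = \eps/3$; the output is $S_0$. Correctness in the exact case $\vv = \mQ\xx$ would rest on two classical facts. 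First, whenever $v$ contains a heavy hitter $i^*$ with $|x_{i^*}| \ge \eps$ we have $\E_\sigma[(y_v^{(r)})^2] = \|x_v\|_2^2 \ge \eps^2$, so by Paley--Zygmund combined with Khintchine, $|y_v^{(r)}| \ge \eps/\sqrt{2}$ with constant probability per repetition; with $R = \Theta(\log^2 m)$ repetitions the failure probability per (block, level) pair is $m^{-\Omega(1)}$, and a union bound over the $O(\eps^{-2} L)$ examined pairs succeeds. Second, at every level $\sum_v (y_v^{(r)})^2 = \|\xx\|_2^2 \le 1$, so at most $O(\eps^{-2})$ blocks can cross any $\Omega(\eps)$ threshold, bounding $|S_\ell|$ throughout. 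CountSketch collision noise from other blocks in the same bucket has magnitude $O(\sqrt{1/k}) = O(\eps)$ per bucket and is further suppressed by the max over $R$ repetitions.

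The $\ell_\infty$ tolerance then falls out by construction: $\|\vv - \mQ\xx\|_\infty \le \eps/10$ perturbs every $\widehat{y}_v^{(r)}$ by at most $\eps/10$, which is strictly smaller than the gap between the signal $\eps/\sqrt{2}$ and the threshold $\tau = \eps/3$, and also smaller than the margin separating $\tau$ from the $\Omega(\eps)$ background level captured by the $O(\eps^{-2})$-survivor bound. Hence every heavy hitter still survives at every level and candidate sets stay of size $O(\eps^{-2})$. The per-candidate work is $O(R) = O(\log^2 m)$, with $O(\eps^{-2})$ candidates per level over $L = O(\log m)$ levels, giving total recovery time $O(\eps^{-2} \log^3 m)$. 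I expect the main obstacle to be the joint anti-concentration / collision analysis: proving simultaneously that every heavy hitter survives at every level via Paley--Zygmund \emph{and} that CountSketch collision noise stays small enough to make the $\eps/10$ slack harmless, all with probability $1 - m^{-\Omega(1)}$ across the $O(\eps^{-2} L)$ queries. This is what forces $R = \Theta(\log^2 m)$ and yields the $\log^3 m$ column sparsity rather than a na\"{i}ve $\log^2 m$.
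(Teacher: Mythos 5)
Your construction (dyadic blocks, $O(\eps^{-2})$ buckets, $O(\log^3 m)$ repetitions across levels) gives the right shape and sparsity for $\mQ$, but the recovery rule has a genuine flaw: deciding that a block $v$ is heavy when $\max_r |\widehat{y}_v^{(r)}| > \eps/3$ does not control false positives, and this breaks both the $O(\eps^{-2})$ bound on the candidate sets $S_\ell$ and the stated running time. The quantity $y_v^{(r)} = \sum_{i\in v}\sigma^{(r)}(i)x_i$ is a fresh Rademacher sum in each repetition; even for a genuinely light block with $\|x_v\|_2 \le \eps/10$, each repetition exceeds $\eps/3$ with some fixed constant probability $p>0$ (and collision noise, with second moment $\|\xx\|_2^2/k = \Theta(\eps^2)$, only makes this worse). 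Taking the maximum over $R=\Theta(\log^2 m)$ independent repetitions therefore fires with probability $1-(1-p)^R \to 1$, so essentially \emph{every} block survives every level. Your remark that collision noise is ``further suppressed by the max over $R$ repetitions'' has the direction backwards: the max over independent trials amplifies upward fluctuations rather than suppressing them. The underlying difficulty is that a single signed sum $y_v^{(r)}$ is an unbiased estimator of nothing useful for $\|x_v\|_2$ — it has the right second moment but does not concentrate — so neither max nor a naive median over repetitions yields a two-sided estimate of the block norm, and a one-sided detection rule cannot simultaneously catch all heavy blocks and reject almost all light ones.

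The paper avoids this by estimating block norms with a genuine norm sketch rather than a single signed sum: each level carries $O(\eps^{-2}\log m)$ random sparse subsets of intervals (each interval included with probability $\eps^2/100$), each subset equipped with an $O(\log m)$-row $\pm1$ Johnson--Lindenstrauss sketch that estimates $\|\xx_S\|_2$ to within a multiplicative $1\pm 0.1$ plus the additive $\eps/100$ from the $\ell_\infty$ perturbation of $\vv$; an interval is kept only if \emph{every} subset containing it reports norm at least $\eps/2$. The universal quantifier is what gives one-sided false-positive control: a light interval lands w.h.p. in some subset whose total mass is small, and the two-sided JL concentration then kills it. To salvage your CountSketch route you would need to replace the max by a per-block norm estimator with two-sided concentration (e.g., averaging $O(\log m)$ independent signed sums per bucket, or a median of constant-factor AMS estimates each correct with probability strictly above $1/2$), at which point you have essentially reconstructed the paper's argument.
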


We will apply this heavy hitter lemma to detect large flow edges. To ensure extra stability, we show a simple lemma to argue that edges with large energies must have resistances close to $1$, assuming that $\cE = 1.$
\begin{lemma}
\label{lemma:badres}
Let $G$ be a graph with resistances $\rr \in \R_{>0}^E$
and vertices $s$ and $t$.
Let $\ff$ be the $\cchi_{st}$ electric flow,
and say it has energy $\cE$.
Then the energy of any edge $e$ in this flow $\ff$ w.r.t. $\rr$ satisfies
\[
\rr_e \ff_e^2
\le
\min\left(\rr_e,
\rr_e^{-1}\cE^2\right).
\]
\end{lemma}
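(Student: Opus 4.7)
The plan is to establish the two bounds in the minimum separately, each relying on a standard structural property of $s$-$t$ electrical flows.

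For the bound $\rr_e \ff_e^2 \le \rr_e$, it suffices to show $|\ff_e| \le 1$ for every edge $e$. I would use the classical fact that the $\cchi_{st}$ electrical flow is acyclic (it minimizes a strictly convex energy, and any cycle could be cancelled to reduce energy strictly), so one can decompose $\ff$ into a nonnegative combination of simple $s$-to-$t$ paths whose total flow value equals $1$. Any edge $e$ receives at most the sum of flow-values of paths using $e$, which is at most the total flow value $1$, giving $|\ff_e| \le 1$.

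For the bound $\rr_e \ff_e^2 \le \rr_e^{-1} \cE^2$, i.e.\ $\rr_e^2 \ff_e^2 \le \cE^2$, I would pass through potentials. Let $\pphi = \mL^\dagger \cchi_{st}$ be the voltages inducing $\ff$, so by Ohm's law $\rr_e \ff_e = \pphi_u - \pphi_v$ for $e=(u,v)$. The total energy admits the standard identity
\[
\cE \;=\; \pphi^\top \mL \pphi \;=\; \cchi_{st}^\top \mL^\dagger \cchi_{st} \;=\; \pphi_t - \pphi_s.
\]
Finally, I would invoke the maximum principle for the $s$-$t$ electrical potentials: every interior vertex $w \notin \{s,t\}$ has $\pphi_w$ equal to a convex combination (with weights proportional to inverse resistances) of its neighbors' potentials, so by a standard induction no vertex can attain a potential outside $[\pphi_s, \pphi_t]$. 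Consequently $|\pphi_u - \pphi_v| \le \pphi_t - \pphi_s = \cE$ for every edge $(u,v)$, and squaring yields the second bound.

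The main substantive ingredients are the acyclic decomposition and the maximum principle, but both are textbook properties of electrical flows on undirected graphs, so I do not anticipate a real obstacle; the slight care is just in citing or briefly justifying the acyclicity of $\ff$ (equivalently, that a minimum energy unit $s$-$t$ flow has no directed cycle) and in applying the harmonic maximum principle at non-terminal vertices.
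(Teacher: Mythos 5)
Your proposal is correct and follows essentially the same route as the paper: the first bound via $|\ff_e|\le 1$ from acyclicity of the unit $s$-$t$ electrical flow, and the second via Ohm's law together with the fact that all potentials lie between $\pphi_s$ and $\pphi_t$, whose gap equals $\cE$. The only difference is that you spell out the justifications (path decomposition, harmonic maximum principle) that the paper states as known facts.
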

\begin{proof}
We show the two bounds separately.
Because $\ff$ routes a total of $1$ unit from $s$ to $t$
and has no cycles, we have $|\ff_e| \leq 1$ on every edge.
Multiplying both sides by $\rr_e$ gives
$\rr_e\ff_e^2 \le \rr_e$, which is the first bound.
For the second bound, let
\[
\pphi = \mL\left(\rr\right)^\dagger\cchi_{st}
\]
be the induced potentials of the electric flow $\ff$.
By Ohm's Law for the edge $e = (u, v)$ we have
\begin{align*}
    \rr_e\ff_e^2 &= \rr_e^{-1}(\pphi_u - \pphi_v)^2 \\
    &\le \rr_e^{-1}(\pphi_s - \pphi_t)^2 = \rr_e^{-1}(\cchi_{st}^\top \mL(\ff)^\dagger \cchi_{st})^2 \\
    &= \rr_e^{-1}\cE^2.
\end{align*}
Here, $|\pphi_s - \pphi_t| \geq |\pphi_u - \pphi_v|$ follows from
the fact that potentials $\pphi$ increase from $t$ to $s$,
and the last equality follows from the fact that
$\cchi_{st}^\top \mL(\ff)^\dagger \cchi_{st}$
is the electric energy.
\end{proof}
Clearly, \cref{lemma:badres} and the assumption of $\cE=1$ allow us to restrict our attention to the set of edges with resistances $\rr_e \in [\eps^2/20, 20\eps^{-2}]$.
We denote this subset of edges as $S$ throughout. 
So the problem becomes, up to a factor of $\O(\epsilon^{-2})$, maintaining
\[
\mQ\xx=\mQ\mI_S\mR^{-1/2} \mB \pphi
\]
for some $\mQ\in \{-1,0,1\}^{\ndd\times m}$ where $\ndd=\O(\eps^{-2})$, and $\mI_S$ is the identity matrix restricted to $S$.

Note that because $\mB \oone = 0$,
\[
\mQ^{\tomato} \mI_S \mR^{-1/2} \mB \oone = \zzero.
\]
So we will treat the rows of $\mQ\mI_S\mR^{-1/2} \mB$ as a demand vectors, and use $\dd[i]$ to denote the $i$-th row.
\begin{definition}
\label{def:settings_of_sec_approx_proj}
Let $\mQ$ be a matrix in $\{-1, 0, 1\}^{\ndd\times m}$ for some $\ndd=\O(\eps^{-2})$ and each column of $\mQ$ contains at most $\O(1)$ nonzero entries.

Also, given a vector $\rr$ of resistances and an accuracy parameter $\eps$, let $S \subseteq E$ be the subset of edges $e$ with $\rr_e \in [\eps^2/20, 20\eps^{-2}].$
Finally, we let $\dd[i]^\top$ be the rows of $\mQ\mI_S\mR^{-1/2}\mB$, so
\[ \mQ\mI_S\mR^{-1/2}\mB = \begin{bmatrix} \dd[1]^\top \\ \vdots \\ \dd[N]^\top \end{bmatrix}. \]
\end{definition}
For simplicity, we start by considering a single such demand vector $\dd$. In doing so, we will drop the indexing $[i]$ for the time being. This way, our goal is to maintain $\dd^\top \pphi$ where $\dd$ is a demand vector given by $\dd^\top = \qq^\top \mI_S \mR^{-1/2}\mB$ for some vector $\qq^\top$ which is a row of the sketch matrix $\mQ$.

We will transfer the inner product $\dd^\top\pphi$ to a smaller terminal set $C$ by using the Cholesky factorization.

\begin{lemma}[Cholesky factorization]
\label{lemma:cholesky}
Given a matrix $\mL \in \R^{n \times n}$ and subset $C \subseteq [n]$, let $F = [n] \bs C.$ We have that
\[ \mL^\dagger = \begin{bmatrix} \mI & -\mL_{FF}^{-1} \mL_{FC} \\ 0 & \mI \end{bmatrix}
\begin{bmatrix} \mL_{FF}^{-1} & 0 \\ 0 & \mSC(\mL, C)^\dagger \end{bmatrix}
\begin{bmatrix} \mI & 0 \\ -\mL_{CF}\mL_{FF}^{-1} & \mI \end{bmatrix}. \]
\end{lemma}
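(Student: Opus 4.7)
The plan is to prove the identity in two steps: first verify a block LDL decomposition of $\mL$, then take the pseudoinverse of both sides by inverting each factor separately.

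The first step is to show, by direct block multiplication, that
\[
\mL \;=\; \begin{bmatrix} \mI & 0 \\ \mL_{CF}\mL_{FF}^{-1} & \mI \end{bmatrix} \begin{bmatrix} \mL_{FF} & 0 \\ 0 & \mSC(\mL, C) \end{bmatrix} \begin{bmatrix} \mI & \mL_{FF}^{-1}\mL_{FC} \\ 0 & \mI \end{bmatrix}.
\]
Expanding the product, the $(F,F)$ block comes out to $\mL_{FF}$, the off-diagonal blocks to $\mL_{FC}$ and $\mL_{CF}$, and the $(C,C)$ block to $\mL_{CF}\mL_{FF}^{-1}\mL_{FC} + \mSC(\mL, C)$, which equals $\mL_{CC}$ by the very definition of the Schur complement. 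Writing this factorization compactly as $\mL = U \tilde D U^\top$ makes explicit that the two outer factors are transposes of each other, matching the symmetry of $\mL$.

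The second step is to invert factor by factor. Each outer block unit-triangular matrix is invertible with inverse obtained by negating its off-diagonal block, and these inverses are precisely the outer matrices appearing in the claimed formula. For the middle block-diagonal factor, $\mL_{FF}$ is invertible (it is a principal submatrix of a connected Laplacian with $F \subsetneq V$), so the natural pseudoinverse is the block-diagonal matrix $\diag(\mL_{FF}^{-1}, \mSC(\mL,C)^\dagger)$. Multiplying the three inverted factors together produces the right-hand side of the lemma.

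The subtle step I expect to be the main obstacle is that the identity $(U \tilde D U^\top)^\dagger = U^{-\top} \tilde D^\dagger U^{-1}$ is not automatic when $\tilde D$ is singular; one has to check a null-space alignment. This amounts to observing that $\mL \oone = 0$ forces $\oone_C^\top \mL_{CF}\mL_{FF}^{-1} = -\oone_F^\top$, and hence for any demand $\dd$ with $\oone^\top \dd = 0$ the vector $U^{-1}\dd$ has its $C$-block orthogonal to $\oone_C$, i.e., lying in the range of $\mSC(\mL,C)$. With this alignment, applying the claimed RHS to $\dd$ produces a valid solution of $\mL \xx = \dd$, which is exactly the content the paper needs (every downstream use applies $\mL^\dagger$ to vectors $\dd$ with $\oone^\top \dd = 0$). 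The remaining check is direct substitution, using $\mSC(\mL,C)\, \mSC(\mL,C)^\dagger\, \yy = \yy$ whenever $\yy \perp \oone_C$.
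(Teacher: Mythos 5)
The paper states this lemma without proof (it is treated as a folklore fact and does not appear in the appendix of omitted proofs), so there is nothing to compare your argument against directly; what matters is whether your argument is sound. It is, and you have put your finger on exactly the right subtlety. The block LDL factorization $\mL = L \tilde{D} L^\top$ with $L = \bigl[\begin{smallmatrix}\mI & 0\\ \mL_{CF}\mL_{FF}^{-1} & \mI\end{smallmatrix}\bigr]$ and $\tilde{D} = \diag(\mL_{FF}, \mSC(\mL,C))$ is a one-line block multiplication, and inverting the unit-triangular factors is immediate. The genuine issue is, as you say, that $(L\tilde{D}L^\top)^\dagger \ne L^{-\top}\tilde{D}^\dagger L^{-1}$ in general for singular $\tilde{D}$ under a non-orthogonal congruence. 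In fact, if one computes $M\mL$ for $M = L^{-\top}\tilde{D}^\dagger L^{-1}$, one gets $\mI - \tfrac{1}{|C|}\oone\,(\zzero_F^\top, \oone_C^\top)$, which is not symmetric, so $M$ is \emph{not} the Moore--Penrose pseudoinverse and the lemma read literally is an (innocuous, standard) abuse of notation. Your resolution is the correct one: the kernel alignment $\oone_C^\top \mL_{CF}\mL_{FF}^{-1} = -\oone_F^\top$ (from $\oone^\top\mL = 0$) shows that for any $\dd$ with $\oone^\top\dd = 0$ the $C$-block of $L^{-1}\dd$ lies in the range of $\mSC(\mL,C)$, so $M\dd$ solves $\mL\xx = \dd$ and hence agrees with $\mL^\dagger\dd$ up to an additive multiple of $\oone$; since every downstream use in the paper pairs this against another vector orthogonal to $\oone$ (the demand vectors $\dd[i]$ satisfy $\oone^\top\dd[i]=0$ because $\mB\oone=\zzero$), that shift is invisible. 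The only cosmetic improvement I would suggest is to state explicitly up front that you are proving the identity as an equality of quadratic/bilinear forms on $\oone^\perp$ rather than as a matrix identity, since that is the precise sense in which it is true.
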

Assuming that $s, t$ are in $C$ we get
\begin{align*}
\dd^\top \pphi &= \dd^\top \mL^\dagger \cchi_{st} \\ &= \dd^\top \begin{bmatrix} \mI & -\mL_{FF}^{-1} \mL_{FC} \\ 0 & \mI \end{bmatrix}
\begin{bmatrix} \mL_{FF}^{-1} & 0 \\ 0 & \mSC(\mL, C)^\dagger \end{bmatrix}
\begin{bmatrix} \mI & 0 \\ -\mL_{CF}\mL_{FF}^{-1} & \mI \end{bmatrix} \cchi_{st} \\ &= \left(\begin{bmatrix} \mI & -\mL_{CF}\mL_{FF}^{-1} \end{bmatrix}\dd\right)^\top \mSC(G, C)^\dagger\cchi_{st}.
\end{align*}
This product has two main parts. One part is $\mSC(G, C)^\dagger\cchi_{st} = \pphi_C,$ i.e. the restriction of the electric potentials to the Schur complement $C$. This can be maintained using dynamic Schur complements (\cref{thm:dynamicsc}).
We call the other part $\begin{bmatrix} \mI & -\mL_{CF}\mL_{FF}^{-1} \end{bmatrix}\dd = \dd_C - \mL_{CF}\mL_{FF}^{-1}\dd_F$ the \emph{projection} via random walks of $\dd$ onto $C$.
\begin{definition}
\label{def:proj}
In a graph $G$ with Laplacian $\mL$,
the projection of a vector on all the vertices, $\dd \in \R^{V}$,
onto terminal set $C$ is
\[
\ppi^{C}\left(\dd\right)
\defeq
\left[
\begin{array}{cc}
-\mL_{FF}^{-1} \mL_{FC}\\
\mI
\end{array}
\right]^{\tomato}
\dd
=
\dd_{C} - \mL_{CF} \mL_{FF}^{-1} \dd_{F}
\]
\end{definition}

Intuitively, $\ppi^{C}(\dd)$ moves the demands $\dd$ to vertices in $C$
by random walks on $G$.
It is in fact a formal vector generalization of the hitting probabilities
defined above in Definition~\ref{def:hit}, in that we have
$p^{C}_{v}(u) = \ppi^{C}(\cchi_{u})_v$ where $\cchi_u$ is the indicator
vector on $u$.
This relation also holds more generally due to the linearity of the
operator: we prove the following folklore result in
\cref{proofs:projrandwalk}.

\begin{lemma}
\label{lemma:projrandwalk}
In any weighted undirected graph $G = (V, E)$, vector $d \in \R^{V(G)}$,
and subset of vertices $C$ and any $v\in C$,
we have that $\ppi^{C}(\dd)_v$
is equal to the hitting probability mass of $\dd$ onto $v$,
with $C$ as the sink set:
\[
\ppi^{C}\left( \dd \right)_{v}
=
\sum_{u\in V} \dd_u \cdot p^{C}_{v} \left( u \right).
\]
\end{lemma}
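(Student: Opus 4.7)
The plan is to first reduce to the indicator-vector case by linearity. Both sides of the claimed identity are linear in $\dd$, and the identity $\ppi^C(\cchi_u)_v = p^C_v(u)$ for every $u\in V$ and $v\in C$ would imply the general statement by writing $\dd = \sum_u \dd_u\cchi_u$. When $u\in C$, the block formula gives $\ppi^C(\cchi_u) = \cchi_u$, and the hitting probability satisfies $p^C_v(u) = \mathbf{1}[v=u]$ because the random walk halts immediately; so both sides agree. The remaining and only interesting case is $u \in F$.

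For $u \in F$, I would fix $v\in C$ and study the function $h_v\colon V\to\R$ defined by $h_v(w) \defeq p^C_v(w)$. The boundary conditions $h_v|_C = \cchi_v$ follow from the stopping rule of the walk. Harmonicity on $F$ comes from first-step analysis: for each $w\in F$,
\[
h_v(w) = \sum_{x\sim w}\frac{\rr_{wx}^{-1}}{\sum_{y\sim w}\rr_{wy}^{-1}}\,h_v(x).
\]
Using $\mL_{ww} = \sum_{y\sim w}\rr_{wy}^{-1}$ and $\mL_{wx} = -\rr_{wx}^{-1}$ for $x\sim w$, this rearranges into $(\mL h_v)_w = 0$ for all $w\in F$. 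Writing $\mL$ in the block form determined by the partition $V = F\sqcup C$, the relation $\mL_{FF}(h_v)_F + \mL_{FC}(h_v)_C = 0$ together with $(h_v)_C = \cchi_v$ yields $(h_v)_F = -\mL_{FF}^{-1}\mL_{FC}\cchi_v$.

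The final step is to match this expression with $\ppi^C(\cchi_u)_v = -(\mL_{CF}\mL_{FF}^{-1}\cchi_u)_v$. Expanding both sides entrywise and invoking the symmetry of $\mL$, which gives $\mL_{CF} = \mL_{FC}^\top$ and $(\mL_{FF}^{-1})^\top = \mL_{FF}^{-1}$, one obtains
\[
\ppi^C(\cchi_u)_v = -\sum_{w\in F}\mL_{vw}\,(\mL_{FF}^{-1})_{wu} = -\sum_{w\in F}(\mL_{FF}^{-1})_{uw}\,\mL_{wv} = (h_v)_u = p^C_v(u),
\]
completing the argument.

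The proof is essentially classical potential theory on networks (harmonic extension plus reciprocity), and I do not expect any real obstacle beyond carefully keeping the row/column indexing of the $F$- and $C$-blocks straight. The only place one needs to be a little careful is in checking that $\mL_{FF}$ is indeed invertible; this is where the standing assumption from Section~\ref{sec:Flows} that $G$ is connected (so that every connected component of $G\{F\}$ has a boundary edge to $C$) is used.
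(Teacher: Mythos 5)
Your proof is correct, but it takes a genuinely different route from the paper's. The paper expands $\mL_{FF}^{-1}$ as the Neumann series $\mD_{FF}^{-1} + \mD_{FF}^{-1}\mA_{FF}\mD_{FF}^{-1} + \cdots$ and identifies the $t$-th term with the contribution of walks that spend exactly $t$ intermediate steps in $F$ before exiting to $C$ — a direct summation over walk trajectories. You instead characterize $w \mapsto p^C_v(w)$ as the solution of a Dirichlet problem (harmonic on $F$, equal to $\cchi_v$ on $C$), solve the resulting linear system using the invertibility of $\mL_{FF}$, and then use the symmetry of $\mL$ and of $\mL_{FF}^{-1}$ to match $-\mL_{FF}^{-1}\mL_{FC}\cchi_v$ against $-\mL_{CF}\mL_{FF}^{-1}\cchi_u$. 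What your route buys is that it sidesteps the convergence of the Neumann series, which the paper does not justify (it holds because the walk escapes $F$ almost surely in a connected graph, but this is left implicit); what it costs is an explicit reliance on the symmetry $\mL_{CF} = \mL_{FC}^\top$, which is automatic here since the graph is undirected, and on the invertibility of $\mL_{FF}$, which you correctly trace back to connectivity. Your reduction to indicator vectors and the handling of the $u \in C$ case are both fine. This is a complete and, if anything, slightly more careful argument than the one in the appendix.
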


We now discuss how to leverage this combinatorial interpretation to maintain $\ppi^C(\dd)$. Because $\ppi^C(\dd)$ is defined as moving the demands $\dd$ onto $C$, a natural approach is to estimate the quantity using a set of random walks from each vertex or edge. However, the vector $\dd$ has all entries of size at least $\pm 1$ on average, and $\pphi_C$ can potentially have all entries of size $\pm 1$ also. Hence the variance of estimating $\ppi^C(\dd)^\top \pphi_C$ by sampling random walks from each vertex can potentially be $\Omega(m)$, which is much too large. Instead, we initialize by exactly computing $\ppi^C(\dd)$ by solving a Laplacian system in $\mL_{FF}^{-1}$ so that the initial error is $0$, and estimate the \emph{change} in this quantity when a terminal is added to $C$. This is given by the following formula.
\begin{fact}
\label{fact:projchange}
In any graph $G$, for any subset of vertices $C$
and any vertex $v \notin C$, we have that
\[
\ppi^{C\cup \{v\}}\left(\dd\right)
=
\ppi^{C}\left(\dd\right)
+
\left[\ppi^{C \cup \{v\}}\left(\dd\right)\right]_v
  \cdot \left(\cchi_v - \ppi^{C}\left(\cchi_{v} \right)\right).
\]
\end{fact}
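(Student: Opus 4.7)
My plan is to prove the identity via the random-walk interpretation of projections from \cref{lemma:projrandwalk}, viewing $\ppi^C(\dd)$ as an element of $\R^V$ by padding with zeros on $V \setminus C$. The key combinatorial fact is a first-passage decomposition: for any starting vertex $u \in V$ and any terminal $w \in C$, a random walk from $u$ stopped at $C$ either reaches $w$ without ever visiting $v$, or it first reaches $v$ and then, by the strong Markov property of the walk (equivalently, the memorylessness of the electric random walk on $G$), performs an independent walk from $v$ until it hits $C$. Summing over these two disjoint events gives
\[
p_w^C(u) \;=\; p_w^{C \cup \{v\}}(u) \;+\; p_v^{C \cup \{v\}}(u) \cdot p_w^C(v)
\qquad \text{for all } u \in V,\ w \in C.
\]

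Next, I would multiply this identity by $\dd_u$, sum over $u \in V$, and apply \cref{lemma:projrandwalk} three times to rewrite each sum as a coordinate of a projection. This yields, for every $w \in C$,
\[
[\ppi^C(\dd)]_w \;=\; [\ppi^{C \cup \{v\}}(\dd)]_w \;+\; [\ppi^{C \cup \{v\}}(\dd)]_v \cdot [\ppi^C(\cchi_v)]_w,
\]
which rearranges to the claimed identity on the $C$ coordinates. To finish, I would check the remaining coordinates directly: at $v$, the left side equals $[\ppi^{C \cup \{v\}}(\dd)]_v$ while the right side equals $0 + [\ppi^{C\cup\{v\}}(\dd)]_v \cdot (1 - 0)$ since $v \notin C$; and at any $u \notin C \cup \{v\}$, both sides vanish by the padding convention. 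This completes the identity in $\R^V$.

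The main obstacle is essentially notational rather than mathematical: \cref{def:proj} presents $\ppi^C(\cdot)$ as producing a vector indexed by $C$, whereas the statement treats it as a vector in $\R^V$ so that the expression $\cchi_v - \ppi^C(\cchi_v)$ is well-defined. One must fix the padding convention explicitly and then verify that each coordinate class ($w \in C$, the vertex $v$ itself, and $u \notin C \cup \{v\}$) matches. As a sanity check, I would also note the purely algebraic derivation: writing $\mL$ with respect to the refined partition $\{v\} \cup F' \cup C$ where $F' = V \setminus (C \cup \{v\})$, and expanding $\mL_{FF}^{-1}$ (for $F = F' \cup \{v\}$) using the block-inverse formula in terms of the Schur complement onto $\{v\}$, produces the same rank-one correction; however, the probabilistic proof above is the more transparent presentation.
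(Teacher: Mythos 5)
Your proof is correct and follows essentially the same route as the paper: the paper also argues via the random-walk interpretation of \cref{lemma:projrandwalk}, decomposing the walk to $C$ into a first stage stopped at $C\cup\{v\}$ followed by an independent continuation from $v$ (and likewise notes the algebraic elimination-ordering argument as an alternative). Your version merely makes the first-passage identity and the padding/coordinate checks explicit, which the paper leaves implicit.
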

\begin{proof}
It suffices to prove the result for $\dd = \cchi_u$ for some vertex $u \in V(G)$ by linearity. To show this, consider the following random process. Run a random walk starting from vertex $u$ until it hits the set $C \cup \{v\}.$ Then, take any mass on $v$ and random walk that until it hits $C$. Clearly, this is a valid sample for $\ppi^C(\cchi_u)$ by \cref{lemma:projrandwalk}. This implies the claim.
\end{proof}
An alternate algebraic proof of this can be derived using the fact that elimination of rows on matrices is independent of the ordering and instead only depends on the subset. In this view, we first eliminate the demand $\dd$ onto $V\bs(C \cup \{v\})$ to get $\pi^{C \cup\{v\}}(\dd).$ Then we eliminate vertex $v$ to get $\pi^C(\dd)$ and the change is given by the elimination of a single vertex which can be checked to be \begin{align}\left[\ppi^{C \cup \{v\}}\left(\dd\right)\right]_v
  \cdot \left(\cchi_v - \ppi^{C}\left(\cchi_{v} \right)\right) \label{eq:changequantity}
\end{align}
Now, if our estimates of the quantity in \eqref{eq:changequantity} have error $\delta = o(\eps)$ per change, then we can afford to exactly reinitialize every $\delta^{-1}\eps$ terminal insertions while guaranteeing that the total error is at most $\eps$ always.

\subsection{Approximating Projections Using Random Walks}
\label{subsec:approxproj}
The focus of this section is maintaining approximations of $\ppi^C(\dd)$ for demand vectors $\dd$ arising from \cref{def:settings_of_sec_approx_proj} under terminal insertions to $C$. The change in $\ppi^C(\dd)$ from~\cref{fact:projchange}
can be divided into two parts:
\begin{enumerate}
\item
The projection value on $v$, $[\ppi^{C \cup \{v\}}(\dd)]_v$,
which we do in Section~\ref{subsubsec:projvert} and
\item
the ``local'' flow leaving $v$,
$\cchi_v - \ppi^{C}(\cchi_{v})$,
which we do in Section~\ref{subsubsec:localflow}.
\end{enumerate}
We show how to compute approximations for both these quantities using random walks. We care about the order of vertices visited by a random walk only and not the length, and this simplifies the sampling method.
\begin{definition}
\label{def:rwslocator}
Let $G=(V, E)$ be a connected undirected graph with resistances $\rr$. Let $C$ be any set of $O(\beta m)$ vertices. For any vertex $v$ we define a \emph{random walk from $v$ to $C$}, denoted as $\rw$, as the following process: sample a random walk from $v$ where edges $e$ are used with probability proportional to $\rr_e^{-1}$. We go until it hits $C$ or visits a set of distinct vertices whose degrees sum up to at least $O(\beta^{-1} \log m)$. For this walk, we only store the first time it hits vertices.
\end{definition}
The error of our sampling processes will be analyzed
using the following concentration inequality. The proof can be found in \cref{subsec:bernstein}.
\begin{lemma}[Corollary of Bernstein's inequality]
\label{lem:hoeffding}
Let $S=X_1+\dots+X_n$ be the sum of $n$ independent random variables. The range of $X_i$ is $\{0, a_i\}$ for $a_i\in [-M, M]$. Let $t, E$ be positive numbers such that $t\le E$ and $\sum_{i=1}^n\abs{\expec{}{X_i}}\le E$. Then
\[
\pr{}{\abs{S-\E[S]}>t}\le 2\exp\left(-\frac{t^2}{6EM}\right).
\]
\end{lemma}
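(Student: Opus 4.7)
The plan is to derive the stated bound as a direct consequence of the standard Bernstein inequality, which for independent random variables $Y_1,\dots,Y_n$ with $|Y_i - \mathbb{E}[Y_i]| \le M$ almost surely gives
\[
\Pr\!\left[\,|S - \mathbb{E}[S]| > t\,\right] \le 2\exp\!\left(-\frac{t^2/2}{\sum_{i=1}^n \mathrm{Var}(X_i) + Mt/3}\right).
\]
So the task reduces to bounding $\sum_i \mathrm{Var}(X_i)$ and the almost-sure deviation of each $X_i$ in terms of $M$ and $E$, then using $t \le E$ to combine these into the claimed form.

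First I would handle the variance. Since $X_i$ is supported on $\{0, a_i\}$, write $p_i = \Pr[X_i = a_i]$, so that $\mathbb{E}[X_i] = p_i a_i$ and $\mathrm{Var}(X_i) = p_i(1-p_i)a_i^2 \le p_i a_i^2$. Factoring this as $|a_i| \cdot p_i|a_i| = |a_i|\cdot|\mathbb{E}[X_i]|$ and using $|a_i| \le M$ gives $\mathrm{Var}(X_i) \le M\,|\mathbb{E}[X_i]|$. Summing over $i$ and invoking the hypothesis $\sum_i |\mathbb{E}[X_i]| \le E$ yields $\sum_i \mathrm{Var}(X_i) \le ME$. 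For the almost-sure bound, note $|X_i - \mathbb{E}[X_i]|$ is either $|a_i|(1-p_i)$ or $p_i|a_i|$, so it is bounded by $|a_i| \le M$.

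Plugging these two bounds into Bernstein gives
\[
\Pr\!\left[\,|S - \mathbb{E}[S]| > t\,\right] \le 2\exp\!\left(-\frac{t^2/2}{ME + Mt/3}\right).
\]
The final step is to use $t \le E$ to get $ME + Mt/3 \le \tfrac{4}{3}ME$, so the exponent is at most $-\tfrac{3t^2}{8ME}$, and since $\tfrac{3}{8} \ge \tfrac{1}{6}$ this is dominated by $-\tfrac{t^2}{6ME}$, giving the stated bound.

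There is no real obstacle here; the only mild subtlety is recognizing that the Bernoulli-like structure of each $X_i$ forces the variance to scale with the first absolute moment $|\mathbb{E}[X_i]|$ up to a factor of $M$, which is the key step that converts the hypothesis $\sum_i |\mathbb{E}[X_i]| \le E$ into the variance bound needed for Bernstein. The condition $t \le E$ exists precisely to absorb the $Mt/3$ term in Bernstein's denominator into the $ME$ term so that the exponent has the clean shape $-t^2/(CME)$.
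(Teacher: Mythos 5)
Your proposal is correct and follows essentially the same route as the paper: bound $\sum_i \mathrm{Var}(X_i)$ by a constant times $ME$ using the two-point support of each $X_i$ (your bound $\mathrm{Var}(X_i)\le |a_i|\,|\mathbb{E}[X_i]|\le M|\mathbb{E}[X_i]|$ is a slightly cleaner version of the paper's $\sigma_i^2\le |E_i|^2+|E_i||a_i|\le 2EM$ in aggregate), then apply the standard two-sided Bernstein inequality and use $t\le E$ to absorb the $Mt/3$ term. No gaps.
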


\subsubsection{Estimating Single Entries of the Projection}
\label{subsubsec:projvert}

We start with the $[\ppi^{C \cup \{v\}}(\dd)]_v$ term. At a high level, we approximate this value by sampling a global set of random walks from each edge up front.
A useful fact for the error analysis is that by augmenting $C$ with the congestion reducing
subset from Lemma~\ref{lemma:reducecongestion},
we can assume that all entries of the projection are small.
Below, the condition $|\dd_u| \le O(\eps^{-1}\deg_u)$ follows from the construction of $\dd = \dd[i]$ in \cref{def:settings_of_sec_approx_proj}, specifically that only edges with $\rr_e \in [\eps^2/20, 20\eps^{-2}]$ are considered.

\begin{lemma}
\label{lem:reducedemand}
Let $G=(V, E)$ be an undirected graph.
The set
\[
C =
\textsc{CongestionReductionSubset}(G, \beta)
\cup
\left\{ v \mid \deg_v \geq \beta^{-1} \right\}
\]
has size at most $O(\beta m)$,
and for any vector $\dd \in \R^{V}$
with $|\dd_u| \le O(\eps^{-1}\deg_u)$ for all vertices $u\in V$, we have that the projection of $\dd$ onto $C$ has small mass on any vertex $v$ not in $C$. Precisely, $|\ppi^{C \cup \{v\}}(\dd)_v| \le \O(\beta^{-2}\eps^{-1})$ for all $v \in V \bs C$.
Note that by Lemma~\ref{lemma:reducecongestion}, this
set can also be computed $\O(m\beta^{-2})$ time.
\end{lemma}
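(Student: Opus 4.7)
The plan is to combine the random walk interpretation of $\ppi^C(\dd)$ from \cref{lemma:projrandwalk} with the congestion bound from \cref{lemma:reducecongestion}, exploiting the assumption on the entries of $\dd$ to convert a degree-weighted hitting probability sum into the desired estimate.

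First I would dispatch the size bound on $C$. The set $\textsc{CongestionReductionSubset}(G,\beta)$ has size $O(\beta m)$ directly by \cref{lemma:reducecongestion}. For the high-degree set, the handshake lemma $\sum_{v} \deg_v = 2m$ together with Markov's inequality implies $|\{v : \deg_v \ge \beta^{-1}\}| \le 2\beta m$. Summing gives $|C| \le O(\beta m)$.

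For the main inequality, fix $v \notin C$ and apply \cref{lemma:projrandwalk} with terminal set $C \cup \{v\}$ to obtain
\[
\ppi^{C\cup\{v\}}(\dd)_v \;=\; \sum_{u \in V} \dd_u \cdot p^{C\cup\{v\}}_v(u).
\]
I would split the sum into three pieces according to where $u$ sits: (i) $u = v$, contributing exactly $\dd_v$; (ii) $u \in C \setminus \{v\}$, where $p^{C\cup\{v\}}_v(u) = 0$ because the walk started at $u$ terminates immediately at the non-$v$ terminal; and (iii) $u \in V \setminus (C \cup \{v\})$, which is the interesting case. Using the assumption $|\dd_u| \le O(\eps^{-1}\deg_u)$ and the triangle inequality,
\[
\bigl|\ppi^{C\cup\{v\}}(\dd)_v\bigr| \;\le\; O(\eps^{-1})\deg_v \;+\; O(\eps^{-1}) \sum_{u \in V \setminus (C\cup\{v\})} \deg_u \cdot p^{C\cup\{v\}}_v(u).
\]

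The key step is to control the last sum by reducing to the guarantee of \cref{lemma:reducecongestion}. Let $\Chat \defeq \textsc{CongestionReductionSubset}(G,\beta) \subseteq C$. Enlarging the set of absorbing terminals can only decrease the probability that a walk from $u$ reaches $v$ before being absorbed elsewhere, so for every $u \notin C\cup\{v\}$,
\[
p^{C\cup\{v\}}_v(u) \;\le\; p^{\Chat \cup \{v\}}_v(u).
\]
(This monotonicity is the one substantive step; it follows because the event ``walk hits $v$ before $C\setminus\{v\}$'' is contained in ``walk hits $v$ before $\Chat\setminus\{v\}$''.) Substituting and applying \cref{lemma:reducecongestion} directly bounds the sum by $\O(\beta^{-2})$. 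Finally, since $v \notin C$, we have $\deg_v < \beta^{-1} \le \beta^{-2}$ because otherwise $v$ would belong to the high-degree part of $C$. Combining the two contributions yields
\[
\bigl|\ppi^{C\cup\{v\}}(\dd)_v\bigr| \;\le\; O(\eps^{-1})\bigl(\deg_v + \O(\beta^{-2})\bigr) \;=\; \O(\beta^{-2}\eps^{-1}),
\]
as desired. The final runtime claim is immediate from \cref{lemma:reducecongestion} since the only nontrivial construction step is $\textsc{CongestionReductionSubset}$. I do not expect any genuine obstacle; the only subtlety worth stating cleanly is the monotonicity of hitting probabilities under enlarging the absorbing set, which drives the reduction to the congestion bound.
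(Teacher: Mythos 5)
Your proof is correct and follows essentially the same route as the paper: express $\ppi^{C\cup\{v\}}(\dd)_v$ via the random-walk formula of Lemma~\ref{lemma:projrandwalk}, bound the degree-weighted hitting-probability sum by the guarantee of Lemma~\ref{lemma:reducecongestion}, and handle the $\dd_v$ term using $\deg_v \le \beta^{-1}$ for $v \notin C$. Your explicit observation that hitting probabilities are monotone under enlarging the absorbing set (needed to pass from $\textsc{CongestionReductionSubset}(G,\beta)$ to the larger $C$) is a detail the paper leaves implicit, and is a welcome clarification rather than a deviation.
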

\begin{proof}
We first show that $|C| = O(\beta m).$ Since the total degree is $2m$, the number of vertices with degree at least $\beta^{-1}$ that are added is at most
$2m/\beta^{-1} \le O\left(\beta m\right).$
Combining this with the $O(\beta m)$ size guarantee of $\textsc{CongestionReductionSubset}$ from 
Lemma~\ref{lemma:reducecongestion} gives that $|C| \le O( \beta m)$.

We now turn our attention to bounding $|\ppi^{C \cup \{v\}}(\dd)_v|.$
\cref{lemma:projrandwalk} allows us to express the projection
on $v$ as the original demand vector times hitting probabilities.
We have
\[
\left|\ppi^{C\cup\left\{v\right\}}\left(\dd\right)_v-\dd_v \right|
=
\left|\sum_{u \in V \setminus C \setminus \left\{v \right\}}
\dd_u \cdot \prhit{u}{v}{C\cup\left\{v\right\}} \right|
\le
\eps^{-1}\sum_{u\in V \setminus C \setminus \left\{v \right\}}
\deg_u
\cdot
\prhit{u}{v}{C\cup\left\{v\right\}}.
\]
where the last step uses the triangle inequality and ${\dd_u} \le O(\eps^{-1}\deg_u)$.
Now, the guarantee of Lemma~\ref{lemma:reducecongestion} gives us
\[ \eps^{-1}\sum_{u\in V \setminus C \setminus \left\{v \right\}}
\deg_u
\cdot
\prhit{u}{v}{C\cup\left\{v\right\}} \le \O(\beta^{-2}\eps^{-1}). \]
Also, because all vertices with degrees more than $\beta^{-1}$
were already added to $C$, we have $\deg_v \leq \beta^{-1}$ for $v \notin C.$ Together with the assumption ${\dd_v} \leq \deg_v \eps^{-1}$ we get
\[
{\dd_v} \le \eps^{-1}\beta^{-1} \le \eps^{-1}\beta^{-2}.
\]
Hence
\[ |\ppi^{C \cup \{v\}}(\dd)_v| \le \O(\beta^{-2}\eps^{-1}) + |\dd_v| \le \O(\beta^{-2}\eps^{-1}). \]
\end{proof}

As these projection values have magnitude at most
$\poly(\beta^{-1}, \eps^{-1})$ by this lemma,
we can approximate them by to additive $\delta$ accuracy by
sampling $\poly(\beta^{-1}, \eps^{-1}, \delta^{-1})$
random walks per edge, and treating the samples as an approximate
projection of $\dd$ onto $C$. We now formalize this sampling process by defining
the approximate projection vector, $\ppitil^{C}(\dd)$. This is done in the for loop starting at line \ref{line:cost1} of our implementation in the algorithm in \cref{algo:add_terminal}.
\begin{definition}
\label{def:rws}
For a graph $G = (V, E)$ with resistances $\mR$, along with a vector on edges $\qq \in \{-1, 0, 1\}^E$, define $\dd = \mB^{\tomato} \mR^{-1/2} \mI_S \qq$. For a fixed sampling overhead
$h
\leftarrow
\Omega\left(\delta^{-2}\beta^{-2}\eps^{-2}\polylog{m}\right),
$
and vertex subset $C$,
for each edge $e \in S$ with endpoint $v$,
and each $1 \leq j \leq h$,
let $\rw(v, e, j)$ be a random walk from $v$ to $C$ as in \cref{def:rwslocator},
with associated weight $h^{-1}\qq_e\rr_e^{-1/2}$.
Then for $u \in C$ define $\left[\ppitil^{C}(\dd)\right]_u$ as the sum of weights of all random walks $\rw(v, e, j)$ that hit $C$ for the first time at $u$.
\end{definition}

\begin{lemma}
\label{lemma:approxprojguarantee}
For sampling overhead $h = \Omega\left(\delta^{-2}\beta^{-2}\eps^{-2}\polylog{m}\right)$, graph $G$, subset $C$,
edge vector $\qq$, and demand $\dd$ as in Definition~\ref{def:rws},
Let $\ppitil^C(\dd)$ be the estimated
projections as defined in Definition~\ref{def:rws}.

Let $\ddhat = |\mB|^\top \mR^{-1/2} \mI_S|\qq|$, where $|\mB|,|\qq|$ are defined as taking the entrywise absolute values of the matrix, vector respectively. For vertices $v$ where
$\ppi^{C \cup \{v\}}(\ddhat)_v=\O(\beta^{-2}\eps^{-1})$,
the estimate w.h.p satisfies
\[
\abs{\ppitil^{C\cup\left\{v\right\}}\left(\dd\right)_v
-
\ppi^{C\cup\left\{v\right\}} \left(\dd\right)_v}
\leq
\delta.
\]
\end{lemma}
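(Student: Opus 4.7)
The plan is to express $\ppitil^{C\cup\{v\}}(\dd)_v$ as a sum of $O(h|S|)$ independent bounded random variables $X_{e,v',j}$ indexed by an edge $e \in S$, an endpoint $v' \in e$, and a trial $j \in [h]$, show it is essentially unbiased, and then apply the Bernstein-type inequality of \cref{lem:hoeffding} with parameters calibrated so that the choice $h = \Omega(\delta^{-2}\beta^{-2}\eps^{-2}\polylog m)$ suffices. Each $X_{e,v',j}$ takes the signed value $h^{-1}\qq_e\rr_e^{-1/2}$ (with sign inherited from $\mB_{e,v'}$) when the walk $\rw(v', e, j)$, stopped on first hitting $C \cup \{v\}$, lands at $v$, and is $0$ otherwise.

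Unbiasedness follows from \cref{lemma:projrandwalk}: conditioning on the walk reaching $C \cup \{v\}$ before being truncated, $\E[X_{e,v',j}] = h^{-1}\qq_e\rr_e^{-1/2}\mB_{e,v'}\cdot p_v^{C\cup\{v\}}(v')$, and summing over $e,v',j$ telescopes via the definition of $\dd$ to
\[
\E\!\left[\ppitil^{C\cup\{v\}}(\dd)_v\right] \;=\; \sum_u \dd_u\, p_v^{C\cup\{v\}}(u) \;=\; \ppi^{C\cup\{v\}}(\dd)_v.
\]

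The two Bernstein parameters are controlled as follows. Since every $e \in S$ satisfies $\rr_e \ge \eps^2/20$, each $|X_{e,v',j}|$ is at most $M = O(h^{-1}\eps^{-1})$. The expected absolute sum is
\[
\sum_{e,v',j} h^{-1}\rr_e^{-1/2}|\qq_e|\cdot p_v^{C\cup\{v\}}(v') \;=\; \sum_u \ddhat_u\, p_v^{C\cup\{v\}}(u) \;=\; \ppi^{C\cup\{v\}}(\ddhat)_v,
\]
which by the lemma's hypothesis is bounded by $E = \O(\beta^{-2}\eps^{-1})$. Applying \cref{lem:hoeffding} with deviation $t = \delta \le E$ gives failure probability
\[
2\exp\!\Big(-\tfrac{\delta^2}{6EM}\Big) \;=\; 2\exp\!\big(-\Omega(\delta^2 h \beta^2 \eps^2 / \polylog m)\big),
\]
so taking $h = \Omega(\delta^{-2}\beta^{-2}\eps^{-2}\polylog m)$ with a sufficiently large polylog factor drives this below $n^{-10}$.

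The main subtlety I anticipate is cleanly handling random walks that are truncated (i.e., those that exhaust the visit-degree budget of $O(\beta^{-1}\log m)$ before hitting $C$), since such walks are the only source of bias beyond sample variance. The route I would take is to exploit that the safe terminal set constructed by $\Locator.\textsc{Initialize}$ contains a uniformly random $\beta m$-subset of vertices: a standard hitting-time argument then shows that any such walk hits $C$ before exhausting its budget with probability $1 - n^{-\omega(1)}$, and a union bound over all $O(hm)$ trials lets us condition on the ``no truncation'' event, reducing to the clean unbiased-estimator analysis above.
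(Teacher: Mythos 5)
Your proposal is correct and follows essentially the same route as the paper's proof: decompose the estimator into independent per-walk random variables of magnitude $h^{-1}|\qq_e|\rr_e^{-1/2} = O(h^{-1}\eps^{-1})$, use \cref{lemma:projrandwalk} for unbiasedness, bound the expected absolute sum by $\ppi^{C\cup\{v\}}(\ddhat)_v = \O(\beta^{-2}\eps^{-1})$, and apply \cref{lem:hoeffding} with $t=\delta$. Your extra remark about truncated walks is consistent with how the paper handles it (the random $\beta m$-vertex subset in $C$ guarantees all walks hit $C$ w.h.p.).
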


\begin{proof}
Let $F=V\setminus C$,
and let $\rw(v, e, j)$ be the weighted random walks as defined
in Definition~\ref{def:rws}.

We first split the the exact projection value on $v$ into contributions
per vertex of $F$ using Lemma~\ref{lemma:projrandwalk},
while incorporating the relation between $\dd$ and $\qq$:
\[
\left[\ppi^{C \cup \{v\}}\left(\dd\right)\right]_v - \dd_v
=
\sum_{u\in F} \dd_u \cdot p^{C}_{v}\left( u \right)
=
\sum_{u\in F}
\left(\sum_{e\ni u} \qq_e \rr_e^{-1/2} \right) \cdot \prhit{u}{v}{C} = \sum_{u\in F} \sum_{e\ni u}
\sum_{j=1}^{h} \frac{\qq_e \rr_e^{-1/2} }{h} \prhit{u}{v}{C}.
\]
In the last step we have split it into $h$ copies to compare it to the random sampling process.
Now each of these terms is exactly the expectation of the weights
of the walks $\rw(u, e, j)$ being sent to $v$.
Formally, we define random variables
\[
\mathsf{RWHIT}\left(u, e, j, v\right)
=
\begin{cases}
\frac{\qq_e \rr_e^{-1/2} }{h}
& \text{if $\rw(u, e, j)$ reaches $C$ first at $v$},\\
0 & \text{otherwise}.
\end{cases}
u \in e, j \in \left[ h \right]
\]
which by the above expression yields:
\[
\ppi^{C\cup\left\{v\right\}}\left(\dd\right)_v - \dd_v
=
\expec{\rw\left(u, e, j\right)}
{\sum_{u\in F} \sum_{e\ni u} \sum_{j=1}^{ h} \mathsf{RWHIT}(u, e, j, v)}.
\]

By the assumption of $\ppi^{C \cup \{v\}}(\ddhat)_v=\O(\beta^{-2}\eps^{-1})$, the expected sum of the absolute values of $\mathsf{RWHIT}(u, e, j, v)$s is $\O(\beta^{-2}\eps^{-1})$.
Each term on the other hand has magnitude at most
\[
\frac{\qq_e \rr_e^{-1/2} }{h}
\le
O\left(\frac{\eps^{-1}}{ h}\right)
=
O\left(\delta^{2}\beta^{2}\eps/\polylog m\right),
\]
as $\rr_e \ge \eps^2/20$ for $e \in S.$
As the walks are sampled independently \cref{lem:hoeffding} gives
\begin{align*}
&\pr{\rw\left(u, e, j\right)}
{\abs{\ppitil^{C\cup\left\{v\right\}}\left(\dd\right)_v
-
\ppi^{C\cup\left\{v\right\}}\left(\dd\right)_v}
>\delta
} \\
&\le
2\exp\left(-\frac{\delta^2}
{6\O\left(\beta^{-2}\eps^{-1}\right)\delta^{2}\beta^{2}\eps/\polylog m}\right)
=n^{-10},
\end{align*}
for a suitable choice of $\polylog m$ in $h$.
\end{proof}

\subsubsection{Locally Sampling Flows From \texorpdfstring{$v$}{vertexv}}
\label{subsubsec:localflow}

We now turn to approximating the $\cchi_v - \ppi^C(\cchi_v)$ term from Fact~\ref{fact:projchange}. By \cref{lemma:projrandwalk}, we know that $\ppi^C(\cchi_v)$ is the distribution over $C$ of the first vertex that a random walk starting at $v$ hits $C$ at. To estimate this, we sample \emph{fresh} random walks from $v$ onto $C$, as the sampling overhead of the random walks in e.g. \cref{lemma:approxprojguarantee} is not sufficient.

Ultimately, we take the inner product of $\cchi_v - \ppi^C(\cchi_v)$ against $\mSC(G, \Chat)^\dagger \cchi_{st}$, the potentials on some superset $\Chat \supseteq C$. The variance each random walk contributes to this inner product is $O(1)$ as
$\|\mSC(G, \Chat)^\dagger \cchi_{st}\|_\infty \le O(1)$,
so generating $\O(\delta^{-2}\beta^{-4}\eps^{-2})$ random walks from $v$ is sufficient to dampen out the coefficient of $\ppi^{C \cup\{v\}}(\dd)_v$ in \cref{fact:projchange} which we know is bounded by $\O(\beta^{-2}\eps^{-1})$ for $v \notin C$ by \cref{lem:reducedemand}.
\begin{lemma}
\label{lem:approx_proj_from_one_vertex}
Let $G=(V, E)$ be an undirected graph. Let $C$ be a set of vertices. Let $\pphi$ be a vector in $[-2, 2]^V$. Let $v$ be a vertex in $V\setminus C$. We sample $ h_2=\Omega(\delta^{-2}\beta^{-4}\eps^{-2}\polylog n)$ random walks from $v$ to $C$. For any vertex $u\in C$, let $\left[\ppitil^C(\cchi_v)\right]_u$ be the number of random walks that hit $u$ before other vertices in $C$ divided by $ h_2$. We have, w.h.p.,
\[
\abs{
\left\langle \cchi_v - \ppi^C(\cchi_{v}), \pphi\right\rangle
-
\left\langle \cchi_v - \ppitil^C(\cchi_v), \pphi\right\rangle
}
\le \delta\beta^2\eps
\]
\end{lemma}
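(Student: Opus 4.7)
The plan is to recognize the quantity on the left-hand side as the deviation between an expectation and its empirical mean over $h_2$ independent random walks, and then invoke the Bernstein-style concentration bound of~\cref{lem:hoeffding}. To set this up, I would first expand
\[
\langle \cchi_v - \ppi^C(\cchi_v), \pphi \rangle
= \pphi_v - \sum_{u \in C} p^C_u(v)\,\pphi_u,
\qquad
\langle \cchi_v - \ppitil^C(\cchi_v), \pphi \rangle
= \pphi_v - \sum_{u \in C} \left[\ppitil^C(\cchi_v)\right]_u \pphi_u,
\]
using~\cref{lemma:projrandwalk} for the first identity. So the difference to bound is exactly $\sum_{u \in C}\big(\left[\ppitil^C(\cchi_v)\right]_u - p^C_u(v)\big)\pphi_u$, i.e., the deviation between a sample average over $h_2$ independent walks and its expectation $\sum_u p^C_u(v)\pphi_u$.

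Next, I would introduce for each walk index $j \in [h_2]$ and each candidate terminal $u \in C$ the random variable $X_{j,u} \defeq \pphi_u/h_2$ if the $j$-th random walk from $v$ hits $C$ for the first time at $u$, and $X_{j,u} \defeq 0$ otherwise. These variables are independent across $j$, and their sum telescopes to
\[
S \;\defeq\; \sum_{j=1}^{h_2}\sum_{u\in C} X_{j,u}
\;=\; \sum_{u\in C} \left[\ppitil^C(\cchi_v)\right]_u \pphi_u,
\qquad
\E[S] \;=\; \sum_{u\in C} p^C_u(v)\,\pphi_u,
\]
so the quantity we must bound is exactly $|S - \E[S]|$.

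Now I would apply~\cref{lem:hoeffding}. Each $X_{j,u}$ takes values in $\{0,\pphi_u/h_2\}$, and since $\pphi \in [-2,2]^V$ we may take $M = 2/h_2$. Moreover,
\[
\sum_{j,u}\abs{\E[X_{j,u}]}
\;=\;\sum_{u\in C} p^C_u(v)\,|\pphi_u|
\;\le\; 2\sum_{u \in C} p^C_u(v) \;=\; 2,
\]
so we may take $E = 2$. Setting $t = \delta\beta^2\eps \le E$ (which holds since $\delta,\beta,\eps < 1$), the conclusion of~\cref{lem:hoeffding} gives
\[
\Pr\!\left[|S - \E[S]| > \delta\beta^2\eps\right]
\;\le\; 2\exp\!\left(-\frac{(\delta\beta^2\eps)^2}{6\cdot 2 \cdot (2/h_2)}\right)
\;=\; 2\exp\!\left(-\frac{\delta^2\beta^4\eps^2\, h_2}{24}\right).
\]
Choosing the polylogarithmic factor hidden in $h_2 = \Omega(\delta^{-2}\beta^{-4}\eps^{-2}\polylog n)$ large enough drives this probability below $n^{-10}$, yielding the claim with high probability.

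I do not expect a serious obstacle: the lemma is essentially a textbook concentration statement once the sample-mean interpretation is made explicit. The only mild subtlety is verifying the range hypothesis of~\cref{lem:hoeffding}, which requires viewing the hitting outcome of a walk as a collection of $\{0,a\}$-valued indicator random variables indexed by the terminal hit; this is exactly the decomposition written above.
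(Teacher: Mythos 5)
Your proposal is correct and takes essentially the same approach as the paper: both view $\left\langle \ppitil^C(\cchi_v), \pphi\right\rangle$ as an empirical mean over the $h_2$ independent walks with expectation $\left\langle \ppi^C(\cchi_v), \pphi\right\rangle$ (the $\cchi_v$ terms cancelling), and apply the Bernstein corollary \cref{lem:hoeffding} with $M=2/h_2$, $E=2$, and $t=\delta\beta^2\eps$ to get failure probability $2\exp(-\delta^2\beta^4\eps^2 h_2/24)\le n^{-10}$. The one small wrinkle (shared with the paper's own write-up) is that your variables $X_{j,u}$ are independent only across $j$, not across $u$ for a fixed walk, so strictly one should aggregate them per walk before invoking the concentration bound; this does not affect the result.
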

\begin{proof}
We first consider the contribution of $\ppi^C(\cchi_{v})$ to each entry of the dot product separately. $\left[\ppitil^C(\cchi_v)\right]_u\cdot \pphi_u$ is the average of $ h_2$ i.i.d. binary random variables with expectation $\left(\ppi^C(\cchi_v)\right)_u\pphi_u$. Summing the contribution up over every entry, we know that $\left\langle \ppitil^C(\cchi_v), \pphi\right\rangle$ is the sum of $ h_2$ independent random variables and the expected sum of them is $\left\langle \ppi^C(\cchi_{v}), \pphi\right\rangle$. Every random variable has range $[0,2/ h_2]$, as
\[ \left\| \left\langle \ppi^C(\cchi_v), \pphi \right\rangle \right\| \le \left\| \ppi^C(\cchi_v) \right\|_1 \left\| \pphi \right\|_\infty \le 2. \]
By Lemma \ref{lem:hoeffding}, the probability over the $h_2$ random walks that our estimate has additive error at most $\delta\beta^2\eps$ satisfies
\[
\pr{}{
\abs{
\left\langle \ppitil^C(\cchi_v), \pphi\right\rangle 
- \left\langle \ppi^C(\cchi_{v}), \pphi\right\rangle
}
\ge \delta\beta^2\eps
}
\le 
2\exp\left(\frac{-\left(\delta\beta^2\eps\right)^2}{ 6\cdot 2\cdot 2/ h_2}\right)
=
n^{-10}
\] by picking a large enough $\polylog n$ term in $h_2$.
\end{proof}

\subsection{Maintaining Approximate Projections}
\label{subsec:projector}

In this subsection, we maintain the estimates of the projections $\ppi^C(\dd[i])$ of the
$\ndd = \O(\eps^{-2})$ vectors $\dd[1]$, $\dd[2]$, $\ldots$, $\dd[\ndd]$ as required by the heavy-hitter data structure described in Lemma~\ref{lem:heavyhitter}
using the random walks structures studied above.
Here, the $\dd[i]$ are constructed as in \cref{def:settings_of_sec_approx_proj}. The maintained $\ppitil^C(\ddi)$ vectors which approximate $\ppi^C(\ddi)$ will be dotted against a potential vector $\pphi$, which is maintained separately using a dynamic Schur complement via \cref{thm:dynamicsc}.

\begin{lemma}
\label{lem:projector}
There is a data structure \textsc{Projector} that supports the following operations. 
\begin{itemize}
\item \textsc{InitRandomWalks($G, \rr, \epsilon, \beta, \delta, \dd[1], \ldots, \dd[\ndd]$)}:
Set the parameters $G, \rr, \epsilon, \beta, \delta$ and $\dd[i]$ for $i \in [N]$ and initialize the data structure (including a initial terminal set $C\subseteq V$). Takes $\O(m\delta^{-2}\beta^{-4}\eps^{-2})$ time. 
\item \textsc{InitProjections}: Solve each $\ppitil^C(\dd[i])$ exactly. Takes $\O(m\eps^{-2})$ time.
\item \textsc{AddTerminal($v$)}: Add $v$ to $C$ in
$\O(\delta^{-2}\beta^{-6}\eps^{-2})$ time.
The algorithm supports at most $\beta m$ \textsc{AddTerminal} operations.


\item \textsc{Query($i\in [\ndd]$)}: Output a vector $\ppitil^C(\dd[i])$. Let $\pphi\in [-2,2]^V$ be a vector chosen by an adversary which is oblivious after given the initial terminal set $C$. The output satisfies
\[
\abs{
\left\langle\ppi^C
\left(\dd\left[i\right]\right)
,
\pphi \right\rangle
-
\left\langle\ppitil^C\left(\dd\left[i\right]\right)
,
\pphi \right\rangle}
=
\O\left(count \cdot \delta\right)
\]
with high probability where $count$ is the number of \textsc{AddTerminal}s
performed after the last \textsc{InitProjections}. The runtime is $\O(\beta m).$
\end{itemize}
The algorithm creates an initial terminal set $C$ of $O(\beta m)$ vertices by its own decision during initialization in \textsc{InitRandomWalks}. Each operation succeeds with high probability against an adversary which is oblivious after given this original set of terminals $C$ built during initialization.
\end{lemma}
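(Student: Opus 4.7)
The plan is to maintain, for each $i \in [\ndd]$, an explicit vector $\widehat{\ppi}^C(\dd[i]) \in \R^C$ approximating $\ppi^C(\dd[i])$, where $C$ is the current terminal set. In \textsc{InitRandomWalks} I choose $C$ to be the union of \textsc{CongestionReductionSubset}$(G, \beta)$ from \cref{lemma:reducecongestion} with all vertices of degree at least $\beta^{-1}$ (as in \cref{lem:reducedemand}), giving $|C| = O(\beta m)$, and I sample the global family of random walks $\rw(u, e, j)$ from every endpoint of every edge in $S$ described in \cref{def:rws}, using \cref{lemma:effwalks} to implement each walk in $\poly(\beta^{-1})$ time. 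In \textsc{InitProjections} I compute each $\ppi^C(\dd[i])$ exactly by solving an SDDM system in $\mL_{FF}$ via \cref{thm:lap} and then computing $\dd[i]_C - \mL_{CF}\mL_{FF}^{-1}\dd[i]_F$, costing $\O(m)$ per vector and $\O(m\eps^{-2})$ total.

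For \textsc{AddTerminal}($v$), the key identity is \cref{fact:projchange}:
\[
\ppi^{C\cup\{v\}}(\dd[i]) = \ppi^C(\dd[i]) + \bigl[\ppi^{C\cup\{v\}}(\dd[i])\bigr]_v \cdot \bigl(\cchi_v - \ppi^C(\cchi_v)\bigr).
\]
I estimate the scalar $[\ppi^{C\cup\{v\}}(\dd[i])]_v$ using the pre-sampled walks: by \cref{lemma:approxprojguarantee} the total weight of pre-sampled walks first hitting $C\cup\{v\}$ at $v$ gives an additive $\delta$-accurate estimate for every $i$ simultaneously. The expected number of pre-sampled walks through any $v \notin C$ is $\O(h\beta^{-2})$ by the congestion bound of \cref{lemma:reducecongestion}, which holds for every $v$ w.h.p.\ by Chernoff; since each edge participates in only $O(\log^3 m)$ sketch rows by \cref{lem:heavyhitter}, extracting all $\ndd$ scalars costs $\O(h\beta^{-2}) = \O(\delta^{-2}\beta^{-4}\eps^{-2})$ time. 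I then sample $h_2 = \O(\delta^{-2}\beta^{-4}\eps^{-2})$ fresh walks from $v$ to get $\widetilde{\ppi}^C(\cchi_v)$, whose support has size at most $h_2$. Finally I update each stored $\widehat{\ppi}^C(\dd[i])$ by adding the computed scalar times $\cchi_v - \widetilde{\ppi}^C(\cchi_v)$; only the $\min(\ndd, \O(h\beta^{-2}))$ rows with nonzero scalar are touched, keeping the total update cost within $\O(\delta^{-2}\beta^{-6}\eps^{-2})$ after invoking $\beta < \eps$. \textsc{Query}$(i)$ returns the stored vector, which has support in $C$ of size $O(\beta m)$, meeting the $\O(\beta m)$ runtime target.

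Correctness of \textsc{Query} is proved by induction on the number $k$ of \textsc{AddTerminal}s since the last \textsc{InitProjections}, with the inductive hypothesis $|\langle \widehat{\ppi}^{C_k}(\dd[i]) - \ppi^{C_k}(\dd[i]), \pphi\rangle| = \O(k\delta)$ for any fixed $\pphi \in [-2,2]^V$. The base case is exact. For the inductive step, I compare what the algorithm actually stores at step $k+1$ to the true $\ppi^{C_{k+1}}(\dd[i])$ via \cref{fact:projchange}, and split the discrepancy into three parts: (a) the error from approximating the scalar $[\ppi^{C_{k+1}}(\dd[i])]_v$ to additive $\delta$ by \cref{lemma:approxprojguarantee}, multiplied against $\langle \cchi_v - \ppi^{C_k}(\cchi_v), \pphi\rangle = O(1)$ since $\pphi \in [-2,2]^V$; (b) the true scalar, which is $\O(\beta^{-2}\eps^{-1})$ by \cref{lem:reducedemand} applied to $\ddhat$, multiplied by the error from approximating $\ppi^{C_k}(\cchi_v)$ in the dot with $\pphi$, which \cref{lem:approx_proj_from_one_vertex} bounds by $\delta\beta^2\eps$; and (c) the inductive error carried forward from $\widehat{\ppi}^{C_k}(\dd[i])$. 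Parts (a) and (b) each contribute $\O(\delta)$, so the error grows by an additive $\O(\delta)$ per step, yielding the claimed $\O(k\delta)$.

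The main obstacle will be ensuring the high-probability guarantees of \cref{lemma:approxprojguarantee} and \cref{lem:approx_proj_from_one_vertex} hold uniformly over all $v$ that might be added and all $i \in [\ndd]$, since the sequence of added terminals is chosen by an (oblivious-after-initialization) adversary that sees $C$. This is handled by a union bound over $O(\beta m)$ possible $v$ and $\ndd = \O(\eps^{-2})$ vectors, absorbed into the polylog factor hidden in the choice of $h$ and $h_2$; the pre-sampled walks in \textsc{InitRandomWalks} and the fresh walks within each \textsc{AddTerminal} are independent of the adversary, which is precisely the obliviousness assumption the lemma permits. The remaining runtime checks are routine: \textsc{InitRandomWalks} costs $\O(mh \cdot \poly(\beta^{-1}))$ for sampling, which fits within $\O(m\delta^{-2}\beta^{-4}\eps^{-2})$ once one absorbs the length factor using \cref{lemma:effwalks}.
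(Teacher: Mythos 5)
Your proposal is correct and follows essentially the same route as the paper's proof: the same choice of initial terminal set, the same use of \cref{fact:projchange} with the two error sources bounded via \cref{lemma:approxprojguarantee} and \cref{lem:approx_proj_from_one_vertex}, and the same per-step $\O(\delta)$ error accounting. The one small slip is citing \cref{lemma:effwalks} (which costs $\O(\ell^4) = \O(\beta^{-4})$ per walk) for sampling the walks; since resistive lengths are not needed here, the paper instead uses \cref{lem:sample_exit_of_rw} at $\O(\beta^{-2})$ per walk, which is what actually makes the $\O(m\delta^{-2}\beta^{-4}\eps^{-2})$ initialization bound go through.
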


As mentioned, our algorithms require sampling random walks.
We show formally that random walks can be sampled efficiently. This runtime is more efficient than \cref{lemma:effwalks} because we do not require the sum of resistances on the walk.

\begin{lemma}
\label{lem:sample_exit_of_rw}
Let $G=(V, E)$ be a connected undirected weighted graph. We choose $
\beta m$ random edges uniformly from $E$.
Let $C$ denote the set of endpoints of the chosen edges. There is an algorithm that
 samples a random walk from $v$ to $C$ as in \cref{def:rwslocator} for any vertex $v\in V \bs C$. The algorithm runs in $\O(\beta^{-2})$ time and succeeds with high probability.
\end{lemma}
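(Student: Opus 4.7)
The plan is to simulate the random walk step by step, building alias tables for weighted neighbor sampling lazily at each newly visited vertex and maintaining a hash set plus running degree sum to detect the stopping conditions. Correctness is immediate from Definition~\ref{def:rwslocator}. The first visit to a vertex $u$ costs $O(\deg_u)$ to construct its alias table (after which steps from $u$ cost $O(1)$), so since the walk halts once the cumulative degree of distinct visited vertices reaches $O(\beta^{-1}\log m)$, total alias construction across visited vertices is $\widetilde{O}(\beta^{-1})$. Each walk step (including membership checks in $C$ and the visited set) is $O(1)$ thereafter, so the runtime is dominated by the total number $T$ of walk steps, which I must bound by $\widetilde{O}(\beta^{-2})$ w.h.p.

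For $T$, I combine two observations. First, the randomness of $C$ ensures that the volume stopping condition rarely triggers: for any $U \subseteq V\setminus C$ with $\vol(U) \geq c\beta^{-1}\log m$, there are at least $\vol(U)/2$ incident edges, so the probability that none is among the $\beta m$ chosen edges is at most $\exp(-\beta\vol(U)/2) \leq m^{-\Omega(1)}$. A union bound over the nested prefix sets the walk could realize shows that w.h.p.\ the walk hits $C$ first, confining its trajectory to an effective region $U^*$ of total volume $\widetilde{O}(\beta^{-1})$. Second, conditional on this, viewing the walk as a random walk on $G[U^*]$ together with its edges to $C$ (whose overall volume is $\widetilde{O}(\beta^{-1})$), the standard commute-time identity gives expected absorption time $E[T] \leq 2\vol(U^*) \cdot R_{\mathrm{eff}}(v,C) = \widetilde{O}(\beta^{-2})$; Markov's inequality plus geometric repetition boosts this to a high-probability statement.

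The main obstacle is that $U^*$ is determined by the walk itself and grows as the walk progresses, so a static commute-time bound does not apply directly. I would handle this inductively by partitioning the walk into phases between successive first-visits to new vertices: within each phase $k$ the visited set $U_k$ is fixed, the walk is effectively absorbed at $C \cup (V\setminus U_k)$, and the expected phase length is $O(\vol(U_k))$ by a local commute-time computation inside $G[U_k]$ with the appropriate boundary. Summing over at most $\widetilde{O}(\beta^{-1})$ phases each of expected length $\widetilde{O}(\beta^{-1})$ gives the desired total step bound $\widetilde{O}(\beta^{-2})$, with concentration provided by standard tail bounds on sums of independent geometric-like phase lengths.
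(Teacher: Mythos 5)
There is a genuine gap: your runtime analysis treats the walk as if it were on an unweighted graph, but the walk of Definition~\ref{def:rwslocator} is a \emph{weighted} random walk (transitions proportional to $\rr_e^{-1}$), and for such a walk the number of steps before absorption at $C$ is not controlled by the combinatorial volume of the visited region. Your commute-time bound $E[T]\le 2\vol(U^*)\cdot R_{\mathrm{eff}}(v,C)=\O(\beta^{-2})$ implicitly uses that both the stationary-measure factor and the effective resistance are bounded by the number of visited edge endpoints; for weighted walks the correct identity involves the total \emph{conductance} and a \emph{resistive} effective resistance, and their product can be as large as the ratio of the largest to smallest resistance. A two-edge path $v-a-C$ with conductances $1$ and $\varepsilon$ already has expected hitting time $\Theta(1/\varepsilon)$ while visiting only two distinct vertices, so step-by-step simulation can cost $\poly(m)\gg\beta^{-2}$ even though the walk meets the distinct-vertex stopping rule almost immediately. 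The same issue breaks your per-phase bound ``expected phase length is $O(\vol(U_k))$.'' (This is also why Lemma~\ref{lemma:effwalks} carries a $\log^{O(1)}(U_2/U_1)$ dependence on the resistance range.)

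The paper's proof avoids simulating steps altogether. Since Definition~\ref{def:rwslocator} only requires the \emph{order of first visits}, it samples the next new distinct vertex directly: with $U$ the set of visited vertices and $x$ the current vertex, the distribution of the next first-visited vertex is the hitting distribution $\ppi^{N(U)}(\cchi_x)=-\mL_{N(U)U}\mL_{UU}^{-1}\cchi_x$ on the local subgraph $H=G[N(U)\cup U]$ (Lemma~\ref{lemma:projrandwalk}), which is computed by an exact Laplacian solve on $H$ in $\O(|H|)=\O(\beta^{-1})$ time because $\vol(U)=\O(\beta^{-1})$ at termination. With $O(\beta^{-1}\log n)$ new vertices in total this gives $\O(\beta^{-2})$ independently of the resistances. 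Your first observation — that the random choice of $C$ guarantees the walk hits $C$ within $O(\beta^{-1}\log m)$ distinct vertices w.h.p.\ — matches the paper, but to repair the runtime you would need to replace the step simulation with this kind of "jump to the next unvisited vertex" sampling (or otherwise argue only about distinct-vertex transitions, not individual steps).
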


\begin{proof}
Fix the vertex $v$ and let the random walk from $v$ be $W$. By the construction of $C$, once the sum of degrees of the vertices visited by $W$ exceeds some number $c=O(\beta^{-1} \log m)$, with high probability, $W$ has visited at least one vertex in $C$. By this property, our algorithm continues sampling a new vertex that $W$ has never visited until 
\begin{enumerate}
\item $W$ hits some vertex in $C$, or 
\item the sum of degrees of the vertices visited by $W$ exceeds $c$.
\end{enumerate} 
In the second case (which happens with $n^{-10}$ probability), the algorithm fails.

Now we discuss how to sample the next new vertex $W$ will visit. We denote this vertex by $u$ (which is a random variable). Suppose $U$ is the set of vertices that $W$ has visited and $W$ is on some vertex $x\in U$ currently. Let $N(U)$ be the set of vertices that are adjacent to $U$ and not contained by $U$. We construct an induced subgraph $H=G[N(U)\cup U]$. It suffices to sample $u$ on $H$ because $u$ must be adjacent to $U$.

By \cref{lemma:projrandwalk}, for any vertex $z\in N(U)$, $\pr{}{u=z}=\left[\ppi^{N(U)}(\cchi_x)\right]_z$. In other words, the mass function for $u$ is $\ppi^{N(U)}(\cchi_x)$. Thus, we may use an exact Laplacian solver to compute 
\[
\ppi^{N(U)}(\cchi_x)=-\mL_{N(U)U}\mL_{UU}^{-1}\cchi_{x}
\] 
and sample according to it. Since the number of edges in $H$ is bounded by $\sum_{v\in U}\deg_v=\O(\beta^{-1})$, the solving process takes $\O(|H|)=\O(\beta^{-1})$ time.

As each new vertex we sample has degree at least $1$, the process must end before $c=O(\beta^{-1}\log n)$ distinct vertices are visited. Thus, the total running time is $\O(c\cdot \beta^{-1})=\O(\beta^{-2})$.
\end{proof}

We start with the initialization algorithms \textsc{InitRandomWalks} and \textsc{InitProjections}, given in \cref{algo:adjust}. We will use the method in \cref{lem:sample_exit_of_rw} for sampling random walks. 

\begin{algorithm}[ht]
\caption{Pseudocode for initialize the estimated $\ppi$ values and the random walks\label{algo:adjust}.}
\SetKwProg{Globals}{global variables}{}{}
\SetKwProg{Proc}{procedure}{}{}
\Globals{}{
    $\dd[1],\ldots, \dd[\ndd]$: the demand vectors defined in \cref{def:settings_of_sec_approx_proj},     $\ndd = \O(\eps^{-2})$ is the total count.\\
	$\rw(v, e, j)$ $v\in V, e\ni v, 1\le j\le  h$: the sampled random walks\\ 
	$C$: the terminal set\\
	$\ppitil^C(\dd[i])$ $(1\le i\le \ndd)$: the estimated $\ppi$ values\\
    $\eps, \beta, \delta$: parameters controlling accuracy, size of the terminal set $C$ and frequency to repair $\ppitil^C(\dd[i])$\\
}
\Proc{$\textsc{Projector.InitProjections}()$}{
  \For{$1 \le i \le \ndd$}{
    Solve $\mL_{FF}\xxtil[i]=\dd[i]_F$.\\
    $\ppitil^C(\dd[i])
      \assign \dd[i]_C - \mL_{CF}\widetilde{\xx}[i]$.\\
  }
}
\Proc{$\textsc{Projector.InitRandomWalks}(G, \rr, \epsilon, \beta, \delta, \dd[1], \ldots, \dd[\ndd])$}{
  Set the parameters $G, \rr, \epsilon, \beta, \delta, \dd[1], \ldots, \dd[\ndd]$ according to the input.\\
  $C \assign \textsc{CongestionReductionSubset}(G, \beta) \cup \left\{ v : \deg_v \geq \beta^{-1} \right\}$. \label{line:reducecongestion}
  \tcp{\cref{lem:reducedemand}}
  $ h\assign \O(\beta^{-2}\delta^{-2}\eps^{-2})$, the sampling overhead of \cref{lemma:approxprojguarantee}.\\
  \For{$v\in V$, $e\ni v$, $1\le j\le  h$} {
    Sample $\rw(v, e, j)$ (a random walk from $v$ to $C$ of length $O(\beta^{-1} \log m)$) by the method in \cref{lem:sample_exit_of_rw}.\\
  }
  \textsc{Projector.InitProjections}().\\
}
\end{algorithm}

Each \textsc{InitProjections} operation calculates
$\ppitil^C(\dd[i])$ for each $1\le i\le \ndd$ with exact Laplacian solves, and stores the $\ppitil^C(\dd[i])$ values in a $\ndd$ by $|C|$ array. This runs in time $\O(\ndd m) = \O(m\eps^{-2})$.

To initialize the random walks in \textsc{InitRandomWalks}, we first initialize the terminals $C$ by \cref{lem:reducedemand}. Note that we have
\[
\left|\ppi^{C \cup \{v\}}\left(\dd[i]\right)_v\right| \le \O(\beta^{-2}\eps^{-1})
\] for any $i \in [N]$ and $v \notin C$ with high probability by \cref{lem:reducedemand}. This will be used to bound the error from adjusting the projections when adding terminals.
Then we sample $\O(\beta^{-2}\delta^{-2}\eps^{-2})$ random walks per edge according to \cref{lem:sample_exit_of_rw}.
Then \textsc{InitProjections} is called to initialize the array for $\ppitil^C(\dd[i])$.

Now, we give the algorithm for $\textsc{AddTerminal}(v)$ in \cref{algo:add_terminal}.

\begin{algorithm}[!ht]
\caption{Pseudocode for adding a vertex as terminal \label{algo:add_terminal}.}
\SetKwProg{Globals}{global variables}{}{}
\SetKwProg{Proc}{procedure}{}{}
\Globals{}{
    $\dd[1],\ldots, \dd[\ndd]$: the demand vectors defined in \cref{def:settings_of_sec_approx_proj},     $\ndd = \O(\eps^{-2})$ is the total count.\\
	$\rw(v, e, j)$ $v\in V, e\ni v, 1\le j\le  h$: the sampled random walks\\ 
	$C$: the terminal set\\
	$\ppitil^C(\dd[i])$ $(1\le i\le \ndd)$: the estimated $\ppi$ values\\
    $\eps, \beta, \delta$: parameters controlling accuracy, size of the terminal set $C$ and frequency to repair $\ppitil^C(\dd[i])$\\
}
\Proc{$\textsc{Projector.AddTerminal}(v)$}{
  \For{$1\le i\le \ndd$}{
      $\left[\ppitil^{C\cup \{v\}}(\dd[i])\right]_v\assign 0$.\\
  }
  \For{$1 \le i \le \ndd$\label{line:cost1}}{
      \For{each $(u, e, j)$ such that $u\in V, e\ni u, 1\le j\le  h$ and $\rw(u, e, j)$ visits $v$}{
        Shorten $\rw(u, e, j)$ to the first time it hits $v$.\\
        $\left[\ppitil^{C\cup \{v\}}(\dd[i])\right]_v \assign \left[\ppitil^{C\cup \{v\}}(\dd[i])\right]_v + \frac{\qq_e\rr_e^{-1/2}}{ h}$.
      }
  }
    \tcp{$\left[\ppitil^{C\cup \{v\}}(\dd[i])\right]_v$ is an estimate of $\left[\ppi^{C\cup \{v\}}(\dd[i])\right]_v$ by \cref{lemma:approxprojguarantee}.}
  $\ppitil^C(\cchi_v)\assign \zzero$.\\
  $ h_2 \assign \O(\delta^{-2}\beta^{-4}\eps^{-2})$, the sampling overhead in \cref{lem:approx_proj_from_one_vertex}. \\
  \For{$1\le i\le  h_2$ \label{line:cost2}} {
    Sample a random walk from $v$ to $C$ by the method in \cref{lemma:effwalks}.\\
    Let $u$ be the vertex at which the sampled random walk above hits $C$. \\
    $\left[\ppitil^C(\cchi_v)\right]_u \assign \left[\ppitil^C(\cchi_v)\right]_u + \frac{1}{ h_2}$.\\
  }
  \tcp{$\ppitil^C(\cchi_v)$ is an estimate of $\ppi^C(\cchi_v)$ by \cref{lem:approx_proj_from_one_vertex}.}
  \For{$1\le i\le \ndd$ \label{line:cost3}} {
    \For{Each $u$ such that $u\in C$ and  $\left[\ppitil^C(\cchi_v)\right]_u\neq 0$ \label{line:nonzerou}}{
      $\left[\ppitil^{C\cup \{v\}}\left(\dd[i]\right)\right]_u\assign \left[\ppitil^C\left(\dd[i]\right)\right]_u + \left[\ppitil^{C \cup \{v\}}\left(\dd[i]\right)\right]_v \cdot \left[\left(\cchi_v - \ppitil^C\left(\cchi_{v} \right)\right)\right]_u$.\\
    }
    $\left[\ppitil^{C\cup \{v\}}\left(\dd[i]\right)\right]_v\assign \left[\ppitil^C\left(\dd[i]\right)\right]_v + \left[\ppitil^{C \cup \{v\}}\left(\dd[i]\right)\right]_v \cdot \left[\left(\cchi_v - \ppitil^C\left(\cchi_{v} \right)\right)\right]_v$.\\
  }
  \tcp{Approximate update corresponding to \cref{fact:projchange}.}
}
\end{algorithm}

The algorithm works as follows. 
\begin{enumerate}
\item By Lemma \ref{fact:projchange}, we update
$\ppitil^C(\dd[i])$ to $\ppitil^{C\cup \{v\}}(\dd[i])$ by 
\[
\ppitil^{C\cup \{v\}}\left(\dd\left[i\right]\right)
\assign
\ppitil^C\left(\dd\left[i\right]\right)
+
\left[\ppitil^{C \cup \{v\}}\left(\dd\left[i\right]\right)\right]_v
\cdot
\left(\cchi_v - \ppitil^C\left(\cchi_{v} \right)\right)
\]
for each $i\in [\ndd]$, for some estimate $\left[\ppitil^{C \cup \{v\}}\left(\dd\left[i\right]\right)\right]_v$ of $\left[\ppi^{C \cup \{v\}}\left(\dd\left[i\right]\right)\right]_v$, and estimate $\ppitil^C\left(\cchi_{v} \right)$ of $\ppi^C\left(\cchi_{v} \right)$.
\item We estimate $\left[\ppitil^{C \cup \{v\}}\left(\dd\left[i\right]\right)\right]_v$ using the initial random walks sampled in \textsc{InitRandomWalks}. Cut the initial random walks by where they hit $v$.
 By Lemma \ref{lemma:approxprojguarantee}, for each $i$, we calculate our estimate $[\ppitil^{C\cup \{v\}}(\dd[i])]_v$ according to the random walks that are cut. Note that the condition $\ppi^{C \cup \{v\}}(\ddhat)_v=\O(\beta^{-2}\eps^{-1})$ of \cref{lemma:approxprojguarantee} is guaranteed by \cref{lem:reducedemand}.
\item We estimate the vector $\ppitil^C\left(\cchi_{v} \right)$ by sampling $\O(\delta^{-2}\beta^{-4}\eps^{-2})$ fresh random walks from $v$ and tracking where they hit $C$. Applying \cref{lem:approx_proj_from_one_vertex} allows us to bound the error.
\item Once the updates above are done, we add $v$ to $C$, $C\leftarrow C\cup\{v\}$.
\end{enumerate}

Finally, for each $\textsc{Query}(i)$, we simply output the maintained vector $\ppitil^C(\dd[i])$ for the current terminal set $C$. This costs $O(|C|) = O(\beta m)$ time.

We can use \cref{algo:adjust,algo:add_terminal} to prove \cref{lem:projector}.

\begin{proof}[Proof of \cref{lem:projector}]
We first show the correctness of the \textsc{Query} operation, then bound the runtimes of each operation.
\paragraph{Correctness.}
It suffices to show the correctness of $\textsc{Query}(i)$, i.e. prove that
\[ \left| \left\langle \ppi^C(\dd[i]) - \ppitil^C(\dd[i]), \pphi \right\rangle \right| \le \O(count \cdot \delta) \] with high probability, where $count$ is the number of operations since the last \textsc{InitProjections} operation. To bound this error we sum the errors caused by each \textsc{AddTerminal($v$)} up.
Let $C^{(0)}$ denote the set $C$ after the last \textsc{InitProjections} operation. Suppose that after the last \textsc{InitProjections} operation, we perform $count$ 
\textsc{AddTerminal} operations.
The $j$-th one of them is \textsc{AddTerminal($v_j$)} ($1\le j\le count$).
Let $C^{(j)}=C^{(j-1)}\cup \{v_j\}$ be the set $C$ after $j$ \textsc{AddTerminal}s.

For simplicity of notation, we overload $\ppi^C$ vector to refer to
the length $n$ vector on $V$ formed by padding $0$ entries for vertices
not in $C$.
Suppose after the $count$ \textsc{AddTerminal} operations, one \textsc{Query($i, \pphi=\pphi_{C^{(count)}}$)} is performed.
We can decouple the error across the steps
and across the decomposition given by \cref{fact:projchange} into:
\begin{align*}
&\abs{\left\langle \ppi^{C^{(count)}}(\dd[i]), \pphi\right\rangle - \left\langle \ppitil^{C^{(count)}}(\dd[i]), \pphi\right\rangle}\\
=&\left|\left\langle \ppi^{C^{(0)}}(\dd[i])+\sum_{j=1}^{count}
\left[\ppi^{C^{(j)}}\left(\dd[i]\right)\right]_{v_j}
  \cdot \left(\cchi_v - \ppi^{C^{(j-1)}}\left(\cchi_{v_j} \right)\right), \pphi\right\rangle\right.\\
      &\qquad -\left.\left\langle \ppitil^{C^{(0)}}(\dd[i])+\sum_{j=1}^{count}
    \left[\ppitil^{C^{(j)}}\left(\dd[i]\right)\right]_{v_j}
      \cdot \left(\cchi_v - \ppitil^{C^{(j-1)}}\left(\cchi_{v_j} \right)\right), \pphi\right\rangle\right|\\
    \le & \abs{\left\langle \ppi^{C^{(0)}}(\dd[i]),\pphi\right\rangle-\left\langle \ppitil^{C^{(0)}}(\dd[i]),\pphi\right\rangle}\\
+ &\sum_{j=1}^{count}
\abs{\left[\ppi^{C^{\left(j\right)}}\left(\dd\left[i\right]\right)\right]_{v_j}\left\langle
\left(\cchi_v - \ppi^{C^{(j-1)}}\left(\cchi_{v_j} \right)\right), \pphi\right\rangle
- \left[\ppitil^{C^{\left(j\right)}}\left(\dd\left[i\right]\right)\right]_{v_j} \left\langle
\left(\cchi_v - \ppitil^{C^{(j-1)}}\left(\cchi_{v_j} \right)\right), \pphi\right\rangle}
\end{align*}

By Lemma \ref{lemma:approxprojguarantee} and Lemma \ref{lem:approx_proj_from_one_vertex}, for each fixed $j\in [count]$,
we have $\delta$-additive approximations to $[\ppi^{C^{(j)}}(\dd[i])]_{v_j}$ and
$\delta\beta^2\eps$-additive approximations to
$\langle \cchi_{v_j} - \ppi^{C^{(j-1)}}(v_j), \pphi \rangle$. 
Therefore for each term $j$, we can further apply triangle inequality
to bound the error by:
\begin{align*}
& \abs{\left[\ppi^{C^{\left(j\right)}}\left(\dd\left[i\right]\right)\right]_{v_j}
\cdot
\left\langle
\cchi_v - \ppi^{C^{(j-1)}}\left(\cchi_{v_j} \right)
,
\pphi
\right\rangle
-
\left[\ppitil^{C^{\left(j\right)}}\left(\dd\left[i\right]\right)\right]_{v_j}
\cdot
\left\langle
\cchi_v - \ppitil^{C^{(j-1)}}\left(\cchi_{v_j} \right)
,
\pphi
\right\rangle}\\
\le
& \abs{\left[\ppi^{C^{(j)}}\left(\dd[i]\right)\right]_{v_j}}
\cdot
\abs{
\left\langle
\cchi_v - \ppi^{C^{(j-1)}}\left(\cchi_{v_j}\right), \pphi\right\rangle
- \left\langle
  \cchi_v - \ppitil^{C^{(j-1)}}\left(\cchi_{v_j} \right), \pphi\right\rangle
}\\
&
\qquad +
\abs{\left[\ppi^{C^{\left(j\right)}}
  \left(\dd\left[i\right]\right)\right]_{v_j}
-
\left[\ppitil^{C^{\left(j\right)}}\left(\dd\left[i\right]\right)\right]_{v_j}
}
\cdot
\abs{\left\langle
  \cchi_v - \ppitil^{C^{(j-1)}}\left(\cchi_{v_j} \right), \pphi\right\rangle}
\end{align*}
The two factors of the first term are bounded by $\O(\beta^{-2}\eps^{-1})$ (\cref{lem:reducedemand}) and $\delta\beta^2\eps$ (\cref{lem:approx_proj_from_one_vertex}).
The first factor of the second term is bounded by $\delta$ (\cref{lemma:approxprojguarantee}).
Putting these together, we get that the error for step $j$ is no more than
\[
\O\left(\beta^{-2}\eps^{-1}\right)\delta\beta^2\eps
+ \delta\left\langle
      \left(\cchi_v - \ppitil^C\left(\cchi_{v_j} \right)\right), \pphi\right\rangle
\le \O\left(\delta\right)
+ \delta \norm{\left(\cchi_v - \ppitil^C\left(\cchi_{v_j} \right)\right)}_1\norm{\pphi}_{\infty}\\
=
\O(\delta).
\]
The exact solver can guarantee that 
\[
\left\langle \ppi^{C^{(0)}}(\dd[i]),\pphi\right\rangle-\left\langle \ppitil^{C^{(0)}}(\dd[i]),\pphi\right\rangle = 0.
\]
Summing up the errors for each step and the initial error, we have 
\[
\abs{
\left\langle \ppi^{C^{\left(count\right)}}\left(\dd[i]\right),
\pphi\right\rangle
-
\left\langle \ppitil^{C^{\left(count\right)}}\left(\dd[i]\right),
\pphi\right\rangle}
=
\O\left(count \cdot \delta\right).
\]
This concludes the correctness proof.

\paragraph{Runtime.} We go operation by operation.
\begin{itemize}
    \item \textsc{InitProjections}. Requires solving $\ndd$ SDD systems in $\mL_{FF}^{-1}.$ Each can be solved in $\O(m)$ time by \cref{thm:lap}, for a total of $\O(\ndd m) = \O(m\eps^{-2})$ time.
    \item \textsc{InitRandomWalks}. In addition to the $\O(\ndd m)$ cost from \textsc{InitProjections}, the algorithm samples $O(h) = O(\beta^{-2}\delta^{-2}\eps^{-2})$ random walks of length $O(\beta^{-1} \log m)$ for each edge incident to each vertex. 
    Now, by \cref{lem:sample_exit_of_rw}, this uses $\O(m \cdot \beta^{-2}\delta^{-2}\eps^{-2} \cdot \beta^{-2}) = \O(m\beta^{-4}\delta^{-2}\eps^{-2})$ time. Also, the cost of calling \textsc{CongestionReductionSubset}$(G, \beta)$ is $\O(m\beta^{-2})$ by \cref{lemma:reducecongestion}, which is dominated by the previous runtime.
    \item \textsc{AddTerminal}. The runtime cost for computing $\left[\ppitil^{C \cup \{v\}}(\dd[i])\right]_v$ for all $i \in [N]$ starting in the for loop in line \ref{line:cost1} of \cref{algo:adjust}, can be bounded by the total number of times the random walks $\rw(u, e, j)$ pass through $v$, multiplied by $N$. As above, $h = \O(\beta^{-2}\delta^{-2}\eps^{-2})$ random walks are sampled per edge. Because $v \notin C$ for $C$ constructed in line \ref{line:reducecongestion} of \cref{algo:adjust}, \cref{lemma:reducecongestion} tells us that $v$ is in $\O(\beta^{-2}\delta^{-2}\eps^{-2} \cdot \beta^{-2}) = \O(\beta^{-4}\delta^{-2}\eps^{-2})$ random walks with high probability. Thus the cost of this loop is $\O(\beta^{-4}\delta^{-2}\eps^{-2}) \cdot N = \O(\beta^{-4}\delta^{-2}\eps^{-4}).$
    
    The runtime cost for computing $\ppitil^C(\cchi_v)$, which starts in the for loop of line \ref{line:cost2} of \cref{algo:adjust}, is given by the cost of sampling $ h_2 = \O(\delta^{-2}\beta^{-4}\eps^{-2})$ random walks from $v$ of length $O(\beta^{-1} \log m)$ until they hit $C$. By \cref{lem:sample_exit_of_rw}, this costs $\O(\beta^{-2})$ per walk, for a total time of $\O(\delta^{-2}\beta^{-4}\eps^{-2} \cdot \beta^{-2}) = \O(\delta^{-2}\beta^{-6}\eps^{-2}).$
    
    Finally, we bound the cost of aggregating these estimates in the for loop starting in line \ref{line:cost3} of \cref{algo:adjust}. At most $ h_2 = \O(\delta^{-2}\beta^{-4}\eps^{-2})$ values of $u$ are accessed in line \ref{line:nonzerou}, as $h_2$ random walks are sampled from $v$. Therefore, the runtime is bounded by
    $\O(\delta^{-2}\beta^{-4}\eps^{-2} \cdot \ndd) = \O(\delta^{-2}\beta^{-4}\eps^{-4}).$
    
    For $\beta < \eps,$ the $\O(\delta^{-2}\beta^{-6}\eps^{-2})$ term dominates the runtime of all three parts.
    \item \textsc{Query}. As $\ppitil^C(\dd[i])$ is supported on $C$, this costs $O(|C|) = O(\beta m)$ time.
\end{itemize}
\end{proof}

\subsection{Locator Using Maintained Projections}
\label{subsec:locatorproof}

Here, we give the necessary algorithms to prove \cref{thm:locator}, where we without loss of generality assume $\cE = 1.$ At a high level our approach maintains both an dynamic Schur complement data structure \textsc{DynamicSC} using \cref{thm:dynamicsc}, as well as the demand projector \textsc{Projector} in \cref{lem:projector}.

We give the algorithm that implements the initialization operation \Locator.\textsc{Initialize} in \cref{algo:initlocator}. To initialize \Locator, i.e. \Locator.\textsc{Initialize}, we first sample a random heavy hitter matrix $\mQ$ and compute vectors $\dd[i]$ as in \cref{def:settings_of_sec_approx_proj}.
To emphasize, we note that we only consider edges with $\rr_e \in [\eps^2/20, 20\eps^2].$

\begin{algorithm}[!ht]
\caption{Pseudocode for \textsc{Locator} initialization. \label{algo:initlocator}}
\SetKwProg{Globals}{global variables}{}{}
\SetKwProg{Proc}{procedure}{}{}
\Globals{}{
    $D^{(\mathrm{sc})}$: Instance of \textsc{DynamicSC} data structure of \cref{thm:dynamicsc}. \\
    $D^{(\mathrm{proj})}$: Instance of \textsc{Projector} data structure of \cref{lem:projector}. \\
    $\eps, \beta, \delta$: pamaters controlling accuracy, size of the terminal sets $C, T$ and frequency to call $D^{(\mathrm{proj})}.\textsc{InitProjections}()$.\\
    $\rr$: the resistance vector.\\
	$C$, $T$: the current terminal sets of $D^{(\mathrm{proj})}$ and $D^{(\mathrm{sc})}$.\\
	$\mQ$: heavy-hitter matrix produced by \cref{lem:heavyhitter}.\\
	$count$, $\overline{count}$: number of terminals added since last initialization, and threshold.
}
  \Proc{\textsc{Locator.Initialize}($G, \rr, \eps, \beta, \delta$)}{
  Set the parameters $G, \rr, \epsilon, \beta, \delta$ according to the input.\\
  Initialize $\mQ\in \mathbb{R}^{\ndd\times m}$ by \cref{lem:heavyhitter}, and $\dd[i] = 0 \forall 1 \leq i \leq \ndd$.\\
  \For{each $e\in E$ and $i\in [\ndd]$ with $\mQ_{i, e}\neq 0$ \label{line:multiplybyq}} {
    \If{$ \eps^{2}/20 \leq \rr_e \leq 20\eps^{-2}$}{
   $\dd[i] \assign \dd[i] + \mQ_{i, e} \rr_e^{-1/2} \cchi_e.$\\
    }
  }
  $D^{(\mathrm{proj})}$.\textsc{InitRandomWalks($G, \rr, \eps, \beta, \delta$)}.\\
  $D^{(\mathrm{sc})}$.\textsc{Initialize}($G, \rr, \{s, t\}, \eps/100$, $\beta$).\\
  \For{each vertex $v \in C \setminus T$ \label{line:terminalsame1}} {
    $D^{(\mathrm{sc})}$.\textsc{PermanentAddTerminals}$(\{v\})$.\\  
  }
  \For{each vertex $v$ in $T\setminus C$ \label{line:terminalsame2}} {
    $D^{(\mathrm{proj})}$.\textsc{AddTerminal}$(v)$.\\  
  }
  $D^{(\mathrm{proj})}$.\textsc{InitProjections}(), $count\assign 0$. \label{line:initprojectionslocator} \\
  $\overline{count}\assign \delta^{-1}\eps/\polylog n$ for a large enough $\polylog n$ factor such that the error of $D^{(\mathrm{proj})}$.\textsc{Query} as in \cref{lem:projector} satisfies
  $\O(\overline{count} \cdot \delta) \leq \eps/100$.\\
    }
\end{algorithm}

We then initialize the dynamic Schur complement and projection data structures. We ensure using lines \ref{line:terminalsame1} and \ref{line:terminalsame2} that these data structures have the same terminal set. We then compute the vectors $\ppi^C(\dd[i])$ exactly to start. Finally, we set $\overline{count}$ as the number of terminal additions the projector data structure \textsc{Projector} supports before doing an exact recomputation of $\ppi^C(\dd[i])$.

Now, we present the pseudocode for adjusting edge resistances in \Locator~for both single edge updates and batched updates in \cref{algo:locatoraddterminal}. This requires adding terminals to both $D^{(\mathrm{sc})}$ and $D^{(\mathrm{proj})}$.

\begin{algorithm}[!ht]
\caption{Pseudocode for updating $\rr_e$ in \textsc{Locator},
either single, or in batches. \label{algo:locatoraddterminal}}
\SetKwProg{Globals}{global variables}{}{}
\SetKwProg{Proc}{procedure}{}{}
\Globals{}{
    $D^{(\mathrm{sc})}$: Instance of \textsc{DynamicSC} data structure of \cref{thm:dynamicsc}. \\
    $D^{(\mathrm{proj})}$: Instance of \textsc{Projector} data structure of \cref{lem:projector}. \\
    $\eps, \beta, \delta$: pamaters controlling accuracy, size of the terminal sets $C, T$ and frequency to call $D^{(\mathrm{proj})}.\textsc{InitProjections}()$.\\
    $\rr$: the resistance vector.\\
	$C$, $T$: the current terminal sets of $D^{(\mathrm{proj})}$ and $D^{(\mathrm{sc})}$.\\
	$\mQ$: heavy-hitter matrix produced by \cref{lem:heavyhitter}. \\
	$count$, $\overline{count}$: number of terminals added since last initialization, and threshold.\\
	$\Delta\dd[i]$ for $i \in [N]$: change to $\dd[i]$ from resistance changes.
}
\Proc{\textsc{Locator.UpdateD}($e, \rr^\new$) \label{line:updated}} {
  \For{each $i\in [\ndd]$ such that $\mQ_{i, e}\neq 0$} {
        \If{$\rr_e \in [\eps^2/20, 20\eps^{-2}]$}{
            $\Delta\dd[i] \assign \Delta\dd[i] - \mQ_{i,e} \rr_e^{-1/2} \cchi_e.$ \label{line:updatedelta1}
        }
        $\rr_e \assign \rr^\new.$ \\
        \If{$\rr_e \in [\eps^2/20, 20\eps^{-2}]$}{
            $\Delta\dd[i] \assign \Delta\dd[i] + \mQ_{i,e} \rr_e^{-1/2} \cchi_e.$ \label{line:updatedelta2}
        }
    }
}
\Proc{\textsc{Locator.Update}($e, \rr^\new$)}{
  \For{each $v\in e$} {
    \If{$v$ is not in $C$} {
      $D^{(\mathrm{proj})}$.\textsc{AddTerminal}($v$) \label{line:projaddterminalslocate}.\\
      $count\assign count + 1$.\\
      \If{$count = \overline{count}$ \label{line:resetcompletely}} {
        $D^{(\mathrm{proj})}$.\textsc{InitProjections}(). \label{line:remakeproj} \\
        $count\assign 0$.\\
      }
      $D^{(\mathrm{sc})}$.\textsc{PermanentAddTermimals($\{v\}$)}
    }
  }
  $D^{(\mathrm{sc})}$.\textsc{Update}$(e, \rr^\new)$.\\
  \textsc{Locator.UpdateD}($e, \rr^\new)$.\\
}
\Proc{$\textsc{Locator.BatchUpdate}(S, \rr^\new\in \mathbb{R}^S_{>0})$}{
    \For{each $(u, v) = e\in S$} {
        $D^{(\mathrm{proj})}$.$C$ = $D^{(\mathrm{proj})}$.$C \cup \{u,v\}$.\\
        $D^{(\mathrm{sc})}$.\textsc{PermanentAddTerminals}$(\{u,v\})$. \\
        $D^{(\mathrm{sc})}$.\textsc{Update}$(e, \rr^\new_e)$.\\
        \textsc{Locator.UpdateD}($e, \rr^\new_e)$.\\
    }
    $D^{(\mathrm{proj})}$.\textsc{InitProjections}().\\
}
\end{algorithm}
To implement single edge updates in \Locator.\textsc{Update}, we first add both endpoints of the edge $e$ as terminals in both $D^{(\mathrm{sc})}$ and $D^{(\mathrm{proj})}$, and then update the resistance of edge $e$ in $D^{(\mathrm{sc})}$. In the batched case \Locator.\textsc{BatchUpdate} or when $count$ has increased to value $\overline{count},$ i.e. the errors of $\ppitil^C(\dd[i])$ have accumulated in $D^{(\mathrm{proj})}$, we do a full recomputation of $\ppi^C(\dd[i])$ for the current terminals using exact Laplacian solves.

There is a subtlety in the algorithm, which is that the resistances $\mR$ change throughout the algorithm, whereas we fixed the demand vectors $\dd[i] = \oone_i^\top\mQ\mI_S\mR^{-1/2}\mB$ at the beginning. We handle this by explicitly tracking the changes to $\dd[i]$ via $\Delta\dd[i]$ in \Locator.\textsc{UpdateD}. This interacts nicely with maintaining $\ppi^C(\dd[i])$ and accumulates no errors because $\Delta\dd[i]$ is supported only on terminals $C$ by definition, so
\[ \ppi^C(\dd[i] + \Delta\dd[i]) = \ppi^C(\dd[i]) + \Delta\dd[i]. \]

Finally, we give the pseudocode for \Locator.\textsc{Locate}.

\begin{algorithm}[!ht]
\caption{Pseudocode \Locator.\textsc{Locate}, detection of large edges. \label{algo:locatorlocate}}
\SetKwProg{Globals}{global variables}{}{}
\SetKwProg{Proc}{procedure}{}{}
\Globals{}{
    $D^{(\mathrm{sc})}$: Instance of \textsc{DynamicSC} data structure of \cref{thm:dynamicsc}. \\
    $D^{(\mathrm{proj})}$: Instance of \textsc{Projector} data structure of \cref{lem:projector}. \\
    $\eps$: accuracy.\\
    $\rr$: the resistance vector.\\
	$C$, $T$: the current terminal sets of $D^{(\mathrm{proj})}$ and $D^{(\mathrm{sc})}$.\\
	$\mQ$: heavy-hitter matrix produced by \cref{lem:heavyhitter}. \\
	$count$, $\overline{count}$: number of terminals added since last initialization, and threshold.
	$\Delta\dd[i]$ for $i \in [N]$: change to $\dd[i]$ from resistance changes.
}
\Proc{$\textsc{Locator.Locate}()$}{
  Let $\vv$ be an array of length $\ndd$.\\
  $\widetilde{\mSC} \assign D^{(\mathrm{sc})}.\textsc{SC}()$, with error $\eps/100$ \label{line:samplesclocate} \\
  \For{each $i\in [\ndd]$} {
    $\vv_i\assign \left\langle D^{(\mathrm{proj})}.\textsc{Query}(i) + \Delta\dd[i], \widetilde{\mSC}^\dagger \cchi_{st}\right\rangle$. \label{line:definev} \\
  }
  Call \textsc{Recover}($\vv$) in \cref{lem:heavyhitter} and return the edges returned by \cref{lem:heavyhitter}.
}
\end{algorithm}

To implement \Locator.\textsc{Locate}, we first solve $\cchi_{st}$ on an approximate Schur complement by using the $D^{(\mathrm{sc})}$ data structure to compute $\pphi = \widetilde{\mSC}^\dagger\cchi_{st}$. We can use this to compute the vector $\vv$ which we plug into \textsc{Recover} of \cref{lem:heavyhitter}. Specifically, our $\vv$ satisfies
\begin{align*} \vv_i &= \ppitil^C(\dd[i])^\top \pphi = \ppitil^C(\dd[i])^\top \widetilde{\mSC}^\dagger \cchi_{st} \approx \ppi^C(\dd[i])^\top\mSC(G, C)^\dagger\cchi_{st} = \dd[i]^\top \mL(G)^\dagger\cchi_{st} \\ &= \oone_i^\top \mQ\mR^{-1/2}\mB\mL(G)^\dagger\cchi_{st}.
\end{align*}
We would like to point out a subtlety: the \textsc{Projector} algorithm does not know the vector $\pphi$ in advance, and additionally $\pphi$ must be generated by an oblivious adversary given the initial terminal set (see last paragraph of \cref{lem:projector}). However, this is guaranteed because the randomness used for $D^{(\mathrm{sc})}$ is separated from the randomness for $D^{(\mathrm{proj})}$, so there are no actual issues.

\begin{proof}[Proof for \cref{thm:locator}]
We show that \cref{algo:initlocator,algo:locatoraddterminal,algo:locatorlocate} together prove \cref{thm:locator}. We first show the correctness of \Locator.\textsc{Locate}, then analyze the runtimes. Throughout the proofs, we will assume oblivious adversarial inputs to all of \cref{lem:heavyhitter,lem:projector,thm:dynamicsc} as \cref{thm:locator} (\Locator) assumes an oblivious adversary.

\paragraph{Correctness.} Let $\rr$ be the original resistances, and let $\rrhat$ be the resistances right before \Locator.\textsc{Locate} is called. Let $S, \Shat$ respectively be the set of edges $e$ with $\rr_e \in [\eps^2/20, 20\eps^{-2}]$, respectively $\rrhat_e.$ Let $G, \Ghat$ be the graphs with resistances $\rr, \rrhat$ respectively. Let $\Chat$ be the final terminal set. Let $\dd[i], \ddhat[i]$ be the columns of
\[ \mQ\mI_S\mR^{-1/2}\mB, \enspace \text{ respectively } \enspace \mQ\mI_{\Shat} \mRhat^{-1/2}\mB\]
as in \cref{def:settings_of_sec_approx_proj}. Note that by construction in \Locator.\textsc{UpdateD} starting in line \ref{line:updated} and the fact that $\Delta\dd[i]$ is supported on the terminals $\Chat$, we know that \begin{align} \ppi^{\Chat}(\ddhat[i]) = \ppi^{\Chat}(\dd[i]) + \Delta\dd[i]. \label{eq:supportupdate} \end{align}

Let $\widetilde{\mSC} \approx_{\eps/100} \mSC(\Ghat, \Chat)$ as sampled in line \ref{line:samplesclocate} of \cref{algo:locatorlocate}, where the approximation follows by the guarantees of \cref{thm:dynamicsc}.

We first show that our algorithm successfully detects edges with large flow, assuming that the potentials are computing using $\widetilde{\mSC}$ as a solver, instead of the true Laplacian. Then, we transfer this to the true flows using \cref{lemma:approxenergy}.
Let $\pphi = \widetilde{\mSC}^\dagger\cchi_{st}$, and let $\vv \in \R^n$ be the vector as defined in line \ref{line:definev} of \cref{algo:locatorlocate}, i.e. 
\[ \vv_i = \left\langle \ppitil^{\Chat}(\dd[i]) + \Delta\dd[i], \pphi\right\rangle. \] Note that except for knowing $C$ of $D^{(\mathrm{proj})}$, the randomness of $D^{(\mathrm{sc})}$ is independent of that of $D^{(\mathrm{proj})}$. Thus, the vector $\pphi$ can be treated as an input to $D^{(\mathrm{proj})}$ only adapting for $C$. Additionally, note that $\pphi \in [-2, 2]^C$ as $\widetilde{\mSC}$ is a spectral sparsifier of $\mSC(\Ghat, \Chat),$ which induces a flow of energy at most $2$ by assumption.

For simplicity, define $\mL \defeq \mL(\Ghat, \Chat)$, and define the operator $\widetilde{\mL}$ as
\[ \widetilde{\mL}^\dagger = \begin{bmatrix} \mI & -\mL_{FF}^{-1} \mL_{FC} \\ 0 & \mI \end{bmatrix}
\begin{bmatrix} \mL_{FF}^{-1} & 0 \\ 0 & \widetilde{\mSC} \end{bmatrix}
\begin{bmatrix} \mI & 0 \\ -\mL_{CF}\mL_{FF}^{-1} & \mI \end{bmatrix}, \] so that $\widetilde{\mL} \approx_{\eps/100} \mL$.
Therefore, \cref{lem:projector} gives us that
\begin{align*}
&\left|\vv_i - \oone_i^\top\mQ\left(\mI_{\Shat}\mRhat^{-1/2}\mB\widetilde{\mL}^\dagger\cchi_{st}\right) \right| = \left|\left\langle \ppitil^{\Chat}(\dd[i]) + \Delta\dd[i], \pphi\right\rangle - \left\langle \ppi^{\Chat}(\ddhat[i]), \widetilde{\mSC}^\dagger\cchi_{st} \right\rangle \right| \\
&= \left|\left\langle \ppitil^{\Chat}(\dd[i]) + \Delta\dd[i], \pphi\right\rangle - \left\langle \ppi^{\Chat}(\dd[i]) + \Delta\dd[i], \pphi\right\rangle \right| \\
&= \left|\left\langle \ppitil^{\Chat}(\dd[i]) - \ppi^{\Chat}(\dd[i]), \pphi\right\rangle \right| \le \O(\overline{count} \cdot \delta) \le \eps/100.
\end{align*}
The first equality follows from the definition of $\vv_i,$ and expanding out the definition of the operator $\widetilde{\mL}^\dagger.$ The second equality comes from \eqref{eq:supportupdate}, and the final inequalities follow from the guarantees of $D^{(\mathrm{proj})}$.\textsc{Query} on the vector $\pphi \in [-2, 2]^V$, and that line \ref{line:resetcompletely} of \cref{algo:locatoraddterminal} ensures that $count \le \overline{count}.$ Additionally, we because $\widetilde{\mL} \approx_{\eps/100} \mL$ we know that
\[ \left\|\mI_{\Shat}\mRhat^{-1/2}\mB\widetilde{\mL}^\dagger\cchi_{st} \right\|_2 \le 1.1\left\|\mRhat^{-1/2}\mB\widetilde{\mL}^\dagger\cchi_{st} \right\|_2 = 1.1\cE^{1/2} = 1.1. \] Therefore, $\textsc{Recover}(\vv)$ w.h.p. returns a set of size $O(\eps^{-2})$ containing all edges $e$ with
\[ \left(\mI_{\Shat}\mRhat^{-1/2}\mB\widetilde{\mL}^\dagger\cchi_{st}\right)_e \ge \eps/10. \]
It remains to convert this guarantee to the true electric flow $\mRhat^{-1/2}\mB\mL^\dagger\cchi_{st}.$ For edges $e \in \Shat,$ \cref{lemma:badres} for $\cE = 1$ gives us that the energy is at most \[ \min(\rrhat_e, \rrhat_e^{-1}) \le \eps^2/20, \] so we do not need to return $e$.

For edges $e \notin \Shat,$ we have
\begin{align*}
&\left|\left(\mRhat^{-1/2}\mB\widetilde{\mL}^\dagger\cchi_{st}\right)_e - \left(\mRhat^{-1/2}\mB\mL^\dagger\cchi_{st}\right)_e \right| = \rrhat_e^{-1/2}\left|\cchi_e^\top\left(\widetilde{\mL}^\dagger - \mL^\dagger \right)\cchi_{st}\right| \le \eps/100,
\end{align*}
where the inequality follows from \cref{lemma:approxenergy}. Therefore, all edges of energy at most
\[ \left(\frac{\eps}{\sqrt{20}} + \frac{\eps}{100} \right)^2 < \eps^2/10 \] are returned with high probability, as desired.

\paragraph{Runtime.} We go item by item, including \Locator.\textsc{UpdateD} which we will require to bound the runtimes of \Locator.\textsc{Update} and \Locator.\textsc{BatchUpdate}.

\begin{itemize}
    \item \textsc{UpdateD}. This requires $\O(1)$ time. This is because, by the construction of $\mQ$ in \cref{lem:heavyhitter}, each edge $e$ only has $\O(1)$ nonzeros in its corresponding column. Each update in lines \ref{line:updatedelta1} and \ref{line:updatedelta2} of \cref{algo:locatoraddterminal} affects at most two coordinates of $\Delta\dd[i]$, so the total time is $\O(1).$
    \item \textsc{Initialize}. The cost of initializing $D^{(\mathrm{proj})}$ and $D^{(\mathrm{sc})}$ is $\O(m\beta^{-4}\delta^{-2}\eps^{-2})$ and $\O(m\beta^{-4}\eps^{-4})$ respectively by \cref{lem:projector,thm:dynamicsc}. As $\delta < \eps,$ the first term dominates. In addition to this, line \ref{line:multiplybyq} of \cref{algo:initlocator} requires $\O(m)$ time by the same analysis as for \textsc{UpdateD}.
    The cost of adding terminals in \ref{line:terminalsame1} and \ref{line:terminalsame2} of \cref{algo:initlocator} are dominated by the initialization time. Initializing the projections ($D^{(\mathrm{proj})}$.\textsc{InitProjections}()) requires $\O(Nm) = \O(m\eps^{-2})$ time by \cref{lem:projector}. The $\O(m\beta^{-4}\delta^{-2}\eps^{-2})$ dominates.
    \item \textsc{Update}. The first cost is calling $D^{(\mathrm{proj})}$.\textsc{AddTerminal} on at most two vertices in line \ref{line:projaddterminalslocate} of \cref{algo:locatoraddterminal}. By \cref{lem:projector} this costs $\O(\delta^{-2}\beta^{-6}\eps^{-2})$ time. Also, $D^{(\mathrm{proj})}$.\textsc{InitProjections} is called in line \ref{line:remakeproj} of \cref{algo:locatoraddterminal} when $count$ is a multiple of $\overline{count}.$ As a result, the amortized complexity of this step is $\O(\overline{count}^{-1} m\eps^{-2}) = \O(\delta m\eps^{-3})$ by the choice of $\overline{count}.$
    Finally, the calls to $D^{(\mathrm{sc})}$.\textsc{PermanentAddTerminals} and $D^{(\mathrm{sc})}$.\textsc{Update} cost $\O(\beta^{-2}\eps^{-2})$ by \cref{thm:dynamicsc}, as $c = 0$ because there are no temporary updates. Finally, $\Locator.\textsc{UpdateD}$ costs $\O(1)$ time by the above.
    The total amortized time is thus $\O(\delta m\eps^{-3} + \delta^{-2}\beta^{-6}\eps^{-2}).$
    \item \textsc{BatchUpdate}. The costs to $D^{(\mathrm{sc})}$.\textsc{PermanentAddTerminals}, $D^{(\mathrm{sc})}$.\textsc{Update}, and \Locator.\textsc{UpdateD} can be bounded as the above, and takes $O(|S|\beta^{-2}\eps^{-2})$ time. The last line $D^{(\mathrm{proj})}$.\textsc{InitProjections} takes time $\O(m\eps^{-2})$ by \cref{lem:projector}. The total time is thus $\O(m\eps^{-2} + |S|\beta^{-2}\eps^{-2}).$
    \item \textsc{Locate}. Sampling $\widetilde{\mSC}$ takes time $\O(\beta m\eps^{-2})$ by \cref{thm:dynamicsc}. Computing $\widetilde{\mSC}^\dagger\cchi_{st}$ also needs time $\O(\beta m\eps^{-2})$ time by \cref{thm:lap}, and we can store is as a length $O(\beta m)$ vector. Each vector $D^{(\mathrm{proj})}.\textsc{Query}(i)$ is length $O(\beta m)$ and takes time $O(\beta m)$ to compute by \cref{lem:projector}. Because we have precomputed $\pphi = \widetilde{\mSC}^\dagger\cchi_{st}$, computing each \[ \vv_i = \left\langle \ppitil^{\Chat}(\dd[i]) + \Delta\dd[i], \widetilde{\mSC}^\dagger\cchi_{st} \right\rangle \] requires time $O(\beta m),$ so the total time is $O(N\beta m) = \O(\beta m\eps^{-2}).$ Now, running $\textsc{Recover}(\vv)$ requires time $\O(N) = \O(\eps^{-2}).$ So the total time is $\O(\beta m\eps^{-2})$ as desired.
\end{itemize}
\end{proof}
\section{Correctness of Recentering Batch}
\label{sec:pathcorrect}

In this section we explain a single step of our interior point method that uses the data structures in \cref{sec:checker,sec:Locator} to argue that we can make $k/\sqrt{m}$ flow progress (as opposed to $1/\sqrt{m}$) in amortized $\O(m)$ time. Before introducing our algorithm and analysis we set up the method loop and show several stability properties that are essential to our analysis.

\subsection{IPM Setup and Stability}

For a (slightly modified) undirected capacitated graph $G$ with capacities $\uu \in \R^E$ we design algorithms that maintain
a parameter $\mu$ with $0 \leq \mu \leq F^{*}$
and a flow $\ff(\mu) \in \R^E$ that approximately minimizes the logarithmic barrier function $V$ as given in \eqref{eq:logbarrier}.
Here the parameter $\mu$ indicates progress along the central path.
It starts at $\mu = F^{*}$, which corresponds to the empty flow.
This flow is trivially central due to the symmetry of
capacities.
We will gradually multiplicatively decrease $\mu$ until it is
less than $1$,
at which point we can round to an integral flow that
routes a strictly larger amount~\cite{KP15:arxiv},
namely $F^{*}$.

Our algorithm uses the most standard central path. On the other hand, several other works \cite{M13,LS19,M16,LS20_STOC,KLS20, BLSS20,BLLSSSW20:arxiv,BLNPSSSW20} work with \emph{weighted barriers} or \emph{robust} central paths to achieve their runtimes. Weighted barriers refer to not weighting each edge equally in the logarithmic barrier, and in robust central path analysis the centrality error is calculated in the $\ell_\infty$ norm instead of the standard $\ell_2$.
While we also don't need to maintain the exact flow,
our main centrality error is still calculated in the $\ell_2$ norm.
One benefit of this is that we can guarantee that we can efficiently recenter back to the true central path flow any time,
which limits the amount of adaptivity that our data structures need to handle.

Our algorithms will repeatedly update a flow $\ff$ by adding an electric flow, which we will
denote using $\Delta \ff$ (or $\Delta \cdot $ more generally).
The resulting `new' vector will be denoted using $\ffhat$.
$\fftil$ will denote flows that are approximately close
to what the exact algorithm would maintain.

\subsubsection{Central Path}

The central path is defined as the sequence of flows
(for different parameters of $\mu$)
that satisfy the optimality condition in \eqref{eq:logbarrier}.
By the KKT conditions, $\ff(\mu)$ is optimal if and only if there is
a vector $\pphi(\mu) \in \R^V$ such that
\begin{align}
\pphi\left( \mu \right) _u - \pphi\left( \mu \right)_v
=
\left(\mB\pphi\left( \mu \right) \right)_e
=
\frac{1}{\uu_e -\ff\left(\mu\right)_e}
-
\frac{1}{\uu_e + \ff\left(\mu\right)_e}
\text{ for all }
e = \left(u, v \right) \in E. \label{eq:central}
\end{align}
For notational shorthand, we will use $\uu^{+}(\ff)_e = \uu_e - \ff_e$
and $\uu^{-}(\ff)_e = \ff_e + \uu_e$ to denote the upper and lower
remaining capacities of $\ff$ on edge $e$, and
$\uu(\ff)_e = \min\{\uu^{+}(\ff)_e, \uu^{-}(\ff)_e\}$
to denote the minimum residual capacity in either direction.

Given a flow $\ff$ we define the induced \emph{resistances} as
\begin{align*}
\rr\left(\ff\right)_e
=
\frac{1}{\uu^{+}\left( \ff \right)_e^2}
+
\frac{1}{\uu^{-}\left( \ff \right)_e^2}
\text{ for all } e \in E
\end{align*}
and $\mR(\ff) = \diag(\rr(\ff))$.
$\mR$ is the Hessian of $V$ as $\ff$, i.e. $\mR = \nabla^2 V(\ff)$.
Note that $\uu(\ff)_e^{-2} \le \rr(\ff)_e \le 2\uu(\ff)_e^{-2}$.
Our progress steps using electrical flows, which we described above
in Section~\ref{sec:Flows}, will be computed using approximations to
these resistances.
We note that if flows $\ff$ and $\fftil$ have similar residual capacities on an edge $e$, then their resistances on that edge are also similar. In particular, if $\rr(\ff)_e^{1/2}|\ff_e - \fftil_e| \le \eps$ for some $\eps < 1/2$, then $\rr(\ff)_e \approx_{O(\eps)} \rr(\fftil)e.$

An important piece of the analysis is tracking how close a flow $\ff$ is to satisfying \eqref{eq:central}. We define the \emph{centrality} of a flow $\ff$ and (implicit) dual variable $\pphi$ as
\begin{align*}
\norm{
\mB\pphi - \left(
\frac{\oone}{\uu^{+}\left( \ff \right)}
-
\frac{\oone}{\uu^{-} \left( \ff \right)}\right)
}
_{\mR\left(\ff\right)^{-1}},
\end{align*}
i.e. the error of \eqref{eq:central} measure in the $\mR(\ff)^{-1}$ norm.

Finally, the flows $\ff$ we maintain do not necessarily perfectly satisfy the demands $(F^*-\mu)\cchi_{st}.$
To capture this, define the Laplacian
$\mL(\ff) \defeq \mB^\top \mR(\ff)^{-1}\mB$ and define the
\emph{demand error} for a centrality parameter $\mu$ as
\begin{align*}
\norm{\mB^\top \ff - \left(F^{*} - \mu \right) \cchi_{st}}_{\mL\left(\ff\right)^\dagger},
\end{align*}
i.e. the error of $\ff$ off satisfying the demands
$(F^{*} - \mu)\cchi_{st}$ measured in the norm of inverse of the Laplacian. Putting these conditions together allows us to define a \emph{centered point}.
\begin{definition}[Centered point]
\label{def:centeredpoint}
We say that a flow $\ff$ is
\emph{$\gamma$-centered} for a parameter $\mu$ if there is a dual variable $\pphi$ such that
\[
\norm{
\mB\pphi - \left(\frac{\oone}{\uu^{+}\left( \ff\right)}
- \frac{\oone}{\uu^{-}\left(\ff\right)}\right)
}_{\mR\left(\ff\right)^{-1}}
\le
\gamma
\]
and
\[
\norm{\mB^\top \ff - \left(F^{*} - \mu \right) \cchi_{st}}
_{\mL\left(\ff\right)^\dagger} \le \gamma.
\]
\end{definition}

It is a standard fact that if a point is $\gamma$-centered then we can compute the true central path flow $\ff(\mu)$ using $\O(1)$ recentering steps. More generally, it is known that given a flow $\fftil$ whose residual capacities are within an $\ell_\infty$ ball of the residual capacities of the true central path flow $\ff(\mu),$ we can compute $\ff(\mu)$ with $\O(1)$ Laplacian system solves. We show the following in \cref{sec:proofs}.
\begin{lemma}[$\ell_\infty$ approximation implies recentering]
\label{lemma:recenter}
There is an algorithm $\Recenter(\fftil, \mu)$ which given a path parameter $\mu$ and flow $\fftil$ satisfying
\[
\norm{
\mR\left(\fftil\right)^{1/2}
\left(\fftil - \ff\left(\mu\right)\right)
}_\infty \le 1/10
\]
returns $\ff(\mu)$ exactly in $\O(m)$ time.
\end{lemma}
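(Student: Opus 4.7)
The plan is to use Newton's method on the central path optimality conditions \eqref{eq:central}, starting from $\fftil$. Since the hypothesis $\|\mR(\fftil)^{1/2}(\fftil - \ff(\mu))\|_\infty \le 1/10$ means that on every edge the residual capacities at $\fftil$ and at $\ff(\mu)$ differ multiplicatively by a factor of at most a constant close to $1$, the resistances are uniformly close: $\mR(\ff') \approx_{O(1)} \mR(\ff(\mu))$ for every iterate $\ff'$ we will encounter. In particular $\mL(\ff') \approx_{O(1)} \mL(\ff(\mu))$, and so a Laplacian solver (\cref{thm:lap}) can implement each Newton step in $\O(m)$ time to very high accuracy.

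The Newton step itself is the standard one: given $\ff^{(k)}$, we linearize $(\mB\pphi - (\oone/\uu^+(\ff) - \oone/\uu^-(\ff))) = 0$ together with $\mB^\top \ff = (F^*-\mu)\cchi_{st}$ around $\ff^{(k)}$ and solve the resulting block system, which reduces to a single Laplacian solve $\mL(\ff^{(k)}) \Delta\pphi = \text{residual}$ followed by setting $\Delta\ff^{(k)} = \mR(\ff^{(k)})^{-1}(\mB\Delta\pphi - g^{(k)})$ where $g^{(k)} = \oone/\uu^+(\ff^{(k)}) - \oone/\uu^-(\ff^{(k)})$. The next iterate is $\ff^{(k+1)} = \ff^{(k)} + \Delta\ff^{(k)}$.

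The convergence analysis would proceed by bounding $\alpha^{(k)} \defeq \|\mR(\ff(\mu))^{1/2}(\ff^{(k)} - \ff(\mu))\|_\infty$. Taylor-expanding $\oone/\uu^+ - \oone/\uu^-$ around $\ff(\mu)$, the first-order term in the error is exactly cancelled by the Newton step, while each entry of the second-order remainder is bounded by $O((\alpha^{(k)})^2)$ times the local resistance. Combined with the fact that the Laplacian acts as an averaging-type operator that does not blow up $\ell_\infty$ bounds by more than a $\polylog$ factor on the demands produced by this Taylor remainder (using Cheeger/norm-conversion bounds or, more directly, the induced matrix norm of $\mR^{1/2}\mB\mL^\dagger\mB^\top\mR^{1/2}$, a projection), one obtains $\alpha^{(k+1)} \le C(\alpha^{(k)})^2$ for some constant $C$. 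Since $\alpha^{(0)} \le 1/10$, after $O(\log\log(\poly(m,U)))$ Newton steps the iterate $\ff^{(k)}$ agrees with $\ff(\mu)$ up to precision $1/\poly(m,U)$, which is indistinguishable from ``exact'' for all downstream uses in the algorithm. Each iteration costs $\O(m)$, giving the total runtime.

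The main obstacle is clean bookkeeping of the $\ell_\infty$ quadratic convergence \emph{through} the Laplacian solve: the natural norm for the Newton step is the $\mR^{1/2}$-weighted $\ell_2$ norm, in which Newton's method is classically analyzed, whereas the hypothesis and the target stability use $\ell_\infty$. The trick is to observe that the matrix $\mR^{1/2}\mB\mL^\dagger\mB^\top\mR^{1/2}$ is an orthogonal projection (the cycle-space projector), hence has operator norm $1$ from $\ell_2$ to $\ell_2$; with an additional $\polylog$ factor one can pass from the $\ell_\infty$ bound on the Taylor remainder vector to an $\ell_\infty$ bound on the updated error, preserving the squared-rate convergence. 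The approximate Laplacian solver from \cref{thm:lap} contributes an additive $1/\poly(m,U)$ term which is harmless because we can drive $\eps$ in \cref{thm:lap} polynomially small at only logarithmic cost.
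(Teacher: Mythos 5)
Your overall strategy (an iterative second-order method whose per-iteration cost is one Laplacian solve) matches the paper's, but the convergence argument has a genuine gap at exactly the step you flag as "the main obstacle." After one Newton step the new weighted error satisfies
\[
\mR\left(\ff\left(\mu\right)\right)^{1/2}\left(\ff^{(k+1)} - \ff\left(\mu\right)\right) \;=\; \left(\mI - \mP\right)\mR^{-1/2}\,\mathrm{rem}^{(k)},
\]
where $\mP = \mR^{-1/2}\mB\mL^\dagger\mB^\top\mR^{-1/2}$ is the potential-flow projector and $\mathrm{rem}^{(k)}$ is the second-order Taylor remainder. You correctly get $\norm{\mR^{-1/2}\mathrm{rem}^{(k)}}_\infty \le O\bigl((\alpha^{(k)})^2\bigr)$, but your claim that $\mI-\mP$ (equivalently $\mP$) blows up $\ell_\infty$ norms by only a $\polylog$ factor is false in general: an orthogonal projection only gives $\norm{\mP \xx}_\infty \le \norm{\mP\xx}_2 \le \norm{\xx}_2 \le \sqrt{m}\norm{\xx}_\infty$, and the $\ell_\infty\to\ell_\infty$ norm of the electrical projection can genuinely be $\mathrm{poly}(m)$ (it is the maximum weighted $\ell_1$ congestion of single-edge electrical flows). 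The recursion you can actually justify is $\alpha^{(k+1)} \le C\sqrt{m}\,(\alpha^{(k)})^2$, which does not contract from $\alpha^{(0)} = 1/10$. Switching to the natural $\ell_2$ analysis does not rescue this either, since the hypothesis only gives $\norm{\mR^{1/2}(\fftil - \ff(\mu))}_2 \le \sqrt{m}/10$, far outside the quadratic-convergence basin of the log barrier.

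The paper sidesteps this entirely with a different idea: it replaces each $V_e$ by a smoothed barrier $\widetilde{V_e}$ that agrees with $V_e$ on an $\ell_\infty$-neighborhood of $\fftil_e$ of radius comparable to $\rr(\fftil)_e^{-1/2}$ and has \emph{constant second derivative outside it}. The $\ell_\infty$ hypothesis guarantees $\ff(\mu)$ lies inside the agreement region, so it is still the constrained minimizer of the smoothed objective; and because the smoothed Hessian is globally spectrally stable up to a factor $2$, Newton's method converges \emph{linearly from any starting point} in the fixed quadratic norm, reaching $1/\poly(m)$ accuracy in $\O(1)$ Laplacian solves. That smoothing step is the missing ingredient in your argument: it converts the weak $\ell_\infty$ starting guarantee into a globally well-conditioned problem, rather than trying to propagate $\ell_\infty$ quadratic convergence through the projection.
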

In reality, our algorithm can only maintain all quantities (such as flows and resistances) up to an addition $1/\poly(m)$ approximation, i.e. $\O(1)$ bits.
However, this does not affect the success of the algorithm and is discussed in more detail in previous works \cite{M13,M16,LS20_STOC,KLS20}.
This introduces a subtle issue that an adaptive adversary may be able to learn information from the rounding to additive approximation $1/\poly(m)$,
however this can resolved by picking a random rounding threshold \cite{S05}. Thus we assume throughout the paper that all Laplacian system solves and parameters maintained are exact.

We also verify in \cref{sec:proofs} that the standard notion of
approximate centrality satisfies the input condition of this recentering.
\begin{lemma}[Recentering approximate demands and centrality]
\label{lemma:recentererror}
If flow $\ff$ with path parameter $\mu$ is $1/1000$-centered as in \cref{def:centeredpoint},
then we have that
\[
\norm{
\mR\left(\ff\right)^{1/2}\left(\ff - \ff\left(\mu\right)\right)
}_\infty \le 1/10.
\]
\end{lemma}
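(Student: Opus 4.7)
The plan is to first correct the demand violation by adding a small circulation, reducing the problem to the textbook self-concordance analysis for a feasible near-center, and then invoke standard IPM convergence. Let $\Delta_1$ denote the minimum $\mR(\ff)$-norm flow making the demand exact, i.e.\ $\mB^\top(\ff+\Delta_1) = (F^*-\mu)\cchi_{st}$; by \cref{def:centeredpoint} this satisfies $\|\Delta_1\|_{\mR(\ff)} \le 1/1000$. Set $\ff_1 \defeq \ff+\Delta_1$. Since $\|\cdot\|_\infty \le \|\cdot\|_2$, we immediately obtain the per-edge bound $\rr(\ff)_e^{1/2}|\Delta_1(e)| \le 1/1000$, which in turn gives $\uu^\pm(\ff_1)_e \approx_{O(1/1000)} \uu^\pm(\ff)_e$ on every edge and therefore $\mR(\ff_1) \approx_{O(1/1000)} \mR(\ff)$.

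Next I would show that $\ff_1$ is centered in the classical sense, i.e.\ its Newton decrement for the constrained problem $\min V(\ff)$ subject to $\mB^\top\ff=(F^*-\mu)\cchi_{st}$ is $O(1/1000)$. Using $\nabla^2 V=\mR(\cdot)$ and integrating along the segment from $\ff$ to $\ff_1$,
\[
\nabla V(\ff_1) - \nabla V(\ff) = \mR(\ff)\Delta_1 + \mathrm{err}, \qquad \|\mathrm{err}\|_{\mR(\ff)^{-1}} = O((1/1000)^2),
\]
where the error bound follows from the self-concordance of each edge barrier $-\log \uu^+_e-\log \uu^-_e$ and the per-edge smallness of $\Delta_1$ established above. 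The triangle inequality applied to the centrality dual $\pphi$ supplied by the hypothesis then yields $\|\mB\pphi-\nabla V(\ff_1)\|_{\mR(\ff_1)^{-1}} = O(1/1000)$; since $\ff_1$ is primal-feasible, this quantity upper bounds the Newton decrement at $\ff_1$ for the constrained minimization of $V$.

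With Newton decrement at most $O(1/1000) \ll 1/4$, the classical self-concordance theorem guarantees quadratic convergence of the constrained Newton iteration starting at $\ff_1$ and the quantitative proximity bound $\|\ff_1-\ff(\mu)\|_{\mR(\ff_1)} \le O(1/1000)$. Combining this with the resistance comparison from Step 1 and $\|\cdot\|_\infty\le\|\cdot\|_2$ gives
\[
\|\mR(\ff)^{1/2}(\ff-\ff(\mu))\|_\infty \le \|\mR(\ff)^{1/2}\Delta_1\|_\infty + O(1)\,\|\mR(\ff_1)^{1/2}(\ff_1-\ff(\mu))\|_\infty \le O(1/1000) \le 1/10,
\]
as required.

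The main technical point is the passage from centrality to Newton decrement to primal distance under constraints. The centrality hypothesis only bounds the gap $\nabla V(\ff)-\mB\pphi$ for \emph{some} dual $\pphi$ that may not be the optimal one, but this suffices because the true Newton decrement minimizes this gap over all $\pphi$, so any witness gives an upper bound. The only delicate part is ensuring that the compounded constants from the Hessian change along $[\ff,\ff_1]$, the dual slack introduced when transferring centrality from $\ff$ to $\ff_1$, and the Newton quadratic convergence bound all fit comfortably under the $1/10$ threshold, which is why the hypothesis is stated with a small absolute constant such as $1/1000$ rather than something closer to $1/10$.
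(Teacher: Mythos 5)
Your proposal is correct, and its first half coincides with the paper's: both begin by adding the minimum-energy (electric) correction flow for the demand residual, whose $\mR(\ff)$-norm equals $\|\mB^\top\ff-(F^*-\mu)\cchi_{st}\|_{\mL(\ff)^\dagger}\le 1/1000$, and both observe that this perturbs resistances and centrality only by $O(1/1000)$. Where you diverge is the second half. The paper repeatedly invokes a correction lemma (\cref{lemma:correction}, implicit in \cite{LS20_STOC}): each application produces a flow with the same demand whose centrality drops quadratically ($\gamma\mapsto 5\gamma^2$) while moving at most $\gamma$ per edge in the local norm, and the limit of this iteration is $\ff(\mu)$; summing the geometric series of per-step $\ell_\infty$ displacements gives the bound. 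You instead observe that the witness $\pphi$ upper-bounds the constrained Newton decrement at the corrected point $\ff_1$ (correct, since the true decrement minimizes $\|\nabla V(\ff_1)-\mB\pphi\|_{\mR(\ff_1)^{-1}}$ over $\pphi$), and then invoke the classical self-concordance estimate $\|\ff_1-\ff(\mu)\|_{\mR(\ff_1)}\le\lambda/(1-\lambda)$, finishing with $\|\cdot\|_\infty\le\|\cdot\|_2$. The two arguments are morally the same — the textbook theorem you cite is itself proved by iterating Newton steps and summing a geometric series — but your version black-boxes that iteration into standard theory, while the paper makes it explicit via the cited lemma and thereby tracks $\ell_\infty$ displacement directly at each step. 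Nothing is lost by your $\ell_\infty\le\ell_2$ relaxation here, since the hypothesis is already an $\ell_2$-type centrality bound; conversely the paper's route is self-contained given \cref{lemma:correction} and avoids having to verify the hypotheses of the constrained self-concordance theorem.
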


\subsubsection{Preconditioning Arcs}
\label{subsection:precon}

As a technical point, we must work with $s$-$t$ electrical flows in graphs with preconditioning
edges added, in the same manner as~\cite{M16,LS20_STOC,KLS20}.
We say an undirected graph is \emph{preconditioned} if at least half of its edges are between $s$-$t$ with capacity $2U$ (which we refer to as
\emph{preconditioning edges}), and the remaining edges are capacity at most $U$.

To precondition an undirected graph with $m$ edges, we simply add $m$ more edges
of capacity $2U$ between $s$ and $t$, which 
increases the amount of $s$-$t$ flow by exactly $2mU$.

Preconditioning ensures that for any centered $\ff$,
the preconditioning edges route a significant fraction
of the remaining residual flow.
The following lemma was shown in \cite{LS20_STOC} Section B.5.
\begin{lemma}
\label{lemma:precon}
\label{lemma:elecenergy}
For any preconditioned graph $G$ and parameter $\mu$,
the central flow $\ff(\mu)$ satisfies
\[
\uu\left(\ff\left( \mu \right) \right)_e
\ge
\frac{\mu}{14m}
\]
for all preconditioning edges $e$. In particular, for any preconditioning edge we have
$\rr(\ff)_e \le \frac{400m^2}{\mu^2}$,
and in the graph with resistances $\rr(\ff)$,
the energy of a unit $s$-$t$ electric flow
is at most $\frac{400m}{\mu^{2}}.$
\end{lemma}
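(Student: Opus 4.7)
The plan is to establish the residual bound $\uu(\ff(\mu))_e \ge \mu/(14m)$ for preconditioning edges; the resistance and energy corollaries then follow as short calculations. First I will exploit symmetry: since all preconditioning edges are parallel $s$-$t$ copies of capacity $2U$, strict convexity of the barrier $V$ forces the unique minimizer $\ff(\mu)$ to assign them a common flow value $f_p$. Equivalently, the central-path KKT relation $\pphi_t - \pphi_s = 2f_e/(4U^2 - f_e^2)$ has a unique $f_e$ for any fixed potential gap.

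To bound $f_p$ away from $\pm 2U$, I will compare against $\fftil = (1 - \mu/F^*)\ff^*$, where $\ff^*$ is any max $s$-$t$ flow. This $\fftil$ is feasible for the demand $(F^*-\mu)\cchi_{st}$ and satisfies $\uu_e^2 - \fftil_e^2 \ge (\mu/F^*)\uu_e^2$ on every edge, so $V(\fftil) \le V(\zzero) + m\log(F^*/\mu)$. By optimality of $\ff(\mu)$ and non-negativity of every term in $\sum_e \log(\uu_e^2/(\uu_e^2 - \ff(\mu)_e^2))$, I can restrict attention to the preconditioning contribution; using that at least $m/2$ edges are preconditioning and $F^* \le 2mU$, this yields $\uu(\ff(\mu))_e = 2U - |f_p| \ge U(\mu/F^*)^2 \ge \mu^2/(4m^2 U)$, which already exceeds $\mu/(14m)$ whenever $\mu \gtrsim mU$.

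The main obstacle will be the small-$\mu$ regime, where proportional scaling is too crude. My plan there is to bound the dual potential gap $D \defeq \pphi_t - \pphi_s$ pointwise. By the envelope theorem $D = -dV^*/d\mu$ where $V^*(\mu) \defeq V(\ff(\mu))$, and convexity of $V^*$ in $\mu$ forces $D$ to be monotonically decreasing. Combined with $V^*(\mu) - V^*(2\mu) \le m\log(F^*/\mu) = \O(m)$ from the comparison above, monotonicity of $D$ gives $\mu \cdot D(2\mu) \le \int_\mu^{2\mu} D(\mu')\,d\mu' = V^*(\mu) - V^*(2\mu) \le \O(m)$, so $D(\nu) \le \O(m/\nu)$ uniformly. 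Inverting the preconditioning KKT identity gives $2U - |f_p| \ge |f_p|/(2U\, D)$, which together with $|f_p| \ge U$ for small $\mu$ (forced since the first comparison shows $f_p \to \pm 2U$ as $\mu \to 0$) delivers $\uu(\ff(\mu))_e \ge \mu/(14m)$ after careful tracking of constants.

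Finally, the corollaries follow immediately. For any preconditioning edge, $\rr(\ff)_e = (2U - f_p)^{-2} + (2U + f_p)^{-2} \le \uu(\ff)_e^{-2} + (2U)^{-2} < 400m^2/\mu^2$. Since at least $m/2$ preconditioning edges lie directly between $s$ and $t$ in parallel, each with resistance $\le 200m^2/\mu^2$, the effective resistance $\cchi_{st}^\top \mL(\ff)^\dagger \cchi_{st}$ --- which equals the energy of the unit $s$-$t$ electric flow --- is at most $400m/\mu^2$.
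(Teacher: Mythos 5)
The paper does not prove this lemma itself --- it cites \cite{LS20_STOC}, Section B.5 --- so I am evaluating your argument on its own terms. Your overall architecture (symmetry forces a common flow $f_p$ on all preconditioning edges; bound the potential gap $D=\pphi_t-\pphi_s$; invert the KKT identity $D=2f_p/(4U^2-f_p^2)$) is sound, and the envelope-theorem/convexity route to $D(\nu)\le\O(m/\nu)$ is a legitimately different and rather elegant idea: $V^*$ is convex as a partial minimization over an affine family, $D=-dV^*/d\mu$ is therefore nonincreasing in $\mu$ and nonnegative, and $\mu D(2\mu)\le V^*(\mu)-V^*(2\mu)$. The large-$\mu$ comparison with the scaled max flow is also correct. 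However, there are two gaps. The more serious one is the step ``$|f_p|\ge U$ for small $\mu$, forced since the first comparison shows $f_p\to\pm 2U$ as $\mu\to 0$.'' Your first comparison proves $4U^2-f_p^2\ge 4U^2(\mu/F^*)^2$, which is an \emph{upper} bound on $|f_p|$ (it keeps $f_p$ \emph{away} from $\pm 2U$); it cannot force $|f_p|\ge U$, and it says nothing about the sign of $f_p$, which you need in order to know which residual capacity $2U\mp f_p$ is the small one. The fact you need is true and has a short proof you should supply instead: the restriction of $\ff(\mu)$ to non-preconditioning edges is a feasible $s$-$t$ flow in the original graph, hence routes at most $F^*_{\mathrm{orig}}=F^*-2Um_p$ units, so the $m_p$ preconditioning edges carry at least $2Um_p-\mu$ in total, i.e.\ $f_p\ge 2U-\mu/m_p>0$.

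The second gap is quantitative: your bound $V^*(\mu)-V^*(2\mu)\le V^*(\mu)-V(\zzero)\le m\log(F^*/\mu)$ carries a $\log(F^*/\mu)=\Theta(\log m)$ factor for small $\mu$, so you obtain $D(\nu)\le O(m\log m/\nu)$ and hence $\uu(\ff(\mu))_e\ge \Omega\left(\mu/(m\log m)\right)$ rather than the stated $\mu/(14m)$; the same loss propagates to the resistance and energy bounds. This is harmless for how the lemma is consumed downstream (everything is wrapped in $\O(\cdot)$), but it does not establish the lemma with its stated constants, and the slack is intrinsic to bounding $V^*(2\mu)$ below by the global minimum $V(\zzero)$. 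For comparison, the standard argument avoids both issues by pairing directly: since $\mB^\top(\ff^*-\ff(\mu))=\mu\cchi_{st}$, one has $\mu(\pphi_t-\pphi_s)=\sum_e(\ff^*_e-\ff(\mu)_e)\cdot\frac{2\ff(\mu)_e}{\uu_e^2-\ff(\mu)_e^2}$, and each summand is at most $\frac{2|\ff(\mu)_e|(\uu_e-|\ff(\mu)_e|)}{\uu_e^2-\ff(\mu)_e^2}=\frac{2|\ff(\mu)_e|}{\uu_e+|\ff(\mu)_e|}\le 2$ because $\ff^*_e\ff(\mu)_e\le \uu_e|\ff(\mu)_e|$; taking $\ff^*=\zzero$ gives the matching lower bound $\pphi_t-\pphi_s\ge 0$. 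This yields $D\le 2m/\mu$ with clean constants and no case split on the size of $\mu$.
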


In one step of a standard IPM, $\Theta(\mu/\sqrt{m})$ units of flow are routed to decrease the residual flow by a $(1-\frac{1}{\sqrt{m}})$ factor,
and \cref{lemma:precon} ensures that the energy of this flow is $O(1).$

\subsubsection{Central Path Stability}
\label{sec:stability}

Our data structures based approach is built upon an observation that
dates back to the original interior point method by Karmarkar~\cite{K84},
and used in all subsequent faster runtime bounds~\cite{V90,LS15,CLS19,B20b,BLNPSSSW20,BLLSSSW20:arxiv}:
that the residual capacities of the central path, or the resistances,
are slowly changing. We show the necessary stability bounds that we need in this section. Proofs are deferred to \cref{sec:proofs}.

Out first bound is a general bound bounding the total change in flow versus residual capacity over $k/\sqrt{m}$ progress over the central path.
\begin{lemma}
\label{lemma:l2lemma}
Consider a preconditioned graph $G$ and flow values $t$ and $\that$ such that
\[
\muhat
=
\mu - \frac{k \mu}{\sqrt{m}}
\]
for some $1 \le k \le \sqrt{m}/10.$
Then for $\Delta\ff = \ff(\muhat) - \ff(\mu)$ we have
\[ \left[ \left( \Delta \ff \right)^{2} \right]^{\tomato}
\left[ 
\frac{\oone}{\uu^{+} \left( \ff\left( \mu \right) \right) \circ
\uu^{+} \left( \ff\left( \muhat \right) \right)}
+
\frac{\oone}{
\uu^{-} \left( \ff\left( \mu \right) \right) \circ
\uu^{-} \left( \ff\left( \muhat \right) \right)
}\right] \le 500k^2. \]
\end{lemma}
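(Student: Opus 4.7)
The plan is to recognize that $\Delta\ff \defeq \ff(\muhat) - \ff(\mu)$ is itself an electric flow in a particular ``mixed'' resistance matrix $\mR''$ whose diagonal entries equal exactly the bracketed quantity in the claim, and then to bound the resulting energy via the preconditioning edges.

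\textbf{Step 1: Mixed Ohm's law.} Let $\mR''$ be the diagonal matrix with entries
\[
\mR''_e \defeq \frac{1}{\uu^+(\ff(\mu))_e\, \uu^+(\ff(\muhat))_e} + \frac{1}{\uu^-(\ff(\mu))_e\, \uu^-(\ff(\muhat))_e},
\]
so that the left-hand side of the claim is exactly $\Delta\ff^\top \mR'' \Delta\ff$. Subtracting the central-path identity \eqref{eq:central} at $\mu$ and at $\muhat$ and applying the telescoping identities
\[
\frac{1}{\uu^+(\ff(\muhat))_e} - \frac{1}{\uu^+(\ff(\mu))_e} = \frac{\Delta\ff_e}{\uu^+(\ff(\mu))_e \uu^+(\ff(\muhat))_e}, \qquad \frac{1}{\uu^-(\ff(\muhat))_e} - \frac{1}{\uu^-(\ff(\mu))_e} = \frac{-\Delta\ff_e}{\uu^-(\ff(\mu))_e \uu^-(\ff(\muhat))_e},
\]
yields $\mB\bigl(\pphi(\muhat) - \pphi(\mu)\bigr) = \mR'' \Delta\ff$. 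Since also $\mB^\top \Delta\ff = (\mu - \muhat)\cchi_{st}$, this is precisely the statement that $\Delta\ff$ is the electric flow with resistances $\mR''$ meeting the $s$-$t$ demand of size $\mu - \muhat$.

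\textbf{Step 2: Reduction to effective resistance.} Because $\Delta\ff$ is the $\mR''$-electric flow for that demand, one has $\Delta\ff^\top \mR'' \Delta\ff = (\mu - \muhat)^2 \cdot \cchi_{st}^\top\bigl(\mB^\top (\mR'')^{-1} \mB\bigr)^\dagger \cchi_{st}$. Since $(\mu - \muhat)^2 = k^2\mu^2/m$, it is enough to show that the $s$-$t$ effective resistance in $\mR''$ is at most $O(m/\mu^2)$.

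\textbf{Step 3: Effective resistance bound via preconditioning edges.} The effective resistance is bounded by the energy of any feasible $s$-$t$ unit flow, so take the flow that spreads one unit uniformly across the $\ge m/2$ preconditioning edges, each carrying at most $2/m$ units. By \cref{lemma:precon} applied at both $\mu$ and $\muhat$, every preconditioning edge $e$ satisfies $\uu^\pm(\ff(\mu))_e \ge \mu/(14m)$ and $\uu^\pm(\ff(\muhat))_e \ge \muhat/(14m) \ge 9\mu/(140m)$ (using $\muhat \ge 9\mu/10$, which follows from $k \le \sqrt{m}/10$). Hence $\mR''_e = O(m^2/\mu^2)$ on every preconditioning edge, and the energy of the uniform unit flow is $(m/2)\cdot(2/m)^2 \cdot O(m^2/\mu^2) = O(m/\mu^2)$, as needed. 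Combining the three steps gives $\Delta\ff^\top \mR'' \Delta\ff \le O(k^2)$, which matches the claim up to absolute constants; tracking constants carefully recovers the stated $500 k^2$ bound.

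The main conceptual step is Step 1: recognizing that the very ``mixed'' matrix $\mR''$ that appears in the bracketed expression of the lemma is exactly the matrix for which $\Delta\ff$ is an electric flow, so that the desired inequality is literally an energy bound. Once this identity is in hand, Steps 2 and 3 are a direct reduction to bounding $s$-$t$ effective resistance, which is immediate from the preconditioning guarantee of \cref{lemma:precon} because the parallel preconditioning arcs already force the effective resistance down to $O(m/\mu^2)$ uniformly in $\mu' \in [\muhat, \mu]$.
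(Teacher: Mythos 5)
Your proof is correct, and its second half takes a genuinely different route from the paper's. Both arguments begin identically: subtracting the centrality conditions at $\mu$ and $\muhat$ and telescoping gives $\mB\Delta\pphi = \mR''\Delta\ff$ for your mixed resistance matrix $\mR''$, so the quantity to bound is $\Delta\ff^\top\mR''\Delta\ff$. The paper then writes this as $\Delta\ff^\top\mB\Delta\pphi = (\mu-\muhat)\,\lvert\cchi_{st}^\top\Delta\pphi\rvert$ and bounds the voltage drop by differentiating the centrality condition and integrating $\cchi_{st}^\top\mL(\rr(\ff(\nu)))^\dagger\cchi_{st}$ along the central path, invoking the preconditioning energy bound at every intermediate $\nu\in[\muhat,\mu]$. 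You instead observe that $\Delta\ff$ is literally the $\mR''$-electric flow routing $\mu-\muhat$ units and apply Thomson's principle with the uniform flow on the preconditioning edges, which only needs the residual-capacity lower bounds of \cref{lemma:precon} at the two endpoints $\mu$ and $\muhat$; this avoids the calculus along the path and is arguably more self-contained. The one quibble is the constant: a direct accounting of your Step 3 gives roughly $900k^2$ rather than $500k^2$ (on preconditioning edges $\mR''_e \le 2\cdot\frac{14m}{\mu}\cdot\frac{140m}{9\mu}$ and with at least $m/2$ such edges the effective resistance is only bounded by about $872m/\mu^2$). You can get under $500$ by instead noting $\mR''_e \le \tfrac12\bigl(\rr(\ff(\mu))_e + \rr(\ff(\muhat))_e\bigr)$ via AM--GM and using that effective resistance is concave and homogeneous in $\rr$ (hence subadditive), which gives $R_{\mathrm{eff}}(\mR'') \le \tfrac12\bigl(\tfrac{400m}{\mu^2}+\tfrac{400m}{\muhat^2}\bigr) \le \tfrac{447m}{\mu^2}$; in any case the downstream applications only require $O(k^2)$.
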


This bound in turn gives, via Markov's inequality,
an upper bound on the the number of edges
whose resistance change by more than a certain threshold
over several steps of IPM. Here, we count an edge multiple times if its residual capacity swings up and down multiple times over the course of the central path.

\begin{lemma}[$\ell_2$ change in residual capacities]
\label{lemma:l2change}
Let $G$ be a preconditioned graph, let $\mu$ be a path parameter, and define $\muhat = \left(1-\frac{T}{\sqrt{m}}\right)\mu$ for $T \le \sqrt{m}/10.$ For each edge $e$ and $\gamma < 1$, define $\change(e, \gamma)$ to be the largest integer $\tau \ge 0$ such that there are real numbers $\mu \ge \mu^{(1)} > \mu^{(2)} > \dots > \mu^{(\tau+1)} \ge \muhat$ such that for all $i \le \tau,$
\[ \rr(\ff(\mu^{(i)}))_e^{1/2}\left|\ff(\mu^{(i)})_e - \ff(\mu^{(i+1)})_e \right| \ge \gamma. \]
Then $\sum_{e \in E(G)} \change(e, \gamma) \le O(T^2\gamma^{-2}).$
\end{lemma}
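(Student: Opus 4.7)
The plan is to apply \cref{lemma:l2lemma} on a common refinement of the extremal sequences realizing $\change(e,\gamma)$ for each edge, and then aggregate the per-piece $\ell_2$ bounds via Cauchy--Schwarz into a count of changes.

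For each edge $e$, fix a sequence $\mu_e^{(1)} > \mu_e^{(2)} > \cdots > \mu_e^{(\tau_e+1)}$ in $[\muhat, \mu]$ that witnesses $\change(e,\gamma) = \tau_e$, and let $I_{e,i} = [\mu_e^{(i+1)}, \mu_e^{(i)}]$. Form the common refinement $\muhat = \nu_0 < \nu_1 < \cdots < \nu_M = \mu$ obtained by sorting the union of all breakpoints $\mu_e^{(i)}$ over all edges, and write $k_a = \sqrt{m}(1 - \nu_{a-1}/\nu_a)$ and $\Delta\ff_{e,a} = \ff(\nu_a)_e - \ff(\nu_{a-1})_e$. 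Since all breakpoints lie in $[\muhat, \mu]$, one has $\sum_a k_a \le T$.

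Applying \cref{lemma:l2lemma} to each subinterval $[\nu_{a-1}, \nu_a]$ (batching consecutive pieces together to satisfy the $k \ge 1$ hypothesis when needed) and using that the weight appearing in \cref{lemma:l2lemma} is within constants of $\rr(\ff(\nu_a))_e$ across a single piece (by \cref{lemma:stable}-type stability of residual capacities), summing over $a$ yields
\[
\sum_{a=1}^{M} \sum_{e} \rr(\ff(\nu_a))_e (\Delta\ff_{e,a})^2 \;\le\; O\!\left(\sum_a k_a^2\right) \;\le\; O\!\left(\left(\sum_a k_a\right)^2\right) \;=\; O(T^2),
\]
where I used the elementary inequality $\sum k_a^2 \le (\sum k_a)^2$ for nonnegative $k_a$. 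For each change $(e, i)$, let $A_{e,i}$ denote the set of pieces composing $I_{e,i}$; the triangle inequality and Cauchy--Schwarz then give
\[
\gamma^2 \;\le\; \rr_e \bigl(\ff(\mu_e^{(i)})_e - \ff(\mu_e^{(i+1)})_e\bigr)^2 \;\le\; |A_{e,i}|\, \rr_e \sum_{a \in A_{e,i}} (\Delta\ff_{e,a})^2.
\]
Because the intervals $\{I_{e,i}\}_i$ are disjoint for each fixed $e$, summing over $(e,i)$ and unfolding into the global energy bound above yields
\[
\gamma^2 \sum_{e,i} \frac{1}{|A_{e,i}|} \;\le\; \sum_{e,a} \rr_e (\Delta\ff_{e,a})^2 \;\le\; O(T^2).
\]

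The main obstacle I anticipate is converting the bound on $\sum_{e,i} 1/|A_{e,i}|$ into the desired bound on $\sum_{e,i} 1 = \sum_e \tau_e$. I plan to handle this by a dyadic decomposition: group the changes by the scale $|A_{e,i}| \in [2^j, 2^{j+1})$ and bound each dyadic group separately, using that changes with large $|A_{e,i}|$ correspond to $k_{e,i}$ that are correspondingly large and hence contribute more to the energy budget $\sum_a k_a^2$. Trading these two effects against each other should give $\sum_e \tau_e \le O(T^2/\gamma^2)$ without picking up an extraneous factor polynomial in $m$; the $\ell_2$ (rather than $\ell_1$) nature of \cref{lemma:l2lemma} is what lets the budget be shared across edges.
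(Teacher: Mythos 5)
Your overall strategy (bound the total weighted $\ell_2$ flow change over $[\muhat,\mu]$, then lower-bound the contribution of each ``change'' via Cauchy--Schwarz, and divide) is the same as the paper's, but the aggregation step as you have set it up does not close, and the fix you sketch does not work.

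The gap is in how you apply Cauchy--Schwarz. You write $\gamma^2 \le |A_{e,i}|\,\rr_e \sum_{a \in A_{e,i}} (\Delta\ff_{e,a})^2$, so each change contributes only $\gamma^2/|A_{e,i}|$ to the energy budget, and you are left needing to pass from $\sum_{e,i} 1/|A_{e,i}|$ to $\sum_{e,i} 1$. This loss is real: $|A_{e,i}|$ counts how many breakpoints \emph{of other edges} happen to fall inside $I_{e,i}$, which is a purely combinatorial quantity with no relation to the length of $I_{e,i}$ or to the energy it carries. (Take $E$ edges each with $\sqrt{E}$ changes whose breakpoints interleave: then typical $|A_{e,i}| \approx E$, so $\sum 1/|A_{e,i}|$ undercounts $\sum 1$ by a factor of $E$.) Your proposed dyadic rescue rests on the claim that changes with large $|A_{e,i}|$ ``correspond to $k_{e,i}$ that are correspondingly large,'' but that is false for the same reason: an interval $I_{e,i}$ can be arbitrarily short in $\mu$ and still contain many refinement points. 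The correct move is to weight the Cauchy--Schwarz by interval length rather than by piece count: writing each change as an integral (or a sum weighted by $k_a$), one gets
\[
\gamma^2 \le \rr_e\Bigl(\ff(\mu^{(i)})_e - \ff(\mu^{(i+1)})_e\Bigr)^2 \le \bigl(\mu^{(i)}-\mu^{(i+1)}\bigr)\int_{\mu^{(i+1)}}^{\mu^{(i)}} \rr(\ff(\nu))_e\,[d\ff(\nu)_e]^2,
\]
and since $\mu^{(i)}-\mu^{(i+1)} \le \mu - \muhat = T\mu/\sqrt{m}$ uniformly, \emph{every} change contributes at least $\Omega(\gamma^2\sqrt{m}/(T\mu))$ to a total budget of $O(T\sqrt{m}/\mu)$ (the integral of the per-step energy $\|\cchi_{st}\|_{\mL(\ff(\nu))^\dagger}^2 \le 400m/\nu^2$), giving $O(T^2\gamma^{-2})$ directly by Markov. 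This is what the paper does; your unweighted version loses exactly the information that makes the count come out right.

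A secondary issue: the quantity defining $\change(e,\gamma)$ uses $\rr(\ff(\mu^{(i)}))_e$, the resistance at the \emph{left endpoint} of $I_{e,i}$, while the energy integrand carries $\rr(\ff(\nu))_e$ for $\nu$ inside the interval. These can differ by much more than a constant precisely when the edge undergoes a large change, so ``within constants by \cref{lemma:stable}-type stability'' is not available (\cref{lemma:stable} only gives $\poly(k)$ stability over $k/\sqrt{m}$ progress). The paper handles this with an explicit two-case argument: either $\rr(\ff(\nu))_e \ge \rr(\ff(\mu^{(i)}))_e/2$ throughout the interval, or one truncates the interval at the first $\nu$ where the resistance halves, and the same lower bound goes through on the truncated piece. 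You would need an analogous case split.
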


A key lemma used later to prove the correctness of our IPM is
that the residual capacities are polynomially stable along the central path -- we get a multiplicative change of $O(k^2)$ over $k/\sqrt{m}$ progress.
\begin{lemma}[Central path stability]
\label{lemma:stable}
Consider a preconditioned graph $G$ and residual amounts
$\mu$ and $\muhat$ such that
\[
\muhat = \mu - \frac{k \mu}{\sqrt{m}}
\] for some $1 \le k \le \sqrt{m}/10.$
Then the resistances of $\ff(\mu)$ and $\ff(\muhat)$ are
approximated entry-wise on each edge as:
\[
\frac{1}{10^6 k^4} \rr\left(\ff\left(\muhat\right)\right)
\le
\rr\left(\ff\left(\mu\right)\right)
\le
10^6 k^4 \cdot \rr\left(\ff\left(\muhat\right)\right).
\]
\end{lemma}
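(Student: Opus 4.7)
}
The plan is to derive the resistance stability bound from the $\ell_2$ residual-capacity bound of Lemma~\ref{lemma:l2lemma} in two steps: first control the ratio of residual capacities edge-by-edge, then square to control the resistances. Write $\Delta\ff = \ff(\muhat) - \ff(\mu)$ so that $\uu^{+}(\ff(\muhat))_e = \uu^{+}(\ff(\mu))_e - \Delta\ff_e$ and $\uu^{-}(\ff(\muhat))_e = \uu^{-}(\ff(\mu))_e + \Delta\ff_e$.

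Applying Lemma~\ref{lemma:l2lemma} termwise (dropping all but a single edge and a single direction) gives, for each edge $e$,
\[
\frac{(\Delta \ff_e)^2}{\uu^{+}(\ff(\mu))_e \, \uu^{+}(\ff(\muhat))_e} \le 500 k^2
\qquad\text{and}\qquad
\frac{(\Delta \ff_e)^2}{\uu^{-}(\ff(\mu))_e \, \uu^{-}(\ff(\muhat))_e} \le 500 k^2.
\]
Setting $a = \uu^{+}(\ff(\mu))_e$ and $b = \uu^{+}(\ff(\muhat))_e$, we have $|a - b| = |\Delta\ff_e|$, so $(a-b)^2 \le 500k^2 ab$, i.e.\ $a^2 + b^2 \le (500k^2 + 2) ab \le 502 k^2 \cdot ab$. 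Dividing by $ab$ and using $a/b + b/a \ge \max(a/b, b/a)$, we conclude
\[
\max\!\left(\frac{a}{b},\frac{b}{a}\right) \le 502 k^2 \le 10^3 k^2.
\]
The identical argument applied to $\uu^{-}$ yields the matching bound for $\uu^{-}(\ff(\mu))_e$ versus $\uu^{-}(\ff(\muhat))_e$.

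Finally, combining these two entrywise ratio bounds with the definition $\rr(\ff)_e = \uu^{+}(\ff)_e^{-2} + \uu^{-}(\ff)_e^{-2}$ gives
\[
\rr(\ff(\mu))_e = \frac{1}{\uu^{+}(\ff(\mu))_e^2} + \frac{1}{\uu^{-}(\ff(\mu))_e^2} \le (10^3 k^2)^2 \left(\frac{1}{\uu^{+}(\ff(\muhat))_e^2} + \frac{1}{\uu^{-}(\ff(\muhat))_e^2}\right) = 10^6 k^4 \cdot \rr(\ff(\muhat))_e,
\]
and the symmetric inequality follows by swapping the roles of $\mu$ and $\muhat$. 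There is no real obstacle here: Lemma~\ref{lemma:l2lemma} does all the heavy lifting, and the only point requiring care is turning the quadratic inequality $(a-b)^2 \le 500k^2 ab$ into a clean multiplicative ratio bound, which the manipulation $a^2+b^2 \le 502 k^2 ab$ handles with room to spare to produce the advertised constant $10^6$.
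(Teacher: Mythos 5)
Your proof is correct and follows essentially the same route as the paper: both reduce to the termwise consequence of Lemma~\ref{lemma:l2lemma} and then convert the bound $(\Delta\ff_e)^2 \le 500k^2\,\uu^{\pm}(\ff(\mu))_e\,\uu^{\pm}(\ff(\muhat))_e$ into a multiplicative bound on the residual-capacity ratio before squaring. Your algebraic step $(a-b)^2 \le 500k^2 ab \Rightarrow a/b + b/a \le 502k^2$ is a slightly cleaner way of extracting the ratio bound than the paper's threshold/contrapositive argument, but the substance is identical.
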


Note that this immediately implies approximations in the Hessian or Laplacian matrices associated with $\mu$ and $\muhat$:
\[
\mL\left(\ff\left(\mu\right)\right)
\approx_{O\left(k^4\right)}
\mL\left(\ff\left(\muhat\right)\right).
\]
We will show this by proving that the residual capacities
on both sides are within a factor of $O(k^2)$.
\begin{proof}[Proof of \cref{lemma:stable}]
Let $\Delta \ff = \ff(\muhat) - \ff(\mu)$.
We will bound the relative change in residual capacity on an edge by
\[
\frac{\left( \Delta \ff \right)_e^2}
{\uu^{+}\left( \ff\left( \mu \right) \right)_e
\cdot
\uu^{+}\left( \ff\left( \muhat \right) \right)_e
}
=
\frac{\left( \Delta \ff \right)_e^2}
{\left( \uu_e - \ff\left( \mu \right) \right)
\cdot
\left( \uu_e - \ff\left( \muhat \right) \right)
}
.
\]
To see this, WLOG by symmetry, assume $\ff(\mu)_e > 0$, and $\ff(\muhat)_e > \ff(\mu)_e$, as the other case is similar.
This implies $\uu_e - \ff(\muhat)_e < \uu_e - \ff(\mu)_e$
Then for some $\theta > 1$, to get
\[
\frac{\uu_e - \ff\left( \mu \right)_e}{\uu_e - \ff\left(\muhat\right)_e}
>
\theta
\]
is equivalent to 
\[
\uu_e - \ff\left(\muhat\right)_e
<
\frac{1}{\theta} \left( \uu_e - \ff\left( \mu \right)_e \right)
\]
or
\[
\left( \uu_e - \ff\left(\mu\right)_e\right) - \Delta \ff_e
<
\frac{1}{\theta} \left( \uu_e - \ff\left(\mu\right)_e \right)
\]
which upon rearranging gives (assuming $\theta \ge 500$)
\[
\Delta \ff_e
\geq
\left( 1 - \theta^{-1}\right) \left( \uu_e - \ff\left( \mu \right)_e \right)
\geq
0.9 \left( \uu_e - \ff\left( \mu \right)_e \right)
\geq
0.9 \theta \left( \uu_e - \ff\left(\muhat\right)_e \right).
\]
where the last condition follows from plugging the assumption of the ratio
between the two residues back in again.

Since the ratios are all positive, the previous claim gives us the bound
\begin{multline*}
\frac{\uu_e - \ff\left(\mu\right)_e}{\uu_e - \ff\left(\muhat\right)_e}
\le 2\frac{\left( \Delta \ff \right)_e^2}
{\left( \uu_e - \ff\left( \mu \right)_e \right)
\left( \uu_e - \ff\left( \muhat \right)_e \right)}
+
2\frac{\left( \Delta \ff \right)_e^2}
{\left( \uu_e + \ff\left( \mu \right)_e \right)
\left( \uu_e + \ff\left( \muhat \right)_e \right)}\\
\le 2\left[ \left( \Delta \ff \right)^{2} \right]^{\tomato}
\left[ 
\frac{\oone}{\left(\uu-\ff\left( \mu \right) \right) \circ \left(\uu-\ff\left( \muhat \right) \right)}
+
\frac{\oone}{\left(\uu+\ff\left( \mu \right) \right) \circ \left(\uu+\ff\left( \muhat \right) \right)}
\right].
\end{multline*}
The conclusion follows from \cref{lemma:l2lemma}.
\end{proof}

\subsection{Algorithm and Main Theorem Statement}
We present our algorithm \textsc{RecenteringBatch} in \cref{algo:batchsteps} for making $k/\sqrt{m}$ progress along the central path using data structures and prove the main theorem that we need.

\begin{algorithm}[!ht]
\caption{Pseudocode for Taking a Batch of Steps Using Data Structures\label{algo:batchsteps}.}
\SetKwProg{Globals}{global variables}{}{}
\SetKwProg{Proc}{procedure}{}{}
\Globals{}{
	$k = m^{1/328}, \eps_\step = ck^{-3}, \eps_\solve = ck^{-3}, \eps = ck^{-3}\eps_\solve$ for sufficiently small constant $c$. \\
	$\ffbar^\init$: initial central path flow.\\
	$\fftil^\init$: approximate flow corresponding to resistances inside data structures. \\
	$D^\chk_i$ for $i \in [k\eps_\step^{-1}]$: $\Theta(k^4)$ distinct \Checker~data structures (Theorem~\ref{thm:checker}), one per small step. \\
	$D^\loc$: \Locator~data structure (Theorem~\ref{thm:locator}).\\
}
\Proc{$\textsc{RecenteringBatch}(\mu, \ffbar^\init, \fftil^\init, k)$}{
  $\ffhat^{(0)} \assign \ffbar^\init$, $\Ehat \leftarrow \emptyset$.\\
  $\theta \assign \frac{\eps_\step \mu}{\sqrt{m}}.$ \\
  \For{$1 \le i \le \tau \defeq k\eps_\step^{-1} = \Theta(k^4)$ \label{line:isteps}}{
    $\mu \assign \mu - \theta.$ \label{line:ffbar} \\
    $S^{(i)} \assign D^\loc.\textsc{Locate}(G)).$ \label{line:locates}\\
    \For{$e \in S^{(i)}$ \label{line:recenterz}}{
      $\gg^{(i)}_e \assign D^\chk.\textsc{Check}(e).$ \label{line:checkz} \tcp{$\gg^{(i)}_e = 0$ if $e$ is not accepted by $D^\chk_i.$}
      \If{$\gg^{(i)}_e \neq 0$}{
          $\ffhat_e^{(i+1)} \assign \ffhat_e^{(i)} + \theta\gg_e^{(i)}$, $\Ehat \leftarrow \Ehat \cup \{e\}$. \label{line:makef1}\\
          \For{$j=i+1,\ldots, \tau$}{
            $D^\chk_j.\textsc{TemporaryUpdate}(e, \rr(\ffhat^{(i+1)})_e)$,  \label{line:batchupdateres2} \\
          }
          $D^\loc.\textsc{Update}(e, \rr(\ffhat^{(i+1)})_e)$. \label{line:batchupdateres}
        }
    }
    Implicitly set $\ffhat_e^{(i+1)} \assign \ffhat_e^{(i)}$
    for all $e \notin S^{(i)}$\label{line:makef2}.
  }
  $\ffbar^\fin \assign \Recenter(\ffhat^{(\tau)}, \mu).$ \label{line:recenter3} \tcp{\cref{lemma:recenter}}
  \For{$j=1,\ldots, \tau$}{
    Repeat $D^\chk_j.\textsc{Rollback}()$ to undo all changes to $D^\chk_j$ in \textsc{RecenteringBatch}. \label{line:checkerrollback}\\
  }
  $D^\loc.\textsc{Update}(\Ehat, \rr(\fftil^\init)_{\Ehat}).$ \label{line:rollback} \\ 
  \Return $\ffbar^\fin.$
}
\end{algorithm}

The algorithm works as follows.
We first split the larger step of size $k/\sqrt{m}$ into $k\eps_\step^{-1}$ smaller steps. We use the data structures $\Locator$ which we call $D^\loc$ and a separate \Checker~for each of the $k\eps_\step^{-1}$ smaller steps, which we call $D^\chk_i$. For the $i$-th smaller step (for $i \le k\eps_\step^{-1}$) we first call $D^\loc$ (\Locator, \cref{thm:locator}) to return a set $S^{(i)}$ of at most $O(\eps^{-2})$ edges that contains all edges where we would possibly want to update the underlying flow and resistance. Now for each edge $e \in S^{(i)}$ we call the $i$-th \Checker~data structure $D^\chk_i$ to estimate the flow on $e$. Depending on what is returned we make flow updates and pass resistance updates to $D^\loc$ and the later \Checker s $D^\chk_j$ for $j > i$. We want to note that for the sake of randomness issues we discuss later in \cref{sec:proofmain} the updates to $D^\chk_i$ during this phase are \emph{temporary} and we roll them back at the end. Because we show later that \Locator~only needs to work against oblivious adversaries, the updates to $D^\loc$ can be assumed to be essentially permanent.

\begin{theorem}
\label{thm:batch}
Algorithm \textsc{RecenteringBatch} (\cref{algo:batchsteps}) takes as input preconditioned graph $G$,
desired progress parameter $k$,
upper bound on edges updated $C k^{16}$, 
along with access to
\begin{enumerate}
    \item $k\eps_\step^{-1} = O(k^4)$ distinct instantiations of $\textsc{Checker}$ as in Theorem~\ref{thm:checker} with
    threshold $\epsilon = ck^{-6}$, which we call $D^\chk_i$ for $i \in [k\eps_\step^{-1}]$, and
    \item $\textsc{Locator}$ as in Theorem~\ref{thm:locator} with $\epsilon = ck^{-6}$, which we call $D^\loc$.
\end{enumerate}
These \Checker s and \Locator~can take any choice of $\beta$ and $\delta$ as input. Also, the algorithm has access to a central path point $\ffbar^\init \defeq \ff(\mu)$
and flow $\fftil^\init$ satisfying for solve accuracy $\eps_\solve = ck^{-3}$
\[
\norm{
\mR\left(\fftil^\init\right)^{1/2}
\left(\fftil^\init-\ff\left(\mu\right)\right)
}_\infty
\le \eps_\solve/2
\]
such that the current resistances for all the $D^\chk_i$ for $i \in [k\eps_\step^{-1}]$ and $D^\loc$~are $\rr\left(\fftil^\init\right)$.
With high probability, it outputs the central path flow
\[
\ffbar^\fin \defeq \ff\left( \mu - \frac{k}{\sqrt{m}} \mu \right),
\]
changes all of $D^\chk_i$~back to their state at the start of the call to \textsc{RecenteringBatch}, and changes the resistances of $D^\loc$~back to $\rr(\fftil^\init)$. The total cost is
\begin{enumerate}
\item Calling $D^\loc.\textsc{Locate}$ $O(k\eps_\step^{-1}) = O(k^4)$ times, \label{batch:item1}
\item For all $i \in [k\eps_\step^{-1}]$, calling each $D^\chk_i.\textsc{Check}$ on $O(\eps^{-2}) = O(k^{12})$ edges. \label{batch:item2}
\item \label{batch:item3} Calling $D^\chk_i.\textsc{TemporaryUpdate}$ for all $i \in [k\eps_\step^{-1}]$ on sets of edges of total size \\ $O(k\eps_\step^{-1}\eps^{-2}) = O(k^{16})$.
\item Calling $D^\loc.\textsc{Update}$ on sets of edges of total size $O(k\eps_\step^{-1}\eps^{-2}) = O(k^{16})$, \label{batch:item4}
\item Calling $D^\chk_i.\textsc{Rollback}$ for all $i \in [k\eps_\step^{-1}]$ on a total of $O(k\eps_\step^{-1}\eps^{-2}) = O(k^{16})$ temporary edge updates, \label{batch:item5}
\item an overhead of $\O(m)$ from calling \Recenter~as in \cref{lemma:recenter}. \label{batch:item6}
\end{enumerate}
\end{theorem}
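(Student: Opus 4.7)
My plan is to prove the theorem by maintaining inductive invariants that track how closely the algorithm's flow $\ffhat^{(i)}$ and the ``data-structure flow'' $\fftil^{(i)}$ (defined to equal $\ffhat^{(i)}$ on every edge that has ever been updated within the batch, and $\fftil^\init$ elsewhere --- so that $\rr(\fftil^{(i)})$ matches the resistances currently held by $D^\loc$ and $D^\chk_{i+1}$) stay to the true central path flow $\ff(\mu^{(i)})$, where $\mu^{(i)} \defeq \mu - i\theta$. The base case holds because $\ffhat^{(0)} = \fftil^{(0)}$ agrees with $\fftil^\init$ on the un-touched coordinates and with $\ffbar^\init = \ff(\mu)$ on the rest, giving an $\ell_\infty$ residual-capacity error of $\eps_\solve/2$ by hypothesis. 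The target invariant I will propagate is
\[
\norm{\mR(\ff(\mu^{(i)}))^{1/2}(\ffhat^{(i)} - \ff(\mu^{(i)}))}_\infty
\;\le\; \eps_\solve,
\]
together with the analogous statement for $\fftil^{(i)}$; Lemma~\ref{lemma:stable} and the closeness of $\mu^{(i)}$ across the batch let me convert freely between the norms $\mR(\ff(\mu^{(i)}))$, $\mR(\ffhat^{(i)})$ and $\mR(\fftil^{(i)})$ up to $O(k^4)$ factors.

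For the inductive step, I will compare the algorithm's update on edge $e$ against the idealized Newton step at $\ff(\mu^{(i)})$, namely $\theta\cdot \Delta\ff^\star_{(i)}$ where $\Delta\ff^\star_{(i)}$ is the unit $s$-$t$ electric flow with resistances $\rr(\ff(\mu^{(i)}))$. This difference decomposes into three error sources, each of which I bound in the $\mR^{1/2}$-norm per edge per step. (a) \emph{Resistance drift}: replacing $\rr(\ff(\mu^{(i)}))$ by $\rr(\fftil^{(i)})$ perturbs the electric flow, and since $\fftil^{(i)}$ is $\eps_\solve$-close to $\ff(\mu^{(i)})$ in residual-capacity $\ell_\infty$, standard perturbation bounds for electric flows (combined with Lemma~\ref{lemma:precon} giving total energy $O(m/\mu^2)$) control this by $O(\eps_\solve \theta \sqrt{\cE})$. (b) \emph{Missed edges}: for edges not returned by $D^\loc.\textsc{Locate}$, the Locator guarantee gives energy less than $\eps^2\cE/10$, so $\rr_e^{1/2}|\Delta\ff_e| < \eps\sqrt{\cE}$, and the missed update contributes at most $\theta\eps\sqrt{\cE}$ in the $\mR^{1/2}$-norm. (c) \emph{Checker estimate error}: for accepted edges, \cref{eq:chkguarantee} directly bounds the error by $\theta\cdot\eps\sqrt{\cE}/10$; for edges in $S^{(i)}$ that are not accepted, the Checker guarantee forces energy $<\eps^2\cE$, so the analogous bound holds. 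Finally, the standard IPM Newton-step error (the gap between $\theta\Delta\ff^\star_{(i)}$ and $\ff(\mu^{(i+1)}) - \ff(\mu^{(i)})$) is $O(\theta^2 \cdot \sqrt{\cE})$ per step by second-order Taylor expansion along the central path.

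Summing over the $\tau = k\eps_\step^{-1}$ steps and plugging $\sqrt{\cE} = O(\sqrt{m}/\mu)$ and $\theta = \eps_\step\mu/\sqrt{m}$, the total accumulated error is
\[
O\!\left(\tau\cdot\theta \eps \sqrt{\cE}\right)
\;=\; O(k\eps)
\;=\; O(c k^{-5}),
\]
which, combined with the standard Newton-step contribution of $O(\tau\theta^2\sqrt{\cE}) = O(k\eps_\step) = O(ck^{-2})$, stays below $\eps_\solve/2$ for the chosen parameters once $c$ is small. Thus the invariant is preserved, and in particular at $i=\tau$ the input condition of \cref{lemma:recenter} is met: converting to the $\mR(\ffhat^{(\tau)})$-norm using \cref{lemma:stable} gives an $\ell_\infty$ residual-capacity error at most $1/10$, so \Recenter~returns $\ff(\muhat)$ exactly. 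The main obstacle is verifying that the per-step perturbation of the electric flow under the resistance drift (a) is actually bounded as claimed, since one needs to carry out a first-order expansion of the pseudoinverse of the Laplacian around $\mL(\ff(\mu^{(i)}))$ and control it using the energy bound from Lemma~\ref{lemma:precon} rather than a worst-case spectral argument.

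It remains to verify the post-conditions on the data structures and to tally the runtime. For each $D^\chk_j$, every modification made during the batch is via $\textsc{TemporaryUpdate}$ (line~\ref{line:batchupdateres2}), so the loop in line~\ref{line:checkerrollback} applying $\textsc{Rollback}()$ an equal number of times restores each $D^\chk_j$ to its initial state by the guarantee of \cref{thm:checker}. For $D^\loc$, line~\ref{line:rollback} calls $D^\loc.\textsc{Update}$ on the set $\Ehat$ of all edges ever modified, with the original resistances $\rr(\fftil^\init)_{\Ehat}$; by construction $\rr$ on the data structure then agrees with $\rr(\fftil^\init)$ on all edges, as required. The runtime items (\ref{batch:item1})--(\ref{batch:item6}) follow by direct inspection of \cref{algo:batchsteps}: line~\ref{line:locates} fires $\tau = O(k^4)$ times giving (\ref{batch:item1}); line~\ref{line:checkz} fires at most $|S^{(i)}| = O(\eps^{-2}) = O(k^{12})$ times per iteration giving (\ref{batch:item2}); accepted edges yield the $O(\tau\eps^{-2}) = O(k^{16})$ total updates in (\ref{batch:item3}), (\ref{batch:item4}), and (\ref{batch:item5}); and the single call to \Recenter~in line~\ref{line:recenter3} accounts for (\ref{batch:item6}).
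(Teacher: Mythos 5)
There is a genuine gap in the error bookkeeping, and it is precisely the point where the paper's proof has to do something more delicate. Your single invariant $\norm{\mR(\ff(\mu^{(i)}))^{1/2}(\ffhat^{(i)}-\ff(\mu^{(i)}))}_\infty\le\eps_\solve$ cannot be propagated: by your own accounting the ``standard Newton-step'' term contributes $O(\tau\,\theta^2\cE)=O(k\eps_\step)=O(ck^{-2})$ per batch (and after converting norms across steps via \cref{lemma:stable} this picks up another $O(k^2)$ factor), which already exceeds $\eps_\solve=ck^{-3}$ for every $k>1$; shrinking $c$ does not help since both sides scale linearly in $c$. The distance of the iterate to the central path genuinely grows to a constant over a batch, not to $O(\eps_\solve)$. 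But you cannot simply relax the invariant to a constant either, because your resistance-drift term (a) needs the stored resistances $\rr(\fftil^{(i)})$ to be $(1\pm O(\eps_\solve))$-multiplicatively accurate relative to the resistances used in the comparison step; if the iterate is only constant-close to $\ff(\mu^{(i)})$, term (a) becomes $O(\theta\sqrt{\cE})=O(\eps_\step)$ per step and the sum diverges. The root cause is that you compare the algorithm's step to the electric flow computed at the \emph{central path point's} resistances, so deviation from the path feeds back linearly into the step error, and the self-correcting (second-order) nature of the Newton step is lost.

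The paper avoids this with a two-scale decomposition (\cref{lemma:pathstability,lemma:cdmaintain}): it introduces an idealized flow sequence $\ff^{(i)}$ defined by \emph{exact} electric-flow steps taken with the (approximate) resistances actually stored in the data structures. The fine scale $\eps_\solve$ is only used to control $\norm{\mR^{1/2}(\ffhat^{(i)}-\ff^{(i)})}_\infty$ and $\rr^{(i)}\approx_{\eps_\solve}\rr(\ff^{(i)})$, where only the \Locator/\Checker~errors (your (b) and (c)) accumulate, giving $O(k^3\eps)\le\eps_\solve/2$. The second-order errors are charged instead to the \emph{centrality} of $\ff^{(i)}$ (demand error and KKT residual in $\ell_2$-type norms), which only needs to stay below the constant $1/1000$; \cref{lemma:recentererror} then converts constant centrality into the $\ell_\infty$ bound $1/10$ needed for \cref{lemma:recenter}. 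To repair your argument you would need to adopt this intermediate-flow/centrality bookkeeping (or otherwise exploit the quadratic self-correction of Newton steps); the remainder of your write-up --- the combination of the \Locator~and \Checker~guarantees, the rollback/postcondition verification, and the runtime tally --- matches the paper and is fine.
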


To prove this, we show a general claim that we can take several crude $s$-$t$ flow steps,
and still exactly recenter in $\O(m)$ time. This strongly uses the lemma that within $k$ central path steps,
the residual capacities change by at most a multiplicative $O(k^2)$
by \cref{lemma:stable}. We then accumulate the costs of using the data structures. We defer discussion of randomness and adaptivity within these algorithms until \cref{sec:proofmain}.
\begin{lemma}[Inductive bound on centrality and demand errors]
\label{lemma:cdmaintain}
Let
\[
\ff^{\left( 0 \right)}
\defeq
\ff\left( \mu \right)
\]
be a flow on the central path.
For $\tau \defeq k\eps_\step^{-1}$
and each $0 \leq i < \tau$, define $\ff^{(i+1)}$ to be the flow obtained from $\ff^{(i)}$
by taking a step given by $\eps_\solve$-approximate resistances and solver, that is
\[
\ff^{\left( i+1 \right)}
=
\ff^{\left( i \right)}
+
\frac{\eps_\step \mu}{\sqrt{m}}
\left(
\left( \widetilde{\mR}^{\left( i \right)} \right)^{-1}
\mB
\widetilde{\mL}^{\left( i \right) \dag}
\cchi_{st}
\right)
\]
where
\[
\widetilde{\rr}^{\left( i \right)}
\approx_{\eps_\solve}
\rr\left( \ff^{\left( i \right)} \right)
\enspace \text{ and } \enspace
\widetilde{\mL}^{\left( i \right)}
\approx_{\eps_\solve}
\mL\left( \ff^{\left( i \right)} \right).
\]
Then for all $0 \le i \le \tau$ we have
\begin{enumerate}
\item $\ff^{(i)}$ is $1/1000$-centered for path parameter
$\mu^{(i)} \defeq \mu - \frac{i\eps_\step}{\sqrt{m}}\mu$, i.e.
\begin{equation}
\norm{\mB^{\tomato} \ff^{\left( i \right)}
- \left( F^{*} - \mu^{\left(i\right)} \right) \cchi_{st}}
_{\mL \left( \ff^{\left( i \right)} \right)^{\dag}}
\leq
0.001 \label{eq:demandstuff}
\end{equation}
and there exists some $\pphi^{(i)}$ such that
\begin{equation}
\norm{\mB \pphi^{\left( i \right)}
- \left(\frac{\oone}{\uu - \ff^{\left( i\right)}}
    - \frac{\oone}{\uu + \ff^{\left( i\right)}}\right)}_{\left(\mR^{\left( i\right)} \right)^{-1}} 
\leq
0.001. \label{eq:phistuff}
\end{equation}
\item We have that
\[
\norm{\mR(\ff^{\left(i\right)})^{1/2}
\left(\ff^{\left(i\right)}
  -
 \ff\left(\mu^{\left(i\right)} \right) \right)
 }_\infty \le 0.1 \]
and
\[
\mR(\ff^{\left(i\right)})
\approx_{0.2}
\mR\left(\ff\left(\mu^{\left(i\right)}\right)\right). \]
\end{enumerate}
\end{lemma}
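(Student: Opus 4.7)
I will prove both conclusions simultaneously by induction on $i$; the base case $i=0$ is immediate since $\ff^{(0)}=\ff(\mu)$. For the inductive step, observe that Conclusion 2 at $i+1$ follows from Conclusion 1 at $i+1$ via Lemma~\ref{lemma:recentererror}: being $1/1000$-centered yields $\|\mR(\ff^{(i+1)})^{1/2}(\ff^{(i+1)}-\ff(\mu^{(i+1)}))\|_\infty\le 1/10$, and residual capacities within $\exp(0.1)$ give resistances within $\exp(0.2)$. So I focus on propagating Conclusion 1 across one step, assuming both conclusions at step $i$.

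\paragraph{One-step centrality and demand bounds.}
Let $\Delta\ff^{(i)}=\ff^{(i+1)}-\ff^{(i)}$ and $\Delta\pphi^{(i)}=\tfrac{\eps_\step\mu}{\sqrt m}(\widetilde{\mL}^{(i)})^{\dag}\cchi_{st}$, so that $\Delta\ff^{(i)}=(\widetilde{\mR}^{(i)})^{-1}\mB\,\Delta\pphi^{(i)}$; set $\pphi^{(i+1)}=\pphi^{(i)}+\Delta\pphi^{(i)}$. Taylor expanding the barrier gradient $\gg(\ff)_e=\tfrac{1}{\uu_e-\ff_e}-\tfrac{1}{\uu_e+\ff_e}$ around $\ff^{(i)}$ gives $\gg(\ff^{(i+1)})=\gg(\ff^{(i)})+\mR(\ff^{(i)})\Delta\ff^{(i)}+\rrho^{(i)}$ with per-edge remainder $|\rrho^{(i)}_e|=O(\rr(\ff^{(i)})_e^{3/2}(\Delta\ff^{(i)}_e)^2)$. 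The centrality residual at $i+1$ then decomposes as
\[
\mB\pphi^{(i+1)}-\gg(\ff^{(i+1)}) = \bigl(\mB\pphi^{(i)}-\gg(\ff^{(i)})\bigr) + \bigl(\mI-\mR(\ff^{(i)})(\widetilde{\mR}^{(i)})^{-1}\bigr)\mB\,\Delta\pphi^{(i)} - \rrho^{(i)}.
\]
Since $\widetilde{\mR}^{(i)}\approx_{\eps_\solve}\mR(\ff^{(i)})$, the middle term has $\mR(\ff^{(i)})^{-1}$-norm at most $O(\eps_\solve)\cdot\|\mB\,\Delta\pphi^{(i)}\|_{\mR(\ff^{(i)})^{-1}}=O(\eps_\solve\eps_\step)$, where the last equality identifies the norm with the square root of the step's electrical energy and uses $\widetilde{\mL}^{(i)}\approx_{\eps_\solve}\mL(\ff^{(i)})$ together with the preconditioned-graph energy bound of Lemma~\ref{lemma:elecenergy}. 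For the Taylor remainder,
\[
\|\rrho^{(i)}\|_{\mR(\ff^{(i)})^{-1}}^2 \le O\!\left(\sum_e(\rr(\ff^{(i)})_e^{1/2}|\Delta\ff^{(i)}_e|)^4\right) \le O\!\left(\|\mR(\ff^{(i)})^{1/2}\Delta\ff^{(i)}\|_2^4\right) = O(\eps_\step^4),
\]
the last step again by Lemma~\ref{lemma:elecenergy}. The demand-residual increment $\mB^\top\Delta\ff^{(i)}-\tfrac{\eps_\step\mu}{\sqrt m}\cchi_{st}=\tfrac{\eps_\step\mu}{\sqrt m}\bigl(\mB^\top(\widetilde{\mR}^{(i)})^{-1}\mB-\widetilde{\mL}^{(i)}\bigr)(\widetilde{\mL}^{(i)})^{\dag}\cchi_{st}$ has $\mL(\ff^{(i)})^{\dag}$-norm $O(\eps_\solve\eps_\step/\sqrt m)$, since both operators are $\approx_{\eps_\solve}\mL(\ff^{(i)})$.

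\paragraph{Accumulation and main obstacle.}
Iterating, each residual at step $\tau$ is a sum of per-step noise vectors whose local-norms are $O(\eps_\step^2+\eps_\solve\eps_\step)$ and $O(\eps_\solve\eps_\step/\sqrt m)$ respectively. Transporting each into the final norm $\mR(\ff^{(\tau)})^{-1}$ costs a factor of at most $\sqrt{10^6 k^4}=O(k^2)$: Conclusion 2 at step $i$ gives $\mR(\ff^{(i)})\approx_{0.2}\mR(\ff(\mu^{(i)}))$, and Lemma~\ref{lemma:stable} gives $\mR(\ff(\mu^{(i)}))\approx_{O(\log k)}\mR(\ff(\mu^{(\tau)}))$. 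Triangle inequality then yields
\[
\|\mB\pphi^{(\tau)}-\gg(\ff^{(\tau)})\|_{\mR(\ff^{(\tau)})^{-1}} \le O(k^2)\cdot\tau\cdot O(\eps_\step^2+\eps_\solve\eps_\step)=O(k^3(\eps_\step+\eps_\solve))=O(c),
\]
which is at most $1/1000$ for $c$ sufficiently small; the analogous argument bounds the demand residual by $O(c/\sqrt m)$. This closes the induction. The main technical obstacle is managing \emph{three} coexisting sources of metric drift -- the solver's $\widetilde{\mR}^{(i)}\approx_{\eps_\solve}\mR(\ff^{(i)})$, the centered-flow slack $\mR(\ff^{(i)})\approx_{0.2}\mR(\ff(\mu^{(i)}))$, and the central-path stability of Lemma~\ref{lemma:stable} -- so that per-step bounds derived in the local metric combine, via the triangle inequality, into a target-norm bound that stays below $1/1000$ across all $\tau=\Theta(k^4)$ sub-steps.
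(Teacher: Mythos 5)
Your proof follows essentially the same route as the paper's: per-step decomposition of the demand and centrality residuals into a first-order solver/resistance error of size $O(\eps_\solve\eps_\step)$ and a second-order Taylor term of size $O(\eps_\step^2)$ (both via the preconditioned energy bound of \cref{lemma:elecenergy}), followed by an $O(k^2)$ metric transport using \cref{lemma:stable} and item 2, and a sum over $\tau = k\eps_\step^{-1}$ steps giving $O(k^3(\eps_\step+\eps_\solve)) \le 1/1000$. The only substantive deviation is cosmetic (you take the dual step with $(\widetilde{\mL}^{(i)})^\dagger$ rather than the exact $\mL(\ff^{(i)})^\dagger$, which works equally well), plus one quantitative slip: the per-step demand-residual increment has $\mL(\ff^{(i)})^\dagger$-norm $O(\eps_\solve\eps_\step)$, not $O(\eps_\solve\eps_\step/\sqrt m)$, because the prefactor $\eps_\step\mu/\sqrt m$ is cancelled by $\norm{\cchi_{st}}_{\mL^\dagger} = \Theta(\sqrt{m}/\mu)$ rather than by an $O(1)$ quantity; the final conclusion is unaffected since $O(k^2 \cdot \tau \cdot \eps_\solve\eps_\step) = O(k^3\eps_\solve)$ still falls below $1/1000$.
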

\begin{proof}
We proceed by induction. Note that the second item follows from the first and \cref{lemma:recentererror}, so we focus on proving the first item. Start by defining the demand error due to solver and resistance approximations in step $j$:
\[
\err^{\left(j \right)}
=
\frac{\eps_\step \mu}{\sqrt{m}}
\left(\mI
-
\mB^\tomato
\left( \widetilde{\mR}^{\left( j\right)} \right)^{-1}
\mB
\widetilde{\mL}^{\left( j \right)\dagger}
\right)
\cchi_{st}.
\]
Note that the error in demand is the sum of these per vector errors:
\[
\mB^\top \ff^{\left( i \right)}
-
\mu^{\left(i\right)} \cchi_{st}
=
\sum_{0 \leq j < i} \err^{\left(j \right)}.
\]
By the error guarantees that $\widetilde{\mL}^{(j)} \approx_{\eps_\solve} \mL(\ff^{(j)})$ and $\widetilde{\mR}^{(j)} \approx_{\eps_\solve} \mR(\ff^{(j)})$, we have
\begin{align*}
&\left\|\mI - \widetilde{\mL}^{(j) \dagger/2}\mB^\top \left( \widetilde{\mR}^{(j)} \right)^{-1} \mB\widetilde{\mL}^{(j) \dagger/2}\right\|_2
\\ &\le \left\|\mI - \widetilde{\mL}^{(j) \dagger/2}\mL(\ff^{(j)})\widetilde{\mL}^{(j) \dagger/2}\right\|_2 + \left\| \mI - \widetilde{\mL}^{\left( j \right) \dagger/2}\mB^\top\left( \left( \widetilde{\mR}^{\left( j \right)} \right)^{-1} - \left( \mR(\ff^{(j)}) \right)^{-1}\right) \mB\widetilde{\mL}^{\left( j \right) \dagger/2} \right\|_2 \\ 
&\le O(\eps_\solve).
\end{align*}

Therefore, we have that
\[
\norm{\err^{\left(j \right)}}_{\mL^{\left( j \right)^\dagger}}
\le
O\left(\frac{\eps_\solve\eps_\step \mu}{\sqrt{m}}\right)
\norm{\cchi_{st}}_{\mL^{\left( j \right)^\dagger}}
\le O(\eps_\solve\eps_\step)
\] by the energy bound of \cref{lemma:elecenergy}.
Using induction on item 2 and \cref{lemma:stable} we have that
\begin{align*}
\norm{
\sum_{0 \le j < i}
\err^{\left( j \right)}}_{\mL^{\left( i \right)^\dagger}}
&\le 
\sum_{0 \le j < i}
\norm{
\err^{\left( j \right)}}_{\mL^{\left( i \right)^\dagger}}
\le
O(k^2) \cdot 
\sum_{0 \le j < i}
\norm{
\err^{\left( j \right)}}_{\mL^{\left( j \right)^\dagger}}\\
&\le
O(k^2
\cdot
k\eps_\step^{-1}\eps_\solve\eps_\step)
\le O(k^3\eps_\solve)
\le 1/1000.
\end{align*}
for sufficiently small constant $c$ in the definition $\eps_\solve = ck^{-3}.$ This proves \eqref{eq:demandstuff}.

To show \eqref{eq:phistuff} we must bound two errors -- the first order error from resistance and solver approximations, and the second order centrality error from the linear approximation due to using electric flows which is standard in IPMs.
We denote the change with $\Delta \ff^{(j)} = \ff^{(j + 1)} - \ff^{(j)}$, and set $\pphi^{(j+1)} = \pphi^{(j)} + \Delta\pphi^{(j)}$ for
\[
\Delta\pphi^{\left( j \right)}
=
\frac{\eps_\step \mu}{\sqrt{m}}
\mL\left( \ff^{\left( j \right)} \right)^{\dagger} \cchi_{st}.
\]
Defined this way, the change in centrality error can be bounded by
\[
\cerr^{\left( j \right)}
\defeq
\mB\Delta\pphi^{(j)}
-
\left(
\frac{\oone}{\uu - \ff^{\left( j \right)} - \Delta\ff^{\left( j \right)}}
-
\frac{\oone}{\uu + \ff^{\left( j \right)} + \Delta\ff^{\left( j \right)}}
\right)
+
\left(\frac{\oone}{\uu - \ff^{\left( j \right)} }
-
\frac{\oone}{\uu + \ff^{\left( j \right)} }\right).
\]
Taylor expansion shows that the $\mR(\ff^{(j)})^{-1}$-norm of this can be bounded by
\begin{align*}
& \norm{\cerr^{\left(j \right)}}_{\mR\left(\ff^{\left( j \right)}\right)^{-1}}\\
&\le
\norm{\mB\Delta\pphi^{\left( j\right)}
-
\mR\left(\ff^{\left( j\right)}\right) \Delta \ff^{\left(j\right)}}
_{\mR\left( \ff^{\left(j\right)}\right)^{-1}}
+
5 \norm{
\left(\frac{\left( \Delta \ff^{\left( j \right) } \right)^{2}}{\left(\uu - \ff^{\left(j\right)}\right)^3}
+ \frac{\left( \Delta \ff^{\left( j \right) } \right)^{2}}{\left(\uu + \ff^{\left(j\right)}\right)^3}\right)
}_{\mR\left(\ff^{\left(j \right)}\right)^{-1}} \\
&\le
\norm{\mB\Delta\pphi^{\left( j\right)}
-
\mR\left(\ff^{\left( j\right)}\right) \Delta \ff^{\left(j\right)}}
_{\mR\left( \ff^{\left(j\right)}\right)^{-1}}
+ 10 \norm{\mR(\ff^{(j)})^{1/2}\Delta\ff^{(j)}}_4^2\\
& \le
\norm{\mB\Delta\pphi^{\left( j\right)}
-
\mR\left(\ff^{\left( j\right)}\right) \Delta \ff^{\left(j\right)}}
_{\mR\left( \ff^{\left(j\right)}\right)^{-1}}
+ 10 \norm{\mR(\ff^{(j)})^{1/2}\Delta\ff^{(j)}}_2^2\\
&\le
\norm{\mB\Delta\pphi^{\left( j\right)}
-
\mR\left(\ff^{\left( j\right)}\right) \Delta \ff^{\left(j\right)}}
_{\mR\left( \ff^{\left(j\right)}\right)^{-1}}
+
O\left(\eps_\step^2\right),
\end{align*}
where the last inequality follows by the energy bound from \cref{lemma:elecenergy}.
For the first term, we bound
\begin{align*}
&
\norm{\mB\Delta\pphi^{\left( j\right)}
-
\mR\left(\ff^{\left( j\right)}\right) \Delta \ff^{\left(j\right)}}
_{\mR\left( \ff^{\left(j\right)}\right)^{-1}}\\
& =
\frac{\eps_\step \mu}{\sqrt{m}}
\norm{
\mB\mL\left( \ff^{\left( j \right)} \right)^{\dagger} \cchi_{st}
-
\mR\left(\ff^{\left(j\right)}\right)
\left(\widetilde{\mR}^{\left( j \right)} \right)^{-1}
\mB \widetilde{\mL}^{\left( j \right) \dagger} \cchi_{st}
}_{\mR\left(\ff^{\left( j \right)}\right)^{-1}} \\
&\le
\frac{\eps_\step \mu}{\sqrt{m}}
\norm{
\mB\left(
\mL\left(\ff^{\left( j \right)}\right)^{\dagger}
-
\widetilde{\mL}^{\left( j \right)^\dagger} \right) \cchi_{st}
}_{\mR\left(\ff^{\left( j \right)}\right)^{-1}}
\\ &+ \frac{\eps_\step \mu}{\sqrt{m}}
\norm{
\left(\mI-\mR\left(\ff^{\left( j \right)}\right)
\left(\widetilde{\mR}^{\left( j \right)}\right)^{-1}
\right)
\mB\widetilde{\mL}^{\left( j \right)\dagger}
\cchi_{st}
}_{\mR\left(\ff^{\left( j \right)}\right)^{-1}} 
\\
&\le
10\frac{\eps_\step\eps_\solve \mu}{\sqrt{m}}
\norm{\cchi_{st}}_{\mL\left(\ff^{\left( j \right)}\right)^\dagger}+10\frac{\eps_\step\eps \mu}{\sqrt{m}}\le O(\eps_\step\eps_\solve).
\end{align*}
where the first term is by solver error, second term is by resistance approximation error, and last term is by the energy bound in \cref{lemma:elecenergy} again. Therefore by induction on item 2 and \cref{lemma:stable} the total centrality error can also be bounded by
\[
\norm{\sum_{0 \le j < i} \cerr^{\left( j \right)} }_{\mR\left( \ff^{\left( i \right)} \right)^{-1}}
\le O(k^2) \cdot \sum_{0 \le j < i} \norm{\cerr^{\left( j \right)} }_{\mR\left( \ff^{\left( j \right)} \right)^{-1}}
\le O(k^2 \cdot k\eps_\step^{-1} \cdot \left(\eps_\step\eps_\solve+\eps_\step^2\right))
\le
\frac{1}{1000}
\]
for sufficiently small constant $c$ due to the settings of
$\eps_\step = \eps_\solve = ck^{-3}$.
This finishes the proof of item 1, which completes the induction.
\end{proof}

Next, we combine the guarantees of \Checker~and \Locator~into a single statement to argue that the algorithm accurately estimates the flow on edges.
\begin{lemma}[Returning high energy edge estimates]
\label{lemma:chklocate}
Consider a graph $G$ with resistances $\rr$ such that the unit $a$-$b$ electric flow has energy $\cE$, and data structures $D^\chk$ (\Checker) and $D^\loc$ (\Locator) initialized with resistances $\rr$. Define $S \defeq D^\loc.\textsc{Locate}(G)$.
Then with high probability, if $\Delta\ff$ is the $s$-$t$ electric flow routing $\theta$ units, then we have for all $e \in S$ and $\gg_e \defeq D^\chk.\textsc{Check}(e)$ that
\[ \rr_e^{1/2}|\theta\gg_e - \Delta\ff_e| \le \eps \theta\sqrt{\cE}, \]
and for all $e \notin S$ we have $\rr_e^{1/2}|\Delta\ff_e| \le \eps \theta\sqrt{\cE}.$
\end{lemma}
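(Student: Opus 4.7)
The plan is to deduce the lemma directly from the guarantees of \cref{thm:locator} (\Locator) and \cref{thm:checker} (\Checker) by rescaling the $\theta$-unit flow $\Delta\ff$ to the unit $s$-$t$ electric flow $\Delta\ff/\theta$, whose per-edge energy is exactly $\rr_e(\Delta\ff_e/\theta)^2$ and whose total energy is $\cE$ by hypothesis. Both data structures are phrased in terms of this unit flow, so every threshold below will be scaled by $\theta$ at the end.

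First I would handle the case $e\notin S$. The guarantee of \cref{thm:locator} is that $S$ contains, with high probability, every edge whose energy in the unit $s$-$t$ flow is at least $\eps^2\cE/10$. Hence for $e\notin S$ we get $\rr_e(\Delta\ff_e/\theta)^2<\eps^2\cE/10$, which after multiplying by $\theta^2$ and taking square roots yields $\rr_e^{1/2}|\Delta\ff_e|<\eps\theta\sqrt{\cE/10}\le\eps\theta\sqrt{\cE}$, as required.

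Next I would handle the case $e\in S$ by splitting on whether $D^\chk.\textsc{Check}(e)$ accepts. If $e$ is accepted, the \Checker~returns a value $\bg_e$ with $\rr_e^{1/2}|\bg_e-(\Delta\ff/\theta)_e|\le\eps\sqrt{\cE}/10$ by \cref{thm:checker}; setting $\gg_e=\bg_e$ and multiplying by $\theta$ gives $\rr_e^{1/2}|\theta\gg_e-\Delta\ff_e|\le\eps\theta\sqrt{\cE}/10\le\eps\theta\sqrt{\cE}$. If $e$ is not accepted, then by the contrapositive of the \Checker~guarantee the unit-flow energy of $e$ must be strictly less than $\eps^2\cE$ (otherwise the edge would have been forced to be accepted). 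Since the pseudocode sets $\gg_e=0$ in this case, we have $\rr_e^{1/2}|\theta\gg_e-\Delta\ff_e|=\rr_e^{1/2}|\Delta\ff_e|<\eps\theta\sqrt{\cE}$, again matching the desired bound.

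Finally, I would take a union bound over the correctness events of \cref{thm:locator} (the set $S$ captures all heavy edges) and of the at most $|S|=O(\eps^{-2})$ calls to $D^\chk.\textsc{Check}$ guaranteed by \cref{thm:checker}; each event fails with probability at most $n^{-10}$, so all the bounds hold simultaneously with high probability. There is no real obstacle here: the work is entirely in checking that the thresholds $\eps^2\cE/10$, $\eps^2\cE/2$, and $\eps^2\cE$ used by the two data structures line up with the single additive bound $\eps\theta\sqrt{\cE}$ demanded by the lemma, and that the rescaling from unit to $\theta$-unit flow is consistent; the factor $\sqrt{10}$ slack between \Locator's threshold and the final bound is what absorbs the approximation error of \Checker~in the accepted case.
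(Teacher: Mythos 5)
Your proposal is correct and follows essentially the same route as the paper's proof: rescale to the unit flow, invoke the \Locator~guarantee for $e\notin S$, and the \Checker~guarantee (\cref{eq:chkguarantee}) for $e\in S$. In fact you are slightly more thorough than the paper, which only cites \cref{eq:chkguarantee} for $e\in S$ and leaves implicit the sub-case where $e\in S$ but \Checker~does not accept (so $\gg_e=0$ and one must argue the energy is below $\eps^2\cE$); your explicit handling of that sub-case is the right way to fill in the gap.
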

\begin{proof}
By scaling, we assume $\theta = 1.$ Therefore, for edges $e \in S$, the conclusion follows from \cref{eq:chkguarantee} of the guarantees of \Checker~in \cref{thm:checker}.

For $e \notin S$, we need to bound their original electrical energy by $\eps\sqrt{\cE}$. Indeed, by \cref{thm:locator} we know that $D^\loc$ will return any edge that has energy at least $\eps^2\cE/10$ with high probability so $\rr_e^{1/2}|\Delta\ff_e| \le \eps \theta\sqrt{\cE}$ if $e \notin S$, as desired.
\end{proof}

We now prove that $\ffhat^{(i)}$ defined in Algorithm \textsc{RecenteringBatch} (\cref{algo:batchsteps}) lines \ref{line:makef1}, \ref{line:makef2} stays in an $\ell_\infty$ ball of a central path flow at all times.
\begin{lemma}
\label{lemma:pathstability}
Consider flows $\fftil^\init, \ffbar^\init$ satisfying the conditions of \cref{thm:batch}, and define step count $\tau = k\eps_\step^{-1}$. Let $\ffhat^{(i)}$ for $0 \le i \le \tau$ be the sequence of flows in \textsc{RecenteringBatch} (\cref{algo:batchsteps}) with inputs $\fftil^\init, \ffbar^\init$. For steps $0 \le i < \tau$,
let $\rr^{(i)}$ denote the resistances stored in $D^\chk_j$ for $j \ge i+1$ and $D^\loc$ at the start of step $i+1$ in line \ref{line:isteps} of a call to \textsc{RecenteringBatch} (\cref{algo:batchsteps}), and $\mR^{(i)}$ as the corresponding diagonal matrix to $\rr^{(i)}$.
Define the sequence of flow $\ff^{(0)} \defeq \ffbar^\init = \ff(\mu),$ and for $0 \le i < \tau$
\[
\ff^{(i+1)} = \ff^{(i)} + \frac{\eps_\step \mu}{\sqrt{m}}(\mR^{(i)})^{-1}\mB(\mB^\top(\mR^{(i)})^{-1}\mB)^\dagger \cchi_{st}.
\]
Then for all $0 \le i \le \tau$ we have that:
\begin{enumerate}
    \item the tracked flow values are close to the true implicitly
    updated flow values:
    \begin{align}
    \rr\left(\ff^{(i)}\right)^{1/2}_e
    \abs{\ff^{\left(i\right)}_e - \ffhat^{\left(i\right)}_e}
    \le
    \frac{\eps_\solve}{2},
    \qquad
    \forall e
    \label{eq:ffbarclose}
    \end{align}
    \label{item:InductionFlowDistance}
    \item The resistances underlying $D^\chk_j$ and $D^\loc$ at each step are close to the resistances
    induced by $\ff^{(i)}$:
    \[
    \rr^{\left(i\right)}
    \approx_{\eps_{\solve}}
    \rr\left(\ff^{\left(i\right)}\right)
    \]
    \label{item:InductionResistanceDistance}
    \item For $\mu^{(i)} \defeq \mu - \frac{i\eps_\step}{\sqrt{m}}\mu$ we have that $\ff^{(i)}$ is $1/1000$-centered for path parameter $\mu^{(i)}$.
    \label{item:InductionCentrality}
    \item We have $\mR(\ff^{(i)}) \approx_{0.2} \mR(\ff(\mu^{(i)})).$
    \label{item:InductionCentralResistance}
\end{enumerate}
\end{lemma}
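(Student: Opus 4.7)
The proof plan is strong induction on $i$. The base case $i=0$ is immediate: the first item holds with equality since $\ff^{(0)} = \ffhat^{(0)} = \ffbar^\init = \ff(\mu)$; the second follows because the input hypothesis $\|\mR(\fftil^\init)^{1/2}(\fftil^\init-\ff(\mu))\|_\infty \le \eps_\solve/2$ implies multiplicative closeness of residual capacities and hence of $\rr(\fftil^\init)$ to $\rr(\ff(\mu)) = \rr(\ff^{(0)})$; the third holds exactly since $\ff(\mu)$ is centered; and the fourth holds with equality since $\mu^{(0)} = \mu$.

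For the inductive step, I would first establish items three and four at step $i$ by invoking \cref{lemma:cdmaintain} with $\widetilde\rr^{(j)} \assign \rr^{(j)}$ and $\widetilde\mL^{(j)} \assign \mL(\rr^{(j)})$; its hypothesis $\widetilde\rr^{(j)} \approx_{\eps_\solve} \rr(\ff^{(j)})$ is precisely our second item, available by induction for all $j<i$. Since \cref{lemma:cdmaintain} is itself proved by induction on the step number, its step-$i$ conclusion depends only on hypotheses at $j<i$.

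For the first item I would telescope
\[ \ff^{(i)}_e - \ffhat^{(i)}_e = \sum_{j=1}^i \left(\Delta\ff^{(j)}_e - \Delta\ffhat^{(j)}_e\right), \]
where $\Delta\ff^{(j)} \defeq \ff^{(j)} - \ff^{(j-1)}$ is the ideal $s$-$t$ electrical step routing $\theta = \eps_\step\mu/\sqrt{m}$ units with resistances $\rr^{(j-1)}$, and $\Delta\ffhat^{(j)} \defeq \ffhat^{(j)} - \ffhat^{(j-1)}$ is the algorithm's approximation. \cref{lemma:chklocate} (which unifies the \Locator~and \Checker~guarantees across both accepted and unaccepted edges) bounds each term by $(\rr^{(j-1)}_e)^{1/2}|\Delta\ff^{(j)}_e - \Delta\ffhat^{(j)}_e| \le \eps \theta \sqrt{\cE^{(j-1)}}$. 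Combining items two and four at step $j-1$ with \cref{lemma:elecenergy} gives $\cE^{(j-1)} = O(m/\mu^2)$, hence $\theta\sqrt{\cE^{(j-1)}} = O(\eps_\step)$. To convert the norm factor $(\rr^{(j-1)}_e)^{1/2}$ to the required $\rr(\ff^{(i)})_e^{1/2}$, I would chain items two and four at step $j-1$, item four at step $i$ just established, and \cref{lemma:stable}, which yields the multiplicative bound $\rr(\ff^{(i)})_e \le O(k^4)\rr^{(j-1)}_e$, i.e.\ $\rr(\ff^{(i)})_e^{1/2} \le O(k^2)(\rr^{(j-1)}_e)^{1/2}$. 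Summing the telescope over $\tau = k\eps_\step^{-1}$ iterations then gives $\rr(\ff^{(i)})_e^{1/2}|\ff^{(i)}_e - \ffhat^{(i)}_e| \le O(k^3 \eps)$, which is at most $\eps_\solve/2$ for small enough constant $c$ given $\eps = ck^{-3}\eps_\solve$.

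For the second item at step $i$ I would case-split on each edge $e$. If $e$ was never accepted during iterations $1, \ldots, i$, then $\rr^{(i)}_e = \rr(\fftil^\init)_e$ and $\ffhat^{(i)}_e = \ff(\mu)_e$, so the initial hypothesis on $\fftil^\init$ combined with the first item just proved at $i$ gives $\rr^{(i)}_e \approx_{O(\eps_\solve)} \rr(\ff^{(i)})_e$. Otherwise the stored value equals $\rr(\ffhat^{(i)})_e$: the algorithm writes $\rr(\ffhat^{(j)})_e$ at the most recent acceptance step $j$, and $\ffhat$ is unchanged on $e$ afterward, so the first item at $i$ directly gives the conclusion. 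A union bound over the $\tau = O(k^4)$ invocations of \cref{lemma:chklocate} makes the whole argument succeed with high probability. The main technical obstacle is controlling the $O(k^4)$ multiplicative blow-up from \cref{lemma:stable} when converting norm bases across a batch; this blow-up is what forces $\eps = O(k^{-6})$ and drives the final parameter tradeoffs.
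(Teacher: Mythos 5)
Your proposal is correct and follows essentially the same route as the paper's proof: strong induction, \cref{lemma:cdmaintain} plus \cref{lemma:recentererror} for items 3--4, a telescoped per-step application of \cref{lemma:chklocate} with the $O(k^2)$ norm-conversion factor from \cref{lemma:stable} for item 1, and a case split on whether the edge's resistance was ever rewritten for item 2. The only (harmless, arguably cleaner) difference is that you establish items 3--4 at step $i$ before items 1--2, which makes the dependence of the norm-conversion step on item 4 at step $i$ explicit rather than implicit.
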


\begin{proof}
We prove the conclusion by strong induction on $i$.

The base case follows from $\ff^{(0)} = \ffbar^\init$.
For the inductive case, assume the above hypotheses hold for all
earlier iterations, $j < i$.

\paragraph{
Item~\ref{item:InductionFlowDistance}: flow values are close.}

Define the ``true'' electric flow at step $i$ as
\[
\Delta\ff^{\left(i\right)}
\defeq
\frac{\eps_\step \mu}{\sqrt{m}}\left(\mR^{\left(i\right)}\right)^{-1}
\mB\left(\mB^\top\left(\mR^{\left(i\right)}\right)^{-1}\mB\right)^\dagger
\cchi_{st}.
\]

For the purpose of analysis, extend the vector $\gg^{(j)}$ defined in line \ref{line:checkz} of algorithm \textsc{RecenteringBatch} (\cref{algo:batchsteps}) to all coordinates.
That is, we create $\gg^{(j)} \in \R^E$ with $\gg^{(j)}_{S^{(j)}}$
the same as returned by the checker, and
$\gg^{(j)} = 0$ if $e \notin S^{(j)}$ at step $j$.
Then the error breaks down into
\begin{align*}
\rr\left(\ff^{\left(i\right)}\right)^{1/2}
\abs{\ff^{\left(i\right)}_e - \ffhat^{\left(i\right)}_e}
&\le
\sum_{j < i}
\rr(\ff^{\left(i\right)})^{1/2}
\abs{\Delta\ff^{\left(j\right)}_e - \gg^{\left(j\right)}_e
}\\
&\le \left(\max_{j < i} \sqrt{\frac{\rr(\ff^{\left(i\right)})}
{\rr(\ff^{\left(j\right)})}}\right)
\sum_{j < i} \rr\left(\ff^{\left(j\right)}\right)^{1/2}
\abs{\Delta\ff^{\left(j\right)}_e - \gg^{\left(j\right)}_e},
\end{align*}
where the last inequality is via direct manipulation.

Incorporating the ratio to true central path resistances given by
Item~\ref{item:InductionCentralResistance} of the inductive hypothesis
then gives the previous expression is
\[
\le O\left(\max_{j < i} \sqrt{\frac{\rr(\ff(\mu^{\left(i\right)}))}{\rr(\ff(\mu^{\left(j\right)}))}}\right) \sum_{j < i} \rr(\ff^{\left(j\right)})^{1/2} \abs{\Delta\ff^{\left(j\right)}_e - \gg^{\left(j\right)}_e}.
\]
Applying the stability of central path resistances
from \cref{lemma:stable} shows the previous expression is
\[
\le O(k^2) \cdot \sum_{j < i} \rr(\ff^{(j)})^{1/2} \abs{\Delta\ff^{(j)}_e - \gg^{(j)}_e}.
\]
By \cref{lemma:precon},
we can set $\theta = \frac{\eps_\step \mu}{\sqrt{m}}$
and $\cE = \frac{400m}{\mu^2}$ as the parameters of \cref{lemma:chklocate},
which gives an error of $\epsilon \theta \sqrt{\cE}$ per step,
for a total of
\[
\le O\left(k^2\right) \cdot k\eps_\step^{-1} \cdot
O\left( \eps \theta\sqrt{\cE}\right)
\le O\left(k^2 \cdot k\eps_\step^{-1} \cdot \eps \theta\sqrt{\cE}\right)
= O\left(k^3\eps\right)
\le \eps_\solve/2
\]
for small enough constant $c$ in the definition $\eps = ck^{-3}\eps_\solve$ of \textsc{RecenteringBatch} (\cref{algo:batchsteps}).

\paragraph{
Item~\ref{item:InductionResistanceDistance}: resistances are close.}

We show this for each edge $e$.
Because the initial flows $\ffbar$ and $\fftil$ may also differ,
there are two cases to consider: whether $e$ has already been updated
in some previous iteration, or whether it has carried over from the
very start.
We consider these cases separately:
\begin{itemize}
\item If $e \in S^{(j)}$ for some $j < i$,
then in steps after $j$, we have by line \ref{line:batchupdateres2} and line \ref{line:batchupdateres} of \textsc{RecenteringBatch} (\cref{algo:batchsteps}) that
\[
\rr^{\left(i\right)}_e
=
\rr\left(\ffhat^{\left(i\right)}_e\right)
\]
 so the claim follows the error between $\ffhat$ and $\ff$
 given by Item~\ref{item:InductionFlowDistance}.
\item 
The other case is that edge $e$ is never involved in a resistance
update in Line~\ref{line:recenterz}, i.e.
\[
\rr^{\left(i\right)}_e
= \rr\left(\fftil^\init\right)_e.
\]
In this case, the relative change on $e$
at each step $j < i$ is at most
\[
4 \cdot \rr\left(\ff^{\left(j\right)}\right)_e^{1/2}
\abs{\Delta\ff^{\left(j\right)}_e}
\leq
8 \left(\rr^{\left(j\right)}\right)^{1/2}_e
\abs{\Delta\ff^{\left(j\right)}_e}
\]
where the last inequality follows from Item~\ref{item:InductionResistanceDistance}
of the inductive hypothesis.
Applying~\cref{lemma:chklocate} as above allows us to bound this by
\[
\leq
O\left( \eps \theta\sqrt{\cE} \right)
= O(\eps \cdot \eps_\step).
\]
Therefore, the total multiplicative resistance change over all steps is bounded by
\[
O\left(\eps \cdot \eps_\step \cdot k\eps_\step^{-1}\right)
=
O\left(k\eps\right)
\le \frac{\eps_\solve}{3}
\]
for small enough constant $c$ in the definition $\eps = ck^{-3}\eps_\solve$.
As we initially had
\[
\rr\left(\fftil^\init\right)^{1/2}_e
\abs{\fftil^\init_e - \ffhat^{\left(0\right)}}
=
\rr\left(\fftil^\init\right)^{1/2}_e
\abs{\fftil^\init_e - \ff\left(\mu \right)}
\le \eps_\solve/2,
\]
the total multiplicative approximation between resistances
is $\eps_\solve$, as desired.
\end{itemize}

It remains to show that these imply the last two inductive hypothesis
for $i$:
\begin{enumerate}
    \item The inductive hypothesis gives that the resistances in steps
    $1 \ldots i$ satisfy the requirements of \cref{lemma:cdmaintain},
    so by it, we get Item~\ref{item:InductionCentrality}:
    that $\ff^{(i)}$ is $1 / 1000$-centered.
    \item \cref{lemma:recentererror} then gives that
    this centrality error in turn implies all resistances are
    close to true ones, giving Item~\ref{item:InductionCentralResistance}.
\end{enumerate} 
Thus the inductive hypothesis holds for step $i$ as well.

\end{proof}

We can combine these pieces and analyze the runtime costs to show \cref{thm:batch}.
\begin{proof}[Proof of \cref{thm:batch}]
To show that $\ffbar^\fin = \ff(\mu - k\mu / \sqrt{m})$,
it suffices to combine \cref{lemma:pathstability} \eqref{eq:ffbarclose} and \cref{lemma:recenter}, as $\ffbar^\fin$ is computed by recentering $\ffhat^{(\tau)}$ in line \ref{line:recenter3}.
Additionally, the state of the $D^\chk_i$~is rolled back to the original state in \ref{line:checkerrollback} and resistances of $D^\loc$~are updated to $\rr(\fftil^\init)$ in line \ref{line:rollback} of Algorithm \textsc{RecenteringBatch} (\cref{algo:batchsteps}).

To complete the proof of \cref{thm:batch} it suffices to analyze the total costs. We do this by items, as in \cref{thm:batch}.
\begin{enumerate}
    \item $D^\loc.\textsc{Locate}(G)$ is called once in line \ref{line:checkz} per each step of the while loop starting in line \ref{line:isteps}. This is $O(k\eps_\step^{-1}) = O(k^4)$ times.
    \item Each call to $D^\loc.\textsc{Locate}(G)$ returns a set $S^{(i)}$ of size $O(\eps^{-2})$ by the guarantees in \cref{thm:locator}. Hence $D^\chk_i.\textsc{Check}$ is called $O(\eps^{-2})$ edges for each $i \in [k\eps_\step^{-1}].$
    \item As above, the sets $S^{(i)}$ in line \ref{line:locates} is $O(\eps^{-2})$. Each edge contributes one update in lines \ref{line:batchupdateres2} per step for $k\eps_\step^{-1}$ steps, for a total of $O(k\eps_\step^{-1}\eps^{-2}) = O(k^{16})$ calls to $D^\chk_j.\textsc{TemporaryUpdate}$ per $j$.
    \item By the same discussion as the previous item, $D^\loc.\textsc{Update}$ is called in line \ref{line:batchupdateres} on $O(k\eps_\step^{-1}\eps^{-2})$ total edges. Additionally, $D^\loc.\textsc{Update}$ was called again in line \ref{line:rollback} the undo the same $|\Ehat| = O(k\eps_\step^{-1}\eps^{-2})$ edges.
    \item Rollbacks happen to edges that were updated in line \ref{line:batchupdateres2}, which has size $O(k\eps_\step^{-1}\eps^{-2}) = O(k^{16})$.
    \item Line \ref{line:recenter3} uses $\O(m)$ time by \cref{lemma:recenter}.
\end{enumerate}
\end{proof}
\section{Finding the Maxflow with Batched Steps}
\label{sec:proofmain}

In this section, we use recentering batches in \cref{thm:batch} to build a method to solve the maxflow problem. Then we analyze the total costs to prove the algorithm runs in time $\O(m^{\frac32-\frac{1}{328}})$, showing \cref{thm:main}.
The pseudocode, written as the decision version
(via the standard binary search reduction)
of checking if $F$ units of flow can be routed is in Algorithm~\ref{algo:main} which calls the batched steps in algorithm \textsc{RecenteringBatch} in \cref{algo:batchsteps} from \cref{sec:pathcorrect}.

In \cref{algo:main} below we assume that the thresholds in lines \ref{line:isteps3}, \ref{line:isteps1}, \ref{line:isteps2} of \cref{algo:main} satisfy: $c\eps_\solve\sqrt{{\betaloc} m}/k$ is a multiple of $c\eps_\solve\sqrt{\betachk} m/k$, and this is a multiple of $c\eps_\solve\sqrt{\delta^{-1}\eps}/k$. Also assume these are all integers.

\begin{algorithm}[!ht]
\caption{Pseudocode for Augmenting Electrical Flow Using Batched Recentering\label{algo:main}.}
\SetKwProg{Globals}{global variables}{}{}
\SetKwProg{Proc}{procedure}{}{}
\Globals{}{
  $\ffbar \leftarrow 0$, $\mu \leftarrow F$: Exact central path flows and residual flow. \\
  $k \assign m^{1/328}, \eps_\step \assign ck^{-3}, \eps_\solve \assign ck^{-3}, \eps \assign ck^{-3}\eps_\solve = \Theta(k^{-6})$. \\
  $\betaloc \assign k^{-16}, \betachk \assign k^{-28}, \delta \assign k^{-38}.$\\
  $i\assign 0$: step counter. \\
  $D^\chk_j$ for $j \in [k\eps_\step^{-1}]$: $\Theta(k^4)$ distinct \Checker~data structures (Theorem~\ref{thm:checker}), one per small step. Initialized with resistances $\rr(\zzero)$. \\
  $D^\loc$: \Locator~data structure (Theorem~\ref{thm:locator}). Initialized with resistances $\rr(\zzero).$\\
}
\Proc{$\textsc{FindFeasibleFlow}(G, \uu, F)$}{
  \While{$\mu > 1$ \label{line:startmainwhile}}{
    $\ffbar \assign \textsc{RecenteringBatch}(\mu, \ffbar, \fftil, k)$. \label{line:recenterbatch} \\
    $\mu \assign \mu - \frac{k \mu}{\sqrt{m}}.$ \\
    \If{$i$ is a multiple of $c\eps_\solve\sqrt{\betaloc m}/k = \Theta(k^{-12}\sqrt{m})$ \label{line:isteps3}}{
        $\fftil \leftarrow \ffbar.$\\
        $D^\loc.\textsc{Initialize}(G, \rr(\ffbar), \eps, \betaloc, \delta).$ \label{line:initializelocator} \\
        For $j=1, \ldots, k\eps_\step^{-1}$ do $D^\chk_j.\textsc{Initialize}(G, \rr(\ffbar), \eps, \betachk, \delta).$     \label{line:initializechecker} \\
        }
    \ElseIf{$i$ is a multiple of $c\eps_\solve\sqrt{\betachk m}/k = \Theta(k^{-18}\sqrt{m})$ \label{line:isteps1}}{
        $Z \assign \{e : \rr(\fftil)^{1/2}|\fftil_e - \ffbar_e| \ge \eps_\solve/8 \}$. \label{line:slack8} \\
        Assign $\fftil_e \assign \ffbar_e$ for all $e \in Z$. \\
        $D^\loc.\textsc{BatchUpdate}(Z, \rr(\ffbar)_Z).$         \label{line:batchupdatelocator1} \\
        For $j=1, \ldots, k\eps_\step^{-1}$ do $D^\chk_j.\textsc{Initialize}(G, \rr(\ffbar), \eps, \betachk, \delta).$     \label{line:initializechecker2} \\
    } \ElseIf{$i$ is a multiple of $c\eps_\solve\sqrt{\delta^{-1}\eps}/k = \Theta(k^{12})$ \label{line:isteps2}}{
        $Z \assign \{e : \rr(\fftil)^{1/2}|\fftil_e - \ffbar_e| \ge \eps_\solve/4 \}$. \label{line:slack4} \\
        $D^\loc.\textsc{BatchUpdate}(Z, \rr(\ffbar)_Z).$ \label{line:batchupdatelocator2} \\
        \For{$e \in Z$}{
            $\fftil_e \assign \ffbar_e$. \\
            For $j=1, \ldots, k\eps_\step^{-1}$  do $D^\chk_j.\textsc{Update}(e, \rr(\ffbar)_e).$ \label{line:updatechecker} \\
        }
    } \Else{
        $Z \assign \{e : \rr(\fftil)^{1/2}|\fftil_e - \ffbar_e| \ge \eps_\solve/2 \}$. \label{line:slack2}\\
        \For{$e \in Z$}{
            $\fftil_e \assign \ffbar_e.$\\
            $D^\loc.\textsc{Update}(e, \rr(\ffbar)_e)$.\label{line:updatelocator}\\
            For $j=1, \ldots, k\eps_\step^{-1}$ do
                $D^\chk_j.\textsc{Update}(e, \rr(\ffbar)_e).$ \label{line:updatechecker2} \\
        }
    }
    $i\assign i+1.$\\
    
  }
  Round $\ffbar$ to an integral flow and run $O(1)$ rounds of augmenting paths to finish. \label{line:round}
}
\end{algorithm}
For a holistic understanding of the algorithm,
all key parameters in it, and lower level function calls
(to both the recentering batch, and the data structures),
are listed in the appendix (with polylog factors omitted)
in Table~\ref{tab:parameters}.

At a high level \cref{algo:main} uses the \textsc{RecenteringBatch} procedure of \cref{thm:batch} and \cref{algo:batchsteps} a total of $\O(\sqrt{m}/k)$ times to compute the optimal flow. After one of these steps the algorithm decides how to update the internal resistances of the $D^\chk_j$~and $D^\loc$~data structures. To ensure that not too many changes are passed to these data structures, the algorithm essentially only changes edges whose resistances have changed by more than $\Omega(\eps_\solve)$ multiplicatively from the last change, but the precise thresholds for changing the resistances depend on the total of updates the data structures $D^\loc$~and $D^\chk_j$~have received. For example, when $D^\loc$~has received $\betaloc m$ terminal updates we rebuild both $D^\loc$~and $D^\chk_j$~and reset their resistances to the exact resistances induced by the central path flow $\ffbar$. Additionally, there are intermediate thresholds for rebuilding the $D^\chk_j$~data structures only, and for \textsc{BatchUpdate} calls to $D^\loc$~because it has a high amortized runtime.

Our correctness proof will use the stability lemmas along the central path proven in \cref{sec:stability}. Also, we observe that in \textsc{FindFeasibleFlow} (\cref{algo:main}) the flows $\ffbar$ and $\fftil$ are deterministic.
\begin{obs}
\label{obs:deterministic}
At any time of the execution of \textsc{FindFeasibleFlow} (\cref{algo:main}) the flows $\fftil, \ffbar$, and the resistances in $D^\chk_j$ for all $j$ are deterministic.
\end{obs}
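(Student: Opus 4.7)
The plan is induction on the iteration counter $i$ of the main \textbf{while} loop of \textsc{FindFeasibleFlow}, showing that at the beginning (equivalently, at the end) of each iteration, the quantities $\ffbar$, $\fftil$, and the resistance vector stored inside every $D^\chk_j$ are deterministic functions of the input $(G,\uu,F)$ alone, conditional on the high-probability success events of every data-structure call made so far. Since the algorithm performs $\O(\sqrt{m}/k)$ iterations, a union bound over those events yields the claim throughout the algorithm with high probability. Note that $\mu$ and $i$ are updated by deterministic rules, so they are deterministic automatically.

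The base case is immediate from initialization: $\ffbar \leftarrow \zzero$ is deterministic, $\fftil$ is unassigned until the branch at line~\ref{line:initializelocator} writes into it the deterministic value $\ffbar$, and each $D^\chk_j$ is created with the deterministic resistance vector $\rr(\zzero)$.

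For the inductive step, the key tool is \cref{thm:batch}: with high probability the call $\textsc{RecenteringBatch}(\mu,\ffbar,\fftil,k)$ at line~\ref{line:recenterbatch} returns exactly $\ff(\mu - k\mu/\sqrt m)$, the unique KKT minimiser of the logarithmic barrier in \eqref{eq:logbarrier}, which is a purely deterministic object. The same theorem further guarantees that every $D^\chk_j$ is rolled back to its state at the start of the call, so any randomised internal modification to $D^\chk_j$ occurring during the batch is undone. Hence, conditional on the inductive hypothesis and the success event of the batch, the updated $\ffbar$ is deterministic and the resistances stored in each $D^\chk_j$ are unchanged from the start of the iteration. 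The branch selected among lines \ref{line:isteps3}--\ref{line:isteps2} depends only on $i$, and in every branch the set $Z$ in lines \ref{line:slack8}, \ref{line:slack4}, or \ref{line:slack2} is defined by threshold comparisons on $\rr(\fftil)^{1/2}|\fftil_e-\ffbar_e|$, hence is deterministic by the inductive hypothesis together with the previous sentence. All subsequent writes to $\fftil$ merely copy entries of $\ffbar$, and all calls of the form $D^\chk_j.\textsc{Initialize}(\cdot,\rr(\ffbar),\ldots)$ or $D^\chk_j.\textsc{Update}(e,\rr(\ffbar)_e)$ take only deterministic inputs. Therefore $\fftil$ and the resistance vectors inside every $D^\chk_j$ remain deterministic at the end of the iteration, closing the induction.

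The one delicate point --- not a real obstacle, but worth emphasising --- is the distinction between \emph{stored resistances} and \emph{internal randomness}. Even with the claim proved, the internal random state of each $D^\chk_j$ (sampled random walks, Schur-complement sparsifiers, heavy-hitter sketch vectors) and of $D^\loc$ is still randomised and may depend on history. The observation asserts only that the resistances, together with $\fftil$ and $\ffbar$, are deterministic; this is precisely the property that the subsequent adaptivity analysis needs, because it ensures the sequence of queries and updates fed into $D^\loc$ between consecutive rebuilds is an oblivious (in fact deterministic) function of the input.
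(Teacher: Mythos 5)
Your proof is correct and follows essentially the same route as the paper's: determinism of $\ffbar$ via the uniqueness of the central path point returned by \cref{thm:batch}, determinism of $\fftil$ because its updates depend only on $\ffbar$, and determinism of the $D^\chk_j$ resistances because \cref{thm:batch} guarantees all batch-time changes are rolled back. Your version merely formalizes this as an explicit induction over iterations (with conditioning on the high-probability events) and adds a clarifying remark distinguishing stored resistances from internal random state, both of which are consistent with, but not beyond, the paper's argument.
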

\begin{proof}
The flow $\ffbar$ in line \ref{line:recenterbatch} is deterministic because it is on the central path (\cref{thm:batch}) and the value of $\mu$ is deterministic. The updates to $\fftil$ depend only on $\ffbar$ during the time steps, so $\fftil$ is also deterministic. The determinism of resistances in $D^\chk_j$~follows because \cref{thm:batch} tells us that the resistance and terminal changes to $D^\chk_j$~during the batched steps in \textsc{RecenteringBatch} (\cref{algo:batchsteps}) are rolled back via $D^\chk_j$.\textsc{Rollback}.
\end{proof}

Towards proving the correctness of \textsc{FindFeasibleFlow} (\cref{algo:main}) we first claim that the necessary conditions for \Checker~(\cref{thm:checker}) and \Locator~(\cref{thm:locator}) are satisfied.
\begin{lemma}[Conditions of \Checker~and \Locator]
\label{lemma:dscorrect}
Throughout an execution of \textsc{FindFeasibleFlow} in \cref{algo:main} and its calls to Algorithm \textsc{RecenteringBatch} in \cref{algo:batchsteps}, the conditions of every instantiation of \Checker~(\cref{thm:checker}) and \Locator~(\cref{thm:locator}) are satisfied. In particular, 
\begin{enumerate}
    \item Each instantiation of $D^\chk_j$~receives at most $O(\betachk m)$ edge updates.
        \label{item:num_of_updates_checker}
    \item Each instantiation of $D^\loc$~receives at most $O(\betaloc m)$ edge updates.
        \label{item:num_of_updates_locator}
    \item The calls to each instantiation of $D^\chk_j$~are made by an oblivious adversary.
        \label{item:obliviousness_of_adversary_checker}
    \item The calls to each instantiation of $D^\loc$~are made by an oblivious adversary.
        \label{item:obliviousness_of_adversary_locator}
\end{enumerate}
\end{lemma}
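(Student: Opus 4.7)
The proof splits naturally into two parts: bounding the number of edge updates received by the data structures (items~\ref{item:num_of_updates_checker} and~\ref{item:num_of_updates_locator}) and establishing obliviousness (items~\ref{item:obliviousness_of_adversary_checker} and~\ref{item:obliviousness_of_adversary_locator}). The plan is to handle each item with a separate argument, leaning on the central-path stability lemmas from \cref{sec:stability} for the first part and on the careful structural separation of randomness in \textsc{FindFeasibleFlow} for the second part.

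For items~\ref{item:num_of_updates_checker} and~\ref{item:num_of_updates_locator}, the idea is to apply \cref{lemma:l2change} between consecutive re-initializations. Each batch decreases $\mu$ multiplicatively by $k/\sqrt{m}$, i.e., corresponds to $k$ small steps of size $1/\sqrt{m}$. Between re-initializations of $D^\chk_j$ (which happen every $\Theta(\eps_\solve\sqrt{\betachk m}/k)$ batches by the conditions in lines \ref{line:isteps3} and \ref{line:isteps1}), the total number of small steps is $T_{\chk}=\Theta(\eps_\solve\sqrt{\betachk m})$. Each permanent update to $D^\chk_j$ comes from line \ref{line:updatechecker} or \ref{line:updatechecker2}, both triggered only when $\rr(\fftil)^{1/2}|\fftil_e-\ffbar_e|\ge \Omega(\eps_\solve)$. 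Using that $\ffbar$ is exactly on the central path and that $\fftil$ is a stale snapshot, such an event implies that the sum of $\rr(\ff(\mu^{(j)}))^{1/2}|\ff(\mu^{(j)})_e-\ff(\mu^{(j+1)})_e|$ over the intervening small steps is $\Omega(\eps_\solve)$. Hence each update contributes at least $\Omega(1)$ to $\change(e,\Omega(\eps_\solve))$, and \cref{lemma:l2change} bounds the total by $O(T_{\chk}^2/\eps_\solve^2)=O(\betachk m)$. An analogous argument with $T_{\loc}=\Theta(\eps_\solve\sqrt{\betaloc m})$ and thresholds $\Omega(\eps_\solve)$ bounds the permanent updates to $D^\loc$ (from lines \ref{line:batchupdatelocator1}, \ref{line:batchupdatelocator2}, and \ref{line:updatelocator}) by $O(\betaloc m)$.

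For item~\ref{item:obliviousness_of_adversary_checker}, the plan is to verify that each $D^\chk_j$ only receives inputs independent of its own randomness. The randomness of each $D^\chk_j$ is drawn independently across $j$ and independently of $D^\loc$. Every call to $D^\chk_j.\textsc{Initialize}$ and $D^\chk_j.\textsc{Update}$ in the main loop uses $\rr(\ffbar)$, which is deterministic by \cref{obs:deterministic}. Within a batch, $D^\chk_j.\textsc{TemporaryUpdate}$ and $D^\chk_j.\textsc{Check}$ receive inputs that depend on $D^\loc$ and on outputs of $D^\chk_i$ for $i<j$, but not on $D^\chk_j$'s own randomness; and these temporary changes are rolled back in line \ref{line:checkerrollback} of \cref{algo:batchsteps}, so $D^\chk_j$'s persistent state remains a deterministic function of $\ffbar, \fftil$.

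For item~\ref{item:obliviousness_of_adversary_locator}, the plan is the \cite{FMPSWX18}-style argument sketched in \cref{sec:overviewadaptive}: couple the actual execution to an alternative algorithm that skips $D^\loc.\textsc{Locate}$ and instead invokes $D^\chk_i.\textsc{Check}$ on every edge. By \cref{lemma:chklocate}, the gap between the \Locator's inclusion threshold ($\eps^2/10$) and the \Checker's rejection threshold ($\eps^2/2$) guarantees that with high probability the set of edges actually accepted by a \Checker is identical in the two algorithms. Since the sequence of resistance updates passed into $D^\loc$ is a deterministic function of the \Checker outputs and of $\ffbar,\fftil$, and the \Checker randomness is independent of $D^\loc$'s randomness, $D^\loc$ indeed faces an oblivious adversary. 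The main obstacle will be carefully converting the $\fftil$-based threshold in the main loop to a statement about central-path flow differences so that \cref{lemma:l2change} applies cleanly, and ensuring that the coupling argument for item~\ref{item:obliviousness_of_adversary_locator} is not contaminated by the rare failure events across all batches.
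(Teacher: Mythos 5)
Your overall decomposition and all four arguments match the paper's proof: items 1--2 are handled by splitting updates by their trigger and invoking \cref{lemma:l2change} with the appropriate horizon $T$ and threshold $\gamma=\Theta(\eps_\solve)$, item 3 by determinism of $\ffbar,\fftil$ (\cref{obs:deterministic}) plus the rollback of temporary changes and the fact that $D^\chk_j$'s in-batch inputs depend only on $D^\chk_i$, $i<j$, and item 4 by coupling to a ``check every edge'' execution using the $\eps^2/10$ versus $\eps^2/2$ threshold gap (the paper formalizes this via a \TrivialLocator~and an induction over \textsc{Locate} calls; it also isolates, as a separate claim, that $\textsc{Check}(e)$ is unaffected by checks of other edges --- a property your coupling silently relies on but which is indeed guaranteed by \cref{thm:checker}).

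There is one concrete gap in your accounting for items 1--2: you only count the updates issued from the main loop of \textsc{FindFeasibleFlow} (lines \ref{line:updatechecker}, \ref{line:updatechecker2}, \ref{line:batchupdatelocator1}, \ref{line:batchupdatelocator2}, \ref{line:updatelocator}), but \textsc{RecenteringBatch} itself issues updates to both data structures that are \emph{not} captured by the \cref{lemma:l2change} argument. For $D^\loc$~these are genuine \emph{permanent} updates (line \ref{line:batchupdateres} and the resistance reset in line \ref{line:rollback}), each of which permanently adds terminals, and for $D^\chk_j$~the \textsc{TemporaryUpdate}s are outstanding (not yet rolled back) at the moment the in-batch \textsc{Check} calls are made, so they count against the $\betachk m$ budget of \cref{thm:checker} at that time. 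The paper bounds these separately via \cref{thm:batch}: $O(k^{16})$ updates per batch times the $\Theta(\eps_\solve\sqrt{\betaloc m}/k)$ (resp.\ $\Theta(\eps_\solve\sqrt{\betachk m}/k)$) batches between reinitializations, giving $O(\sqrt{m}k^4)$, which one must then verify is at most $\betaloc m$ (resp.\ $\betachk m$) for the chosen parameters. The numbers do work out, so the omission is fixable, but without this piece your proof of items 1--2 is incomplete.
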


We prove the items separately.

\begin{proof}(of Lemma~\ref{lemma:dscorrect}
Item~\ref{item:num_of_updates_locator},
number of edge updates for $D^\loc$)

$D^\loc$~is initialized in line \ref{line:initializelocator} in \textsc{FindFeasibleFlow} (\cref{algo:main}). We focus on the number of updates between two executions of line \ref{line:initializelocator} which reinitialize $D^\loc$.

We first show that there are at most $\betaloc m$ total edge updates. Theorem \ref{thm:batch} tells us that the number updates to $D^\loc$~resulting from calls to Algorithm \textsc{RecenteringBatch} in \cref{algo:batchsteps} is bounded by $O(k^{16})$ times $c\eps_\solve\sqrt{\betaloc m}/k$ calls to \textsc{RecenteringBatch} between two reinitializations of $D^\loc$.
This is at most
\[ O(k^{16}) \cdot c\eps_\solve\sqrt{\betaloc m}/k = \sqrt{m}k^4 \le k^{-16}m = \betaloc m \]
by the choice of $\eps_\solve, \betaloc$, and $k$.

Other than this, the remaining edge updates happen in lines \ref{line:batchupdatelocator1}, \ref{line:batchupdatelocator2}, \ref{line:updatelocator} of \textsc{FindFeasibleFlow} (\cref{algo:main}).
Note that these updates happen only when $\rr(\fftil)^{1/2}|\fftil_e - \ffbar_e|$ is at least $\eps_\solve/8$ for some edge $e$.
Between two reinitializations of $D^\loc$, the path parameter $\mu$ decreases by at most
\[
O\left(\frac{c\eps_\solve\sqrt{\betaloc m}\mu}{\sqrt{m}}\right)
=
O\left( c \eps_\solve \sqrt{\betaloc} \mu\right),
\]
hence \cref{lemma:l2change} for
$T = c\eps_\solve\sqrt{\betaloc m}$ and $\gamma \assign \eps_\solve/8$ shows that at most $O(\betaloc m)$ edges have their resistances changed as desired.

\end{proof}

\begin{proof}(of Lemma~\ref{lemma:dscorrect}
Item~\ref{item:num_of_updates_checker},
number of edge updates for $D^\chk_j$)

We consider $D^\chk_j$ for a fixed $j \in [k\eps_\step^{-1}]$. Again by \cref{thm:batch}, the number of updates to $D^\chk_j$~resulting from calls to Algorithm \textsc{RecenteringBatch} in \cref{algo:batchsteps} is bounded by 
\[
\sqrt{m}k^4=O(\betachk m).
\]

The remaining edge updates happen in line \ref{line:updatechecker}. These updates happen only when $\ffbar^{(init)}_e$ and $\ffbar_e$ differ by at least $\eps_\solve/10$ where $\ffbar^{(init)}$ denotes the value of $\ffbar$ when the last time $D^\chk_j$~reinitializes: Right before a reinitialization of $D^\chk_j$, we ensure that for each edge $e$, $\ffbar^{(init)}_e$ and $\fftil_e$ agree up to $1\pm \eps_\solve / 8$. Note that when we reinitialize $D^\chk_j$~with the updated values, no edge updates in $D^\chk_j$~happens. Then before the next reinitialization, when line \ref{line:updatechecker} is executed, for each edge $e$ in $Z$ we have $\fftil_e$ and $\ffbar_e$ differ by at least $\eps_\solve/4$. Hence $\ffbar_e$ must have changed by say at least $\eps_\solve/10$ from $\ffbar^{(init)}_e$. As reinitialization happens every $c\eps_\solve\sqrt{\betachk m}/k$ steps, by \cref{lemma:l2change} for $T=c\eps_\solve\sqrt{\betachk m}/k$ and $\gamma=\eps_\solve/10$, the total number of edge updates for one instantiation of $D^\chk_j$~ is 
\[
O\left(\left(c\eps_\solve\sqrt{\betachk m}/k\right)^2(\eps_\solve)^{-2}\right)=O(\betachk m).
\]
\end{proof}
Since $D^\chk_j$.\textsc{Check}$()$ rolls back after checking each edge (Algorithm \ref{algo:checker}) to ensure that a call to $D^\chk_j$.\textsc{Check}$()$ does not change the state of $D^\chk_j$ (see \cref{thm:checker}), we have the following.
\begin{claim}
\label{lem:independent_check}
For any set of edges $Z$, whether some edge $e$ is returned by $D^\chk_j$.\textsc{Check}$(Z)$ and the returned flow value $\gg_e$ do not depend on $Z\setminus \{e\}$ (i.e., other edges in $Z$).
\end{claim}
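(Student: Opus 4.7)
The plan is to reduce the claim directly to the per-call independence guarantee already built into Theorem~\ref{thm:checker}, namely that the output of $\textsc{Check}(e)$ is independent of any previous calls to $\textsc{Check}$.

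First I would interpret $D^\chk_j.\textsc{Check}(Z)$ as processing the edges of $Z$ one at a time through calls to $D^\chk_j.\textsc{Check}(e)$ in some arbitrary order. Fix a target edge $e \in Z$ and consider two executions that start from the same state of $D^\chk_j$: execution (i) runs $\textsc{Check}$ on every edge of $Z$ in order, and execution (ii) runs $\textsc{Check}$ only on $e$. By the last bullet of Theorem~\ref{thm:checker}, each individual $\textsc{Check}$ call leaves $D^\chk_j$ in the same state it was in before the call (the terminal additions and Schur complement sampling performed inside \Checker~are rolled back before it returns, as is made explicit in Algorithm~\ref{algo:checker}). Consequently, all the Check invocations associated to edges in $Z\setminus\{e\}$ that precede $e$ in execution (i) do not modify the state of $D^\chk_j$ seen by the call on $e$; the pre-call state (and the distribution over fresh internal randomness) presented to $\textsc{Check}(e)$ is identical in both executions, and hence so are the accept/reject decision and the returned value $\gg_e$.

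The main (and essentially only) subtlety is verifying that the state preservation for a single $\textsc{Check}$ call truly implies independence from all other edges in $Z$, not just the pastward ones. Future Check calls cannot affect the output for $e$ by causality, and by the state-preservation property of Theorem~\ref{thm:checker} earlier Check calls do not either. Thus the claim reduces to citing Theorem~\ref{thm:checker} and unrolling the definition of $\textsc{Check}(Z)$ as a sequence of individual $\textsc{Check}(e)$ calls, with no further analysis required.
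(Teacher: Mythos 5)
Your proof is correct and matches the paper's reasoning exactly: the paper justifies this claim in the single sentence preceding it, by observing that $\textsc{Check}$ rolls back its temporary terminal additions so each call leaves $D^\chk_j$ unchanged, together with the independence guarantee stated in Theorem~\ref{thm:checker}. Your unrolling of $\textsc{Check}(Z)$ into individual calls and the two-execution comparison is just a more explicit write-up of the same argument.
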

\begin{claim}
\label{lem:check_do_not_modify}
$D^\chk_j$.\textsc{Check}$()$ does not modify $D^\chk_j$.
\end{claim}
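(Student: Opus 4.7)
The plan is to prove this claim by direct inspection of the pseudocode of \textsc{Checker.Check} in Algorithm~\ref{algo:checker}, identifying every operation that touches the mutable state of $D^\chk_j$ and showing that each is either a pure read or gets undone before return.

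First, I will enumerate the state-modifying operations performed by $\textsc{Check}(e)$. The only calls to the encapsulated dynamic Schur complement $D^\sc$ are: a terminal-addition that inserts the endpoints of $e$, a call to $D^\sc.\textsc{SC}()$ to sample an approximate Schur complement $\widetilde{\mSC}$, and a matching \textsc{Rollback}. The remaining lines — solving $\widetilde{\mSC}^\dagger \cchi_{st}$ via \cref{thm:lap}, computing $\gg_e \assign \rr_e^{-1}(\xx_u-\xx_v)$, and evaluating the threshold $\rr_e \gg_e^2 \ge 3\eps^2 \cE/4$ — are purely local arithmetic on stack variables and touch no global variables of $D^\chk_j$.

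Next, I will invoke the guarantees of \cref{thm:dynamicsc} to rule out side effects from the two remaining operations. For the $\textsc{SC}()$ call, the theorem specifies it as a query returning a fresh $(1+\eps)$-spectral approximation to $\mSC(G,C)$; it does not advance $C$, modify $\rr$, or alter the maintained random walks $P_e$, and the underlying dynamic sparsifier $D^\expand$ is only read from (via the sampling routine in \cref{lemma:dynamicsparsifier}). For the terminal addition, \cref{thm:dynamicsc}'s \textsc{Rollback} is by definition the exact inverse of the preceding add-terminals operation, so the pair cancels perfectly, restoring $C$, all shortcut walks, and the state of $D^\expand$ bit-for-bit.

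The only subtlety — and essentially the only thing worth being careful about — is that the pseudocode as written invokes \textsc{Rollback} only in the branch where $e$ is not accepted. I will interpret this in line with the explicit prose preceding Algorithm~\ref{algo:checker} (\emph{``After that we eliminate the effects of this \textsc{Check} by rolling back $D^\sc.\textsc{TemporaryAddTerminals}(\{u,v\})$''}) and with the black-box guarantee in \cref{thm:checker} that \textsc{Check}'s output is independent of previous calls to \textsc{Check}, so that the rollback fires unconditionally (equivalently, one reads the terminal addition as a \textsc{TemporaryAddTerminals} followed by an always-executed \textsc{Rollback}). Under this reading the claim follows immediately, and I expect no further obstacle.
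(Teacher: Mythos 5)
Your proposal is correct and matches the paper's own (one-line) justification, which simply observes that \textsc{Check} rolls back the terminal addition in \cref{algo:checker} so that the state of $D^\chk_j$ is restored. You are also right to flag the pseudocode imprecision — the \textsc{Rollback} is written after the accept-and-return branch and the add is labelled \textsc{PermanentAddTerminals} — and your charitable reading (an always-executed rollback of a \textsc{TemporaryAddTerminals}) is exactly what the paper's surrounding prose and its proof of \cref{thm:checker} intend; indeed, without the unconditional rollback the independence property of \textsc{Check} would fail.
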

\begin{proof}[Proof of Lemma~\ref{lemma:dscorrect},
Item~\ref{item:obliviousness_of_adversary_locator}:
obliviousness of the adversary to $D^\loc$]

We define the adversary of $D^\loc$~independent of the output of $D^\loc$. Let $Q$ be the sequence of queries to $D^\loc$~throughout an execution of Algorithm \ref{algo:main}. Suppose we have an alternate implementation of $D^\loc$~that always return $E$ (the set of all edges) for \textsc{Locate}$()$ and ignores other kinds of queries. We call this implementation \TrivialLocator. Let $\widehat{Q}$ be the sequence of queries to \TrivialLocator~through out an execution of Algorithm \ref{algo:main} with $D^\loc$~replaced by \TrivialLocator. Since the output of \TrivialLocator~is independent of the output of $D^\loc$, $\widehat{Q}$ is independent of the output of $D^\loc$~as well. We will show that when the two executions use the same randomness for $D^\chk_j$, $\widehat{Q}=Q$ with high probability. Thus, $D^\loc$~is ran against an oblivious adversary.

Before the first call of \textsc{Locate}$()$ (in \cref{algo:batchsteps}), the two executions are identical except for $D^\loc$~and \TrivialLocator. In particular, $Q$ is identical to $\widehat{Q}$. We will prove that the $(Z^{(i)}, \gg^{(i)} \in \R^{Z^{(i)}})$ returned by $D^\chk_i.\textsc{Check}(D^\loc.\textsc{Locate}(G))$ on line \ref{line:checkz} of Algorithm \ref{algo:batchsteps} is equal to $(\widehat{Z}^{(i)}, \widehat{\gg}^{(i)})\defeq D^\chk_i.\textsc{Check}(E)$ with high probability. This follows from the guarantees of $D^\loc$.\textsc{Locate}$()$ and $D^\chk_j$.\textsc{Check}$()$.
By \cref{lem:independent_check}, it is enough to prove that $Z^{(i)}=\widehat{Z}^{(i)}$. $\gg^{(i)}=\widehat{\gg}^{(i)}$ follows from $Z^{(i)}=\widehat{Z}^{(i)}$. 

$Z^{(i)}\subseteq \widehat{Z}^{(i)}$: For each edge $e\in Z^{(i)}$, by \cref{lem:independent_check} and $e\in E$, $e$ is in $\widehat{Z}^{(i)}$ as well. 

$\widehat{Z}^{(i)}\subseteq Z^{(i)}$: For each $e\in \widehat{Z}^{(i)}$, by \cref{thm:checker}, the energy of $e$ is at least $\eps^2\cE/2$. By \cref{thm:locator}, $e\in D^\loc.\textsc{Locate}()$ w.h.p. Conditioning on this, by \cref{lem:independent_check}, $e \in Z^{(i)}$. 

Thus, w.h.p., $Z^{(i)}=\widehat{Z}^{(i)}$. When they are equal, we have that the two executions are identical (which implies a prefix of $Q$ is identical to a prefix of $\widehat{Q}$) until the next call of \textsc{Locate}$()$ because of \cref{lem:check_do_not_modify}. By induction, $Q$ is equal to $\widehat{Q}$.

\end{proof}

\begin{proof}(of Lemma~\ref{lemma:dscorrect}
Item~\ref{item:obliviousness_of_adversary_checker},
obliviousness of the adversary to $D^\chk_j$)

Since \textsc{RecenteringBatch} (\cref{algo:batchsteps}) undoes all changes to $D^\chk_j$ before it finishes, we may consider \textsc{RecenteringBatch} and \textsc{FindFeasibleFlow} (\cref{algo:main}) independently. In \textsc{FindFeasibleFlow}, the updates are decided by $\fftil$ and $\ffbar$ which are deterministic (\cref{obs:deterministic}). In \textsc{RecenteringBatch}, each $D^\chk_j$~answers only $1$ query after which it no longer receive any update or query.
\end{proof}

Towards analyzing the runtime of \textsc{FindFeasibleFlow} (\cref{algo:main}) we bound the number of times that $D^\chk_j$~and $D^\loc$~must be reinitialized.
\begin{lemma}
\label{lemma:rebuild}

In a call to Algorithm \textsc{FindFeasibleFlow} in \cref{algo:main}, $D^\chk_j$ is reinitialized $\O(\eps_\solve^{-1}\betachk^{-1/2})$ times for each $j\in [k\eps_\step^{-1}]$. $D^\loc$~is reinitialized $\O(\eps_\solve^{-1}\betaloc^{-1/2})$ times.
\end{lemma}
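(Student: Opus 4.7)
The plan is to reduce this to a direct counting argument. First I would bound the total number of iterations of the outer \textbf{while} loop in \textsc{FindFeasibleFlow} (\cref{algo:main}). Each call to \textsc{RecenteringBatch} in line \ref{line:recenterbatch} decreases $\mu$ by a factor of $(1 - k/\sqrt{m})$ by \cref{thm:batch}, and the loop terminates once $\mu \le 1$. Since $\mu$ starts at $F = \poly(m)$, the total number of iterations is
\[
O\!\left(\log(F) \cdot \frac{\sqrt{m}}{k}\right) = \O\!\left(\frac{\sqrt{m}}{k}\right).
\]

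Next I would inspect the conditional structure in \textsc{FindFeasibleFlow}. The data structure $D^\loc$ is reinitialized only in line \ref{line:initializelocator}, which is entered precisely when the iteration counter $i$ is a multiple of $c\eps_\solve\sqrt{\betaloc m}/k$ (line \ref{line:isteps3}). Dividing the total number of iterations by this reinitialization period gives
\[
\O\!\left(\frac{\sqrt{m}/k}{c\eps_\solve\sqrt{\betaloc m}/k}\right) = \O\!\left(\eps_\solve^{-1}\betaloc^{-1/2}\right),
\]
which is the claimed bound. Similarly, each $D^\chk_j$ is reinitialized in lines \ref{line:initializechecker} and \ref{line:initializechecker2}; by the assumption (stated just before \cref{algo:main}) that $c\eps_\solve\sqrt{\betaloc m}/k$ is an integer multiple of $c\eps_\solve\sqrt{\betachk m}/k$, every iteration triggering line \ref{line:initializechecker} also satisfies the condition of line \ref{line:isteps1}, so the combined reinitialization period for $D^\chk_j$ is exactly $c\eps_\solve\sqrt{\betachk m}/k$. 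Dividing yields $\O(\eps_\solve^{-1}\betachk^{-1/2})$.

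There is no real obstacle here beyond bookkeeping: this lemma is purely a counting statement about the outer loop of \textsc{FindFeasibleFlow} and does not require any probabilistic or structural argument. The content lies entirely in the choice of the reinitialization thresholds in lines \ref{line:isteps3} and \ref{line:isteps1} of \cref{algo:main}, which were designed precisely to balance against the $\betaloc m$ and $\betachk m$ update budgets enforced by \cref{lemma:dscorrect}; the proof simply reads off this choice against the $\O(\sqrt{m}/k)$ iteration count.
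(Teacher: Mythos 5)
Your proposal is correct and is essentially the same counting argument as the paper's: the paper phrases it as ``$\mu$ shrinks by a $(1-\Omega(\eps_\solve\beta^{1/2}))$ factor between consecutive reinitializations, so $\O(\eps_\solve^{-1}\beta^{-1/2})$ reinitializations suffice to reach $\mu \le 1$,'' whereas you compute the total iteration count $\O(\sqrt{m}/k)$ and divide by the reinitialization period, which is the same calculation reorganized. Your added remark that the divisibility assumption makes the checker reinitialization period exactly $c\eps_\solve\sqrt{\betachk m}/k$ is a correct (and implicitly used) detail.
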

\begin{proof}
Between two reinitializations of $D^\chk_j$, $\mu$ decreases by \[
1 - \frac{c\eps_\solve\sqrt{\betachk m}}{\sqrt{m}}
=
1 - \Omega\left(\eps_\solve\betachk^{1/2}\right). 
\] Then after $\O(\eps_\solve^{-1}\betachk^{-1/2})$ reinitializations, the amount of residual flow is less than $1$, as we are assuming that $U = \poly(m).$ 

Similarly, between two reinitializations of $D^\loc$, $\mu$ decreases by $1-\Omega\left(\eps_\solve\betaloc^{1/2}\right)$.
Thus, $D^\loc$~is reinitialized $\O(\eps_\solve^{-1}\betaloc^{-1/2})$ times.

\end{proof}
In our $D^\loc$~data structure, we are balancing the cost of batched and single updates. This requires a slightly finer control on the number of updates to $D^\loc$~in \textsc{FindFeasibleFlow} (\cref{algo:main}) and in \textsc{RecenteringBatch}.
\begin{lemma}
\label{lemma:nobatch}
Between two reinitializations of $D^\loc$~on line \ref{line:initializelocator} in \textsc{FindFeasibleFlow} (\cref{algo:main}), $D^\loc.\textsc{Update}$ is called on at most
$O(\sqrt{\betaloc\delta^{-1}\eps m})$ total edges, including the ones called within $\textsc{RecenteringBatch}$ in line \ref{line:recenterbatch}.
\end{lemma}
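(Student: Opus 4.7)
The plan is to separately bound the two sources of calls to $D^\loc.\textsc{Update}$: (i) those made inside $\textsc{RecenteringBatch}$ in line \ref{line:recenterbatch}, and (ii) the single-edge updates from the else branch in line \ref{line:updatelocator}; these are the only two places where $D^\loc.\textsc{Update}$ is invoked on individual edges.

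For source (i), by \cref{thm:batch} item~\ref{batch:item4} each call to $\textsc{RecenteringBatch}$ produces at most $O(k^{16})$ edge updates to $D^\loc$, and there are $c\eps_\solve\sqrt{\betaloc m}/k$ such calls between two reinitializations. This contributes $O(k^{15}\eps_\solve\sqrt{\betaloc m})$ updates, which under the parameter settings $\eps_\solve = \Theta(k^{-3})$, $\betaloc = k^{-16}$, $\delta = k^{-38}$, $\eps = \Theta(k^{-6})$ is dominated by the target $O(\sqrt{\betaloc\delta^{-1}\eps m})$.

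For source (ii), the key idea is to partition each $D^\loc$ reinitialization period into sub-periods delimited by events of type (a) (line \ref{line:isteps3}), (b) (line \ref{line:isteps1}), or (c) (line \ref{line:isteps2}). Each such event resets $\fftil_e \leftarrow \ffbar_e$ on every edge whose deviation exceeds its threshold ($0$, $\eps_\solve/8$, $\eps_\solve/4$ respectively), so immediately after any of these events every edge satisfies $\rr(\fftil)_e^{1/2}|\fftil_e - \ffbar_e| < \eps_\solve/4$. Since type (c) is most frequent, every sub-period has length at most $c\eps_\solve\sqrt{\delta^{-1}\eps}/k$ iterations, corresponding to path-parameter progress $T \le c\eps_\solve\sqrt{\delta^{-1}\eps}$ in the language of \cref{lemma:l2change}. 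Within a sub-period starting at iteration $i_0$, suppose edge $e$ is triggered in the else branch at times $i_1 < i_2 < \cdots$. For each $\ell \ge 2$, the previous trigger set $\fftil_e = \ff(\mu^{(i_{\ell-1})})_e$ exactly, so $\rr(\fftil)_e = \rr(\ff(\mu^{(i_{\ell-1})}))_e$ and the trigger condition directly gives $\rr(\ff(\mu^{(i_{\ell-1})}))_e^{1/2}|\ff(\mu^{(i_{\ell-1})})_e - \ff(\mu^{(i_\ell)})_e| \ge \eps_\solve/2$. For the first trigger $i_1$, the $\eps_\solve/4$ deviation bound at $i_0$ combined with the triangle inequality yields a change of at least $\eps_\solve/4$ between $i_0$ and $i_1$ in the $\rr(\fftil)$ norm; since a deviation of $\eps_\solve/4$ implies residual capacities on edge $e$ are within a constant factor, $\rr(\fftil)_e$ can be replaced by $\rr(\ff(\mu^{(i_0)}))_e$ up to constants. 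Hence the sequence $\mu^{(i_0)} > \mu^{(i_1)} > \mu^{(i_2)} > \cdots$ satisfies the hypothesis of \cref{lemma:l2change} with $\gamma = \Omega(\eps_\solve)$, yielding $\sum_e \change(e, \gamma) = O(T^2/\gamma^2) = O(\delta^{-1}\eps)$ else-branch triggers per sub-period.

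Combining, the total from source (ii) is at most (number of sub-periods)$\,\times\,O(\delta^{-1}\eps) = \sqrt{\betaloc m\delta/\eps}\cdot O(\delta^{-1}\eps) = O(\sqrt{\betaloc \delta^{-1}\eps m})$, matching the target. The main technical subtlety will be cleanly handling the first trigger of each sub-period, where $\fftil_e$ may carry a value set in a much earlier sub-period; the bound $\rr(\fftil)_e^{1/2}|\fftil_e - \ff(\mu^{(i_0)})_e| < \eps_\solve/4$ inherited from the most recent type-(a)/(b)/(c) event is precisely what is needed to translate between the ``stale'' resistance $\rr(\fftil)_e$ and $\rr(\ff(\mu^{(i_0)}))_e$ so that \cref{lemma:l2change} applies with essentially unchanged $\gamma$.
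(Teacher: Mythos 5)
Your proposal is correct and follows essentially the same route as the paper's proof: the same split into the \textsc{RecenteringBatch} contribution (bounded via \cref{thm:batch}) and the line~\ref{line:updatelocator} contribution, with the latter bounded by applying \cref{lemma:l2change} over sub-periods of length $c\eps_\solve\sqrt{\delta^{-1}\eps}/k$ with $T = c\eps_\solve\sqrt{\delta^{-1}\eps}$ and $\gamma = \Omega(\eps_\solve)$, then multiplying by the number of sub-periods. Your more explicit handling of repeated triggers within a sub-period and of the stale value of $\fftil_e$ at the first trigger is exactly the reasoning the paper compresses into its ``changed by at least $\eps_\solve/6$'' step.
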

\begin{proof}
There are two such costs, one within calls to Algorithm \textsc{RecenteringBatch} in line \ref{line:recenterbatch}, and another in line \ref{line:updatelocator}.

By \cref{thm:batch} the cost within calls to Algorithm \textsc{RecenteringBatch} can be bounded by
\[ O\left(k^{16} \cdot \frac{\eps_\solve\sqrt{\betaloc m}}{k} \right) = O(k^4\sqrt{m}) \le O(k^8\sqrt{m}) = \sqrt{\betaloc\delta^{-1}\eps m}.\]

To bound the cost of line \ref{line:updatelocator} in \textsc{FindFeasibleFlow} (\cref{algo:main}), note that in line \ref{line:slack8} and line \ref{line:slack4} $\fftil_e$ and $\ffbar_e$ have residual capacities agreeing up to $1 \pm \eps_\solve/4$ for every edge $e$. However, in line \ref{line:updatelocator} of \textsc{FindFeasibleFlow} (\cref{algo:main}) we update $\fftil_e$ if they differ by $\eps_\solve/2.$ Hence an update happens only if the residual capacity of $\ffbar_e$ changed by say at least $\eps_\solve/6$ within two consecutive times when $i$ was a multiple of $c\eps_\solve\sqrt{\delta^{-1}\eps}/k$. As one of line \ref{line:isteps2} and line \ref{line:isteps1} of \textsc{FindFeasibleFlow} (\cref{algo:main}) occurs every $\eps_\solve\sqrt{\delta^{-1}\eps}/k$ steps, by \cref{lemma:l2change} for $T = \eps_\solve\sqrt{\delta^{-1}\eps}$ and $\gamma = \eps_\solve/6$ the total number of changes is at most
\[ O\left(\frac{\eps_\solve\sqrt{\betaloc m}}{\eps_\solve\sqrt{\delta^{-1}\eps}} \cdot \eps_\solve^{-2}(\eps_\solve\sqrt{\delta^{-1}\eps})^2\right) \le O(\sqrt{\betaloc\delta^{-1}\eps m}). \]
\end{proof}

Combining these and carefully accumulating runtime costs gives a proof of \cref{thm:main}.
\begin{proof}[Proof of \cref{thm:main}]

We first show correctness.
\cref{lemma:dscorrect} gives
that the conditions necessary for \Checker~and \Locator~are always satisfied.
Inductively applying \cref{thm:batch} also gives that the
flow $\ffbar$ in line \ref{line:recenterbatch} is on the central path
in all steps with high probability.

It remains to bound the runtime. We observe that \textsc{RecenteringBatch} (\cref{algo:batchsteps}) is called $\O(\sqrt{m}/k)$ times since we initialize $\mu$ as $F=\poly(m)$. We first bound the total runtime of operations of $D^\chk_j$. By \cref{thm:checker} the runtimes of operations of all the $D^\chk_j$ combined~are:
\begin{itemize}
    \item $\textsc{Initialize}(G, \rr^\init \in \R^{E(G)}_{>0}, \eps, \betachk).$ Initializes graph $G$. Each \textsc{Initialize} runs in 
    \[\O\left(m\betachk^{-4}\eps^{-4}\right)=\O\left(m\left(k^{-28}\right)^{-4}\left(k^{-6}\right)^{-4}\right)=\O(mk^{136}).\]
    By \cref{lemma:rebuild}, the checkers are reinitialized \[\O\left(\eps_\solve^{-1}\betachk^{-1/2}\right)=\O\left(\left(k^{-3}\right)^{-1}\left(k^{-28}\right)^{-1/2}\right)=\O\left(k^{17}\right)\]
    times. As there are $k\eps_\step^{-1}=O\left(k/(k^{-3})\right)=O(k^4)$ distinct \Checker~data structures $D^\chk_j$, the total time cost of the reinitializations is 
    \[\O\left(mk^{136}\cdot k^{17}\cdot k^4\right)=\O\left(mk^{157}\right).\]
    
    \item $\textsc{PermanentUpdate}(e, \rr^\new \in \R^Z_{>0}).$ Each \textsc{PermanentUpdate} runs  in amortized 
    \[\O\left(\betachk^{-2}\eps^{-2}\right)=\O\left(\left(k^{-28}\right)^{-2}\left(k^{-6}\right)^{-2}\right)=\O\left(k^{68}\right)\]
    time. Between two reinitializations of a $D^\chk_j$, it processes at most $\betachk m=mk^{-28}$ \textsc{PermanentUpdate}s in 
    \[\O\left(mk^{-28}\cdot k^{68}\right)=\O\left(mk^{40}\right)\]
    time. Because there are $k^4$ checkers and 
    \[\O\left(\eps_\solve^{-1}\betachk^{-1/2}\right)=\O\left(\left(k^{-3}\right)^{-1}\left(k^{-28}\right)^{-1/2}\right)=\O\left(k^{17}\right)\] 
    reinitializations for each of them by \cref{lemma:rebuild}, the total time of all \textsc{PermanentUpdate}s is 
    \[\O\left(mk^{40}\cdot k^{4} \cdot k^{17}\right)=\O\left(mk^{61}\right).\]
    
    \item $\textsc{TemporaryUpdate}(e, \rr^\new_e >0).$ Updates $\rr_e \assign \rr^\new_e$. Runs in worst case \[\O\left(\left(K\betachk^{-2}\eps^{-2}\right)^2\right)=\O\left(\left(K\left(k^{-28}\right)^{-2}\left(k^{-6}\right)^{-2}\right)^2\right)=\O\left(K^2k^{136}\right)\]
    time for $K$ \textsc{TemporaryUpdate}s. \textsc{TemporaryUpdate} appears only in \textsc{RecenteringBatch} (\cref{algo:batchsteps}) and are rolled back before \textsc{RecenteringBatch} ends. By \cref{thm:batch}, there are $K = O(k^{16})$ \textsc{TemporaryUpdate}s for each of the $O(k^4)$ distinct $D^\chk_j$. Since \cref{algo:batchsteps} is called $\sqrt{m}/k$ times, the total time cost of all \textsc{TemporaryUpdate}s is 
    \[\O\left(\sqrt{m}/k\cdot k^4\cdot \left(k^{16}\right)^2k^{136}\right)=\O\left(\sqrt{m}k^{171}\right)\].
    
    \item $\textsc{Rollback}().$ Rollback the last \textsc{TemporaryUpdate} if exists. We charge the cost to the original operation as it costs the same time.
    
    \item $\textsc{Check}(e).$ Each \textsc{Check} costs 
    \begin{align*}
    &\O\left(\left(\betachk m+\left(K\betachk^{-2}\eps^{-2}\right)^2\right)\eps^{-2}\right)
    \\ =~&\O\left(\left(k^{-28}m+\left(K\left(k^{-28}\right)^{-2}\left(k^{-6}\right)^{-2}\right)^2\right)k^{12}\right)\\=~&\O\left(mk^{-16}+K^2k^{148}\right)
    \end{align*}
    time where $K$ is the number of $\textsc{TemporaryUpdate}$s that are not rolled back. For \textbf{all} $D^\chk_j$ s, \textsc{Check} is called a total of 
    \[k\eps_\step^{-1}\cdot O(\eps^{-2})=O\left(k\left(k^{-3}\right)^{-1}\left(k^{-6}\right)^{-2}\right)=O(k^{16})
    \] times in \cref{algo:batchsteps} by \cref{thm:batch}. $K$ is bounded by $O(k^{16})$ for each $D^\chk_j$~by \cref{thm:batch}. \cref{algo:batchsteps} is called $\sqrt{m}/k$ times. The total runtime of all \textsc{Check}s is 
    \[
    \O\left(\sqrt{m}/k\cdot k^{16} \cdot (mk^{-16}+\left(k^{16}\right)^2k^{148})\right)=\O\left(m^{3/2}k^{-1}+\sqrt{m}k^{195}\right).
    \]
\end{itemize}

We next bound the runtimes of operations of $D^\loc$. They are:
\begin{itemize}
    \item $\textsc{Initialize}(G, \rr, \eps, \betaloc, \delta).$ Initializes the data structure in
    \begin{align*}
    &\O\left(m\betaloc^{-4}\delta^{-2}\eps^{-2}+m\betaloc^{-4}\eps^{-4}\right) \\ =~ &\O\left(m\left(k^{-16}\right)^{-4}\left(k^{-38}\right)^{-2}\left(k^{-6}\right)^{-2}+m\left(k^{-16}\right)^{-4}\left(k^{-6}\right)^{-4}\right)\\
    =~&\O\left(mk^{152}\right).
    \end{align*} By \cref{lemma:rebuild}, $D^\loc$~is reinitilized \begin{equation} \O\left(\eps_\solve^{-1}\betaloc^{-1/2}\right)=\O\left(\left(k^{-3}\right)^{-1}\left(k^{-16}\right)^{-1/2}\right)=\O(k^{11}) \label{eq:locinit} \end{equation}
    times. The total time cost is 
    \[\O(mk^{152}\cdot k^{11})=\O(mk^{163}).\]
    
    \item $\textsc{Update}(e, \rr^\new)$. Runs in amortized 
    \begin{align*}&\O\left(\delta m\eps^{-3}+\betaloc^{-6}\delta^{-2}\eps^{-2}\right) \\ =~&\O\left(k^{-38} m\left(k^{-6}\right)^{-3}+\left(k^{-16}\right)^{-6}\left(k^{-38}\right)^{-2}\left(k^{-6}\right)^{-2}\right) \\=~&\O\left(mk^{-20}+k^{184}\right)
    \end{align*} time. By \cref{lemma:nobatch} the number of calls to $D^\loc.\textsc{Update}$ between two reinitializations is at most 
    \[\O\left(\sqrt{\betaloc\delta^{-1}\eps m}\right)=\O\left(\sqrt{k^{-16}\left(k^{-38}\right)^{-1}k^{-6} m}\right)=\O\left(\sqrt{m}k^8\right)
    \]
    while there are $\O(k^{11})$ reinitializations (see \eqref{eq:locinit}). Thus, the total time cost of all \textsc{Update} is \[\O\left(k^{11}\cdot \sqrt{m}k^{8}\cdot\left(mk^{-20}+k^{184}\right)\right) =\O\left(m^{3/2}k^{-1}+\sqrt{m}k^{203}\right).\]
    
    \item $\textsc{BatchUpdate}(S, \rr^\new \in \R_{>0}^S)$. Runs in 
    \[\O\left(m\eps^{-2} + |S|\betaloc^{-2}\eps^{-2}\right)=\O\left(m\left(k^{-6}\right)^{-2} + |S|\left(k^{-16}\right)^{-2}\left(k^{-6}\right)^{-2}\right)=\O\left(mk^{12}+|S|k^{44}\right)
    \]
    time. This occurs in line \ref{line:batchupdatelocator1} and \ref{line:batchupdatelocator2} of \textsc{FindFeasibleFlow} (\cref{algo:main}). We bound the number of times these two lines are called. Between two reinitializations, of which there are $\O(k^{11})$, line \ref{line:batchupdatelocator1} and \ref{line:batchupdatelocator2} are called
    \[ \O\left(\frac{\eps_\solve\sqrt{\betaloc m}/k}{\eps_\solve\sqrt{\delta^{-1}\eps}/k}\right) = \O\left(\frac{k^{-3}\sqrt{k^{-16} m}/k}{k^{-3}\sqrt{\left(k^{-38}\right)^{-1}k^{-6}}/k}\right) =\sqrt{m}k^{-24} \] times, as exactly one of them is called only if $i$ is a multiple of $c\eps_\solve\sqrt{\delta^{-1}\eps}/k$.
    Also, a total of $\betaloc m=mk^{-16}$ edges is updated due to \cref{lemma:dscorrect} between two reinitializations. There are $O(k^{11})$ reinitializations (see \eqref{eq:locinit}).
    Thus, the total cost is bounded by
    \begin{align*}
    &\O\left(k^{11}\cdot \left(\sqrt{m}k^{-24}\cdot mk^{12} + mk^{-16}k^{44}\right)\right)\\
    =&\O\left(m^{3/2}k^{-1}+mk^{39}\right).
    \end{align*}
    
    \item $\textsc{Locate}()$. Runs in 
    \[\O\left(\betaloc m \eps^{-2}\right)=\O\left(\left(k^{-16}\right) m \left(k^{-6}\right)^{-2}\right)=\O\left(mk^{-4}\right)\] 
    time. In each call to \textsc{RecenteringBatch}, $\textsc{Locate}$ is called 
    \[O\left(k\eps_{\solve}^{-1}\right)=O\left(k\left(k^{-3}\right)^{-1}\right)=O\left(k^4\right)\] times by \cref{thm:batch}. As there are $\sqrt{m}/k$ calls to \textsc{RecenteringBatch}, $\textsc{Locate}$ is called \[\O\left(k^4\cdot \sqrt{m}/k\right)=\O\left(\sqrt{m}k^3\right)\] times for total time \[\O\left(\sqrt{m}k^3\cdot mk^{-4}\right)=\O\left(m^{3/2}k^{-1}\right).\]
\end{itemize}

Accumulating all the costs above gives an overall bound of
\[
\O\left(mk^{163} + m^{3/2}/k + \sqrt{m}k^{203}\right).
\]
While there are a few basic arithmetic and set operations in \textsc{FindFeasibleFlow} as well, the time cost of the operations to $D^\chk_j$~and $D^\loc$~dominates the final runtime. For the choice $k = m^{1/328},$ the final runtime is $\O(m^{\frac32-\frac{1}{328}})$ as desired.
\end{proof}

\section*{Acknowledgments}
Yang P. Liu was supported by the Department of Defense (DoD) through
the National Defense Science and Engineering Graduate Fellowship (NDSEG) Program.
Richard Peng is supported by the National Science Foundation (NSF)
under Grant No. 1846218.

We thank Jan van den Brand, Arun Jambulapati, Yin Tat Lee, and Aaron Sidford for helpful discussions and pointing out typos in an earlier version of this manuscript. We especially thank Aaron Sidford for discussions during which an error in the handling of adaptivity and randomness in the original version of this manuscript was pointed out.

{\small
\bibliographystyle{alpha}
\bibliography{refs.bib}}

\appendix
\section{Omitted Proofs}
\label{sec:proofs}
\subsection{Proof of \texorpdfstring{\cref{lemma:reducecongestion}}{reducecongestion}}
\label{subsec:proofofreducecongestion}
\begin{proof}
Note that by definition, $\prhit{u}{v}{C\cup\left\{v\right\}}=\prhit{u}{v}{C}$ for any $C\subseteq V$.
We let $C$ be the set of endpoints of $\beta m$ edges chosen randomly. Then a set of $\Omega(\beta^{-1} \log m)$ distinct vertices intersects $C$ with high probability.
Let $F=V\setminus C$.

Let $\mathsf{dis}(u, C)$ be the expected number of distinct vertices visited by a random walk from $u$ to $C$. For any $C$, we have 
\begin{align}
\sum_{v\in F}\sum_{u\in F}\deg_u\prhit{u}{v}{C}
\le \sum_{u\in F} \mathsf{dis}(u, C) \cdot \deg_u \label{eq:doublecount}
\end{align} by double counting. Because of the distribution of $C$, $\mathsf{dis}(u, C)=\O(\beta^{-1})$ with high probability for every $u\in F$ by a union bound. Thus $ \sum_{u\in F} (\mathsf{dis}(u, C))\cdot \deg_u=\O(m\beta^{-1})$.

 We choose 
 \[
 D=C\cup\left\{v : v \in F \text{ and } \sum_{u\in F}\deg_u\prhit{u}{v}{C}\ge h\right\}
 \] where $h=\Omega(\beta^{-2}\polylog n)$ has a large enough $\polylog$ factor such that 
 \[
\left(\sum_{u\in F} \mathsf{dis}(u, C) \cdot \deg_u\right)/h\le \beta m.
 \] By Markov's inequality and \cref{eq:doublecount}, there are at most $\beta m$ vertices in $D\setminus C$. To see that $D$ is a valid choice of $\hat{C}$, note that for any $v\in V\setminus D$ we have
 \[
 \sum_{u\in F}\deg_u\prhit{u}{v}{D}\le \sum_{u\in F}\deg_u\prhit{u}{v}{C} = O(h) = \O(\beta^{-2})
 \]
 as desired. To make the proof algorithmic, we need to calculate $\sum_{u\in F}\deg_u\prhit{u}{v}{C}$ for every $v\not\in C$. This can be achieved by sampling: Let $\rho=O(\log n)$ be the sampling overhead. We sample $\deg_u\cdot \rho$ random walks from each $u\in F$ to $C$ by the method in \cref{lem:sample_exit_of_rw}.
 We estimate the value of $\sum_{u\in F}\deg_u\prhit{u}{v}{C}$ by
 \[
 \rrhotil_v
 \defeq
 \frac{\text{number of random walks that hit $v$}}{\rho}.
 \]
 We will prove that if $\sum_{u\in F}\deg_u\prhit{u}{v}{C}\ge h$,
 with high probability, 
\begin{equation}
\abs{\rrhotil_v
-
\sum_{u\in F}\deg_u\prhit{u}{v}{C}} \le \frac{1}{2}\sum_{u\in F}\deg_u\prhit{u}{v}{C}.
\label{eq:estimate_proj2}
\end{equation}
 
 Assuming the property above, we may choose  \[
 \Chat=C\cup\left\{v\middle| v\not\in C\wedge \rrhotil_v\ge h/2\right\}
 \] as the desired set with high probability because 
 \begin{enumerate}
     \item With high probability, $D\subseteq \Chat$.
     \item With high probability, $\sum_{v\in V\setminus C}\rrhotil_v=\O(m\beta^{-1})$. Then $\abs{\Chat} \le \O(m\beta^{-1})/(t/2)=O(m\beta)$.
 \end{enumerate}
Next we prove \cref{eq:estimate_proj2}. We fix some $v\in V\setminus C$. For each random walk we sample, we create one random variable that is $\frac{1}{\rho}$ if it hits $v$ and $0$ otherwise. Then $\rrhotil_v$ is the sum of the random variables and the expected value of $\rrhotil_v$ is exactly $\sum_{u\in F}\deg_u\prhit{u}{v}{C}$. 

By a corollary of Bernstein's inequality (\cref{lem:hoeffding}), we have 
\begin{align*}
&\pr{\text{random walks}}{\abs{\rrhotil_v- \sum_{u\in F}\deg_u\prhit{u}{v}{C}} > \frac{1}{2}\sum_{u\in F}\deg_u\prhit{u}{v}{C}} \\
&\le 2\exp\left(-\frac{\left(\left(\sum_{u\in F}\deg_u \cdot \prhit{u}{v}{C}\right)/2\right)^2}{6\left(\sum_{u\in F}\deg_u\cdot \prhit{u}{v}{C}\right)/\rho}\right)
=n^{-\Omega(1)}
\end{align*}
for any vertex $v$ satisfying $\sum_{u\in F}\deg_u\prhit{u}{v}{C}\ge h$.
\end{proof}
\begin{algorithm}[!ht]
\caption{\textsc{CongestionReductionSubset}. \label{algo:reduce_congestion2}}
\SetKwProg{Globals}{global variables}{}{}
\SetKwProg{Proc}{procedure}{}{}
\Globals{}{
    $\beta$: size of terminals. \\
    $C$: the terminal set. \\
    $\rho$: the sampling overhead.\\
    $h$: threshold of congestion.\\
}
\Proc{$\textsc{CongestionReductionSubset}(G, \beta)$}{
    $P\assign $ a random subset of $E$ with size $\beta m$.\\
    $C=\bigcup_{e\in P}e$.\\
    Assign $\pp_v\assign 0$ for every $v\in V$.\\
    \For{$u\in V\setminus C$} {
        \For{$1\le i\le \deg_u\cdot \rho$} {
            Sample a random walk $w$ from $u$ until it hits $C$ by \cref{lem:sample_exit_of_rw}.\\
            \For{each vertex $v$ visited by $w$} {
                $\pp_v\assign \pp_v+\frac{1}{\rho}$.\\
            }
        }
    }
    \Return $C\cup\{u\mid \pp_u\ge h/2\}$.
}
\end{algorithm}

\subsection{Proof of \texorpdfstring{\cref{lem:heavyhitter}}{heavyhitter}}
\label{proofs:heavyhitter}
\begin{proof}
We follow the proof of \cite{KNPW11}.

Form an $O(\log{m})$ depth binary tree over $[m]$. These correspond to intervals on $[m]$, and the children of every node/interval have approximately half its length.

In each level, sample $O(\eps^{-2}\log{m})$ random subsets of the nodes/intervals at that level, where each node is picked with probability $\eps^{2}/100$. This way, every subset contains on average $\eps^2$ intervals and $\Theta(\eps^2 m)$ elements, and each interval is in $\O(1)$ sets with high probability.

For each such subset $S$ sampled, create an $0.1$-error $\ell_2$ sketch by using a variant of the Johnson-Lindenstrauss Lemma \cite{Ach03} using only $\pm 1$ entries. Precisely, for $N^\mathrm{JL} = O(\log m)$, we build a $\pm 1$ matrix $\mQ_S \in {\pm 1}^{N^\mathrm{JL} \times S}$ such that for any vector $\yy \in \R^S$ with high probability we have
\begin{align*} 0.9\|\yy\|_2^2 \le (N^\mathrm{JL})^{-1} \|\mQ_S \yy\|_2^2 \le 1.1\|\yy\|_2^2. \end{align*}
Let the matrix $\mQ$ be the concatentation of these sketch matrices. Accounting for the $O(\log m)$ levels of the tree, and $O(\eps^{-2} \log m)$ subsets per level gives $\mQ \in \{-1, 0, 1\}^{O(\eps^{-2}\log^{3}{m} \times m)}.$

Now we implement \textsc{Recover}.
Given $\vv$ with $\|\vv - \mQ \xx\|_\infty \le \eps/100$, we show that for any subset $S$ above, we can estimate $\|\xx_S\|_2^2$ to multiplicative accuracy $0.1$, plus additive error $\eps/100$. Indeed, by the guarantee on $\vv$, consider the vector $\vv_S \in \R^{N^\mathrm{JL}}$ that corresponds to rows for $\mQ_S \xx_S$ in $\mQ\xx$, we know that
\begin{align}
(N^\mathrm{JL})^{-1/2}\|\vv_S\|_2 &\le (N^\mathrm{JL})^{-1/2}\|\mQ_S \xx_S - \vv_S\|_2 + (N^\mathrm{JL})^{-1/2}\|\mQ_S \xx_S\| \nonumber \\
&\le \|\mQ_S \xx_S - \vv_S\|_\infty + 1.1\|\xx_S\|_2 \le \eps/100 + 1.1\|\xx_S\|_2. \label{eq:estimate}
\end{align}
Similarly, we can show that
\begin{align} (N^\mathrm{JL})^{-1/2}\|\vv_S\|_2 \ge 0.9\|\xx_S\|_2 - \eps/100. \label{eq:estimate2} \end{align}

Our algorithm for \textsc{Recover} is as follows: we walk down the binary tree, only keeping intervals such that for
\emph{every} subset $S$ containing it, the $\ell_2$-norm estimate given by $(N^\mathrm{JL})^{-1/2}\|\vv_S\|_2$ is at least $\eps/2.$

Note that if an interval has $\ell_2$-norm at least $\eps,$ then \eqref{eq:estimate2} proves that it will be kept with high probability. This shows the correctness of the algorithm.

On the other hand, if an interval has $\ell_2$-norm at most $\eps/10$, then a random subset $S$ containing that interval has $\ell_2$ norm at most $\eps/10 + \eps/10 = \eps/5$ in expectation, as each interval is kept in the subset with probability $\eps^2/100,$ and the total $\ell_2$ norm is $1$. Therefore, by \eqref{eq:estimate}, with high probability some subset containing that interval will have $\ell_2$ norm less than $\eps/2.$

To analyze the runtime, note that at most $O(\eps^{-2})$ intervals at each level have $\ell_2$ norm at least $\eps/10.$ Hence, while walking down the binary tree, we process only $O(\eps^{-2})$ intervals per level (i.e. children of intervals we kept at the previous level), and the final list contains $O(\eps^{-2})$ intervals. The running time is $O(\log^2 m)$ per interval, times $O(\log m)$ levels, times $O(\eps^{-2})$ intervals per level, for $O(\eps^{-2} \log^3 m)$ in total.
\end{proof}

\subsection{Proof of \texorpdfstring{\cref{lemma:projrandwalk}}{projrandwalk}}
\label{proofs:projrandwalk}
\begin{proof}
Expressing $\mL_{FF}$ as its diagonal minus adjacency:
\[
\mD_{FF} - \mA_{FF}
\]
where $\mD$ denotes degrees in the entire graph,
we get
\[
\mL_{FF}^{-1}
= 
\mD_{FF}^{-1} + \mD_{FF}^{-1} \mA_{FF} \mD_{FF}^{-1}
+ \mD_{FF}^{-1} \mA_{FF} \mD_{FF}^{-1} \mA_{FF} \mD_{FF}^{-1} +
\ldots
\]
which substituted into the formula for $\ppi^C(\dd)$ above gives
\[
\left[\ppi^C\left(\dd\right)\right]_v
=
\dd_v - \sum_{t = 0}^{\infty} 
\mL_{CF} \mD_{FF}^{-1} \left( \mA_{FF} \mD_{FF}^{-1} \right)^t- \dd_{F}
=\dd_v+
\sum_{t = 0}^{\infty}
\mA_{CF} \mD_{FF}^{-1} \left( \mA_{FF} \mD_{FF} ^{-1}\right)^t \dd_{F}
\]
where the last equality uses $\mA_{CF} = -\mL_{CF}$.
By inspection, the sum evaluates to
\[
\sum_u \dd_u
\cdot \prhit{u}{v}{C}.
\]
\end{proof}

\subsection{Proof of \texorpdfstring{\cref{lem:hoeffding}}{hoeffding}}
\label{subsec:bernstein}
\begin{proof}[Proof of \cref{lem:hoeffding}]
Let $\sigma_i^2$ be the variance of $X_i$. Define $E_i=\expec{}{X_i}$.
\[
\sigma_i^2 =E_i^2\left(1-\frac{E_i}{a_i}\right)+\left(a_i-E_i\right)^2\frac{E_i}{a_i}\le |E_i|^2+|E_i||a_i|.
\]
Then 
\[
\sum_{i=1}^n \sigma_i^2 =\sum_{i=1}^n \abs{E_i}^2 + \sum_{i=1}^n \abs{E_i}\abs{a_i}\le 2EM.
\]
The result follows by applying Bernstein's inequality for zero-mean random variables
\[
\pr{}{S-\E[S]>t}\le \exp\left(-\frac{t^2/2}{\sum_{i=1}^n\sigma_i^2+tM/3}\right)
\] to both $X_1-E_1, \ldots, X_n-E_n$ and $-X_1+E_1, \ldots, -X_n+E_n$.
\end{proof}

\subsection{Proof of \texorpdfstring{\cref{lemma:recenter}}{recenter2}}
\begin{proof}
We provide a proof sketch, as our approach is identical to that in \cite{LS20:arxiv} Section 4.2.

Let
\[
V_e\left(\ff_e\right)
\defeq
-\log\left( \uu_e-\ff_e\right)
-\log\left( \uu_e+\ff_e\right)
\]
be the logarithmic barrier on an edge $e$,
so that the total barrier is
\[
V\left(\ff\right)
\defeq
\sum_e V_e\left(\ff_e\right).
\]

Now define a smoothed function $\widetilde{V_e}(\ff_e)$ that agrees with $V_e$ for
\[
\ff_e
\in
\left[
\fftil_e - \frac{1}{5} \rr\left(\fftil\right)_e^{1/2},
\fftil_e + \frac{1}{5} \rr\left(\fftil\right)_e^{1/2}
\right]
\]
and has constant second derivative outside this interval.
Define the smoothed barrier
\[
\widetilde{V}\left(\ff\right)
\defeq
\sum_e \widetilde{V_e}\left(\ff_e\right).
\]

Note that
\[
\ff\left(\mu\right)
=
\argmin_{\mB^\top\ff = \left(F^*-\mu\right)\cchi_{st}}
\widetilde{V}\left(\ff\right),
\]
because $\widetilde{V}(\ff)$ agrees with $V(\ff)$ on a neighborhood of $\ff(\mu)$ by the hypothesis of \cref{lemma:recenter}.
Additionally, the Hessian of $\widetilde{V}(\ff)$,
i.e. $\nabla^2 \widetilde{V}(\ff)$ is within a factor of $2$
everywhere because each $\widetilde{V_e}(\ff_e)$ has constant 
second derivative outside a small interval.
Therefore, the minimizer can be computed to high accuracy in
$\O(1)$ iterations of Newton's method.
Each iteration can be implemented in $\O(m)$ by solving
a Laplacian system (\cref{thm:lap}).
\end{proof}

\subsection{Proof of \texorpdfstring{\cref{lemma:recentererror}}{recentererror}}
We will require a result of the proof of Lemma 4.4 in \cite{LS20_STOC}.
\begin{lemma}[Implicit in \cite{LS20_STOC} Lemma 4.4]
\label{lemma:correction}
Let $\gg$ be a $\gamma$-centered flow.
Then there is a flow $\gghat$ such that
$\mB^\top\gghat = \mB^\top\gg$,
$\gghat$ is $5\gamma^2$-centered,
and $\|\mR(\gg)^{-1}(\gg - \gghat)\|_\infty \le \gamma$.
\end{lemma}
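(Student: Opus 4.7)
The plan is to construct $\gghat$ via a single Newton step that zeroes out the first-order centrality residual while preserving the demand of $\gg$. Concretely, let $\ss(\ff) \defeq \frac{\oone}{\uu^+(\ff)} - \frac{\oone}{\uu^-(\ff)}$ and note that its Jacobian is exactly $\mR(\ff)$. Given the dual $\pphi$ witnessing $\gamma$-centrality, define the residual $\rrho \defeq \mB\pphi - \ss(\gg)$, which satisfies $\|\rrho\|_{\mR(\gg)^{-1}} \le \gamma$. I would then solve the Laplacian system
\[
\Delta\pphi \defeq -\mL(\gg)^\dagger \mB^\top \mR(\gg)^{-1} \rrho,
\qquad
\Delta\gg \defeq \mR(\gg)^{-1}\bigl(\mB\Delta\pphi + \rrho\bigr),
\]
and set $\gghat \defeq \gg + \Delta\gg$ with new dual $\pphihat \defeq \pphi + \Delta\pphi$. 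The first defining equation forces $\mB^\top \Delta\gg = 0$, so $\mB^\top \gghat = \mB^\top\gg$ automatically.

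Next, I would bound the step size. The key algebraic identity is $\mR(\gg)\Delta\gg = \Pi\,\rrho$, where $\Pi = \mI - \mB\mL(\gg)^\dagger\mB^\top\mR(\gg)^{-1}$ is the projection onto circulations in the $\mR(\gg)^{-1}$ inner product. Since projections are contractions,
\[
\|\mR(\gg)^{1/2}\Delta\gg\|_2
= \|\Pi\rrho\|_{\mR(\gg)^{-1}}
\le \|\rrho\|_{\mR(\gg)^{-1}}
\le \gamma,
\]
which gives the claimed $\ell_\infty$-type step bound (and in particular keeps residual capacities stable up to a factor of $2$ for $\gamma$ small, so that $\mR(\gghat) \approx \mR(\gg)$).

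The core step is bounding the new centrality error. By construction $\mB\pphihat - \ss(\gg) - \mR(\gg)\Delta\gg = \rrho + \mB\Delta\pphi - \mR(\gg)\Delta\gg = 0$, so
\[
\mB\pphihat - \ss(\gghat)
= -\bigl(\ss(\gghat) - \ss(\gg) - \mR(\gg)\Delta\gg\bigr)
\]
is purely the second-order Taylor remainder. Edgewise this equals $\int_0^1 (1-t)\,\ss''(\gg + t\Delta\gg)_e\,\Delta\gg_e^2\,dt$, and a direct calculation gives $|\ss''(\ff)_e| \le O(\rr(\ff)_e^{3/2})$. Using the step-stability of $\mR$ to replace $\mR(\gg + t\Delta\gg)$ by $\mR(\gg)$ up to constants, I would bound
\[
\|\mB\pphihat - \ss(\gghat)\|_{\mR(\gghat)^{-1}}^2
\le O(1)\sum_e \rr(\gg)_e^{2}\,\Delta\gg_e^{4}
= O(1)\|\mR(\gg)^{1/2}\Delta\gg\|_4^4
\le O(1)\|\mR(\gg)^{1/2}\Delta\gg\|_2^4
\le O(\gamma^4),
\]
yielding the desired $5\gamma^2$ bound after pinning down constants. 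The main obstacle I expect is exactly this constant-tracking: getting the $5$ instead of a generic $O(1)$ requires a careful Taylor-remainder estimate using $\gamma$ small enough that residual capacities move by at most (say) a factor $2$, so that the ratio $\rr(\gghat)/\rr(\gg)$ contributed in the $\mR(\gghat)^{-1}$ norm is tightly controlled; the demand-error half of $\gamma$-centrality is never used, only the centrality half.
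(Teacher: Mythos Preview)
The paper does not actually prove this lemma; it only cites \cite{LS20_STOC} Lemma~4.4. Your proposal is exactly the standard Newton-correction step that underlies that cited result, and the argument you give is correct: the projection identity $\mR(\gg)\Delta\gg = \Pi\rrho$ gives the $\ell_2$ (hence $\ell_\infty$) step bound, the linear part of the residual cancels by construction, and the quadratic Taylor remainder bound via $|\ss''(\ff)_e| \le 2\rr(\ff)_e^{3/2}$ yields the $O(\gamma^2)$ centrality. So in substance you have reproduced the intended proof.

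Two small remarks. First, the displayed bound $\|\mR(\gg)^{-1}(\gg-\gghat)\|_\infty \le \gamma$ in the statement is almost certainly a typo for $\|\mR(\gg)^{1/2}(\gg-\gghat)\|_\infty \le \gamma$, which is what the paper actually uses in the proof of \cref{lemma:recentererror} and what your argument establishes. Second, you correctly note that only the centrality half of $\gamma$-centeredness is used in the construction; the demand-error half of the \emph{conclusion} ``$\gghat$ is $5\gamma^2$-centered'' does not follow in general (since $\mB^\top\gghat = \mB^\top\gg$ preserves the demand error at level $\gamma$, not $\gamma^2$). This is harmless in context because the lemma is only applied after the demand error has already been zeroed out, but it is worth flagging.
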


We can now show \cref{lemma:recentererror}.
\begin{proof}[Proof of \cref{lemma:recentererror}]
We first resolve the demand error by projecting onto the subspace of
flows that route $(F^{*} - \mu) \cchi_{st}$,
that is
\[
\left\{ \gg : \mB^\top \gg = \left( F^{*} - \mu \right) \cchi_{st} \right\}.
\]
Specifically, consider the flow $\gg = \ff - \Delta$ for 
\[
\Delta
=
\ff - \mR\left(\ff\right)^{-1}
\mB\mL\left(\rr\left( \ff \right)\right)^\dagger
\left(\mB^\tomato \ff - \left(F^{*} - \mu \right) \cchi_{st} \right).
\]
By construction, we have $\mB^\top \gg = (F^{*} - \mu) \cchi_{st}$.
Note that
\[
\norm{
\mR\left(\ff\right)^{1/2}\Delta
}_2
=
\norm{
\mB^\top\ff - \left( F^{*} - \mu \right) \cchi_{st}
}_{\mL\left(\rr\left( \ff \right) \right)^\dagger}
\le
1/1000.
\]
This tells us that $\mR(\gg) \approx_{0.01} \mR(\ff)$,
which means the centrality error of $\gg$ is bounded by
\begin{align*}
&\norm{
\mB\pphi - \left(\frac{\oone}{\uu-\ff+\Delta} - \frac{\oone}{\uu+\ff-\Delta}\right)}_{\mR\left(\gg\right)^{-1}}
\\ &\le
1.01\left(\norm{\mB\pphi - \left(\frac{\oone}{\uu-\ff} - \frac{\oone}{\uu+\ff}\right)}_{\mR\left(\gg\right)^{-1}}
+
3
\norm{\left(\frac{1}{\left(\uu-\ff\right)^2}
+
\frac{1}{\left(\uu+\ff\right)^2}\right) \circ \Delta
}_{\mR\left(\ff\right)^{-1}}\right) \\
&\le
1.01\left(\frac{1}{1000} + \frac{3}{1000}\right)
\le
\frac{1}{200}.
\end{align*}

Now, we will iteratively apply \cref{lemma:correction} to $\gg$ and take a limit to the $0$ centered point. Formally, define $\gg^{(0)} = \gg$ and $\gg^{(i+1)} = \gghat^{(i)}$ as in \cref{lemma:correction} for $i \ge 0$, and
\[
\gghat = \lim_{i \to \infty} \gg^{\left(i\right)}.
\]
Additionally, set
\[
\gamma^{\left(0\right)}
=
\frac{1}{200},
\]
and inductive
\[
\gamma^{\left(i+1\right)}
=
5\left(\gamma^{\left(i\right)}\right)^2
\qquad \text{for $i \ge 0$}.
\]
We know that $\gg^{(i)}$ is $5\gamma^{(i)}$-centered by \cref{lemma:correction}, hence $\gghat$ is $0$-centered.
Note that $\gamma^{(i)} \le \frac{2^{-i}}{200}$

We claim by induction that $\mR(\gg) \approx_{0.1} \mR(\gg^{(i)})$ for all $i$.
Assuming this and using \cref{lemma:correction} gives us
\[
\norm{\mR\left(\gg\right)^{1/2}\left(\gg - \gghat\right)}_\infty
\le
1.1\sum_{i \ge 0}
\norm{\mR\left(\gg^{\left(i\right)}\right)^{1/2}
\left(\gg^{\left(i+1\right)} - \gg^{\left(i\right)}\right)}_\infty
\le
1.1 \sum_{i \ge 0} \frac{2^{-i}}{200} \le \frac{1}{50}.
\]
This completes the induction. Combining this with the above proves that
\[
\norm{\mR\left(\ff\right)(\ff - \gghat)}_\infty
\le
2\left(\frac{1}{200} + \frac{1}{50}\right) \le \frac{1}{10}
\]
as desired.
\end{proof}

\subsection{Proof of \texorpdfstring{\cref{lemma:l2lemma}}{l2lemma}}
\begin{proof}
By centrality conditions, we know that for all edges
\begin{align}
\mB\pphi\left(\mu\right)
=
\frac{\oone}{\uu-\ff\left( \mu \right)}
- \frac{\oone}{\uu + \ff\left(\mu \right)}
\enspace \text{ and } \enspace
\mB\pphi\left(\muhat\right)
=
\frac{\oone}{\uu-\ff\left(\muhat\right)}
-
\frac{\oone}{\uu+\ff\left(\muhat\right)} \label{eq:cp}.
\end{align}

We first bound the change in $s$-$t$ voltage drop
from $\mu$ to $\muhat$, and specifically show
\[
\abs{\cchi_{st}^\tomato
\left( \pphi\left(\muhat\right) - \pphi\left(\mu \right) \right)}
\le
\frac{500k\sqrt{m}}{\mu}.
\]
We prove this by integrating along the central path.
By differentiating the centrality condition with respect to $\mu$ (and replacing $\mu$ with $\nu$ for notational purposes) we get
\[
\mB\left(d\pphi\left(\nu \right)\right)
=
\left(\frac{1}{\left(\uu-\ff\left(\nu\right)\right)^2}
+
\frac{1}{\left(\uu+\ff\left(\nu \right)\right)^2} \right)
d\ff\left(\nu\right),
\]
where $\mB^\top d\ff(\nu) = d\cchi_{st}.$
Therefore, we know that for resistances
$\rr(\ff(\nu)) = (\uu-\ff(\nu))^{-2} + (\uu + \ff(\nu))^{-2}$
and its associated graph Laplacian, we have
\[
d\pphi\left(\nu\right)
=
\mL\left( \rr \left( \ff \left( \nu \right) \right) \right)^{\dag} \cchi_{st},
\]
which when dotted against $\cchi_{st}$ again gives
\[
\cchi_{st}^\tomato d\pphi\left(\nu\right)
=
\cchi_{st}^\top \mL\left( \rr \left( \ff \left( \nu \right) \right) \right) ^\dagger \cchi_{st}.
\]
The right hand side denotes the electric energy of routing one unit of $a$-$b$ flow.
By \cref{lemma:elecenergy}, the $s$-$t$ electric energy is at most
\[
\frac{400m}{\muhat^2}
\le
\frac{500m}{\mu^2}
\]
by the assumption of $k \le \sqrt{m}/10.$
Therefore, integrating gives us
\[
\abs{\cchi_{st}^\tomato
\left(\pphi\left( \muhat\right) - \pphi\left(\mu\right)\right)}
=
\abs{\cchi_{st}^\tomato
\int_{\muhat}^{\mu}
d\pphi\left( \nu \right) d\nu}
\le
\frac{500m}{\mu^2} \cdot \left( \mu - \muhat \right)
\le
\frac{500k\sqrt{m}}{\mu}.
\]

We finish the proof by substituting this in.
Let the change in $\ff(\cdot)$ be $\Delta \ff$, and $\pphi(\cdot)$ be $\Delta \pphi$:
\[
\Delta \ff
\defeq
\ff\left( \muhat \right)
-
\ff\left( \mu \right)
\enspace \text{ and } \enspace
\Delta \pphi
\defeq
\pphi\left( \muhat \right)
-
\pphi\left( \mu \right).
\]
Subtracting the equations in \eqref{eq:cp} using the scalar identities
\begin{align*}
\frac{1}{u - (f + \Delta)}
-
\frac{1}{u - f}
=
\frac{\left(u - f\right) - \left( u - f - \Delta \right)}{\left( u - f - \Delta \right) \left(u - f\right)}
&=
\frac{\Delta}{\left( u - f - \Delta \right) \left(u - f\right)},\\
\frac{1}{u + (f + \Delta)}
-
\frac{1}{u + f}
=
\frac{\left(u + f\right) - \left( u + f + \Delta \right)}{\left( u + f + \Delta \right) \left(u + f\right)}
&=
\frac{-\Delta}{\left( u + f + \Delta \right) \left(u + f\right)},
\end{align*}
applied per vector entry, gives us
\[
\mB \Delta \pphi
=
\Delta \ff
\circ
\left[
\frac{\oone}{\left(\uu-\ff\left( \mu \right) \right)
\circ \left(\uu-\ff\left( \muhat \right) \right)}
+
\frac{\oone}{\left(\uu+\ff\left( \mu \right) \right)
\circ \left(\uu+\ff\left( \muhat \right) \right)}
\right] \label{eq:voltagedrop}
\]
Multiplying $\Delta\ff^\top$ by both sides of \eqref{eq:voltagedrop} gives us
\[
\Delta \ff^{\tomato}
\left[\Delta\ff \circ \left[ 
\frac{\oone}{\left(\uu-\ff\left( \mu \right) \right)
\circ \left(\uu-\ff\left( \muhat \right) \right)}
+
\frac{\oone}{\left(\uu+\ff\left( \mu \right) \right)
\circ \left(\uu+\ff\left( \muhat \right) \right)}
\right]
\right]
=
\Delta \ff^{\tomato}
\left( \mB \Delta \pphi \right).
\]
Applying the operators in the other order, specifically grouping $\mB$ with $\Delta \ff$ gives
\[
=
\Delta \pphi^{\tomato}
\left(\mB^{\tomato} \Delta \ff \right)
=
\Delta \pphi^{\tomato}
\left(\mB^{\tomato} \ff\left( \muhat \right) - \mB^{\tomato} \ff\left( \mu \right)\right)
=
\Delta \pphi^{\tomato}
\left( - \frac{k \mu}{\sqrt{m}} \cchi_{st} \right)
\]
where the last equality is by definition of the central
path solutions, specifically the demand they meet.
So the magnitude is at most
\[
\abs{\Delta \pphi^{\tomato}
\left( \frac{\mu k}{\sqrt{m}} \cchi_{st} \right)}
=
\frac{\mu k}{\sqrt{m}}
\cdot
\abs{\cchi_{st}^{\tomato} 
\Delta \pphi}
\leq
\frac{\mu k}{\sqrt{m}} \cdot 
\frac{500 k\sqrt{m}}{\mu}
\leq
500 k^2.
\]
Combining these equations gives the desired result.
\end{proof}

\subsection{Proof of \texorpdfstring{\cref{lemma:l2change}}{l2change}}
\begin{proof}
For $\nu \in [\muhat, \mu]$ let $d\ff(\nu)$ denote the differential change in the flow $\ff(\nu)$ with respect to the path parameter. As with the proof of \cref{lemma:l2lemma} above, we have that
\[ d\ff(\nu) = \mR(\ff(\nu))^{-1}\mB\mL(\ff(\nu))^\dagger\cchi_{st}. \]
Therefore, we get a total $\ell_2$ change bound over time of
\begin{align}
    \int_{\muhat}^\mu \left\|d\ff(\nu)\right\|_{\mR(\ff(\nu))}^2 &= \int_{\muhat}^\mu \left\|\cchi_{st}\right\|_{\mL(\ff(\nu))^\dagger}^2 \nonumber
    \\ &\le \int_{\muhat}^\mu \frac{400m}{\nu^2} d\nu \le O(m\mu^{-2}(\mu-\muhat)) = O(\mu^{-1}T\sqrt{m}).
    \label{eq:l2intbound}
\end{align}
where the first inequality follows from the energy bound in \cref{lemma:elecenergy}, and the second from $\muhat \ge 9\mu/10.$ Next, we lower bound the total $\ell_2$ contributed by an edge $e$ in terms of $\change(\gamma, e).$ Consider an edge $e$, and let $\tau = \change(e, \gamma).$

For any $i \in [\tau]$, we consider two cases. The first case is if $\rr(\ff(\nu)) \ge \rr(\ff(\mu^{(i)}))/2$ for all $\nu \in [\mu^{(i+1)}, \mu^{(i)}]$, i.e. $\rr(\ff(\nu))$ was multiplicatively stable on the range. In this case, we have
\begin{align*}
    \int_{\mu^{(i+1)}}^{\mu^{(i)}} \rr(\ff(\nu))_e[d\ff(\nu)_e]^2 &\ge \frac14\rr(\ff(\mu^{(i)})) \int_{\mu^{(i+1)}}^{\mu^{(i)}} [d\ff(\nu)_e]^2 \\
    &\ge \frac{\rr(\ff(\mu^{(i)}))}{4(\mu^{(i)}-\mu^{(i+1)})} \left(\int_{\mu^{(i+1)}}^{\mu^{(i)}} d\ff(\nu)_e \right)^2 \\
    &= \frac{\rr(\ff(\mu^{(i)}))}{4(\mu^{(i)}-\mu^{(i+1)})}\left|\ff(\mu^{(i)})_e - \ff(\mu^{(i+1)})_e\right|^2 \\
    &\ge \frac{\gamma^2}{4(\mu^{(i)}-\mu^{(i+1)})} \ge \Omega\left(\frac{\gamma^2\sqrt{m}}{\mu T}\right).
\end{align*}
Here, the first inequality follows from the assumption that $\rr(\ff(\nu)) \ge \rr(\ff(\mu^{(i)}))/2$ for all $\nu \in [\mu^{(i+1)}, \mu^{(i)}]$, the second inequality follows from Cauchy-Schwarz, and the third inequality follows from the definition of $\mu^{(i)}$.

In the other case where $\rr(\ff(\nu)) < \rr(\ff(\mu^{(i)}))/2$ for some $\nu \in [\mu^{(i+1)}, \mu^{(i)}]$, the same bound holds. Indeed, we can use an identical proof but replace $\mu^{(i+1)}$ with the largest $\nu$ satisfying $\rr(\ff(\nu)) < \rr(\ff(\mu^{(i)}))/2$. In all cases, this gives us
\begin{align}
\int_{\muhat}^{\mu} \rr(\ff(\nu))_e[d\ff(\nu)_e]^2 \ge \sum_{i=1}^{\change(e, \gamma)} \int_{\mu^{(i+1)}}^{\mu^{(i)}} \rr(\ff(\nu))_e[d\ff(\nu)_e]^2 \ge \Omega\left(\frac{\gamma^2\sqrt{m}}{\mu T}\right) \cdot \change(e, \gamma). \label{eq:singleedgebound}
\end{align}
Combining \cref{eq:singleedgebound,eq:l2intbound} using Markov's inequality gives that
\[ \sum_e \Omega\left(\frac{\gamma^2\sqrt{m}}{\mu T}\right) \cdot \change(e, \gamma) \le O(\mu^{-1}T\sqrt{m}), \]
so $\sum_{e \in E(G)} \change(e, \gamma) \le O(T^2\gamma^{-2})$ as desired.
\end{proof}
\newpage
\section{Table of Variables}
\label{appendix:VAR}

\begin{table}[H]
    \centering
    \begin{tabular}{|l|l|} \hline
        {\bf{Variable}} & {\bf{Definition}} \\ \hline
        $\uu$ & Edge capacities (assuming undirected graph by standard reductions)\\ \hline
        $\ff$ & Flow \\ \hline
        $\uu^{+}(\ff), \uu^{-}(\ff)$ & Upper/lower residual capacities for the flow $\ff$, $\uu^{+}(\ff) = \uu - \ff$, $\uu^{-}(\ff) = \ff + \uu$\\ \hline
        $\uu(\ff)$ & Smaller residual capacity of $\ff$,
        $\uu(\ff)_e = \min\{ \uu^{+}(\ff)_e, \uu^{-}(\ff)_e\} = \uu_e - | \ff_e|$\\ \hline
        $\rr(\ff)$ & Resistance induced by residual capacities of $\ff$,
        $\rr(\ff)_e  = \uu^{+}(\ff)_e^{-2} + \uu^{-}(\ff)_e^{-2}$\\ \hline
        $F^*$ & Optimal flow value (assume known via binary search) \\ \hline
        $\mu$ & Central path parameter that corresponds to the amount of residual flow left \\ \hline
        $\ff(\mu)$ & Central path flow routing $F^*-\mu$ units, aka. $\mB^{\tomato} \ff(\mu) = (F^{*} - \mu) \cchi_{st}$\\ \hline
        $\pphi(\mu)$ & Central path dual variable corresponding to $\ff(\mu)$\\ \hline
        $\mL(\ff)$ & Laplacian with resistances $\rr(\ff)$ \\ \hline
        $\Delta \ff$ & (Electric) flow we augment by \\ \hline
        $s, t$ & Source/sink vertices\\ \hline
        $\cchi_{st}$ & Indicator vector with $-1$ at $s$, $1$ at $t$. \\ \hline
        $\mB$ & Edge vertex incidence matrix \\ \hline
        $\zzero, \oone$ & All-$0$s / $1$s vector \\ \hline
        $\circ$ & Hadamard product, $(\xx \circ \yy)_{i} = \xx_i \yy_i$\\ \hline
        $\frac{\xx}{\yy}$ & (Overloaded) entry-wise division,
            $(\frac{\xx}{\yy})_{i} = \frac{\xx_i}{\yy_i}$\\ \hline
        $\err$ & Demand error vector, $\err = \mB^\tomato \ff - (F^*-\mu) \cchi_{st}$ for some approximately central $\ff$ \\ \hline
        $\cerr$ & Centrality error vector \\ \hline
        $\dd[i]$ & The $i$-th demand vector tracked by \textsc{Locator} \\ \hline
        $\ppi^C(\dd)$ & The projection of a demand vector $\dd\in \mathbb{R}^V$ onto a terminal set $C\subseteq V$ \\ \hline
        $\ppitil^C(\dd)$ & A variable that stores the estimated value of $\ppi_C(\dd)$ \\ \hline
        $S$ & Subset of edges $e$ with $\rr_e \in [\eps^2/20, 20\eps^{-2}]$. \\ \hline
    \end{tabular}
    \caption{Summary of recurring variables}
    \label{tab:variable}
\end{table}

\begin{table}[ht]
    \centering
    \begin{tabular}{|m{0.15\textwidth}|m{0.4\textwidth}|l|} \hline
        {\bf{Parameter}} & {\bf{Definition}} & {\bf{Value}} \\ \hline
        $k$ & Frequency of recentering & $m^{1/328}$ \\ \hline
        $\eps_\step$ & Step size is $\eps_\step/\sqrt{m}$ & $k^{-3}$ \\ \hline
        $\eps_\solve$ & Solve accuracy of Laplacian & $k^{-3}$ \\ \hline
        $\eps$ & Update edges with $\eps^2$ fraction of energy & $k^{-6}$ \\ \hline
        $\beta_{\Locator}$ & Size of terminals of \Locator~is $\beta m$ & $k^{-16}$ \\ \hline
        $\beta_{\Checker}$ & Size of terminals of \Checker~is $\beta m$ & $k^{-28}$ \\ \hline
        $\delta$ & Frequency to exactly repair projections inside \Locator & $k^{-38}$ \\ \hline
        $\overline{count}$ & Maximum number of terminals to add after a \textsc{Projection.InitProjections} & $\eps\delta^{-1} = k^{32}$ \\ \hline
    \end{tabular}
    \caption{Key parameters and their value in terms of $k$ (up to $\polylog$s),
    the amount of progress along the central path that we make
    in $\O(m)$ amortized time.}
    \label{tab:parameters}
\end{table}

\end{document}